\newtheorem{thm}{Theorem}
\newtheorem{lemma}{Lemma} 
\newtheorem{proposition}{Proposition}
\newenvironment{talign*}
 {\csname align*\endcsname}
 {\endalign}
\newcommand{\setappendix}{Appendix~\thesection:~~}
\newcommand{\setsection}{\thesection~~}
\titleformat{\section}{\bfseries\LARGE}{%
	\ifnum\pdfstrcmp{\@currenvir}{appendices}=0
	\setappendix
	\else
	\setsection
\fi}{0em}{}
\def \({\left(}
\def \){\right)}
\def \[{\left[}
\def \]{\right]}
\def \nn{\nonumber \\}
\newcommand{\bY}{{\bm {Y}}}
\newcommand{\bU}{{\bm {U}}}
\newcommand{\bV}{{\bm {V}}}
\newcommand{\bW}{{\bm {W}}}
\newcommand{\bZ}{{\bm {Z}}}
\newcommand{\bw}{{\bm {w}}}
\newcommand{\bv}{{\bm {v}}}
\newcommand{\bA}{{\bm {A}}}
\newcommand{\bX}{{\bm {X}}}
\newcommand{\bx}{{\bm {x}}}
\newcommand{\bu}{{\bm {u}}}
\newcommand{\be}{\begin{equation}}
\newcommand{\ee}{\end{equation}}
\newcommand{\bea}{\begin{align}}
\newcommand{\eea}{\end{align}}
\DeclareMathAlphabet{\varmathbb}{U}{bbold}{m}{n}
\newcommand{\EE}{\mathbb{E}}
\def\bx{{\bm x}}
\def\bW{{\bm W}}
\def\bX{{\bm X}}
\def\bY{{\bm Y}}
\def\bZ{{\bm Z}}
\begin{document}

\title{$0$--$1$ phase transitions in sparse spiked matrix estimation}

\author{Jean Barbier$^{1}$ and Nicolas Macris$^{2}$\\
\\
(1) International Center for Theoretical Physics, Trieste, Italy. jbarbier@ictp.it\\
(2) Ecole Polytechnique F\'ed\'erale de Lausanne, Switzerland. nicolas.macris@epfl.ch
}
\date{}
\maketitle
\thispagestyle{empty}
%
\setcounter{footnote}{0}

\vskip 2cm 

\begin{abstract}
\noindent
We consider statistical models of estimation of a rank-one matrix (the spike) corrupted by an additive gaussian noise matrix in the sparse limit. In this limit the underlying hidden vector (that constructs the rank-one matrix) has a number of non-zero components that scales sub-linearly with the total dimension of the vector, and the signal strength tends to infinity at an appropriate speed. We prove explicit low-dimensional variational formulas for the asymptotic mutual information between the spike and the observed noisy matrix in suitable sparse limits. For Bernoulli and Bernoulli-Rademacher distributed vectors, and 
when the sparsity and signal strength satisfy an appropriate scaling relation, these formulas imply sharp $0$--$1$ phase transitions for the asymptotic minimum mean-square-error. A similar phase transition was analyzed recently in the context of sparse high-dimensional linear regression (compressive sensing) \cite{david2017high,reeves2019all}.  
\end{abstract}
{%
	\hypersetup{linkcolor=black}
	\setcounter{tocdepth}{2}
	\tableofcontents	
}

\section{Introduction}\label{sec:intro}

Low rank matrix estimation (or factorization) is an important problem with numerous applications in image processing, principal component analysis (PCA), machine learning, DNA microarray data, tensor decompositions, etc. These modern applications often require to look at the {\it high-dimensional limit and sparse limits} of the problem. Sparsity is often a crucial ingredient for the interpretability of high dimensional statistical models. In this context, it is of great importance to determine computational limits of estimation and to benchmark them by the fundamental information theoretical (i.e., statistical) limits. In this paper we concentrate on information theoretic limits for two probabilistic models, the so-called sparse spiked Wishart and Wigner matrix models.

In the simplest rank-one version one seeks a matrix $\bU\otimes \bV$ constructed from high-dimensional {\it hidden} vectors
$\bU = (U_1, \ldots, U_n)\in \mathbb{R}^n$ and $\bV = (V_1, \ldots, V_m)\in \mathbb{R}^m$, $m=\alpha_n n$, based on a noisy {\it observed} data matrix $\bW$ with entries obtained as
$
W_{ij} \sim {\cal N}(\sqrt{\lambda_n/n}\, U_i V_j,1)
$
for $i=1, \dots, n$, $j=1, \dots, m$ and $\lambda_n > 0$ the signal strength. The hidden vectors have independent identically distributed (i.i.d.) components drawn from two different distributions. The high-dimensional limit means that we look at 
$n,m\to +\infty$, $\alpha_n\to \alpha >0$. We suppose that $\bV$ has on average $\rho_{V, n} m$ non-zero component which scales {\it sub-linearly} for a sequence $\rho_{V, n}\to 0_+$. We will see that non-trivial estimation is only possible if 
$\lambda_n\to +\infty$ (whereas if $\rho_{V,n}\to \rho_V > 0$, $\lambda_n\to \lambda >0$ finite). The problem is to estimate $\bU\otimes\bV$ given the data matrix $\bW$\footnote{One may also be interested in reconstructing the vectors $\bU$ and/or $\bV$ rather than the spike, but in general this is only possible up to a global sign.}. In the Bayesian setting, which is our concern here, it is supposed that the priors and hyper-parameters are all known. We will refer to this problem as the {\it sparse spiked Wishart matrix model}.

A popular version of this model, and one addressed here, corresponds to a fixed 
standard gaussian distribution for $\bU \sim \mathcal{N}(0, {\rm I}_{n})$ (${\rm I}_{n}$ is the $n\times n$ identity matrix) and a Bernoulli-Rademacher distribution for $V_i\sim P_{V,n} = (1-\rho_{V,n})\delta_0  +  \frac12\rho_{V, n}(\delta_{-1}+\delta_1)$. This estimation problem is equivalent to the important ``spiked covariance model'' or ``gaussian sparse-PCA'' \cite{johnstone2001distribution,johnstone2004sparse} which amounts to estimate a sparse binary matrix $\bV\otimes \bV$ from samples generated by the normal law $\mathcal{N}(0, {\rm I}_{n} + \frac{\lambda}{n} \bV\otimes \bV)$. 

An even simpler and paradigmatic matrix estimation problem has a {\it symmetric} data matrix $\bW$ with elements drawn as
$
W_{ij} \sim {\cal N}(\sqrt{\lambda_n/n}\,X_i X_j,1)
$ 
for $1\leq i < j \leq n$ and $\bX = (X_1, \dots, X_n)\in \mathbb{R}^n$ with i.i.d. components, with 
$n\to +\infty$ in the high-dimensional limit. Again, the sparse version corresponds to having a {\it sub-linear} number of non-zero components, i.e.,  $\rho_n n$ with $\rho_n\to 0_+$, and non-trivial estimation is possible only for $\lambda_n\to +\infty$. We call this model the {\it sparse spiked Wigner matrix model}.
We will focus in particular
on binary vectors generated from Bernoulli $X_i \sim P_{X,n}= {\rm Ber}(\rho_n)$ or Bernoulli-Rademacher $X_i\sim P_{X,n}=(1-\rho_n)\delta_0 + \frac12\rho_n(\delta_{-1}+\delta_1)$ distributions. 

\subsection{Background}

Much progress has been accomplished in recent years on spiked matrix models for {\it non-sparse} settings, by which we mean that the distributions 
$P_X$, $P_U$, $P_V$ are fixed independent of $n, m$, and thus the number of non-zero components of $\bX$, $\bV$, even if ``small'', scales {\it linearly} with $n$. An interesting phenomenology of information theoretical (or statistical) as well as computational limits has been derived \cite{2017arXiv170100858L} by heuristic methods of statistical physics of spin glass theory (the so-called replica method). In the asymptotic regime of $n\!\to \!+\infty$ these limits take the form of sharp phase transitions.
The rigorous mathematical theory of these phase transitions is now largely under control. On one hand, the approximate message passing (AMP) algorithm has been analyzed by state evolution \cite{BayatiMontanari10,Donoho10112009}. And on the other hand, the asymptotic mutual informations per variable between hidden spike and data matrices, have been rigorously computed in a series of works using various methods (cavity method, spatial coupling, interpolation methods) \cite{korada2009exact,krzakala2016mutual,XXT,2016arXiv161103888L,2017arXiv170200473M,BarbierM17a,BarbierMacris2019,2017arXiv170910368B,el2018estimation,barbier2019mutual,mourrat2019hamilton}. The information theoretic phase transitions are then signalled by singularities, as a function of the signal strength, in the limit of the mutual information per variable when $n\to +\infty$.
The phase transition also manifests itself as a jump discontinuity in the minimum mean-square-error (MMSE)\footnote{This is the generic singularity and one speaks of a first order transition. In special cases the MMSE may be continuous with a higher discontinuous derivative of the mutual information.}. 
Once the mutual information is known it is usually possible to deduce the MMSE. For example, in the simplest case of the spiked Wigner model, if $I(\bX\otimes \bX; \bW)$ is the mutual information between the spike $\bX\otimes \bX$ and the data $\bW$, the 
${\rm MMSE}(\bX\otimes\bX\vert \bW) = \mathbb{E} \Vert \bX \otimes\bX - \mathbb{E}[\bX \otimes\bX \vert \bW]\Vert_{\rm F}^2$ satisfies  the I-MMSE relation (such relations are derived in \cite{GuoShamaiVerdu_IMMSE,guo2011estimation}, see also appendix \ref{app:gaussianchannels})
$$
\frac{d}{d\lambda_n} \frac{1}{n} I(\bX\otimes\bX; \bW) = \frac{1}{4n^2}{\rm MMSE}(\bX\otimes\bX\vert \bW)+O(n^{-1})\,.
$$
Closed form expressions for the asymptotic mutual information \cite{2016arXiv161103888L,2017arXiv170200473M,XXT,BarbierM17a,BarbierMacris2019,2017arXiv170910368B} therefore allow
to benchmark the fundamental information theoretical limits of estimation. See also \cite{perry2018optimality,alaoui2017finite,alaoui2018detection} for results on the limits of detecting the precense of a spike in a noisy matrix, rather than estimating it.

\begin{figure}[t]
\centering
\includegraphics[width=0.49\linewidth]{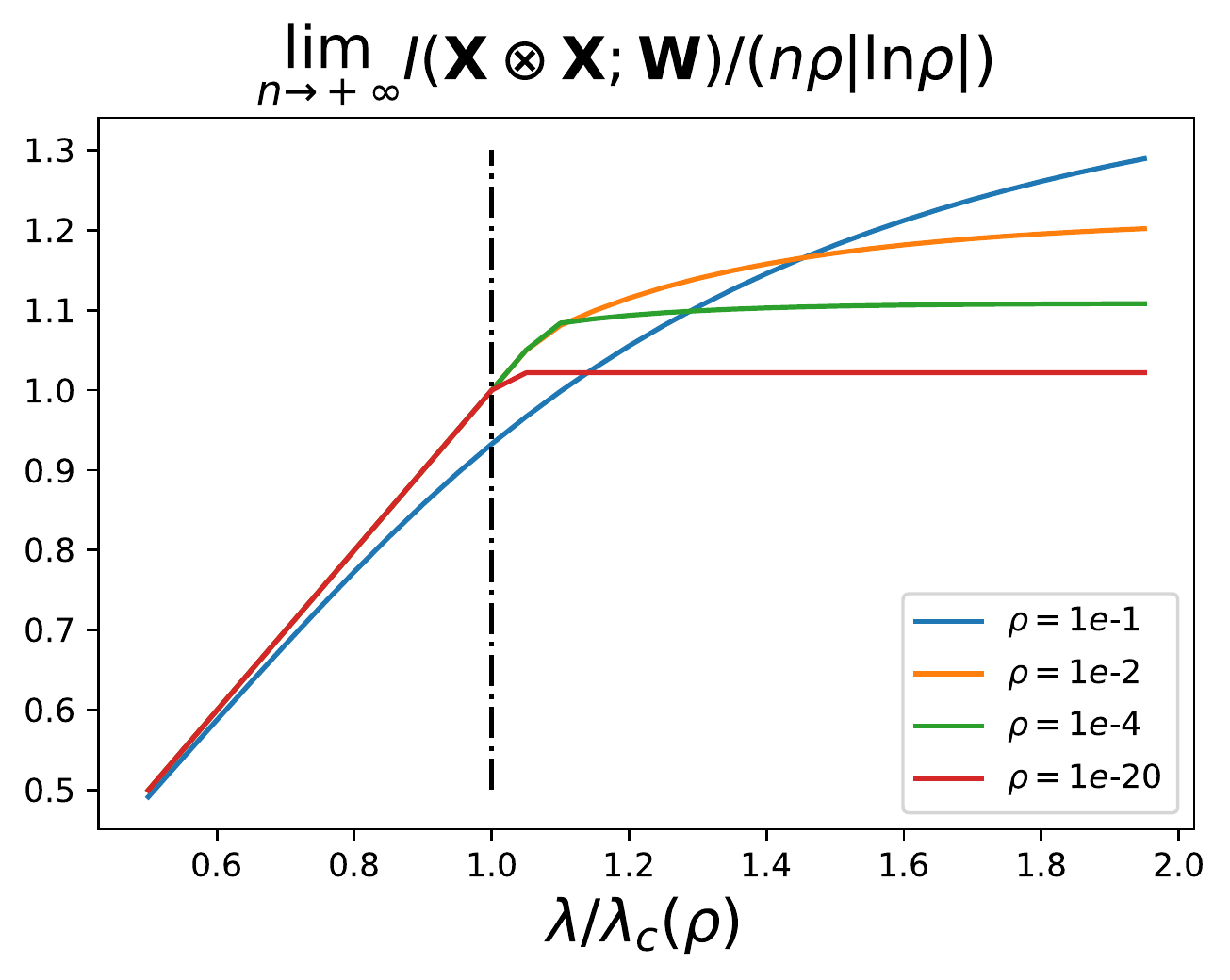}
\includegraphics[width=0.49\linewidth]{./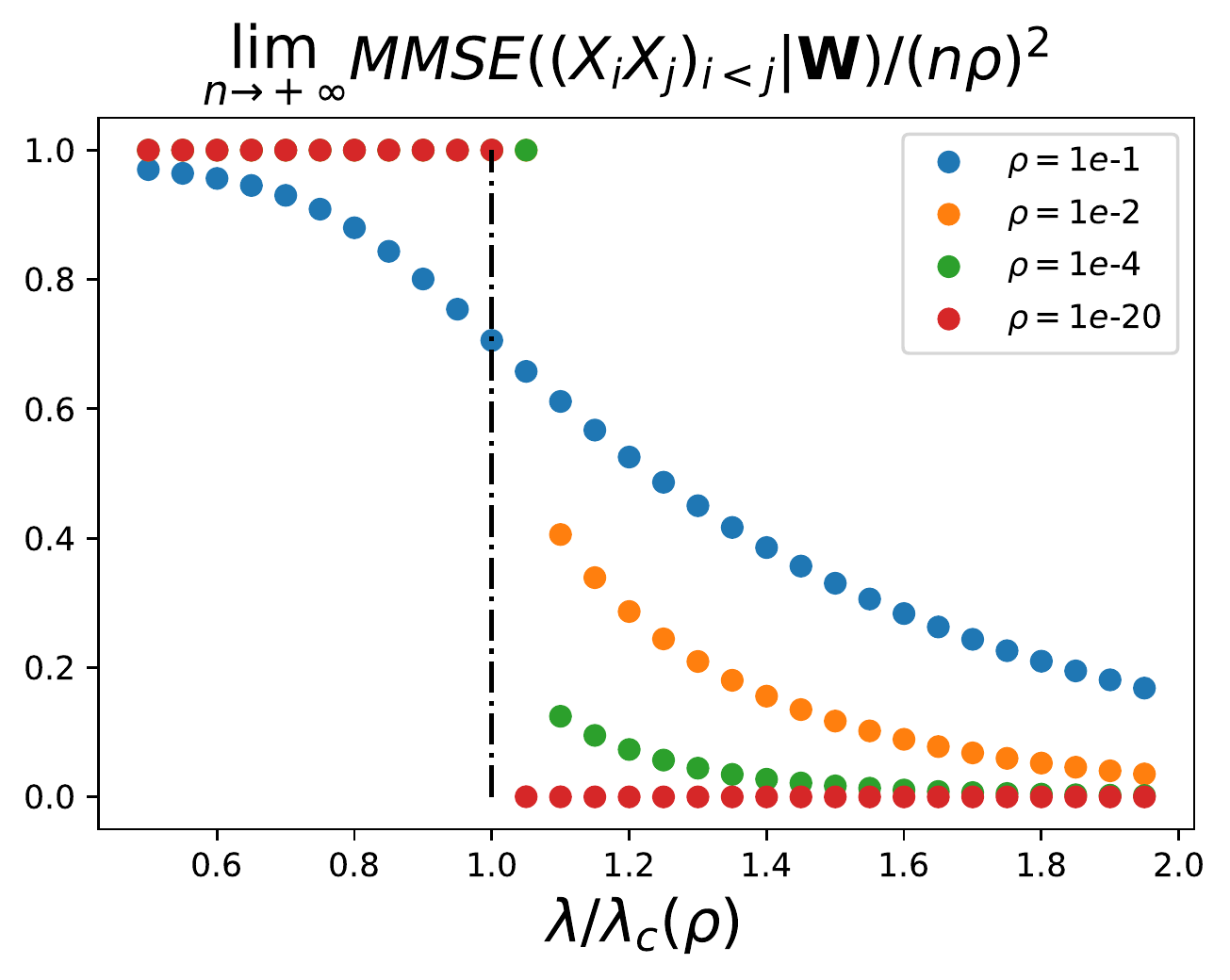}
 \caption{\footnotesize A sequence of suitably normalized mutual information and minimum mean-square-error (MMSE) curves as a function of $\lambda/\lambda_c(\rho)$ for the symmetric matrix estimation model for $X_i \sim {\rm Ber}(\rho)$. Here $\lambda_c(\rho) = 4\vert \ln \rho\vert/\rho$. In the sparse limit $\rho\to 0$ the MMSE curves approach a $0$--$1$ phase transition with the discontinuity at $\lambda = \lambda_c(\rho)$. This corresponds to an angular point for the mutual information.}
\label{fig:MMSEandMI}
\end{figure}

\subsection{Our contributions}

In this paper we are exclusively interested in determining information theoretic phase transitions in regimes of {\it sub-linear sparsity}. 
We identify the correct {\it scaling regimes} of vanishing sparsity and diverging signal strength in which non-trivial information theoretic phase transitions occur. We use the {\it adaptive interpolation method} \cite{BarbierM17a,BarbierMacris2019,2017arXiv170910368B} first introduced in the non-sparse matrix estimation problems, to provide for the sparse limit, closed form expressions of the mutual information in terms of low-dimensional variational expressions (theorems \ref{thm:ws} and \ref{thm:wishart} in section \ref{sec:matrix}). That the adaptive interpolation method can be extended to the sparse limit is interesting and not a priori obvious.
By the I-MMSE relation and the solution of the variational problems we then find, for Bernoulli and Bernoulli-Rademacher distributions of the sparse signal, that the MMSE displays to a {\it $0$--$1$ phase transition} and we determine the {\it exact thresholds}.



Let us describe the regimes studied and the information theoretical thresholds found here (precise statements are found in section \ref{sec:matrix}).
We first note that for sub-linear sparsity, a phase transition appears only if the signal strength tends to infinity. For the Wigner case, for example, this can be seen from the following heuristic argument: the total signal-to-noise ratio (SNR) per non-zero component (i.e., SNR per observation $(\lambda_n/n)\rho_n^2$ times the number of observations $\Theta(n^2)$ divided by the number of non-zero components $\rho_n n$) scales as $(\lambda_n/n)\rho_n^2n^2/(\rho_n n) = \lambda_n \rho_n$ so that $\lambda_n \to +\infty$ is necessary in order to have enough energy to estimate the non-zero components. 
Our analysis shows that non-trivial phase transitions occur when $\lambda_n = \Theta(\vert \ln \rho_n\vert/ \rho_n)$ (Wigner case) and 
$\lambda_n = \Theta(\sqrt{\vert \ln \rho_{V, n}\vert/ \rho_{V, n}})$ (Wishart case)
when $\rho_n$ and $\rho_{V,n}$ tend to zero slowly enough. 

We study in particular the cases of binary signals, i.e., $P_{X,n}$ and $P_{V, n}$ equal to ${\rm Ber}(\rho_n)$ or Bernoulli-Rademacher $(1-\rho_n)\delta_0+\frac12\rho_n(\delta_{-1}+\delta_1)$. For these distributions we find $0$--$1$ phase transitions at the level of the MMSE as long as $\rho_n\to 0_+$ and $\rho_{V,n}\to 0_+$ {\it not too fast}. 
This is illustrated on figure \ref{fig:MMSEandMI} for the Wigner case with Bernoulli distribution. The left hand side shows that as $\rho_n\to 0_+$ the (suitably normalized) mutual information approaches the broken line with an angular point at $\lambda /\lambda_c(\rho_n) =1$ where 
$\lambda_c(\rho_n) = 4 \vert \ln\rho_n\vert /\rho_n$; in the case of Bernoulli-Rademacher distribution the threshold is the same. On the right hand side the (suitably normalized) MMSE approaches a $0$--$1$ curve: it tends to $1$ for $\lambda /\lambda_c(\rho_n) <1$, develops a jump discontinuity at $\lambda /\lambda_c(\rho_n)=1$, and takes the value $0$ when $\lambda /\lambda_c(\rho_n)>1$. 
A similar $0$--$1$ transition is found to hold for the MMSE of $\bV\otimes\bV$ in the spiked covariance model with a threshold
$\lambda_c(\rho_{V, n}) = \sqrt{4 \vert \ln\rho_n\vert / (\alpha_n\rho_n)}$ (with $\alpha_n \to \alpha$). This is illustrated 
on figure \ref{fig:MMSEandMIWish} in section \ref{sec:matrix}. Note that these figures are obtained from the asymptotic prediction where \emph{first} $n\to+\infty$ and \emph{then} $\rho \to 0_+$, so not in the sub-linear sparsity regime. Our analysis confirms that this picture with its sharp transition holds in the \emph{truly} sparse (sub-linear) regime $\rho_n\to 0_+$ \emph{with} $n\to+\infty$.

\subsection{Related work}

Spiked matrix ensembles have played a crucial role in the analysis of threshold phenomena in high-dimensional statistical models for almost two decades. Early rigorous results are found in \cite{baik2005phase} who determined by spectral methods the location of the information theoretic phase transition point in a spiked covariance model, and  \cite{peche2006largest,feral2007largest} for the Wigner case. 
More recently, the information theoretic limits and those of hypothesis testing have been derived, with the additional structure of sparse vectors, for large but finite sizes \cite{amini2009,pmlr-v75-brennan18a,gamarnik2019overlap}. These estimates are consistent with our results.
The additional feature that we provide here, is an asymptotic limit in which a sharp $0$--$1$ phase transition is identified, with fully explicit formulas for the thresholds. Moreover closed form expressions for the mutual information are also determined.

The $0$--$1$ transitions and formulas for the thresholds and mutual information were first computed in \cite{2017arXiv170100858L} using the heuristic replica method of spin-glass theory. However, it must be stressed that, not only this calculation is far from rigorous, but more importantly the limit $n\!\to \!+\infty$ is first taken for fixed parameters $\rho_n\!=\!\rho$, $\rho_{V,n}\!=\!\rho_V$, and the sparse limit $\rho, \rho_V\!\to\! 0_+$ is taken only after. Although the thresholds found in this way agree with our derivation of $\lambda_c(\rho_n)$, this is far from evident a priori. For example, it not clear if this sort of approach yields correct computational thresholds in the sparse limit \cite{JMLR:v17:15-160,2017arXiv170100858L}. 

Similar phase transitions in sublinear sparse regimes for binary signals (Bernoulli or Bernoulli-Rademacher) have been studied in the context of linear estimation or compressed sensing 
\cite{david2017high,reeves2019all} for support recovery. These works focus on the MMSE and prove the occurence of the $0$--$1$ phase transition which they call an ``all-or-nothing'' phenomenon. We note that our approach is technically very different in that it determines the variational expressions for mutual informations and finds the transitions as a consequence.

A lot of efforts have been devoted to computational aspects of sparse PCA with many remarkable results \cite{deshpande2014information,Cai2015,krauthgamer2015,JMLR:v17:15-160,wang2016,pmlr-v30-Berthet13,Ma:2015:SLB:2969239.2969419,gamarnik2019overlap,pmlr-v75-brennan18a}. The picture that has emerged is that the information theoretic and computational phase transition regimes are not on the same scale and that the computational-to-statistical gap diverges in the limit of vanishing sparsity. Note that this is also seen within the context of state evolution for the AMP algorithm 
\cite{2017arXiv170100858L}, but with the sparse limit taken after the $n\to +\infty$ limit. It would be desirable to rigorously determine the thresholds of the AMP algorithm and the correct scaling regime of $\lambda_n\to +\infty$ and  $\rho_n\to 0_+$ or $\rho_{V,n}\to 0_+$ where a computational phase transition is observed. We believe that techniques developed for compressed sensing with finite size samples \cite{RushVenkataramanan} could also apply here. 

\section{Sparse spiked matrix models: setting and main results}\label{sec:matrix}
\subsection{Sparse spiked Wigner matrix model}
We consider a sparse signal-vector $\bX = (X_1, \ldots, X_n)\in \mathbb{R}^n$ with $n$ i.i.d. components distributed according to 
$P_{X,n} = \rho_n p_X+(1-\rho_n)\delta_0$.
Here $\delta_0$ is the Dirac mass at zero and $(\rho_n)\in (0,1]^\mathbb{N}$ is a sequence of weights.
For the distribution $p_X$ we assume that : $i)$ it is independent of $n$, $ii)$ it has finite support in an interval $[-S,S]$, $iii)$ it has second moment equal to $1$ (without loss of generality). One has access to the symmetric data matrix $\bW\in \mathbb{R}^{n\times n}$ with noisy entries
\begin{align}\label{WSM}
\bW = \sqrt{\frac{\lambda_n}{n}} \bX\otimes \bX +  \bZ\,, \quad 1\le i<j\le n\,,
\end{align}
where $\lambda_n >0$ controls the strength of the signal and the noise is i.i.d. gaussian $Z_{ij}\sim{\cal N}(0,1)$ for $i<j$ and symmetric $Z_{ij}=Z_{ji}$. 

We are interested in sparse regimes where $\rho_n\to 0_+$ and $\lambda_n\to +\infty$. While our results are more general (see appendix \ref{app:matrix} and theorem \ref{thm:wsgeneral}) our main interest is in a regime of the form
\begin{align}\label{mainregime}
\lambda_n = \frac{4\gamma \vert \ln \rho_n\vert}{\rho_n}, \qquad \rho_n = \Theta(n^{-\beta}), 
\end{align}
for $\beta, \gamma \in \mathbb{R}_{\ge 0}$ and $\beta$ small enough. We prove that in this regime a phase transition occurs as function of $\gamma$. The phase transition manifests itself as a singularity (more precisely a discontinuous first order derivative) in the mutual information
$I(\bX\otimes \bX;\bW)=H(\bW)-H(\bW|\bX\otimes \bX)$. Note that because the data $\bW$ depends on $\bX$ only through $\bX\otimes \bX$ we have $H(\bW|\bX\otimes \bX)=H(\bW|\bX)$ and therefore $I(\bX\otimes \bX; \bW)=I(\bX; \bW)$. From now on we use the form $I(\bX; \bW)$. To state the precise result we define the {\it potential function}:
\begin{align}\label{26}
i_n^{\rm pot}(q, \lambda,\rho) \equiv \frac{\lambda}{4} (q-\rho)^2+I_n(X;\sqrt{\lambda q}X+Z)\, ,
\end{align}
where $I_n(X;\sqrt{\lambda q}X+Z)$ is the mutual information for a scalar gaussian channel, with $X\sim P_{X,n}$ and $Z\sim{\cal N}(0,1)$. The mutual information $I_n$ is indexed by $n$ because of its dependence on $P_{X,n}$.
\begin{thm}[Sparse spiked Wigner model]\label{thm:ws}
Let the sequences $\lambda_n$ and $\rho_n$ verify \eqref{mainregime} with $\beta\in [0, 1/6)$ and $\gamma >0$. There exists $C>0$ independent of $n$ such that 
\begin{align}\label{mainbound}
\frac{1}{\rho_n|\ln\rho_n|}\Big|\frac{1}{n}I(\bX;\bW) - \inf_{q\in [0,\rho_n]} i^{\rm pot}_n(q;\lambda_n,\rho_n)\Big| 
\le C\frac{(\ln n)^{1/3}}{n^{(1-6\beta)/7}}\,.
\end{align}	
\end{thm}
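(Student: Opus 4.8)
The plan is to prove the bound \eqref{mainbound} via the adaptive interpolation method, carefully tracking how all error terms scale with $n$ through the joint divergence of $\lambda_n$ and vanishing of $\rho_n$ governed by \eqref{mainregime}. First I would set up the standard interpolating Hamiltonian: at interpolation time $t\in[0,1]$ one observes both a ``matrix'' channel with effective signal strength $\lambda_n(1-t)$ and a decoupled ``scalar'' Gaussian channel with signal strength $R(t)=\int_0^t \lambda_n q_\eps(s)\,\d s$, where $q_\eps(\cdot)$ is an interpolation function chosen adaptively (through a small perturbation parameter $\eps$) so as to match the overlap of the interpolating system. Differentiating the interpolating free energy / mutual information in $t$ and using the Nishimori identities, one obtains the usual sum-rule decomposition: $\frac1n I(\bX;\bW) = i^{\rm pot}_n(q;\lambda_n,\rho_n)$ evaluated along the interpolation path, plus a ``remainder'' that is controlled by the fluctuations of the overlap $Q=\frac1n\bX\cdot\bx$ around its mean, i.e. by an integrated overlap concentration estimate.

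The core quantitative step is then overlap concentration. Here the sparse scaling is what makes the argument delicate: the relevant normalization is not $\frac1n$ but $\frac1{\rho_n|\ln\rho_n|}\cdot\frac1n$ (this is why that prefactor appears in \eqref{mainbound}), so the concentration bounds must be strong enough to beat the blow-up of $\lambda_n\sim|\ln\rho_n|/\rho_n$ and the shrinkage of $\rho_n$. I would establish two pieces: (i) concentration of the free energy itself around its expectation, via bounded-differences / Gaussian Poincaré inequalities applied to the $Z_{ij}$ and to $\bX$, picking up the finite-support assumption $|X_i|\le S$ and the fact that $\bX$ has only $\Theta(\rho_n n)$ nonzero entries on average; and (ii) concentration of the overlap $Q$ around $\EE Q$, which follows from (i) plus convexity of the free energy in the perturbation parameter together with a Griffiths-type argument, after averaging over the small perturbation $\eps$ in a box of size $s_n\to 0$. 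Combining these with the adaptive choice of $q_\eps$, and optimizing $q$ over $[0,\rho_n]$ (the natural range, since $\EE Q \le \rho_n$ by Cauchy–Schwarz and the second-moment normalization), collapses the sum rule to the claimed variational formula up to the stated error.

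The main obstacle — and the step I expect to require the most care — is bookkeeping the $n$-dependence of the concentration errors so that the final bound is of the form $C(\ln n)^{1/3} n^{-(1-6\beta)/7}$, which vanishes precisely when $\beta<1/6$. Each concentration estimate produces a factor depending on $\lambda_n$, on $\rho_n$, and on the perturbation box size $s_n$; one must then optimize over $s_n$ (and over the width of the $\eps$-average) against these competing powers. The exponent $1/7$ and the $(\ln n)^{1/3}$ strongly suggest a three-way balancing: roughly, one term behaves like $s_n \lambda_n \rho_n\sim s_n|\ln\rho_n|$, another like $1/(n s_n^a \rho_n^b)$ for small integer $a,b$ coming from the number of active variables $\rho_n n$ and the Gaussian Poincaré constant, and a third like $\lambda_n^2/n \sim (|\ln\rho_n|/\rho_n)^2/n \sim n^{2\beta}(\ln n)^2/n$; choosing $s_n$ as an appropriate power of $n$ to equalize these yields the exponent $(1-6\beta)/7$ and the logarithmic correction. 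A secondary technical point is verifying that the adaptive interpolation fixed-point argument (existence of a good $q_\eps(\cdot)$, regularity of the map, monotonicity in $\eps$) goes through uniformly even though the scalar channel's input distribution $P_{X,n}$ itself drifts with $n$ — this is where one uses that $p_X$ is fixed and only the Bernoulli weight $\rho_n$ moves, so that $I_n(X;\sqrt{\lambda q}\,X+Z)$ has controlled derivatives (e.g. $\partial_q I_n \le \tfrac12 \rho_n$ up to lower order, via I-MMSE and the second-moment bound). Everything else — the differentiation of the interpolating mutual information, the Nishimori identities, the I-MMSE passage from the mutual-information formula to the $0$–$1$ MMSE statement — is routine once the concentration rates are pinned down.
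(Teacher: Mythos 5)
Your plan is essentially the paper's own proof: the same adaptive interpolation set-up with the perturbation $\epsilon\in[s_n,2s_n]$, the same sum rule controlled by overlap fluctuations, free-energy concentration via Gaussian Poincar\'e and Efron--Stein, overlap concentration from concavity in the perturbation parameter, and a final optimization of $s_n=\Theta(n^{-\alpha})$ giving $\alpha=(1-6\beta)/7$. Your heuristic identification of the three competing error terms is only roughly the paper's (the dominant concentration term is $(\lambda_n^4\rho_n/(n s_n^4))^{1/3}$, balanced against $\rho_n s_n$ and $\lambda_n/n$ after the $\rho_n|\ln\rho_n|$ normalization), but the method and all key lemmas coincide with the paper's argument.
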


%
The mutual information is thus given, to leading order, by a one-dimensional variational problem 
$$
I(\bX;\bW) = n\rho_n \vert \ln \rho_n \vert \inf_{q\in [0,\rho_n]} i^{\rm pot}_n(q;\lambda_n,\rho_n) + {\rm correction\,\,terms}\,.
$$
The factor $\rho_n|\ln \rho_n|$ is naturally related to the entropy (in nats) of the support of the signal
$- n (\rho_n\ln\rho_n + (1-\rho_n)\ln(1-\rho_n))$ which behaves like $n \rho_n\vert \ln\rho_n\vert$ for $\rho_n\to 0_+$. 

In particular, for both the Bernoulli and Bernoulli-Rademacher distributions an analytical solution of the variational problem (given in appendix \ref{sec:allOrNothing}) shows that
\begin{align}\label{limitingpotential}
\lim_{n\to +\infty}\frac{1}{n \rho_n |\ln \rho_n|}I(\bX;\bW) = \gamma \mathbb{I}(\gamma \le 1)+\mathbb{I}(\gamma \geq 1)\,.
\end{align}
This is also seen numerically on figure \ref{fig:MMSEandMI} (for the Bernoulli case). The I-MMSE relation (see introduction) 
then shows that the suitably rescaled MMSE is simply given by a derivative w.r.t. 
$\gamma$ and therefore displays a $0$--$1$ phase transition at $\gamma=1$ (or equivalently at the critical threshold
$\lambda_c(\rho_n) = 4\vert\ln\rho_n\vert/ \rho_n$) as depicted on the right hand side of figure \ref{fig:MMSEandMI}. We do not claim that 
\eqref{limitingpotential} and the consequence for the MMSE are rigorously derived. However these results are ``contained'' in the variational expression for the mutual information and are ``mere consequences'' of a precise analysis of this one-dimensional variational problem.

For more generic distributions than these two cases the situation is richer.
Although one generically observes phase transitions {\it in the same scaling regime}, the limiting curves appear to be more complicated than the simple $0$--$1$ shape and the jumps are not necessarily located at $\gamma=1$. A classification of these transitions is an interesting problem that is out of the scope of this paper.
\subsection{Sparse spiked Wishart model}
The sparse spiked Wishart model is a non-symmetric version of the previous one. There are two distinct vectors $\bU =(U_1,\ldots, U_n)\in \mathbb{R}^{n}$ and $\bV=(V_1,\ldots, V_m)\in\mathbb{R}^{m}$ with dimensions of the same order of magnitude. We set $m = \alpha_n n$ and will let 
$\alpha_n\to \alpha >0$ as $n \to +\infty$. 
The data matrix $\bW\in\mathbb{R}^{n\times m}$ is
\begin{align*}
	\bW=\sqrt{\frac{\lambda_n}{n}}\bU \otimes \bV+\bZ
\end{align*}
where $\lambda_n >0$ and $\mathbb{R}^{n\times m}\ni\bZ=(Z_{ij})_{i,j}$, $i=1,\ldots, n$, $j=1,\ldots, m$, is a Wishart noise matrix with i.i.d. standard gaussian entries. Both the entries of $\bU$, $\bV$ are i.i.d. and drawn from possibly sparse distributions. Specifically
$U_i\sim P_{U,n}\equiv \rho_{U,n} p_U+(1-\rho_{U,n})\delta_0$ and $V_i\sim P_{V,n}\equiv \rho_{V,n} p_V+(1-\rho_{V,n})\delta_0$. 
We assume that both $p_U$ and $p_V$ have finite support included in an interval $[-S,S]$ and (without loss of generality) they both have unit second moment. 

Our main interest is in regimes of the form
\begin{align}\label{wishregime}
\alpha_n \to \alpha > 0\,, \quad \rho_{U,n}\to \rho_U > 0\,, \quad \rho_{V,n} = \Theta(n^{-\beta}),  \quad \lambda_n = \sqrt{\frac{4\gamma|\ln \rho_{V,n}|}{\alpha_n \rho_{V,n}}}\,.
\end{align}
This scaling allows to greatly simplify the analysis and is the proper scaling regime to observe the information theoretic phase transition. Many of our results hold in wider generality (see appendix \ref{app:matrix}). The main result is again a variational expression for the mutual information $I(\bU\otimes \bV;\bW)=I((\bU,\bV);\bW)=H(\bW)-H(\bW|\bU,\bV)$ between the spike (or signal-vectors) and the data matrix, in terms of a {\it potential function}:
\begin{align}
i_n^{\rm pot}(q_U,q_V,\lambda,\alpha,\rho_U,\rho_V) &=  \frac{\lambda \alpha}{2} (q_U-\rho_U)(q_V-\rho_V)
\nn
&\qquad\qquad\qquad+ I_{n}(U;\sqrt{\lambda \alpha q_V}U+Z) + \alpha I_{n}(V;\sqrt{\lambda q_U}V+Z)\, \label{pot_wishart}
\end{align}
where $I_{n}(U;\sqrt{\lambda \alpha q_V}U+Z)$ is the mutual information for a scalar gaussian channel, with $U\sim P_{U,n}$ and $Z\sim{\cal N}(0,1)$, while $I_{n}(V;\sqrt{\lambda q_U}V+Z)$ is with $V\sim P_{V,n}$. 
Our main result reads:
\begin{thm}[Sparse spiked Wishart model]\label{thm:wishart_main} 
Consider the scaling regime \eqref{wishregime} with $\beta\in [0, 1/3)$. There exists a constant $C>0$ independent of $n$ such that
\begin{align*} 
&\frac{1}{\sqrt{\rho_{V,n}\vert\ln\rho_{V,n}\vert}} \Big\vert \frac{1}{n}I\big((\bU,\bV);\bW\big) - \!\!
 {\adjustlimits \inf_{q_U\in[0,\rho_{U,n}]}\sup_{q_{V}\in[0,\rho_{V,n}]}}\,i_n^{\rm pot}\big(q_{U},q_V,\lambda_n,\alpha_n,\rho_{U,n},\rho_{V,n}\big) \Big\vert \nn
 &\qquad\qquad\qquad\leq C\frac{(\ln n)^{1/3}}{n^{(4-12\beta)/18}}\,.
\end{align*}
\end{thm}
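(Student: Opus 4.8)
The plan is to run the \emph{adaptive interpolation method} \cite{BarbierM17a,BarbierMacris2019,2017arXiv170910368B} along the same lines as the proof of Theorem~\ref{thm:ws}, the genuinely new features being the two coupled signal vectors and the inf--sup (rather than inf) structure of the variational problem. The first step is to introduce, for $t\in[0,1]$ and an interpolation path $(q_U(\cdot),q_V(\cdot))$ with values in $[0,\rho_{U,n}]\times[0,\rho_{V,n}]$, an interpolating inference problem in which one observes simultaneously a weakened Wishart channel $\bW_t=\sqrt{(1-t)\lambda_n/n}\,\bU\otimes\bV+\bZ$ together with two decoupled scalar Gaussian side-channels, one per component of $\bU$ with effective SNR proportional to $\lambda_n\alpha_n\int_0^t q_V$, and one per component of $\bV$ with effective SNR proportional to $\lambda_n\int_0^t q_U$. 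At $t=0$ this is, up to a negligible regularising perturbation, the original model, so the interpolating free energy $f_n(t)$ satisfies $f_n(0)=-\tfrac1n I((\bU,\bV);\bW)+\smallO(\cdot)$; at $t=1$ the model is fully decoupled and $f_n(1)$ reproduces the two scalar-channel terms of $i_n^{\rm pot}$ evaluated at $(q_U(1),q_V(1))$.

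The second step is the fundamental sum rule. Differentiating $f_n(t)$ and using Gaussian integration by parts together with the Nishimori identity expresses $f_n'(t)$ in terms of $\mathbb{E}\la Q_UQ_V\ra_t$, where $Q_U=\tfrac1n\bU\cdot\bu$ and $Q_V=\tfrac1m\bV\cdot\bv$ are the overlaps. Completing the square and integrating over $t$ yields, for \emph{any} path,
\begin{equation*}
\frac1n I\big((\bU,\bV);\bW\big)=i_n^{\rm pot}\big(q_U(1),q_V(1),\lambda_n,\alpha_n,\rho_{U,n},\rho_{V,n}\big)-\frac{\lambda_n\alpha_n}{2}\int_0^1\mathbb{E}\big\la(Q_U-q_U(t))(Q_V-q_V(t))\big\ra_t\,dt+\smallO(\cdot).
\end{equation*}
Taking constant paths $q_U(t)\equiv q_U$, $q_V(t)\equiv q_V$ and bounding the remainder via $|Q_U-q_U|\le2\rho_{U,n}$, $|Q_V-q_V|\le2\rho_{V,n}$ gives the Guerra-type one-sided inequality producing the inf--sup bound in one direction. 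For the matching direction one chooses the path \emph{adaptively}: the system $\dot q_U(t)=\mathrm{const}\cdot\mathbb{E}\la Q_V\ra_t$, $\dot q_V(t)=\mathrm{const}\cdot\mathbb{E}\la Q_U\ra_t$ with $q_U(0)=q_V(0)=0$ has, by a Cauchy--Lipschitz argument on the finite-dimensional path (the right-hand sides being smooth and bounded uniformly in $n$ on the compact $[0,\rho_{U,n}]\times[0,\rho_{V,n}]$), a solution along which the remainder integrand becomes the genuine overlap covariance $\mathbb{E}\la(Q_U-\mathbb{E}\la Q_U\ra_t)(Q_V-\mathbb{E}\la Q_V\ra_t)\ra_t$.

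The third step is the concentration of the overlaps, which controls this covariance by Cauchy--Schwarz through $\mathbb{E}\la(Q_U-\mathbb{E}\la Q_U\ra_t)^2\ra_t$ and the analogue for $V$. The thermal part is treated by adding a small extra side-channel of SNR $\epsilon_n$ for each vector and averaging $\epsilon_n$ over a window, using concavity of the mutual information in the SNR (an I-MMSE argument); the quenched part is treated by the Gaussian Poincaré inequality, exploiting the finite supports of $p_U,p_V$ and the bounded dependence on $\bZ$. The point specific to the sparse regime is to track the scalings $\lambda_n=\Theta(\sqrt{|\ln\rho_{V,n}|/\rho_{V,n}})$ and $\rho_{V,n}=\Theta(n^{-\beta})$ through every estimate: since $0\le\mathbb{E}\la Q_V\ra_t\le\rho_{V,n}$ and $0\le\mathbb{E}\la Q_U\ra_t\le\rho_{U,n}=\Theta(1)$, the product of $\lambda_n$ with the natural overlap scale is $\Theta(\sqrt{\rho_{V,n}|\ln\rho_{V,n}|})$, exactly the normalisation in the statement, so the surviving error comes only from the overlap \emph{fluctuations}, which must be shown to decay as a negative power of $n$ uniformly in the shrinking overlap range and despite the diverging $\lambda_n$. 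Optimising the perturbation width $\epsilon_n$ against the resulting $n$- and $\rho_{V,n}$-dependent fluctuation bounds produces the exponent $(4-12\beta)/18$ and the $(\ln n)^{1/3}$ factor; $\beta<1/3$ is precisely what keeps the optimised exponent positive.

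Finally one identifies the adaptive endpoint $(q_U(1),q_V(1))$ with the saddle point of $i_n^{\rm pot}$: the potential is convex in $q_V$ at fixed $q_U$ and the coupling term is bilinear, so the inf--sup equals the sup--inf and is attained at the unique solution of the stationarity (fixed-point) equations, which are exactly the $t=1$ relations delivered by the adaptive ODE; combining this with the two one-sided bounds closes the argument. The step I expect to be the main obstacle is this overlap concentration in the sparse, diverging-SNR regime — naive fluctuation bounds blow up with $\lambda_n$, and one must show the genuinely small scale of the overlaps together with the only-logarithmic growth of $\lambda_n^2\rho_{V,n}=\Theta(\ln n)$ compensates exactly, which requires $\rho$-dependent versions of the standard concentration lemmas and a tightly coupled optimisation of all error terms. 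A secondary technical point is establishing existence and mild regularity of the adaptive path uniformly in $n$ despite the $n$-dependent priors $P_{U,n},P_{V,n}$ and the shrinking box $[0,\rho_{U,n}]\times[0,\rho_{V,n}]$.
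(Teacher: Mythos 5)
Your overall architecture (interpolating model with two side channels, sum rule, overlap concentration, scaling bookkeeping) matches the paper, but two central steps fail as stated. First, the upper bound: with constant paths the remainder $-\frac{\lambda_n\alpha_n}{2}\int_0^1 \EE\la (Q_U-q_U)(Q_V-q_V)\ra_t\,dt$ is \emph{not} sign-definite in the bipartite problem (unlike the Wigner case), so there is no one-sided Guerra inequality to invoke; and the crude bound you propose gives an error of order $\lambda_n\alpha_n\rho_{U,n}\rho_{V,n}=\Theta(\sqrt{\rho_{V,n}|\ln\rho_{V,n}|})$, which is exactly the normalization scale of the theorem, hence an $O(1)$ error after rescaling rather than the required $O((\ln n)^{1/3}n^{-(4-12\beta)/18})$. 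The paper's upper bound is therefore \emph{not} free: it uses a partially adaptive path, keeping $q_U$ a free constant but solving $q_{V,n}(t,{\bm\epsilon})=\EE\la Q_V\ra_t$, so the remainder becomes the overlap cross-covariance $\EE\la(Q_U-\EE\la Q_U\ra_t)(Q_V-\EE\la Q_V\ra_t)\ra_t$, which is then controlled by Cauchy--Schwarz and the concentration of \emph{both} $Q_U$ and $Q_V$ (this also requires the regularity of the flow ${\bm\epsilon}\mapsto R_n(t,{\bm\epsilon})$, i.e.\ Jacobian $\ge 1$ via Liouville's formula and the monotonicity of the MMSE in the SNR, which you only gesture at).

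Second, the lower bound: adapting both functions to the mean overlaps, $q_U(t)=\EE\la Q_U\ra_t$ and $q_V(t)=\EE\la Q_V\ra_t$, does cancel the cross term, but it leaves you with $\int_0^1 i_n^{\rm pot}(\EE\la Q_U\ra_t,\EE\la Q_V\ra_t)\,dt$ evaluated at a generic point of a saddle-shaped potential; such a value need not dominate $\inf_{q_U}\sup_{q_V} i_n^{\rm pot}$, and your identification of the $t=1$ endpoint with ``the saddle point'' of $i_n^{\rm pot}$ is unjustified at finite $n$ (the finite-$n$ vector overlap $\EE\la Q_U\ra_t$ does not solve the scalar fixed-point equation; also the potential is \emph{concave}, not convex, in $q_V$ at fixed $q_U$). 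The paper's choice is different and is what makes the argument close: it takes $q_{V,n}(t,{\bm\epsilon})=\EE\la Q_V\ra_t$ but $q_{U,n}(t,{\bm\epsilon})=\rho_{U,n}-{\rm MMSE}(U|\sqrt{\lambda_n\alpha_n\EE\la Q_V\ra_t}\,U+Z)$, which enforces $\frac{\partial}{\partial q_V}i_n^{\rm pot}(q_{U,n}(t),q_{V,n}(t))=0$ at every $t$; by concavity in $q_V$ this means $i_n^{\rm pot}(q_{U,n}(t),q_{V,n}(t))=\sup_{q_V}i_n^{\rm pot}(q_{U,n}(t),q_V)\ge \inf_{q_U}\sup_{q_V}i_n^{\rm pot}$ pointwise in $t$, and the time-integrated SNR in the boundary term is handled by concavity of the scalar mutual information in the SNR (Jensen). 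Without these two ingredients — the partially adaptive upper bound and the stationarity-enforcing choice of $q_{U,n}$ in the lower bound — the inf--sup characterization does not follow from your sum rule, however good the overlap concentration is.
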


To leading order the mutual information is given by the solution of a two-dimensional variational problem. An analytical solution of this problem for the spiked covariance model $\bU\sim\mathcal{N}(0, {\rm I}_{n})$ and $V_i \sim (1-\rho_{V, n}) \delta_0 + \frac12\rho_{V,n}(\delta_{-1}+\delta_1)$ shows
(appendix \ref{sec:allOrNothing})
$$
\frac{1}{n\sqrt{\rho_{V,n}\vert\ln\rho_{V,n}\vert}} I\big((\bU,\bV);\bW\big) = \sqrt{\alpha\gamma} + \alpha (1-\gamma) \mathbb{I}(\gamma \geq 1) \sqrt{\rho\vert \ln \rho\vert} + {\rm correction\,\,terms}\,.
$$
Here we see that the phase transition is washed out at leading order and only seen at higher order with a threshold at $\gamma=1$, i.e., $\lambda_c(\rho_{V,n}) = \{4\vert \ln \rho_{V,n}\vert/(\alpha \rho_{V,n})\}^{1/2}$. Note that in the present regime $\rho_{V, n}\to 0$ and $\gamma = \Theta(1)$ so the mutual information remains positive.

The consequences of this formula for the MMSE are richer and more subtle than in the symmetric Wigner case. One can consider three MMSE's
associated to the matrices $\bV\otimes\bV$, $\bU\otimes \bU$, or $\bU\otimes\bV$. All three MMSE's can be computed from the solution $(q_U^*, q_V^*)$ in the variational problem of theorem~\ref{thm:wishart_main}, as shown in \cite{miolane2018phase}.
We have $(\alpha n)^{-2} {\rm MMSE}(\bV\otimes\bV\vert \bW) = \mathbb{E}[V_1^2]^2 - (q_V^*)^2$,
$n^{-2} {\rm MMSE}(\bU\otimes\bU\vert \bW) = \mathbb{E}[U_1^2]^2 - (q_U^*)^2$ and $(n m)^{-1}{\rm MMSE}(\bU\otimes\bV\vert\bW) = \mathbb{E}[U_1^2]\mathbb{E}[V_1^2] - q_U^* q_V^*$. We note that the last expression is equivalent to an I-MMSE relation, i.e., it can be obtained by differentiating the 
mutual information $(2/m)I((\bU,\bV);\bW)$ with respect to $\lambda_n$. An application of these formulas to the analytical solutions of the variational 
problem (found in appendix \ref{sec:allOrNothing}) shows that with suitable rescaling 
$(\alpha n \rho_{V,n})^{-2} {\rm MMSE}(\bV\otimes\bV\vert \bW)$
 displays the $0$--$1$ phase transition.
For the other two MMSE's one cannot expect to see such behavior because $\bU$ is gaussian. Instead one finds asymptotically that these MMSE's (with suitable rescaling) tend to $1$ when $\rho_{V,n} \to 0_+$. The transition at $\gamma=1$ is a higher order effect seen on higher order corrections. These results are illustrated with a numerical calculation depicted on figure \ref{fig:MMSEandMIWish}. 

\begin{figure}[t]
\centering
\includegraphics[width=0.49\linewidth]{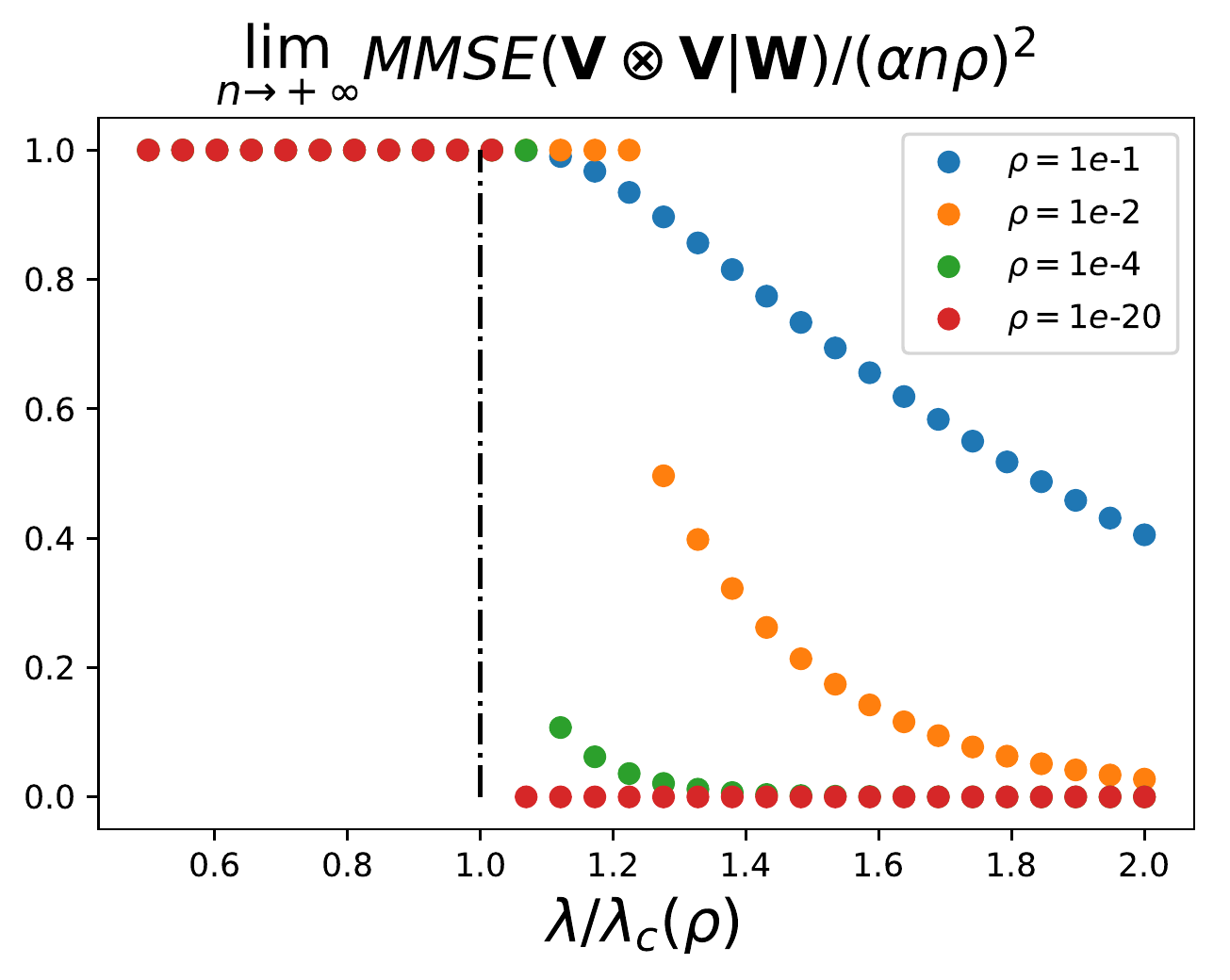}
\includegraphics[width=0.49\linewidth]{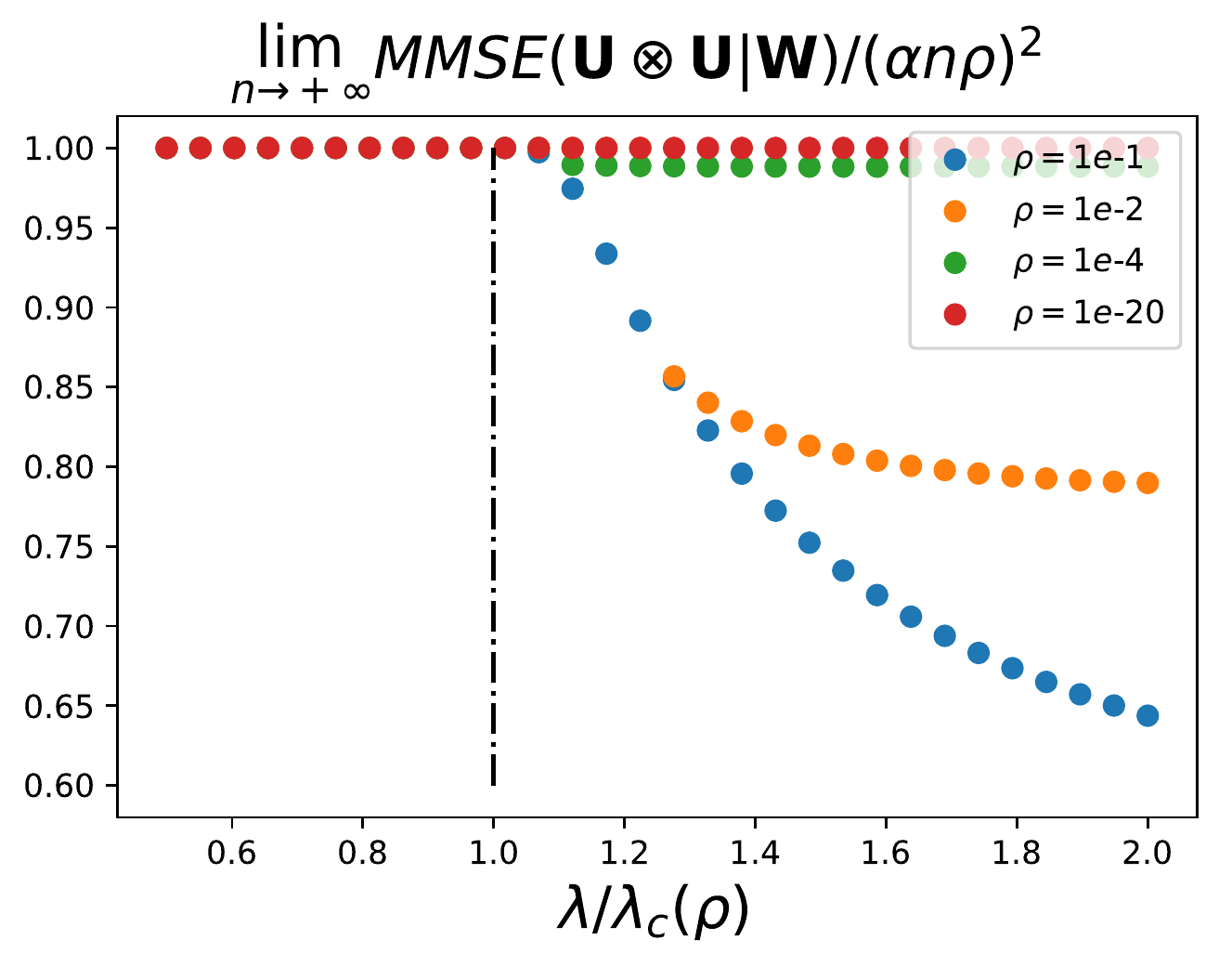}
 \caption{\footnotesize Sequence of MMSE curves as a function of $\lambda$ for the spiked covariance model. \textbf{Left}: In the sparse limit $\rho\to 0_+$ the suitably rescaled 
 of the sparse signal $\bV$ approaches a $0$--$1$ transition with a jump discontinuity at 
 $\lambda_c(\rho)= \{4\gamma \vert \ln \rho\vert/ (\alpha\rho)\}^{1/2}$. \textbf{Right}: In the asymptotic limit the MMSE for the gaussian signal $\bU$ approaches $1$ as $\rho\to 0_+$. The phase transition is seen only as a higher order effect.}
\label{fig:MMSEandMIWish}
\end{figure}

\section{Analysis by the adaptive interpolation: the Wigner case}\label{sec:adapInterp_XX}
In this section we provide the essential architecture for the proof of theorem \ref{thm:ws} which relies on the adaptive interpolation method
\cite{BarbierM17a,BarbierMacris2019}. The proof requires concentration properties for ``free energies'' and ``overlaps'' which are deferred to appendices \ref{app:free-energy} and \ref{appendix-overlap}. We will also employ various known information theoretic properties of 
gaussian channels (I-MMSE relation, concavity of the MMSE with respect to the SNR and input distribution etc). For the convenience of the reader these are presented and adapted to our setting in appendix \ref{app:gaussianchannels}.

An essentially similar analysis can be done for theorem 
\ref{thm:wishart_main} in the Wishart case, and is deferred to appendix \ref{app_wishart}.
When no confusion is possible we use the notation $\EE\|\bA\|^2=\EE[\|\bA\|^2]$.
\subsection{The interpolating model.}

Let $\epsilon \in[s_n, 2s_n]$, for a sequence tending to zero, $s_n = n^{-\alpha}/2 \in(0,1/2)$, for $\alpha >0$ chosen later on. Let $q_{n}: [0, 1]\times [s_n,2s_n] \mapsto [0,\rho_n]$ and set 
$$
R_n(t,\epsilon)\equiv \epsilon+\lambda_n\int_0^tds\,q_n(s,\epsilon)\,.
$$
Consider the following interpolating estimation model, where $t\in[0,1]$, with accessible data $(W_{ij}(t))_{i,j}$ and $(\tilde W_i(t,\epsilon))_i$ obtained through
\begin{align*}
\begin{cases}
W_{ij}(t) =W_{ji}(t)\hspace{-5pt}&=\sqrt{(1-t)\frac{\lambda_n}{n}}\, X_iX_j + Z_{ij}\,, \qquad 1\le i< j\le n\,,\\
\tilde \bW(t,\epsilon)  &= \sqrt{R_n(t,\epsilon)}\,\bX + \tilde \bZ\,,
\end{cases} 
\end{align*}
with standard gaussian noise $\tilde \bZ\sim {\cal N}(0,{\rm I}_n)$, and $Z_{ij}=Z_{ji}\sim {\cal N}(0,1)$. 
The posterior associated with this model reads (here $\|-\|$ is the $\ell_2$ norm)
\begin{align*}
&dP_{n, t,\epsilon}(\bx|\bW(t),\tilde{\bW}(t,\epsilon))=\frac{1}{\mathcal{Z}_{n, t,\epsilon}(\bW(t),\tilde{\bW}(t,\epsilon))}\Big(\prod_{i=1}^n dP_{X,n}(x_i)\Big) \nn
& \times\!\exp\!\Big\{\sum_{i< j}^n \Big((1-t)\frac{\lambda_n}{n}\frac{x_i^2x_j^2}{2}-\sqrt{(1-t)\frac{\lambda_n}{n}}x_ix_jW_{ij}(t) \Big)+  R_n(t,\epsilon)\frac{\|\bx\|^2}{2} - \sqrt{R_n(t,\epsilon)} \bx\cdot \tilde \bW(t,\epsilon)\Big\}.
\end{align*}
The normalization factor $\mathcal{Z}_{n,t,\epsilon}(\dots)$ is also called partition function. We also define the mutual information density for the interpolating model
\begin{align}
i_{n}(t,\epsilon)&\equiv\frac{1}{n}I\big(\bX;(\bW(t),\tilde{\bW}(t,\epsilon))\big)\, \label{fnt}.
\end{align}
The $(n,t,\epsilon, R_n)$-dependent Gibbs-bracket (that we simply denote $\langle - \rangle_t$ for the sake of readability) is defined for functions $A(\bx)=A$
\begin{align}
\langle A(\bx) \rangle_{t}= \int dP_{n, t,\epsilon}(\bx| \bW(t),\tilde{\bW}(t,\epsilon))\,A(\bx) \,. \label{t_post}
\end{align}
%
%
\begin{lemma}[Boundary values]\label{lemma:bound1}
The mutual information for the interpolating model verifies 
\begin{align}\label{bound2}
\begin{cases}
i_{n}(0,\epsilon) =\frac1n I(\bX;\bW)+O(\rho_n s_n)\,,\\
i_{n}(1,\epsilon) =I_n(X;\{\lambda_n\int_0^1dt\,q_n(t,\epsilon)\}^{1/2}X+Z)+O(\rho_ns_n)\,. 
\end{cases} 
\end{align}
where $I_n(X;\{\lambda_n\int_0^1dt\,q_n(t,\epsilon)\}^{1/2}X+Z)$ is the mutual information for a scalar gaussian channel with input $X\sim P_{X,n}$ and noise $Z\sim{\cal N}(0,1)$.
\end{lemma}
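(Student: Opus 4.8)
The plan is to evaluate the interpolating mutual information density $i_n(t,\epsilon)$ directly at $t=0$ and $t=1$, exploiting that at $t=0$ the matrix part of the interpolating data is exactly the original observation $\bW$ of \eqref{WSM} (since $W_{ij}(0)=\sqrt{\lambda_n/n}\,X_iX_j+Z_{ij}$ and $R_n(0,\epsilon)=\epsilon$) while at $t=1$ the matrix part is pure noise ($W_{ij}(1)=Z_{ij}$); the only thing left to estimate is then the contribution of the scalar side-channel $\tilde\bW(t,\epsilon)$, which I expect to be $O(\rho_n s_n)$ at both endpoints. The two ingredients I would use are tensorisation of mutual information over the i.i.d.\ components of $\bX$, which turns the $n$-dimensional side-channel into $n$ copies of a scalar Gaussian channel, and the I-MMSE relation $\frac{d}{dr}I_n(X;\sqrt{r}\,X+Z)=\frac12\,{\rm mmse}_n(r)$ together with the crude bound ${\rm mmse}_n(r)\le\mathbb{E}_{P_{X,n}}[X^2]=\rho_n$ (appendix~\ref{app:gaussianchannels}). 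The appearance of the second moment $\rho_n$ — a consequence of the sparsity of $P_{X,n}$ and the unit normalisation of $p_X$ — is exactly what will promote the error from $O(s_n)$ to $O(\rho_n s_n)$.

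For the endpoint $t=0$ I would write, by the chain rule, $I(\bX;(\bW(0),\tilde\bW(0,\epsilon)))=I(\bX;\bW)+I(\bX;\sqrt{\epsilon}\,\bX+\tilde\bZ\mid\bW)$, observe that the conditional term is nonnegative and, because the two channels are driven by independent noise (Markov chain $\bW-\bX-\tilde\bW(0,\epsilon)$), bounded above by $I(\bX;\sqrt{\epsilon}\,\bX+\tilde\bZ)$; then tensorisation and the I-MMSE bound give $I(\bX;\sqrt{\epsilon}\,\bX+\tilde\bZ)=n\,I_n(X;\sqrt{\epsilon}\,X+Z)\le \frac{n\epsilon\rho_n}{2}\le n\,s_n\rho_n$ since $\epsilon\le 2s_n$. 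Dividing by $n$ yields the first line of \eqref{bound2}.

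For the endpoint $t=1$ I would note that $\bW(1)=(Z_{ij})_{i<j}$ is independent of $(\bX,\tilde\bW(1,\epsilon))$, so it drops out: $I(\bX;(\bW(1),\tilde\bW(1,\epsilon)))=I(\bX;\tilde\bW(1,\epsilon))=n\,I_n(X;\sqrt{R_n(1,\epsilon)}\,X+Z)$ by tensorisation, with $R_n(1,\epsilon)=\epsilon+\lambda_n\int_0^1 ds\,q_n(s,\epsilon)$. It then remains to replace the SNR $R_n(1,\epsilon)$ by $\lambda_n\int_0^1 dt\,q_n(t,\epsilon)$; since these differ by exactly $\epsilon$, integrating the I-MMSE relation over that interval and using ${\rm mmse}_n\le\rho_n$ bounds the difference of the two scalar mutual informations by $\frac{\epsilon}{2}\rho_n\le s_n\rho_n$, giving the second line of \eqref{bound2}.

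I do not expect a genuine obstacle here: this is a bookkeeping step isolating the boundary behaviour of the interpolation. The only points requiring a little care are that all bounds be uniform in $\epsilon\in[s_n,2s_n]$ (which they are, since each estimate only used $\epsilon\le 2s_n$) and that one really obtains the finer order $\rho_n s_n$ rather than $s_n$ — which is precisely where the sparsity of the prior enters, through $\mathbb{E}_{P_{X,n}}[X^2]=\rho_n$ and the attendant bound ${\rm mmse}_n(r)\le\rho_n$ on the scalar MMSE.
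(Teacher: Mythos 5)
Your proof is correct and follows essentially the same route as the paper's: chain rule for mutual information at both endpoints, the observation that $\bW(0)=\bW$ and that $\bW(1)$ is pure noise, and an I-MMSE bound with ${\rm MMSE}\le \mathbb{E}[X^2]=\rho_n$ (uniform in $\epsilon\le 2s_n$) to show the scalar side-channel contributes only $O(\rho_n s_n)$. The only cosmetic difference is at $t=0$: the paper bounds the conditional term $\frac1n I(\bX;\tilde\bW(0,\epsilon)\,|\,\bW)$ directly by showing it is $\tfrac{\rho_n}{2}$-Lipschitz in $\epsilon$ and vanishes at $\epsilon=0$, whereas you first discard the conditioning via the Markov chain $\bW-\bX-\tilde\bW(0,\epsilon)$ and then tensorise to a scalar channel — both yield the same $O(\rho_n s_n)$ estimate.
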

\begin{proof}
We start with the chain rule for mutual information:
$$
i_{n}(0,\epsilon)= \frac1nI(\bX;{\bW}(0))+\frac1nI(\bX;\tilde{\bW}(0,\epsilon)|{\bW}(0)).
$$
Note that $I(\bX;{\bW}(0))=I(\bX;\bW)$ which is obvious. Moreover we claim
$
\frac1nI(\bX;\tilde{\bW}(0,\epsilon)|{\bW}(0))=O(\rho_ns_n)	
$
which yields the first identity in \eqref{bound2}.
This claim simply follows from the I-MMSE relation (appendix \ref{app:gaussianchannels}) and $R_n(0,\epsilon)=\epsilon$
\begin{align}
\frac{d}{d\epsilon}\frac1nI(\bX;\tilde{\bW}(0,\epsilon)|{\bW}(0))=\frac1{2n} {\rm MMSE}(\bX|\tilde{\bW}(0,\epsilon),{\bW}(0)) \le \frac{\rho_n}{2}\,. \label{id9}
\end{align}
The last inequality above is true because 
${\rm MMSE}(\bX|\tilde{\bW}(0,\epsilon),{\bW}(0))\le \EE\|\bX-\EE\,\bX\|^2=n{\rm Var}(X_1)\le n\rho_n$, 
as the components of $\bX$ are i.i.d. from $P_{X,n}$. Therefore $\frac1nI(\bX;\tilde{\bW}(0,\epsilon)|{\bW}(0))$ is $\frac{\rho_n}{2}$-Lipschitz in $\epsilon\in[s_n,2s_n]$. Moreover we have that $I(\bX;\tilde{\bW}(0,0)|{\bW}(0))=0$. This implies the claim. 

The proof of the second identity in \eqref{bound2} again starts from the chain rule for mutual information
$$
i_{n}(1,\epsilon)=\frac1nI(\bX;\tilde\bW(1,\epsilon))+\frac1nI(\bX;{\bW}(1)|\tilde\bW(1,\epsilon))\, .
$$
Note that $I(\bX;{\bW}(1)|\tilde\bW(1,\epsilon))=0$ as ${\bW}(1)$ does not depend on $\bX$. Moreover, 
\begin{align*}
\frac1nI(\bX;\tilde\bW(1,\epsilon))=I_n(X;\sqrt{R_n(1,\epsilon)}X+Z)=\textstyle{I_n(X;\{\lambda_n\int_0^1dt\,q_n(t,\epsilon)\}^{1/2}X+Z)+O(\rho_ns_n)}\,.
\end{align*}
because $I_n(X;\sqrt{\gamma}X+Z)$ is a $\frac{\rho_n}{2}$-Lipschitz function of $\gamma$, 
by an application of the I-MMSE relation (appendix \ref{app:gaussianchannels}) $\frac{d}{d\gamma}I_n(X;\sqrt{\gamma}X+Z)={\rm MMSE}(X|\sqrt{\gamma}X+Z)/2\le {\rm Var}(X)/2\le \rho_n/2$.
\end{proof}
\subsection{Fundamental sum rule.}
\begin{proposition}[Sum rule]\label{prop1}
The mutual information verifies the following sum rule:
\begin{align}
\frac1n I(\bX;\bW) &=   i_n^{\rm pot}\big({\textstyle \int_0^1dt\,q_n(t,\epsilon)};\lambda_n,\rho_n\big) +\frac{\lambda_n}{4}\big({\cal R}_1 - {\cal R}_2-{\cal R}_3\big)+ O(\rho_n s_n)+O\Big(\frac{\lambda_n}{n}\Big)	
\label{MF-sumrule}
\end{align}
with non-negative ``remainders'' that depend on $(n,\epsilon, R_{n})$
\begin{align}\label{contributions}
	\begin{cases}
{\cal R}_1\equiv\int_0^1 dt \,\big(q_n(t,\epsilon) - \int_0^1 ds \, q_n(s,\epsilon)\big)^2\,,\\ 
{\cal R}_2\equiv \int_0^1	dt\,\E\big\langle\big(Q-\E\langle Q\rangle_t\big)^2\big\rangle_{t}\,,\\
{\cal R}_3\equiv \int_0^1	dt\,\big(q_n(t,\epsilon)-\E\langle Q\rangle_t\big)^2\,.
\end{cases}
\end{align}
where $Q = \frac{1}{n}\bx \cdot\bX$ is called the {\it overlap}. The constants in the $O(\cdots)$ terms are independent of $n,t,\epsilon$.	
\end{proposition}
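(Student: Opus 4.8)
The plan is to run the adaptive interpolation scheme: differentiate the interpolating mutual information density $i_n(t,\epsilon)$ of \eqref{fnt} in $t$, express the derivative through the overlap $Q=\frac1n\bx\cdot\bX$, integrate over $t\in[0,1]$, and finally match the two endpoints using the boundary values of Lemma~\ref{lemma:bound1}. The only analytic inputs are the I-MMSE relation and the Nishimori identity (Bayes-optimality), both recalled in appendix~\ref{app:gaussianchannels}.

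\emph{Derivative of $i_n$.} The map $t\mapsto i_n(t,\epsilon)$ is absolutely continuous, being the composition of a smooth function of the signal-to-noise parameters with the Lipschitz curve $t\mapsto\big((1-t)\lambda_n/n,\,R_n(t,\epsilon)\big)$. Applying the chain rule together with the I-MMSE relation to each of the two Gaussian channels of the interpolating model --- the matrix channel, whose parameter moves at rate $\frac{d}{dt}(1-t)\lambda_n/n=-\lambda_n/n$, and the scalar channel, whose parameter moves at rate $\frac{d}{dt}R_n(t,\epsilon)=\lambda_n q_n(t,\epsilon)$ --- and using that $\langle x_ix_j\rangle_t$ and $\langle x_i\rangle_t$ are the posterior means of $X_iX_j$ and $X_i$ given all the data, one gets
\begin{align*}
\frac{d}{dt}i_n(t,\epsilon)=-\frac{\lambda_n}{2n^2}\sum_{i<j}\E\big[(X_iX_j-\langle x_ix_j\rangle_t)^2\big]+\frac{\lambda_n q_n(t,\epsilon)}{2n}\sum_{i=1}^n\E\big[(X_i-\langle x_i\rangle_t)^2\big]\,.
\end{align*}
By the Nishimori identity, $\E[X_iX_j\langle x_ix_j\rangle_t]=\E[\langle x_ix_j\rangle_t^2]$ and $\E[X_i\langle x_i\rangle_t]=\E[\langle x_i\rangle_t^2]$, so the two MMSE's reduce to $\sum_{i<j}(\E[X_i^2X_j^2]-\E[\langle x_ix_j\rangle_t^2])$ and $\sum_i(\E[X_i^2]-\E[\langle x_i\rangle_t^2])$. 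Since $\E[X_i^2]=\rho_n$ (the $p_X$-second moment being $1$), since the diagonal $i=j$ contributions are $O(n\rho_n)$ because $p_X$ has bounded support, and since Nishimori turns the two-replica quantities $\E\langle(\bx\cdot\bx')^2\rangle_t$ and $\E\langle\bx\cdot\bx'\rangle_t$ into $n^2\E\langle Q^2\rangle_t$ and $n\E\langle Q\rangle_t$, this yields, uniformly in $(t,\epsilon)$,
\begin{align*}
\frac{d}{dt}i_n(t,\epsilon)&=\frac{\lambda_n}{4}\big(\E\langle Q^2\rangle_t-\rho_n^2\big)+\frac{\lambda_n q_n(t,\epsilon)}{2}\big(\rho_n-\E\langle Q\rangle_t\big)+O\Big(\frac{\lambda_n}{n}\Big)\\
&=-\frac{\lambda_n}{4}\big(q_n(t,\epsilon)-\rho_n\big)^2+\frac{\lambda_n}{4}\,\E\big\langle\big(Q-q_n(t,\epsilon)\big)^2\big\rangle_t+O\Big(\frac{\lambda_n}{n}\Big)\,,
\end{align*}
the last step being a completion of the square in $Q$.

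\emph{Integration and endpoints.} Integrating over $t\in[0,1]$ and invoking the elementary identities $\int_0^1(q_n(t,\epsilon)-\rho_n)^2\,dt=(\bar q_n-\rho_n)^2+{\cal R}_1$ with $\bar q_n\equiv\int_0^1 dt\,q_n(t,\epsilon)$, and $\E\langle(Q-q_n(t,\epsilon))^2\rangle_t=\E\langle(Q-\E\langle Q\rangle_t)^2\rangle_t+(q_n(t,\epsilon)-\E\langle Q\rangle_t)^2$ (the cross term vanishing after $\E\langle\cdot\rangle_t$), so that $\int_0^1 dt\,\E\langle(Q-q_n(t,\epsilon))^2\rangle_t={\cal R}_2+{\cal R}_3$, one obtains
\begin{align*}
i_n(1,\epsilon)-i_n(0,\epsilon)=-\frac{\lambda_n}{4}\big[(\bar q_n-\rho_n)^2+{\cal R}_1\big]+\frac{\lambda_n}{4}\big[{\cal R}_2+{\cal R}_3\big]+O\Big(\frac{\lambda_n}{n}\Big)\,.
\end{align*}
Substituting the boundary values of Lemma~\ref{lemma:bound1}, $i_n(0,\epsilon)=\frac1nI(\bX;\bW)+O(\rho_ns_n)$ and $i_n(1,\epsilon)=I_n(X;\{\lambda_n\bar q_n\}^{1/2}X+Z)+O(\rho_ns_n)$, and recognising $\frac{\lambda_n}{4}(\bar q_n-\rho_n)^2+I_n(X;\{\lambda_n\bar q_n\}^{1/2}X+Z)=i_n^{\rm pot}(\bar q_n;\lambda_n,\rho_n)$, delivers exactly \eqref{MF-sumrule}. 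Finally, the remainders in \eqref{contributions} are non-negative because each is an integral of a square --- for ${\cal R}_2$, of the non-negative quantity $\E\langle(Q-\E\langle Q\rangle_t)^2\rangle_t$.

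\emph{Main obstacle.} The only non-routine part is the derivative step, i.e.\ the Gaussian integration by parts underlying the I-MMSE formulas and the bookkeeping needed to confirm that the error is genuinely $O(\lambda_n/n)$ uniformly in $(t,\epsilon)$. The delicate points are: the diagonal $i=j$ terms of the matrix channel, which must be shown to contribute only $O(n\rho_n)$ (this is where finite support of $p_X$ enters); and the repeated use of Nishimori to collapse posterior-and-replica expectations onto the single scalar overlap $Q$. Everything past the derivative formula is elementary algebra, and the features specific to the sparse regime --- $\E[X_1^2]=\rho_n$ rather than $1$, and $\lambda_n=\Theta(|\ln\rho_n|/\rho_n)$ --- enter only through this bookkeeping.
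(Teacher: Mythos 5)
Your proof is correct and follows essentially the same route as the paper: fundamental theorem of calculus in $t$, the I-MMSE relation plus Nishimori identities to express $\frac{d}{dt}i_n(t,\epsilon)$ through the overlap $Q$ (with the diagonal completion giving the $O(\lambda_n/n)$ error), and the boundary values of Lemma~\ref{lemma:bound1}. The only cosmetic difference is that you complete the square in $Q$ inside the derivative before integrating, while the paper re-arranges the integrated expression at the end; the algebra is equivalent.
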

\begin{proof}
By the fundamental theorem of calculus 
$i_{n}(0,\epsilon)=i_{n}(1,\epsilon)-\int_0^1dt \frac{d}{dt}i_{n}(t,\epsilon)$. 
Note that $i_{n}(0,\epsilon)$ and $i_{n}(1,\epsilon)$ are given by \eqref{bound2}.
The $t$-derivative of the interpolating mutual information is simply computed combining the I-MMSE relation with the chain rule for derivatives
\begin{align}
\frac{d}{dt}i_n(t,\epsilon)&= -\frac{\lambda_n}{2}\frac{1}{n^2}\sum_{i<j}\EE\big[(X_iX_j-\langle x_ix_j\rangle_{t})^2\big]+\frac{\lambda_nq_n(t,\epsilon)}{2}\frac1n\EE\|\bX-\langle \bx\rangle_{t}\|^2\label{9}\\
&=-\frac{\lambda_n}{4}\frac{1}{n^2}\EE\|\bX\otimes \bX-\langle \bx\otimes \bx\rangle_{t}\|_{\rm F}^2+\frac{\lambda_nq_n(t,\epsilon)}{2}\frac1n\EE\|\bX-\langle \bx\rangle_{t}\|^2+ O\Big(\frac{\lambda_n}{n}\Big)\label{34}\,.
\end{align}
The correction term in \eqref{34} comes from completing the diagonal terms 
in the sum $\sum_{i< j}$ in order to construct the matrix-MMSE, 
namely the first term on the r.h.s. of \eqref{34}. This expression can be simplified by application of the Nishimori identities 
(appendix \ref{app:nishimori} contains a proof of these general identities). Starting with the second term (a vector-MMSE)
\begin{align}
\frac1n\E\|\bX-\langle \bx \rangle_t\|^2 & =\E\big[\|\bX\|^2+\|\langle\bx \rangle_t\|^2- 2\bX\cdot \langle \bx \rangle_t\big]\nn
& =  \frac1n\E\big[\|\bX\|^2 -\bX\cdot \langle \bx \rangle_t\big] =\rho_n -\EE\langle Q\rangle_t\,, \label{mmse}
\end{align}
were we used $\E\|\bX\|^2=n\rho_n$ and the Nishimori identity $\EE\|\langle \bx \rangle_t\|^2 = \EE[\bX\cdot\langle \bx \rangle_t]$.
By similar manipulations we obtain for the matrix-MMSE
\begin{align}\label{manip}
\frac1{n^2}{\rm MMSE}(\bX\otimes \bX|\tilde{\bW}(t,\epsilon),{\bW}(t))=\frac1{n^2}\EE\|\bX\otimes \bX-\langle \bx\otimes \bx\rangle_{t}\|_{\rm F}^2 =\rho_n^2-\EE\langle Q^2\rangle_t\,.
\end{align}
From \eqref{bound2}, \eqref{34}, \eqref{mmse}, \eqref{manip} and the fundamental theorem of calculus we deduce
\begin{align*}
\frac1n I(\bX;\bW) = &   {\textstyle{I_n\big(X;\{\lambda_n\int_0^1dt\,q_n(t,\epsilon)\}^{1/2}X+Z\big)}} 
 \nn & 
 + \frac{\lambda_n}{4}\int_0^1 dt\,\Big\{\rho_n^2-\EE\langle Q^2\rangle_t-2q_n(t,\epsilon)(\rho_n-\EE\langle Q\rangle_t)\Big\}+ O(\rho_n s_n)+O\Big(\frac{\lambda_n}{n}\Big)\,.
\end{align*}
The terms on the r.h.s can be re-arranged so that the potential \eqref{26} appears, and this gives immediately the sum rule \eqref{MF-sumrule}.
\end{proof}

%

Theorem~\ref{thm:ws} follows from the upper and lower bounds proven below, and applied for $s_n=\frac{1}{2}n^{-\alpha}$.

\subsection{Upper bound: linear interpolation path.}
\begin{proposition}[Upper bound]
We have
\begin{align*}
\frac1n I(\bX;\bW) \le \inf_{q\in [0,\rho_n]} i_n^{\rm pot}(q,\lambda_n,\rho_n)+ O(\rho_n s_n)+O\Big(\frac{\lambda_n}{n}\Big)\,.
\end{align*}
\end{proposition}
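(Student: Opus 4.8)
\emph{Proof proposal.} The plan is to apply the sum rule of Proposition~\ref{prop1} along an interpolation path that is \emph{constant} in $t$ and pinned at a minimizer of the potential, so that the first remainder ${\cal R}_1$ vanishes identically and the other two remainders can only help us. Concretely, fix $n$ and observe that the map $q\mapsto i_n^{\rm pot}(q,\lambda_n,\rho_n)$ is continuous on the compact interval $[0,\rho_n]$: indeed $i_n^{\rm pot}$ is the sum of the polynomial $\tfrac{\lambda_n}{4}(q-\rho_n)^2$ and the scalar-channel mutual information $I_n(X;\sqrt{\lambda_n q}\,X+Z)$, which depends continuously (in fact smoothly) on the SNR $\lambda_n q$ by the I-MMSE relation and the boundedness of the relevant MMSE recalled in appendix~\ref{app:gaussianchannels}. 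Hence the infimum is attained at some $q_n^*\in[0,\rho_n]$.

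Next I would instantiate the interpolating model of the section with the choice $q_n(t,\epsilon)\equiv q_n^*$, constant in both $t$ and $\epsilon$ (this is admissible since $q_n^*\in[0,\rho_n]$), and fix any $\epsilon\in[s_n,2s_n]$. With this choice $\int_0^1 ds\,q_n(s,\epsilon)=q_n^*$, so in the notation of \eqref{contributions} we have ${\cal R}_1=\int_0^1 dt\,\big(q_n^*-q_n^*\big)^2=0$, while ${\cal R}_2$ and ${\cal R}_3$ are non-negative (they are, respectively, a $t$-average of a Gibbs variance and a $t$-average of a square), as already recorded in the statement of Proposition~\ref{prop1}. Plugging these facts into the sum rule \eqref{MF-sumrule} gives
\begin{align*}
\frac1n I(\bX;\bW)
&= i_n^{\rm pot}\big(q_n^*;\lambda_n,\rho_n\big) - \frac{\lambda_n}{4}\big({\cal R}_2+{\cal R}_3\big) + O(\rho_n s_n)+O\Big(\frac{\lambda_n}{n}\Big)\\
&\le i_n^{\rm pot}\big(q_n^*;\lambda_n,\rho_n\big) + O(\rho_n s_n)+O\Big(\frac{\lambda_n}{n}\Big)
= \inf_{q\in[0,\rho_n]} i_n^{\rm pot}(q,\lambda_n,\rho_n) + O(\rho_n s_n)+O\Big(\frac{\lambda_n}{n}\Big),
\end{align*}
the last equality being the defining property of $q_n^*$. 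Since the constants hidden in the $O(\cdot)$ terms are independent of $n,t,\epsilon$ by Proposition~\ref{prop1}, this is exactly the claimed inequality.

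I do not expect any serious obstacle here: the entire content of the upper bound is already packaged into the sum rule together with the non-negativity of its remainder terms, and the only genuinely new input is the trivial observation that a $t$-constant path kills ${\cal R}_1$ while leaving ${\cal R}_2,{\cal R}_3\ge 0$. The sole point requiring a line of justification is the existence of the minimizer $q_n^*$, which follows from continuity and compactness as above. The real work — choosing $q_n(t,\epsilon)$ adaptively so that ${\cal R}_2$ and ${\cal R}_3$ are \emph{also} forced to be small, which needs the overlap concentration estimates of appendix~\ref{appendix-overlap} and an averaging argument over $\epsilon\in[s_n,2s_n]$ — is deferred to the matching lower bound.
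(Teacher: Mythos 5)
Your proof is correct and is essentially the paper's own argument: a $t$-constant interpolation path makes ${\cal R}_1$ vanish, the non-negativity of ${\cal R}_2,{\cal R}_3$ in the sum rule of Proposition~\ref{prop1} gives the bound, and one then optimizes over the constant. The only (immaterial) difference is that the paper proves the bound for every constant $q_n\in[0,\rho_n]$ and takes the infimum afterwards, whereas you pin the path at an attained minimizer $q_n^*$, which costs you the extra (easy) continuity-and-compactness remark but changes nothing of substance.
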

\begin{proof}
Fix $q_{n}(t,\epsilon)= q_n \in [0, \rho_n]$ a constant independent of $\epsilon, t$. The interpolation path $R_n(t,\epsilon)$  is therefore 
a simple linear function of time.
From \eqref{contributions} ${\cal R}_1$ cancels and since ${\cal R}_2$ and ${\cal R}_3$ are non-negative we get from Proposition
\eqref{prop1}
\begin{align*}
\frac{1}{n}I(\bX;\bW) \leq i_n^{\rm pot}(q,\lambda_n,\rho_n)+ O(\rho_n s_n)+O\Big(\frac{\lambda_n}{n}\Big)\,.
\end{align*}
Note that the error terms $O(\cdots)$ are bounded independently of $q_n$.
Therefore optimizing the r.h.s over the free parameter $q_n\in [0, \rho_n]$ yields the upper bound.
\end{proof}

\subsection{Lower bound: adaptive interpolation path.} 
We start with a definition: the map $\epsilon\mapsto R_n(t,\epsilon)$ is called \emph{regular} if it is a ${\cal C}^1$-diffeomorphism whose jacobian is greater or equal to one for all $t\in [0,1]$. 
\begin{proposition}[Lower bound]
Consider sequences $\lambda_n$ and $\rho_n$ satisfying $c_1\leq \lambda_n \rho_n \leq c_2n^\gamma$ for some constants positive constant $c_1, c_2$ and $\gamma\in [0, 1/2[$.
Then
\begin{align}\label{MF-sumrule_bound}
 \frac1n I(\bX;\bW) \ge\inf_{q\in[0,\rho_n]}i_n^{\rm pot}(q,\lambda_n,\rho_n) + O(\rho_n s_n)+O\Big(\frac{\lambda_n}{n}\Big)+O\Big(\Big(\frac{\lambda_n^4\rho_n}{ns_n^4}\Big)^{1/3}\Big)\,.
\end{align}
\end{proposition}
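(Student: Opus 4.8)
The only input is the sum rule of Proposition~\ref{prop1}: for \emph{any} admissible interpolation function $q_n(\cdot,\epsilon)$ and any $\epsilon\in[s_n,2s_n]$, using ${\cal R}_1\ge 0$ and $\int_0^1 q_n(t,\epsilon)\,dt\in[0,\rho_n]$, one gets
\begin{align*}
\frac1n I(\bX;\bW)\ \ge\ \inf_{q\in[0,\rho_n]} i_n^{\rm pot}(q,\lambda_n,\rho_n)\ -\ \frac{\lambda_n}{4}\big({\cal R}_2+{\cal R}_3\big)\ +\ O(\rho_n s_n)+O\Big(\frac{\lambda_n}{n}\Big)\,.
\end{align*}
Everything reduces to choosing $q_n$ so that $\frac{\lambda_n}{4}({\cal R}_2+{\cal R}_3)$ is at most the claimed error.

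\textbf{Step 1: the adaptive path, killing ${\cal R}_3$.} I would pick $q_n(t,\epsilon)$ to solve the self-consistency relation $q_n(t,\epsilon)=\EE\langle Q\rangle_{t}$, where the bracket on the right is the one built from $R_n(t,\epsilon)=\epsilon+\lambda_n\int_0^t q_n(s,\epsilon)\,ds$. Equivalently, $R_n(\cdot,\epsilon)$ solves the Cauchy problem
\begin{align*}
\frac{d}{dt}R_n(t,\epsilon)=\lambda_n\,\EE\langle Q\rangle_{t}\,,\qquad R_n(0,\epsilon)=\epsilon\,.
\end{align*}
Viewed as a function of the current value $R_n$, the right-hand side is nonnegative, bounded by $\lambda_n\rho_n$ up to a constant (because $0\le\EE\langle Q\rangle_t\le\rho_n$ by Cauchy--Schwarz and Nishimori), and Lipschitz (finite support of $p_X$ makes the relevant overlap moments and the scalar channel smooth in the SNR), so Cauchy--Lipschitz yields a unique $C^1$ solution on $[0,1]$ with $R_n(t,\epsilon)\in[s_n,2s_n+\lambda_n\rho_n]$, and automatically $q_n(t,\epsilon)\in[0,\rho_n]$. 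Differentiating the ODE in $\epsilon$, the Jacobian $\partial_\epsilon R_n(t,\epsilon)$ solves a linear equation with coefficient $\lambda_n\,\partial_R\EE\langle Q\rangle_t\ge 0$ (nonnegativity being an I-MMSE/Nishimori fact, appendix~\ref{app:gaussianchannels}); since $\partial_\epsilon R_n(0,\epsilon)=1$, this forces $\partial_\epsilon R_n(t,\epsilon)\ge 1$ for all $t$, i.e.\ the path is \emph{regular}. With this choice ${\cal R}_3\equiv 0$.

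\textbf{Step 2: averaging over $\epsilon$ and overlap concentration.} Only $\frac{\lambda_n}{4}{\cal R}_2(\epsilon)$ now survives. Since the left-hand side and $\inf_q i_n^{\rm pot}$ do not depend on $\epsilon$, I average the inequality over $\epsilon\in[s_n,2s_n]$, reducing to a bound on $\frac{1}{s_n}\int_{s_n}^{2s_n}{\cal R}_2(\epsilon)\,d\epsilon$. Split $Q-\EE\langle Q\rangle_t=(Q-\langle Q\rangle_t)+(\langle Q\rangle_t-\EE\langle Q\rangle_t)$. The thermal part $\EE\langle(Q-\langle Q\rangle_t)^2\rangle_t$ is controlled, after the change of variables $\epsilon\mapsto R_n(t,\epsilon)$ (whose Jacobian $\ge 1$ makes the $\epsilon$-integral of a nonnegative integrand at most the corresponding $R$-integral), by an I-MMSE identity and the monotonicity of $R\mapsto\EE\langle Q\rangle$, which telescopes the $R$-integral to a quantity of order $\rho_n/n$. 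The disorder part $\EE[(\langle Q\rangle_t-\EE\langle Q\rangle_t)^2]$ is controlled by the concentration of the interpolating free energy (appendix~\ref{app:free-energy}) fed into the standard convexity lemma: $\langle Q\rangle_t$ is, up to constants, the $R$-derivative of the free energy, a convex function of $R$, and the fluctuations of the derivative of a concentrating convex function are bounded by those of the function after optimizing over an auxiliary window width, which is what produces the inverse power of $s_n$ and the cube-root exponent. Assembling these estimates -- precisely what appendices~\ref{app:free-energy} and \ref{appendix-overlap} carry out, tracking all constants through the scaling $\rho_n\to 0$, $\lambda_n\to\infty$ -- gives $\frac{1}{s_n}\int_{s_n}^{2s_n}{\cal R}_2\,d\epsilon\le C(\lambda_n\rho_n/(n s_n^4))^{1/3}$ up to lower-order terms, hence $\frac{\lambda_n}{4s_n}\int_{s_n}^{2s_n}{\cal R}_2\,d\epsilon\le C(\lambda_n^4\rho_n/(n s_n^4))^{1/3}$; the hypotheses $c_1\le\lambda_n\rho_n\le c_2 n^\gamma$ with $\gamma<1/2$ are exactly what make the free-energy concentration and this change of variables usable. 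Inserting this into the displayed lower bound yields \eqref{MF-sumrule_bound}.

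\textbf{Main obstacle.} The crux is Step 2: proving the overlap concentration with exactly the powers of $n$, $\lambda_n$, $\rho_n$, $s_n$ appearing in \eqref{MF-sumrule_bound}. In the non-sparse setting the overlap and signal strength are $O(1)$ and the textbook estimates apply verbatim; here the overlap is of size $\rho_n\to 0$ while $\lambda_n\to\infty$, so the free-energy concentration and the convexity argument must be re-derived keeping track of how every constant depends on $\lambda_n\rho_n$ -- getting this wrong by any power would spoil the final rate in Theorem~\ref{thm:ws}. A secondary but still nontrivial point is the well-posedness and uniform regularity of the adaptive ODE of Step 1, since the Lipschitz constant of its right-hand side grows with $\lambda_n$.
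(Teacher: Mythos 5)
Your proposal follows essentially the same route as the paper's proof: drop ${\cal R}_1\ge 0$ in the sum rule, choose the adaptive path $q_n(t,\epsilon)=\EE\langle Q\rangle_t$ via the Cauchy problem (regularity from the nonnegativity of $\partial_R\EE\langle Q\rangle_t$, i.e.\ Jacobian $\ge 1$) to cancel ${\cal R}_3$, then average over $\epsilon\in[s_n,2s_n]$ and invoke the appendix overlap-concentration bound to control $\frac{\lambda_n}{4}{\cal R}_2$. This is correct as written (only trivial slips: the relevant modified free energy is concave, not convex, in $R$, and injectivity of the flow — needed for the diffeomorphism property — comes from uniqueness of the ODE solution), so it matches the paper's argument.
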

\begin{proof}
First note that the regime \eqref{mainregime} for the sequences $\lambda_n, \rho_n$ satisfies  the more general condition assumed in this lemma (this is the condition in theorem \ref{thm:wsgeneral} of appendix \ref{app:matrix}).
Assume for the moment that the map $\epsilon\mapsto R_n(t,\epsilon)$ is regular. Then, based on Proposition \ref{L-concentration} and identity \eqref{remarkable} (appendix \ref{appendix-overlap}), we have a bound on the overlap fluctuation. Namely, for some numerical constant $C\ge 0$ independent of $n$
\begin{align}\label{over-concen}
\frac{\lambda_n}{s_n}\int_{s_n}^{2s_n}d\epsilon\, {\cal R}_2 = \frac{\lambda_n}{s_n}\int_{s_n}^{2s_n}d\epsilon \int_0^1dt\,  \mathbb{E}\big\langle (Q - \mathbb{E}\langle Q\rangle_{n,t,R_n(t,\epsilon)} )^2\big\rangle_{n,t,R_n(t,\epsilon)}\le C\Big(\frac{\lambda_n^4\rho_n}{ns_n^4}\Big)^{1/3} \,.
\end{align}
Using this concentration result, and ${\cal R}_1 \geq 0$, and averaging the sum rule \eqref{MF-sumrule} over $\epsilon\in [s_n, 2s_n]$ (recall the error terms are independent of $\epsilon$) we find
\begin{align}\label{MF-sumrule-simple}
\frac1n I(\bX;\bW) \ge & \frac1{s_n}\int_{s_n}^{2s_n}d\epsilon  i_n^{\rm pot}\big({\textstyle \int_0^1dt\,q_n(t,\epsilon)},\lambda_n,\rho_n\big) -  \frac{\lambda_n}{4}\frac1{s_n}\int_{s_n}^{2s_n}d\epsilon\int_0^1dt\, \big(q_n(t,\epsilon)-\EE\langle Q\rangle_{t}\big)^2 \nonumber\\
& +  O(\rho_n s_n)+O\Big(\frac{\lambda_n}{n}\Big)+O\Big(\Big(\frac{\lambda_n^4\rho_n}{ns_n^4}\Big)^{1/3}\Big)\,.
\end{align}
At this stage it is natural to see if we can choose $q_n(t,\epsilon)$ to be the solution of
$q_n(t,\epsilon)=\mathbb{E}\langle Q\rangle_{t}$.
Setting $F_n(t,R_n(t,\epsilon))\equiv \mathbb{E}\langle Q\rangle_{n,t,R_n(t,\epsilon)}$, we recognize 
a {\it first order ordinary differential equation}
\begin{align}\label{odexx}
\frac{d }{dt}R_n(t,\epsilon) = F_n(t,R_n(t,\epsilon))\quad\text{with initial condition}\quad R_n(0,\epsilon)=\epsilon\,.
\end{align}
As $F_n(t,R_n(t,\epsilon))$ is ${\cal C}^1$ with bounded derivative w.r.t. its second argument the Cauchy-Lipschitz theorem implies that \eqref{odexx} admits a unique global solution $R_{n}^*(t,\epsilon)=\epsilon+\int_0^t ds\,q_{n}^*(s,\epsilon)$, where $q_{n}^*:[0,1]\times [s_n,2s_n]\mapsto [0,\rho_n]$.
Note that any solution must satisfy $q_{n}^*(t,\epsilon)\in [0,\rho_n]$ because $\mathbb{E}\langle Q\rangle_{n,t,\epsilon}\in [0,\rho_n]$ as can be seen from a Nishimori identity (appendix \ref{app:nishimori}) and \eqref{mmse}. 

We check that $R_n^*$ is regular. By Liouville's formula the jacobian of the flow $\epsilon\mapsto R_{n}^*(t,\epsilon)$ satisfies 
$$
\frac{d}{d\epsilon}  R_{n}^*(t,\epsilon)=\exp\Big\{\int_0^t ds\, \frac{d}{dR} F_n(s,R)\Big|_{R=R_{n}^*(s,\epsilon)}\Big\}\,.
$$ 
Applying repeatedly the Nishimori identity of Lemma \ref{NishId} (appendix \ref{app:nishimori}) one obtains (this computation does not present any difficulty and can be found in section 6 of \cite{BarbierM17a})
\begin{align}
\frac{d}{dR} F_n(s,R)=\frac{1}{n}\sum_{i,j=1}^n\mathbb{E}\big[(\langle x_ix_j\rangle_{n,s,R}-\langle x_i\rangle_{n,s,R}\langle x_j\rangle_{n,s,R})^2\big]\ge  0\label{jacPos}
\end{align}
so that the flow has a jacobian greater or equal to one. In particular it is locally invertible (surjective). Moreover it is injective because of the unicity of the solution of the differential equation, and therefore it is a $C^1$-diffeomorphism. Thus $\epsilon\mapsto R_{n}^*(t,\epsilon)$ is regular.  With the choice $R_{n}^*$, i.e., by suitably \emph{adapting} the interpolation path, we cancel ${\cal R}_3$. This yields
\begin{align*}
 \frac1n I(\bX;\bW) &\ge \frac1{s_n}\int_{s_n}^{2s_n}d\epsilon \, i_n^{\rm pot}\big({\textstyle \int_0^1dt\,q_n^*(t,\epsilon)},\lambda_n,\rho_n\big) +O(\cdots)
 \nn &
 \ge\inf_{q\in[0,\rho_n]}i_n^{\rm pot}(q,\lambda_n,\rho_n) + O(\cdots)
\end{align*}
where the $O(\cdots)$ is a shorthand notation for the three error terms in \eqref{MF-sumrule-simple}. This the desired result.	
\end{proof}

\section*{Appendices}
\section{General results on the mutual information of sparse spiked matrix models}\label{app:matrix}

In this appendix we give a more general form of theorems \ref{thm:ws} and \ref{thm:wishart} in section \ref{sec:matrix}. 

\subsection{Spiked Wigner model}

Our analysis by the adaptive interpolation method works for any regime where the sequences $\lambda_n$ and $\rho_n$ verify: 
\begin{align}
C\le \lambda_n\rho_n = O(n^{\gamma})\, \quad \text{for some constants} \quad \gamma \in [0,1/2) \quad \text{and} \quad C>0\,.
\label{app-scalingRegime}
\end{align}
Of course this contains the regime \eqref{mainregime} as a special case. Our general result is a statement on the smallness of 
\begin{align*}
\Delta I_n^{\rm Wig} \equiv \frac{1}{\rho_n|\ln\rho_n|}\Big|\frac{1}{n}I(\bX;\bW) -\inf_{q\in [0,\rho_n]} i^{\rm pot}_n(q,\lambda_n,\rho_n)\Big|\,.
\end{align*}
The analysis of section \ref{sec:adapInterp_XX} leads to the following general theorem.

%

\begin{thm}[Sparse spiked Wigner model]\label{thm:wsgeneral}
Let the sequences $\lambda_n$ and $\rho_n$ verify \eqref{app-scalingRegime} and let $\alpha>0$. 
There exists a constant $C>0$ independent of $n$, such that the mutual information for the Wigner spike model verifies
\begin{align*}
\Delta I_n^{\rm Wig}\le \frac{C}{|\ln\rho_n|}\max\Big\{\frac{1}{n^\alpha},\frac{\lambda_n}{n\rho_n},\Big(\frac{\lambda_n^4}{n^{1-4\alpha}\rho_n^2}\big(1+\lambda_n\rho_n^2\big)\Big)^{1/3}\Big\}\,.
\end{align*}	
In particular, choosing $\lambda_n= \Theta(|\ln\rho_n|/\rho_n)$ (which is the appropriate scaling to observe a phase transition) 
\begin{align*}
\Delta I_n^{\rm Wig} \le C\max\Big\{\frac{1}{n^{\alpha}|\ln\rho_n|},\frac{1}{n\rho_n^2},\Big(\frac{|\ln \rho_n|}{n^{1-4\alpha}\rho_n^6}\Big)^{1/3}\Big\}\,.
\end{align*}
If in addition we set $\rho_n=\Theta(n^{-\beta})$, $\beta\ge 0$ (which is the regime \eqref{mainregime}) we have
\begin{align*}
\Delta I_n^{\rm Wig} \le C\max\Big\{\frac{1}{n^{\alpha}\ln n},\frac{1}{n^{1-2\beta}},\Big(\frac{\ln n}{n^{1-4\alpha-6\beta}}\Big)^{1/3}\Big\}\,.
\end{align*}
This bound vanishes as $n$ grows if $\beta\in[0,1/6)$ and $\alpha\in(0, (1-6\beta)/4]$. The last bound is optimized (up to polylog factors) setting $\alpha=(1-6\beta)/7$. In this case (again, when $\lambda_n= \Theta(|\ln\rho_n|/\rho_n)$ and $\rho_n=\Theta(n^{-\beta})$)
\begin{align*}
\Delta I_n^{\rm Wig} \le C\frac{(\ln n)^{1/3}}{n^{(1-6\beta)/7}}\,.
\end{align*}
\end{thm}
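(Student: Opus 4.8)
The plan is to derive Theorem~\ref{thm:wsgeneral} by simply combining the Upper bound and Lower bound Propositions of Section~\ref{sec:adapInterp_XX}, keeping explicit track of every error term as a function of $n$, $\lambda_n$, $\rho_n$ and of the free sequence $s_n=\tfrac12 n^{-\alpha}$, and then optimising over the exponent $\alpha$. No new probabilistic input is needed beyond what Section~\ref{sec:adapInterp_XX} and the overlap-concentration estimate \eqref{over-concen} already provide; the work is bookkeeping and a one-line optimisation.

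First I would combine the two bounds. The constant interpolation path cancels $\mathcal{R}_1$ and discards the non-negative $\mathcal{R}_2,\mathcal{R}_3$, giving the upper bound with errors $O(\rho_n s_n)+O(\lambda_n/n)$. For the lower bound I would invoke the adaptive path $R_n^*$ solving the ODE \eqref{odexx}: it exists, is unique, and is regular (jacobian $\ge 1$) by the Liouville formula together with the Nishimori identity \eqref{jacPos}, so that $\mathcal{R}_3$ is cancelled, $\mathcal{R}_1\ge 0$ is discarded, and the $\epsilon$-averaged $\mathcal{R}_2$ is controlled by the concentration estimate of appendix~\ref{appendix-overlap}, in the form that retains the matrix-MMSE scale $\lambda_n\rho_n^2$. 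Under the general scaling \eqref{app-scalingRegime} this yields, for an $n$-independent constant $C$,
\begin{align*}
\Big|\frac1n I(\bX;\bW)-\inf_{q\in[0,\rho_n]} i_n^{\rm pot}(q,\lambda_n,\rho_n)\Big|
\ \le\ C\max\Big\{\rho_n s_n,\ \frac{\lambda_n}{n},\ \Big(\frac{\lambda_n^4\rho_n(1+\lambda_n\rho_n^2)}{n\,s_n^4}\Big)^{1/3}\Big\}.
\end{align*}
Dividing by $\rho_n|\ln\rho_n|$ and substituting $s_n=\tfrac12 n^{-\alpha}$ gives the first displayed bound of the theorem verbatim.

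Next come the two specialisations. Taking $\lambda_n=\Theta(|\ln\rho_n|/\rho_n)$, the three terms become $\Theta(n^{-\alpha}/|\ln\rho_n|)$, $\Theta(1/(n\rho_n^2))$ and, using $\lambda_n\rho_n^2=\Theta(\rho_n|\ln\rho_n|)=O(1)$, $\Theta((|\ln\rho_n|/(n^{1-4\alpha}\rho_n^6))^{1/3})$, which is the second display. Then taking $\rho_n=\Theta(n^{-\beta})$, so that $|\ln\rho_n|=\Theta(\ln n)$ (and when $\beta=0$ it is bounded and the estimates hold a fortiori), these reduce to $n^{-\alpha}/\ln n$, $n^{-(1-2\beta)}$ and $(\ln n/n^{1-4\alpha-6\beta})^{1/3}$, i.e.\ the third display. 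For this to vanish one needs $\alpha>0$, $\beta<1/2$ and $1-4\alpha-6\beta>0$; the last requires $0<\alpha<(1-6\beta)/4$, a non-empty range precisely when $\beta<1/6$. Balancing the first and third exponents, $\alpha=(1-4\alpha-6\beta)/3$, i.e.\ $\alpha=(1-6\beta)/7$, makes both terms $\Theta((\ln n)^{1/3}n^{-(1-6\beta)/7})$, and since $1-2\beta\ge(1-6\beta)/7$ for every $\beta<1/6$ the middle term is dominated; this gives the final bound $C(\ln n)^{1/3}/n^{(1-6\beta)/7}$.

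Since the Upper and Lower bound Propositions are already in hand, the only genuinely technical ingredient — and the one point where the argument could fail in the sparse regime — is the overlap-concentration estimate \eqref{over-concen} of appendix~\ref{appendix-overlap}: one must check that the $\epsilon$-averaging trick, combined with the regularity of the adapted path, still produces a fluctuation bound that beats the $\rho_n|\ln\rho_n|$ normalisation as $\rho_n\to 0_+$ and $\lambda_n\to+\infty$. It is precisely the resulting constraint that pins down the admissible window $\beta\in[0,1/6)$; everything upstream (the sum rule \eqref{MF-sumrule}, the boundary-value Lemma~\ref{lemma:bound1}, and the existence and regularity of the adaptive interpolation path) transfers unchanged from the non-sparse case.
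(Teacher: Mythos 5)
Your proposal is correct and follows essentially the same route as the paper: Theorem~\ref{thm:wsgeneral} is indeed just the upper- and lower-bound propositions of section~\ref{sec:adapInterp_XX} combined, with the overlap-fluctuation term taken in the form of Proposition~\ref{L-concentration} (retaining the $1+\lambda_n\rho_n^2$ factor, as you rightly do), then normalised by $\rho_n|\ln\rho_n|$, specialised to $\lambda_n=\Theta(|\ln\rho_n|/\rho_n)$ and $\rho_n=\Theta(n^{-\beta})$, and optimised at $\alpha=(1-6\beta)/7$. Your bookkeeping of the three error terms and the balancing of exponents matches the paper's statement, so nothing further is needed.
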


\subsection{Spiked Wishart model}

The following regime is of particular interest and is the one mostly studied in the literature given in the introduction on spiked covariance models:
\begin{align}\label{sparsePCAscaling}
\alpha_n \to \alpha > 0\,, \quad \rho_{U,n}\to \rho_U>0\,, \quad \omega(1/n)=\rho_{V,n}\to 0_+,  \quad \lambda_n = \Theta\Big(\sqrt{\frac{|\ln \rho_{V,n}|}{\rho_{V,n}}}\Big)\,.
\end{align}
The notation $\omega(1/n)=\rho_{V,n}$ means that the sequence $\rho_{V,n}$ vanishes at a rate slower than $1/n$. The analysis of appendix
\ref{app_wishart} leads to the following general theorem on the smallness of 
\begin{align*}
\Delta I_n^{\rm Wish}\equiv\frac{1}{\sqrt{\rho_{V,n}\vert\ln\rho_{V,n}\vert}} \Big\vert \frac{1}{n} I\big((\bU,\bV);\bW\big) -
 \ {\adjustlimits \inf_{q_U\in[0,\rho_{U,n}]}\sup_{q_{V}\in[0,\rho_{V,n}]}}\,i_n^{\rm pot}\big(q_{U},q_V,\lambda_n,\alpha_n,\rho_{U,n},\rho_{V,n}\big) \Big\vert\,.	
\end{align*}
\begin{thm}[Sparse spiked Wishart model]\label{thm:wishart} 
Under the scalings \eqref{sparsePCAscaling}, there exists a constant $C>0$ independent of $n$ such that the mutual information for the spiked Wishart model verifies for any $\alpha >0$
\begin{align*} 
\Delta_nI^{\rm Wish}\leq C \max\Big\{ \frac{1}{n^\alpha \sqrt{\rho_{V,n}\vert \ln\rho_{V,n}\vert}} , \frac{|\ln \rho_{V,n}|^{-1/24}}{n^{1/3-2\alpha}\rho_{V,n}^{11/12}}\Big(\frac{1}{n^{\alpha}}+\sqrt{\rho_{V,n}|\ln \rho_{V,n}|}\Big)^{1/3} \Big\}\,.	
\end{align*}
We set $\rho_{V,n}=\Theta(n^{-\beta})$. Optimizing over $\alpha$ (up to polylog factors) such that $n^{-\alpha}< \sqrt{\rho_{V,n}|\ln\rho_{V,n}|}$ yields $\alpha=(4-3\beta)/18$. In this case the bound simplifies to
\begin{align*} 
\Delta_nI^{\rm Wish}\leq  C\frac{(\ln n)^{1/3}}{n^{(4-12\beta)/18}}
\end{align*}
for some $C>0$. This bound vanishes if $\beta\in[0,1/3)$.
\end{thm}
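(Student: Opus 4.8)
The plan is to run the adaptive interpolation of Section~\ref{sec:adapInterp_XX} in a \emph{bipartite} version, the details being those of appendix~\ref{app_wishart}. First I would set up the interpolating model: keep the rank-one observation $W_{ij}(t)=\sqrt{(1-t)\lambda_n/n}\,U_iV_j+Z_{ij}$ and adjoin two \emph{decoupled} side-channels $\widetilde\bW^U(t,\beps)=\sqrt{R_{U,n}(t,\beps)}\,\bU+\widetilde{\bZ}^U$ and $\widetilde\bW^V(t,\beps)=\sqrt{R_{V,n}(t,\beps)}\,\bV+\widetilde{\bZ}^V$ with i.i.d.\ standard Gaussian noise, driven by two interpolating functions $q_{U,n}\!:[0,1]\times[s_n,2s_n]^2\to[0,\rho_{U,n}]$ and $q_{V,n}\!:[0,1]\times[s_n,2s_n]^2\to[0,\rho_{V,n}]$, two perturbations $\beps=(\epsilon_U,\epsilon_V)$, $s_n=n^{-\alpha}/2$, and
\[
R_{U,n}(t,\beps)=\epsilon_U+\lambda_n\alpha_n\!\int_0^t\! q_{V,n}(s,\beps)\,ds,\qquad R_{V,n}(t,\beps)=\epsilon_V+\lambda_n\!\int_0^t\! q_{U,n}(s,\beps)\,ds .
\]
As in Lemma~\ref{lemma:bound1}, a chain-rule/I-MMSE argument gives the boundary values: at $t=0$ one recovers $\tfrac1n I((\bU,\bV);\bW)$ up to $O(s_n)$ (the dominant error coming from the dense direction $\bU$, where $\mathrm{Var}(U_1)=\Theta(1)$), and at $t=1$ the model decouples, by independence of $\bU$ and $\bV$, into exactly the two scalar Gaussian channels of \eqref{pot_wishart}, evaluated at $\bar q_U\equiv\int_0^1 q_{U,n}$ and $\bar q_V\equiv\int_0^1 q_{V,n}$.

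The heart of the argument is the sum rule. With the two overlaps $Q_U=\tfrac1n\bx_U\!\cdot\!\bU$ and $Q_V=\tfrac1m\bx_V\!\cdot\!\bV$, computing $\tfrac{d}{dt}i_n(t,\beps)$ via the I-MMSE relation and the chain rule produces a normalised matrix-MMSE which, after the Nishimori identities of appendix~\ref{app:nishimori}, equals $\rho_{U,n}\rho_{V,n}-\mathbb{E}\langle Q_UQ_V\rangle_t$, plus two vector-MMSE terms $\lambda_n\alpha_n q_{V,n}(t)(\rho_{U,n}-\mathbb{E}\langle Q_U\rangle_t)$ and $\lambda_n\alpha_n q_{U,n}(t)(\rho_{V,n}-\mathbb{E}\langle Q_V\rangle_t)$. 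Integrating over $t$ and rearranging so that the bilinear term $\tfrac{\lambda_n\alpha_n}{2}(\bar q_U-\rho_{U,n})(\bar q_V-\rho_{V,n})$ of \eqref{pot_wishart} appears, one gets
\[
\tfrac1n I\big((\bU,\bV);\bW\big)=i_n^{\mathrm{pot}}\big(\bar q_U,\bar q_V,\lambda_n,\alpha_n,\rho_{U,n},\rho_{V,n}\big)+\tfrac{\lambda_n\alpha_n}{2}\big(\mathcal{R}_1-\mathcal{R}_2-\mathcal{R}_3\big)+O(s_n)+O\big(\tfrac{\lambda_n}{n}\big),
\]
where, in contrast with the Wigner case, the three remainders are \emph{covariances} rather than variances: $\mathcal{R}_1=\int_0^1 q_{U,n}q_{V,n}\,dt-\bar q_U\bar q_V$, $\mathcal{R}_2=\int_0^1\mathbb{E}\langle(Q_U-\mathbb{E}\langle Q_U\rangle_t)(Q_V-\mathbb{E}\langle Q_V\rangle_t)\rangle_t\,dt$ and $\mathcal{R}_3=\int_0^1(q_{U,n}(t)-\mathbb{E}\langle Q_U\rangle_t)(q_{V,n}(t)-\mathbb{E}\langle Q_V\rangle_t)\,dt$. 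Since none is sign-definite, the strategy is: bound $\mathcal{R}_2$ by Cauchy--Schwarz (so concentration of \emph{both} overlaps is needed), kill $\mathcal{R}_3$ by \emph{adapting} one of the two functions, and kill $\mathcal{R}_1$ by holding the \emph{other} function constant.

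For the upper bound, fix $q_{U,n}\equiv q_U$ constant and enforce $q_{V,n}(t,\beps)=\mathbb{E}\langle Q_V\rangle_t$: then $R_{V,n}$ is an explicit affine function of $t$ and the equation for $R_{U,n}$ is a scalar ODE $\tfrac{d}{dt}R_{U,n}=\lambda_n\alpha_n\mathbb{E}\langle Q_V\rangle_{t,R_{U,n},R_{V,n}}$ with a unique global $\mathcal C^1$ solution by Cauchy--Lipschitz, and the map $\beps\mapsto(R_{U,n},R_{V,n})$ is a $\mathcal C^1$-diffeomorphism with Jacobian $\ge 1$ (triangular: $\det=\partial_{\epsilon_U}R_{U,n}\ge 1$ by Liouville's formula and the Nishimori identity $\tfrac{d}{dR_U}\mathbb{E}\langle Q_V\rangle\ge 0$, computed as in \cite{BarbierM17a}). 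Then $\mathcal{R}_1=\mathcal{R}_3=0$; averaging the sum rule over $\beps\in[s_n,2s_n]^2$ and invoking the bipartite analogues of the free-energy and overlap concentration estimates of appendices~\ref{app:free-energy} and~\ref{appendix-overlap}, together with $|\mathcal{R}_2|\le(\int\mathbb{E}\langle(Q_U-\mathbb{E}\langle Q_U\rangle)^2\rangle)^{1/2}(\int\mathbb{E}\langle(Q_V-\mathbb{E}\langle Q_V\rangle)^2\rangle)^{1/2}$, bounds $\tfrac{\lambda_n\alpha_n}{2}|\mathcal{R}_2|$ by a controlled error, uniformly in $q_U$. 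This gives $\tfrac1n I\le i_n^{\mathrm{pot}}(q_U,\bar q_V^\ast)+\mathrm{err}\le\sup_{q_V}i_n^{\mathrm{pot}}(q_U,q_V)+\mathrm{err}$ for every constant $q_U$, hence $\tfrac1n I\le\inf_{q_U}\sup_{q_V}i_n^{\mathrm{pot}}+\mathrm{err}$. For the lower bound one does the symmetric thing — freeze $q_{V,n}\equiv q_V$, adapt $q_{U,n}(t)=\mathbb{E}\langle Q_U\rangle_t$ — again making $\mathcal{R}_1=\mathcal{R}_3=0$ and $|\mathcal{R}_2|$ small, and obtains, for every constant $q_V$, $\tfrac1n I\ge i_n^{\mathrm{pot}}(\bar q_U^\ast,q_V)-\mathrm{err}\ge\inf_{q_U}i_n^{\mathrm{pot}}(q_U,q_V)-\mathrm{err}$; optimising over $q_V$ yields $\tfrac1n I\ge\sup_{q_V}\inf_{q_U}i_n^{\mathrm{pot}}-\mathrm{err}$.

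These two bounds sandwich $\tfrac1n I$ between $\sup_{q_V}\inf_{q_U}i_n^{\mathrm{pot}}-\mathrm{err}$ and $\inf_{q_U}\sup_{q_V}i_n^{\mathrm{pot}}+\mathrm{err}$, so the \textbf{main obstacle} I expect is closing the gap between the two orderings, i.e.\ showing that the finite-$n$ potential \eqref{pot_wishart} has (essentially) no duality gap on $[0,\rho_{U,n}]\times[0,\rho_{V,n}]$. I would prove this by exhibiting a genuine saddle point: the stationarity conditions $\partial_{q_U}i_n^{\mathrm{pot}}=\partial_{q_V}i_n^{\mathrm{pot}}=0$ are precisely the scalar-channel MMSE fixed-point equations, and because each scalar mutual information is concave in its SNR and the scalar MMSE monotone in its input, a solution is simultaneously a minimiser in $q_U$ and a maximiser in $q_V$; alternatively one transplants the minimax statement already established for this potential in the non-sparse setting (\cite{miolane2018phase}), whose proof does not use the $n$-independence of the priors. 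It then remains to assemble the error terms: the boundary term $O(s_n)$, which after division by $\sqrt{\rho_{V,n}|\ln\rho_{V,n}|}$ is $O(n^{-\alpha}/\sqrt{\rho_{V,n}|\ln\rho_{V,n}|})$, the subdominant $O(\lambda_n/n)$ term, and the $\mathcal{R}_2$-term, which after the Cauchy--Schwarz combination of the dense and sparse overlap concentrations is of the stated form $|\ln\rho_{V,n}|^{-1/24}n^{-(1/3-2\alpha)}\rho_{V,n}^{-11/12}(n^{-\alpha}+\sqrt{\rho_{V,n}|\ln\rho_{V,n}|})^{1/3}$ up to constants. Substituting $\lambda_n=\sqrt{4\gamma|\ln\rho_{V,n}|/(\alpha_n\rho_{V,n})}$ and $\rho_{V,n}=\Theta(n^{-\beta})$ and optimising over $\alpha$ subject to $n^{-\alpha}<\sqrt{\rho_{V,n}|\ln\rho_{V,n}|}$ gives $\alpha=(4-3\beta)/18$ and the announced rate $C(\ln n)^{1/3}n^{-(4-12\beta)/18}$, which vanishes exactly for $\beta\in[0,1/3)$.
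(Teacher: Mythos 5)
Your setup, sum rule (your three covariance-type remainders are algebraically equivalent to the paper's decomposition) and upper bound follow the paper: freezing $q_{U,n}\equiv q_U$ and adapting $q_{V,n}(t,\bm\epsilon)=\EE\langle Q_V\rangle_t$ is exactly the ``partially adaptive'' path of appendix \ref{app_wishart}, with $\mathcal{R}_2$ controlled by Cauchy--Schwarz and the two overlap-concentration propositions. The genuine gap is in your lower bound. By symmetrically freezing $q_{V,n}\equiv q_V$ and adapting $q_{U,n}(t)=\EE\langle Q_U\rangle_t$ you only reach $\frac1n I\ge \sup_{q_V}\inf_{q_U}i_n^{\rm pot}-{\rm err}$, and you then need minimax equality to match the theorem's $\inf_{q_U}\sup_{q_V}$. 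The argument you offer for closing that gap does not work: $i_n^{\rm pot}$ in \eqref{pot_wishart} is \emph{concave} in $q_U$ for fixed $q_V$ (the bilinear term is linear in $q_U$ and $I_n(V;\sqrt{\lambda q_U}V+Z)$ is concave in the SNR), so a joint stationary point is a coordinate-wise \emph{maximum} in $q_U$, not a minimiser; the potential has no convex--concave structure, Sion-type theorems do not apply, and the infimum in $q_U$ generically sits on the boundary of $[0,\rho_{U,n}]$. Appealing to \cite{miolane2018phase} is also insufficient: the theorem is a quantitative finite-$n$ bound relative to $\inf\sup$ of an $n$-dependent potential, and you have not verified (nor made quantitative) any minimax statement in this sparse, $n$-dependent setting.

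The paper avoids the issue entirely by a \emph{fully adaptive} lower-bound path that is not the symmetric mirror of the upper bound: it takes $q_{V,n}(t,\bm\epsilon)=\EE\langle Q_V\rangle_t$ together with
$q_{U,n}(t,\bm\epsilon)=\rho_{U,n}-{\rm MMSE}\big(U\,\big|\,\sqrt{\lambda_n\alpha_n\EE\langle Q_V\rangle_t}\,U+Z\big)$,
i.e., $q_{U,n}$ is chosen so that $\partial_{q_V}i_n^{\rm pot}(q_{U,n}^*(t),q_V)=0$ at $q_V=q_{V,n}^*(t)$. After using concavity of the scalar mutual informations in the SNR (Jensen in $t$) and the same overlap concentration for the covariance remainder, each integrand equals $\sup_{q_V}i_n^{\rm pot}(q_{U,n}^*(t),q_V)\ge \inf_{q_U}\sup_{q_V}i_n^{\rm pot}$ by concavity in $q_V$, so the lower bound lands directly on the $\inf\sup$ value with the stated rates and no duality-gap step. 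With this choice (note $q_{U,n}(t)\neq\EE\langle Q_U\rangle_t$, which is harmless since $\mathcal{R}_2$ is bounded independently of that identity) your remaining error bookkeeping and the optimisation $\alpha=(4-3\beta)/18$ go through as in the paper; without it, your proof does not establish the theorem as stated.
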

\section{Proof of theorem \ref{thm:wishart} by the adaptive interpolation method}\label{app_wishart}

In this appendix we prove theorem \ref{thm:wishart} by the adapative interpolation method. The analysis is similar to the one of the Wigner case in section \ref{sec:adapInterp_XX}.

\subsection{The interpolating model.}
Let ${\bm \epsilon}=(\epsilon_U,\epsilon_V) \in[s_n, 2s_n]^2$ for some sequence $s_n=\frac{1}{2}n^{-\alpha}$. Let $q_{U,n}: [0, 1]\times [s_n,2s_n] \mapsto [0,\rho_{U,n}]$ and similarly for $q_{V,n}$. Set 
\begin{align*}
\begin{cases}
	R_{U,n}(t,{\bm \epsilon})\equiv \epsilon_{U}+\lambda_n\int_0^tds\,q_{U,n}(s,{\bm \epsilon})\,,\\
R_{V,n}(t,{\bm \epsilon})\equiv \epsilon_{V}+\lambda_n\alpha_n\int_0^tds\,q_{V,n}(s,{\bm \epsilon})\,.
\end{cases}	
\end{align*}
Consider the following interpolating estimation model, where $t\in[0,1]$, with accessible data
\begin{align}\label{interpUV}
\begin{cases}
\bW(t)&=\sqrt{(1-t)\frac{\lambda_n}{n}}\, \bU\otimes \bV + \bZ\,, \\
\tilde \bW_U(t,{\bm \epsilon})  &= \sqrt{R_{V,n}(t,{\bm \epsilon})}\,\bU + \tilde \bZ_U\,,\\
\tilde \bW_V(t,{\bm \epsilon})  &= \sqrt{R_{U,n}(t,{\bm \epsilon})}\,\bV + \tilde \bZ_V\,,
\end{cases} 
\end{align}
with independent standard gaussian noise $\tilde \bZ_U,\tilde \bZ_V \sim {\cal N}(0,{\rm I}_n)$, $\bZ=(Z_{ij})_{ij}$ with i.i.d. $Z_{ij}\sim {\cal N}(0,1)$. The fact that the $R_{V,n}$ function appears as the SNR of the decoupled gaussian channel related to $\bU$ (and vice-versa) comes from the bipartite nature of the problem. The Gibbs-bracket, simply denoted $\langle - \rangle_t$, is the expectation w.r.t. the posterior distribution, which is proportional to
(here $\|-\|_{\rm F}$ and $\Vert-\Vert$ are the Frobenius and $\ell_2$ norms)
\begin{align*}
&dP_{n, t,{\bm \epsilon}}\big(\bu,\bv|\bW(t), \tilde \bW_U(t,{\bm \epsilon}),\tilde \bW_V(t,{\bm \epsilon})\big)\nn
&\qquad\qquad\propto \Big(\prod_{i=1}^n dP_{U,n}(u_i)\Big) \Big(\prod_{j=1}^{m}dP_{V,n}(v_j)\Big) \exp \Big\{-\frac12\big\|\bW(t)-\sqrt{(1-t)\frac{\lambda_n}{n}} \bu \otimes \bv\big\|_{\rm F}^2 \nn
&\qquad\qquad\qquad\qquad\qquad-\frac12 \big\| \tilde\bW_U(t,{\bm \epsilon})-\sqrt{R_{V,n}(t,{\bm \epsilon})}\,\bu\big\|^2-\frac12 \big\| \tilde \bW_V(t,{\bm \epsilon})-\sqrt{R_{U,n}(t,{\bm \epsilon})}\,\bv\big\|^2 \Big\}\,.	
\end{align*}
The mutual information density for this interpolating model is
\begin{align*}
i_{n}(t,{\bm \epsilon})\equiv\frac{1}{n}I\big((\bU,\bV);(\bW(t),\tilde{\bW}_U(t,{\bm \epsilon}),\tilde{\bW}_V(t,{\bm \epsilon}))\big)\,.
\end{align*}
The proof of the following lemma is similar to the one of Lemma~\ref{lemma:bound1}.
\begin{lemma}[Boundary values]
Let $\bar \rho_n = \max(\rho_{U,n},\rho_{V,n})$, $U\sim P_{U,n}$, $V\sim P_{V,n}$ and $Z\sim{\cal N}(0,1)$. Then 
\begin{align*}
\begin{cases}
i_{n}(0,{\bm \epsilon}) =\frac1n I((\bU,\bV);\bW)+O(\bar \rho_{n} s_n)\,,\\
i_{n}(1,{\bm \epsilon}) =I_{n}(U;\{\lambda_n \alpha_n \int_0^1q_{V,n}(s,{\bm \epsilon})\}^{1/2}U\!+\!Z)\!+\!\alpha_nI_{n}(V;\{\lambda_n \int_0^1q_{U,n}(s,{\bm \epsilon})\}^{1/2}V\!+\!Z)\!+\!O(\bar \rho_ns_n)\,.
\end{cases} 
\end{align*}
\end{lemma}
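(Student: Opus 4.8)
The plan is to mimic exactly the proof of Lemma~\ref{lemma:bound1}, using the chain rule for mutual information at $t=0$ and $t=1$ together with the I-MMSE relation to control the derivatives with respect to the ``side-channel'' SNRs. The only new feature is that there are now two side channels (for $\bU$ and for $\bV$), so the chain rule is applied twice and the error terms are bookkept through $\bar\rho_n=\max(\rho_{U,n},\rho_{V,n})$.

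\textbf{Step 1: the case $t=0$.} By the chain rule for mutual information,
\begin{align*}
i_n(0,{\bm\epsilon}) &= \tfrac1n I\big((\bU,\bV);\bW(0)\big) + \tfrac1n I\big((\bU,\bV);\tilde\bW_U(0,{\bm\epsilon})\,\big|\,\bW(0)\big) \nn
&\qquad + \tfrac1n I\big((\bU,\bV);\tilde\bW_V(0,{\bm\epsilon})\,\big|\,\bW(0),\tilde\bW_U(0,{\bm\epsilon})\big)\,.
\end{align*}
The first term equals $\tfrac1n I((\bU,\bV);\bW)$ since $\bW(0)=\bW$. For the two remaining terms I use that $R_{U,n}(0,{\bm\epsilon})=\epsilon_U$ and $R_{V,n}(0,{\bm\epsilon})=\epsilon_V$, together with the I-MMSE relation: differentiating in $\epsilon_V$ (resp. $\epsilon_U$) gives a conditional MMSE bounded by $n\,{\rm Var}(U_1)\le n\rho_{U,n}$ (resp. $n\,{\rm Var}(V_1)=m\,{\rm Var}(V_1)/\alpha_n$, so $O(n\rho_{V,n})$ after accounting for $m=\alpha_n n$). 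Hence each conditional mutual information is $\rho_{U,n}/2$- (resp.\ $O(\rho_{V,n})$-) Lipschitz on $[s_n,2s_n]$ and vanishes at $\epsilon_U=0$ (resp.\ $\epsilon_V=0$), yielding a contribution $O(\bar\rho_n s_n)$.

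\textbf{Step 2: the case $t=1$.} Again by the chain rule,
\begin{align*}
i_n(1,{\bm\epsilon}) &= \tfrac1n I\big((\bU,\bV);\tilde\bW_U(1,{\bm\epsilon}),\tilde\bW_V(1,{\bm\epsilon})\big) + \tfrac1n I\big((\bU,\bV);\bW(1)\,\big|\,\tilde\bW_U(1,{\bm\epsilon}),\tilde\bW_V(1,{\bm\epsilon})\big)\,.
\end{align*}
The last term vanishes because $\bW(1)=\bZ$ is independent of $(\bU,\bV)$. Since the two side channels decouple $\bU$ from $\bV$ and the components are i.i.d.,
\begin{align*}
\tfrac1n I\big((\bU,\bV);\tilde\bW_U(1,{\bm\epsilon}),\tilde\bW_V(1,{\bm\epsilon})\big) &= I_n\big(U;\sqrt{R_{V,n}(1,{\bm\epsilon})}\,U+Z\big) + \alpha_n I_n\big(V;\sqrt{R_{U,n}(1,{\bm\epsilon})}\,V+Z\big)\,,
\end{align*}
using $m=\alpha_n n$ for the second term. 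Finally, $R_{V,n}(1,{\bm\epsilon}) = \epsilon_V + \lambda_n\alpha_n\int_0^1 q_{V,n}(s,{\bm\epsilon})\,ds$ differs from $\lambda_n\alpha_n\int_0^1 q_{V,n}$ by $\epsilon_V\le 2s_n$, and $I_n(U;\sqrt{\gamma}\,U+Z)$ is $\rho_{U,n}/2$-Lipschitz in $\gamma$ by the I-MMSE relation and ${\rm Var}(U)\le\rho_{U,n}$; likewise for the $V$-channel with Lipschitz constant $\alpha_n\rho_{V,n}/2$. This gives the claimed $O(\bar\rho_n s_n)$ correction.

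\textbf{Main obstacle.} There is no serious obstacle here; the proof is routine given Lemma~\ref{lemma:bound1}. The only mild subtlety worth stating carefully is the cross-assignment of SNRs (the channel observing $\bU$ carries $R_{V,n}$ and vice versa), which must be tracked so that the limiting expression matches the potential \eqref{pot_wishart}, and the bookkeeping of the factors $m=\alpha_n n$ and $\alpha_n\to\alpha$ so that all error terms collapse into a single $O(\bar\rho_n s_n)$. Since $\bar\rho_n s_n\le s_n\to 0$, these corrections are negligible at the scale $\sqrt{\rho_{V,n}|\ln\rho_{V,n}|}$ relevant to Theorem~\ref{thm:wishart}.
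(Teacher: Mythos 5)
Your proof is correct and takes essentially the same approach the paper intends (the paper only remarks that the proof is ``similar to the one of Lemma~1''): the chain rule for mutual information at $t=0$ and $t=1$ combined with I-MMSE Lipschitz bounds in the two side-channel SNRs, with the cross-assignment $\tilde\bW_U\leftrightarrow R_{V,n}$, $\tilde\bW_V\leftrightarrow R_{U,n}$ tracked correctly. The only blemish is a harmless label swap in Step 1: the conditional information involving $\tilde\bW_U$ (whose SNR is $\epsilon_V$) vanishes at $\epsilon_V=0$, and the one involving $\tilde\bW_V$ (SNR $\epsilon_U$) at $\epsilon_U=0$, which does not affect the $O(\bar\rho_n s_n)$ conclusion.
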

\subsection{Fundamental sum-rule.}
As before, our proof is bases on an important sum-rule.
\begin{proposition}[Sum rule]\label{prop:sumsule-wishart}
Let $\bar \rho_n = \max(\rho_{U,n},\rho_{V,n})$, $\alpha_n = m/n$, $U\sim P_{U,n}$, $V\sim P_{V,n}$ and $Z\sim{\cal N}(0,1)$. Then
\begin{align*}
&\frac1n I((\bU,\bV);\bW) =   I_{n}(U;\{\lambda_n \alpha_n \textstyle{\int_0^1q_{V,n}}(s,{\bm \epsilon})\}^{1/2}U+Z)+\alpha_nI_{n}(V;\{\lambda_n \textstyle{\int_0^1q_{U,n}}(s,{\bm \epsilon})\}^{1/2}V+Z)\nn
&\qquad+\frac{\lambda_n\alpha_n}{2}\rho_{U,n}\rho_{V,n}+\frac{\lambda_n\alpha_n}{2}\int_0^1dt\Big\{\EE\langle Q_U\rangle_t \EE\langle Q_V\rangle_t-\EE\langle Q_U Q_V\rangle_t\Big\} +O(\bar \rho_ns_n) \nn
&\qquad +\frac{\lambda_n\alpha_n}{2}\int_0^1dt\Big\{q_{V,n}(t,{\bm \epsilon})(\EE\langle Q_U\rangle_t-\rho_{U,n})+q_{U,n}(t,{\bm \epsilon})(\EE\langle Q_V\rangle_t-\rho_{V,n})-\EE\langle Q_U\rangle_t \EE\langle Q_V\rangle_t\Big\}
\end{align*}
where the overlaps are defined as 
$$
Q_U\equiv \frac{1}{n}\bu\cdot \bU\,, \qquad Q_V\equiv \frac{1}{m}\bv\cdot \bV \,.
$$
%
\end{proposition}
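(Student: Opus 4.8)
The plan is to follow verbatim the architecture of Proposition~\ref{prop1}, adapted to the bipartite (rectangular) geometry. First I invoke the fundamental theorem of calculus, $i_{n}(0,{\bm \epsilon}) = i_{n}(1,{\bm \epsilon}) - \int_0^1 dt\,\tfrac{d}{dt}i_{n}(t,{\bm \epsilon})$, and substitute the two boundary values supplied by the preceding lemma: the $t=0$ value returns $\tfrac1n I((\bU,\bV);\bW)$ up to $O(\bar\rho_n s_n)$, and the $t=1$ value returns the two decoupled scalar mutual informations $I_{n}(U;\{\lambda_n\alpha_n\int_0^1 q_{V,n}\}^{1/2}U+Z) + \alpha_n I_{n}(V;\{\lambda_n\int_0^1 q_{U,n}\}^{1/2}V+Z)$ up to $O(\bar\rho_n s_n)$, using that each scalar mutual information is $\tfrac{\bar\rho_n}{2}$-Lipschitz in its SNR by the I-MMSE relation (Appendix~\ref{app:gaussianchannels}).

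Next I compute $\tfrac{d}{dt}i_{n}(t,{\bm \epsilon})$. The three observation blocks $\bW(t),\tilde\bW_U(t,{\bm \epsilon}),\tilde\bW_V(t,{\bm \epsilon})$ are conditionally independent given $(\bU,\bV)$ with signal-to-noise ratios $(1-t)\lambda_n/n$, $R_{V,n}(t,{\bm \epsilon})$ and $R_{U,n}(t,{\bm \epsilon})$; combining the I-MMSE relation applied channel by channel (each MMSE conditioned on the \emph{full} interpolating data, whose posterior mean is the Gibbs average) with $\tfrac{d}{dt}R_{U,n}=\lambda_n q_{U,n}(t,{\bm \epsilon})$ and $\tfrac{d}{dt}R_{V,n}=\lambda_n\alpha_n q_{V,n}(t,{\bm \epsilon})$ gives
\begin{align*}
\frac{d}{dt}i_{n}(t,{\bm \epsilon}) = {}&-\frac{\lambda_n}{2n^2}\EE\big\|\bU\otimes\bV - \langle\bu\otimes\bv\rangle_t\big\|_{\rm F}^2 + \frac{\lambda_n\alpha_n q_{V,n}(t,{\bm \epsilon})}{2n}\EE\big\|\bU - \langle\bu\rangle_t\big\|^2 \\ &+ \frac{\lambda_n q_{U,n}(t,{\bm \epsilon})}{2n}\EE\big\|\bV - \langle\bv\rangle_t\big\|^2 \, .
\end{align*}
Unlike the Wigner case there is no diagonal-completion correction, since $\bU\otimes\bV$ already carries all $nm$ distinct entries. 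I then collapse the three mean-square errors with the Nishimori identities of Appendix~\ref{app:nishimori}: using unit second moments of $p_U,p_V$ (so $\EE\|\bU\|^2=n\rho_{U,n}$, $\EE\|\bV\|^2=m\rho_{V,n}$), the independence $\bU\perp\bV$ (so $\EE\|\bU\otimes\bV\|_{\rm F}^2=\EE\|\bU\|^2\,\EE\|\bV\|^2$), and $Q_U=\tfrac1n\bu\cdot\bU$, $Q_V=\tfrac1m\bv\cdot\bV$, one finds $\tfrac1n\EE\|\bU-\langle\bu\rangle_t\|^2=\rho_{U,n}-\EE\langle Q_U\rangle_t$, $\tfrac1n\EE\|\bV-\langle\bv\rangle_t\|^2=\alpha_n(\rho_{V,n}-\EE\langle Q_V\rangle_t)$ and $\tfrac1{n^2}\EE\|\bU\otimes\bV-\langle\bu\otimes\bv\rangle_t\|_{\rm F}^2=\alpha_n(\rho_{U,n}\rho_{V,n}-\EE\langle Q_U Q_V\rangle_t)$. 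Substituting back, integrating over $t\in[0,1]$, and plugging into the fundamental-theorem identity produces the sum rule, with the constant $\tfrac{\lambda_n\alpha_n}{2}\rho_{U,n}\rho_{V,n}$ coming from the constant part of the matrix-MMSE; the final grouping displayed in the statement is obtained by adding and subtracting $\EE\langle Q_U\rangle_t\EE\langle Q_V\rangle_t$ inside the integrand, so that one piece isolates the overlap covariance $\EE\langle Q_U\rangle_t\EE\langle Q_V\rangle_t-\EE\langle Q_U Q_V\rangle_t$ and the other the mismatch between the path $q$ and the overlaps.

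The only genuinely delicate point — hence the main obstacle — is keeping the bipartite bookkeeping honest: $\bV$ lives in $\mathbb{R}^m$, not $\mathbb{R}^n$; the decoupled channel carrying $\bU$ is driven by $R_{V,n}$ while the one carrying $\bV$ is driven by $R_{U,n}$; and the normalizations $Q_U=\tfrac1n\bu\cdot\bU$ versus $Q_V=\tfrac1m\bv\cdot\bV$ differ. Tracking where each factor of $\alpha_n=m/n$ is generated — one from $\tfrac1n\EE\|\bV-\langle\bv\rangle_t\|^2$ via $\EE\langle Q_V\rangle_t=\tfrac1m\EE[\bV\cdot\langle\bv\rangle_t]$, one from the matrix-MMSE via $\EE\|\bU\otimes\bV\|_{\rm F}^2=n\rho_{U,n}\cdot m\rho_{V,n}$ — is exactly what makes all three contributions come out proportional to $\lambda_n\alpha_n/2$, as in the statement. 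Everything else is a line-by-line transcription of the Wigner argument, and I expect no new analytic difficulty; the concentration inputs needed later for the lower bound are the direct analogues of \eqref{over-concen} and are not needed for the sum rule itself.
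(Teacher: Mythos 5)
Your proposal is correct and follows essentially the same route as the paper's own proof: fundamental theorem of calculus with the boundary-value lemma, the $t$-derivative computed channel by channel via the I-MMSE relation and the chain rule, collapse of the three mean-square errors by Nishimori identities (with the same bookkeeping of $\alpha_n$ factors, e.g. $\tfrac1n\EE\|\bV-\langle\bv\rangle_t\|^2=\alpha_n(\rho_{V,n}-\EE\langle Q_V\rangle_t)$ and $\tfrac1{n^2}\EE\|\bU\otimes\bV-\langle\bu\otimes\bv\rangle_t\|_{\rm F}^2=\alpha_n(\rho_{U,n}\rho_{V,n}-\EE\langle Q_UQ_V\rangle_t)$), followed by the final add-and-subtract of $\EE\langle Q_U\rangle_t\EE\langle Q_V\rangle_t$. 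Your observation that no diagonal-completion correction arises (hence no $O(\lambda_n/n)$ term, unlike the Wigner case) is also consistent with the statement.
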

\begin{proof}
We compare the boundaries values \eqref{bound2} using the fundamental theorem of calculus $i_{n}(0,{\bm \epsilon})=i_{n}(1,{\bm \epsilon})-\int_0^1 dt \frac{d}{dt}i_{n}(t,{\bm \epsilon})$. Using the I-MMSE formula (first equality) and then the Nishimori identity (second equality) we have
\begin{align*}
& \frac{d}{dt}i_n(t,{\bm \epsilon})
\nn & 
= -\frac{\lambda_n\alpha_n}{2nm}\EE\|\bU\otimes\bV-\langle \bu\otimes\bv\rangle_{t}\|_{\rm F}^2
\!+\!\frac{\lambda_n\alpha_nq_{V,n}(t,{\bm \epsilon})}{2n}\EE\|\bU-\langle \bu\rangle_{t}\|^2\!+\!\frac{\lambda_n\alpha_nq_{U,n}(t,{\bm \epsilon})}{2m}\EE\|\bV-\langle \bv\rangle_{t}\|^2
\nn &
\overset{\rm N}{=} \frac{\lambda_n\alpha_n}{2}\Big\{\EE\langle Q_UQ_V\rangle_t-\rho_{U,n}\rho_{V,n}+q_{V,n}(t,{\bm \epsilon})(\rho_{U,n}-\EE\langle Q_U\rangle_t)+q_{U,n}(t,{\bm \epsilon})(\rho_{V,n}-\EE\langle Q_V\rangle_t)\Big\}\,.
\end{align*}
The $\rm N$ stands for ``Nishimori'', and each time we use the Nishimori identity of Lemma~\ref{NishId} for a simplification we write a $\rm N$ on top of the equality. Replacing this result and the boundary values \eqref{bound2} in the fundamental theorem of calculus yields the sum rule after few lines of algebra.
\end{proof}

We now derive two matching bounds, under the scalings \eqref{sparsePCAscaling}, which implie Theorem~\ref{thm:wishart}.

\subsection{Upper bound: partially adaptive interpolation path.}
We start again with the simplest bound:
\begin{proposition}[Upper bound] Under the scalings \eqref{sparsePCAscaling} we have
\begin{align} 
\frac1n I((\bU,\bV);\bW) &\le  {\adjustlimits \inf_{q_U\in[0,\rho_{U,n}]}\sup_{q_{V}\in[0,\rho_{V,n}]}}\,i_n^{\rm pot}\big(q_{U},q_V,\lambda_n,\alpha_n,\rho_{U,n},\rho_{V,n}\big) \nn
&\qquad\qquad\qquad+O\Big(s_n+\frac{|\ln \rho_{V,n}|^{11/24}}{s_n^{2}n^{1/3}\rho_{V,n}^{9/24}}\Big(s_n+\sqrt{\rho_{V,n}|\ln \rho_{V,n}|}\Big)^{1/3}\Big)	\,.	\label{upperboundwishart}
\end{align}
\end{proposition}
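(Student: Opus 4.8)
The strategy mirrors the Wigner upper bound but the presence of two overlaps $Q_U,Q_V$ forces an asymmetric treatment: the $\inf$--$\sup$ structure of the potential means we cannot simply drop all remainders. The plan is to fix $q_{U,n}(t,\bm\epsilon)\equiv q_U\in[0,\rho_{U,n}]$ constant (this is the variable over which we take an infimum, so a constant choice gives an upper bound after optimizing), while \emph{adapting} $q_{V,n}(t,\bm\epsilon)$ to solve the ODE $\frac{d}{dt}R_{V,n}(t,\bm\epsilon)=\lambda_n\alpha_n\,\EE\langle Q_V\rangle_{n,t,R_{U,n},R_{V,n}}$ with $R_{V,n}(0,\bm\epsilon)=\epsilon_V$; this is the ``partially adaptive'' path of the subsection title. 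As in the Wigner lower bound, Cauchy--Lipschitz gives a unique global solution $q_{V,n}^*(t,\bm\epsilon)\in[0,\rho_{V,n}]$, and Liouville's formula together with a Nishimori computation analogous to \eqref{jacPos} shows $\epsilon_V\mapsto R_{V,n}^*(t,\bm\epsilon)$ has Jacobian $\ge 1$, i.e.\ is regular, so the overlap concentration estimate applies.

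First I would insert $q_{U,n}\equiv q_U$ and $q_{V,n}=q_{V,n}^*$ into the sum rule of Proposition~\ref{prop:sumsule-wishart}. The term proportional to $q_{U,n}(t,\bm\epsilon)(\EE\langle Q_V\rangle_t-\rho_{V,n})$ combines with the boundary $\alpha_n I_n(V;\cdots)$ and the $\frac{\lambda_n\alpha_n}{2}\rho_{U,n}\rho_{V,n}$ term to reconstruct the part of $i_n^{\rm pot}$ depending on $q_U$ and $\int_0^1 q_{V,n}^*$, while the adaptive choice kills the $q_{V,n}(t,\bm\epsilon)(\EE\langle Q_U\rangle_t-\rho_{U,n})$ line exactly (that is the point of solving the ODE for $R_{V,n}$). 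What remains is the ``off-diagonal'' fluctuation $\frac{\lambda_n\alpha_n}{2}\int_0^1 dt\,\big(\EE\langle Q_UQ_V\rangle_t-\EE\langle Q_U\rangle_t\EE\langle Q_V\rangle_t\big)$, which must be controlled: by Cauchy--Schwarz it is bounded in absolute value by $\frac{\lambda_n\alpha_n}{2}\int_0^1 dt\,\big(\EE\langle(Q_V-\EE\langle Q_V\rangle_t)^2\rangle_t\big)^{1/2}\big(\EE\langle(Q_U-\EE\langle Q_U\rangle_t)^2\rangle_t\big)^{1/2}$, and since $Q_U\in[0,\rho_{U,n}]$ with $\rho_{U,n}=\Theta(1)$ the $U$-factor is $O(1)$, so it suffices to bound $\EE\langle(Q_V-\EE\langle Q_V\rangle_t)^2\rangle_t$. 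Averaging over $\epsilon_V\in[s_n,2s_n]$ and invoking the Wishart analogue of Proposition~\ref{L-concentration} / \eqref{over-concen} (regularity of $R_{V,n}^*$ was just established) gives $\frac{1}{s_n}\int_{s_n}^{2s_n}d\epsilon_V\int_0^1 dt\,\EE\langle(Q_V-\EE\langle Q_V\rangle_t)^2\rangle_t \le C\big(\frac{\rho_{V,n}}{n\lambda_n\alpha_n^2 s_n^4}\big)^{1/3}$ or the appropriate power; tracking the $\lambda_n=\Theta(\sqrt{|\ln\rho_{V,n}|/\rho_{V,n}})$ scaling produces the stated error $O\big(s_n^{-2}n^{-1/3}|\ln\rho_{V,n}|^{11/24}\rho_{V,n}^{-9/24}(s_n+\sqrt{\rho_{V,n}|\ln\rho_{V,n}|})^{1/3}\big)$ after also absorbing the $O(\bar\rho_n s_n)=O(s_n)$ boundary error. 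Finally, replacing $\int_0^1 q_{V,n}^*(t,\bm\epsilon)\,dt$ by the supremum over $q_V\in[0,\rho_{V,n}]$ (legitimate since it only increases the right side — the potential is concave in $q_V$ but we are taking a $\sup$) and then the infimum over the free constant $q_U$ yields \eqref{upperboundwishart}.

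\textbf{Main obstacle.}
The delicate point is the treatment of the cross term $\EE\langle Q_UQ_V\rangle_t-\EE\langle Q_U\rangle_t\EE\langle Q_V\rangle_t$: unlike the Wigner case, where adapting the single path $R_n$ cancels $\mathcal R_3$ outright and concentration handles $\mathcal R_2$, here one overlap is adapted and the other is not, so a genuine covariance between $Q_U$ and $Q_V$ survives and must be dominated by the fluctuation of $Q_V$ alone via Cauchy--Schwarz. Making this quantitative requires the overlap-concentration bound for the Wishart interpolation to be phrased for $Q_V$ with $R_{V,n}^*$ regular, and then carefully propagating the $\lambda_n,\alpha_n,\rho_{V,n},s_n$ dependencies through the exponents — the arithmetic of the exponent $11/24$ and $9/24$ is where errors would creep in, though conceptually nothing new beyond the Wigner machinery is needed. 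A secondary subtlety is checking that adapting only $q_{V,n}$ (and not $q_{U,n}$) still gives a \emph{regular} flow in the single variable $\epsilon_V$; this follows from the same Nishimori/Liouville argument as \eqref{jacPos} applied to the $V$-block, with $\epsilon_U$ and $q_U$ held fixed as parameters.
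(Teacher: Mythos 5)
Your overall architecture matches the paper's: fix $q_{U,n}\equiv q_U$ constant, adapt $q_{V,n}(t,\bm\epsilon)=\EE\langle Q_V\rangle_t$ through a Cauchy problem, reconstruct the potential from the sum rule of Proposition~\ref{prop:sumsule-wishart}, control the surviving covariance $\EE\langle Q_UQ_V\rangle_t-\EE\langle Q_U\rangle_t\EE\langle Q_V\rangle_t$ by Cauchy--Schwarz, and pass to $\sup_{q_V}$ then $\inf_{q_U}$. However, there are two genuine gaps. First, your treatment of the Cauchy--Schwarz factor for $Q_U$ is too lossy: you argue that since $\rho_{U,n}=\Theta(1)$ the $U$-fluctuation is $O(1)$ and ``it suffices to bound'' the $Q_V$-fluctuation. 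But the covariance carries the diverging prefactor $\lambda_n\alpha_n/(2s_n^2)$ with $\lambda_n=\Theta(\sqrt{|\ln\rho_{V,n}|/\rho_{V,n}})$, and the paper beats it only by using overlap concentration for \emph{both} factors: Proposition~\ref{L-concentration-wishart} for $\mathcal{L}_V$ (hence $Q_V$) \emph{and} Proposition~\ref{LU-concentration-wishart} for $\mathcal{L}_U$ (hence $Q_U$); the factor $(\ln\rho_{V,n})^2/(n\rho_{V,n})$ inside the sixth root of \eqref{over-concen-wishart} is precisely the $Q_U$-concentration bound. With the trivial $O(1)$ bound on the $U$-factor you lose a power of $n^{-1/6}$-type decay, the resulting error does not match \eqref{upperboundwishart}, and after optimizing $s_n$ the admissible sparsity range shrinks well below the $\beta<1/3$ needed for Theorem~\ref{thm:wishart}. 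Relatedly, the $\bm\epsilon$-average must be over both $\epsilon_U$ and $\epsilon_V$ in $[s_n,2s_n]^2$: the concentration of $\mathcal{L}_V$ is obtained by smoothing in the $R_U$ (i.e.\ $\epsilon_U$) direction, so averaging only over $\epsilon_V$ does not let you invoke it.

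Second, your regularity argument for the partially adapted flow is not the right one. The Jacobian of $\epsilon_V\mapsto R_{V,n}^*(t,\bm\epsilon)$ involves, via Liouville, $\partial G_{V,n}/\partial R_V=\lambda_n\alpha_n\,\partial\EE\langle Q_V\rangle_t/\partial R_V$, and in the bipartite model \eqref{interpUV} $R_V$ is the SNR of the channel observing $\bU$ (not $\bV$). So this is a \emph{cross}-monotonicity -- increasing the side information on $\bU$ cannot decrease the overlap of $\bV$ -- and it does not follow from ``the same Nishimori/Liouville argument as \eqref{jacPos} applied to the $V$-block'': the \eqref{jacPos}-type computation gives positivity of the same-block derivative $\partial\EE\langle Q_V\rangle_t/\partial R_U$ as a sum of squares, not of the cross derivative. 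The paper devotes a separate argument to this point, proving $\partial G_{V,n}/\partial R_V\ge 0$ via ``conditioning reduces the MMSE'' (Lemma~\ref{lemma:conditioningMMSE}) together with the stability of the Gaussian law (or, alternatively, Lemma~\ref{lemma:MIadditivity} and the I-MMSE relation). This monotonicity is needed both for the regularity of the flow and, implicitly, for the concentration lemmas you want to invoke, so it cannot be waved through. (A minor further point: the adaptive choice does not ``kill'' the $q_{V,n}(\EE\langle Q_U\rangle_t-\rho_{U,n})$ line; it cancels the $\EE\langle Q_U\rangle_t\EE\langle Q_V\rangle_t$ piece of the last line of the sum rule so that, together with $\frac{\lambda_n\alpha_n}{2}\rho_{U,n}\rho_{V,n}$, the coupling term $\frac{\lambda_n\alpha_n}{2}(q_U-\rho_{U,n})(\int_0^1 q_{V,n}^*\,dt-\rho_{V,n})$ of the potential is reconstructed -- the conclusion is the same but the bookkeeping as you state it is off.)
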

\begin{proof}
For this bound only one of the interpolation function is adapted. Consider the following Cauchy problem for $R_n(t,{\bm \epsilon})=(R_{U,n}(t,{\bm \epsilon}),R_{V,n}(t,{\bm \epsilon}))$: 
\begin{align*}
\frac{dR_{n}}{dt}(t,{\bm \epsilon})=\big(\lambda_n q_{U}, G_{V,n}(t,R_n(t,{\bm \epsilon}))\big), \qquad R_n(0,{\bm \epsilon})={\bm \epsilon}\,,
\end{align*}
where $q_{U}\in[0,\rho_{U,n}]$ and $G_{V,n}(t,R_n(t,{\bm \epsilon}))\equiv\lambda_n\alpha_n \EE\langle Q_V\rangle_t\in[0,\lambda_n\alpha_n\rho_{V,n}]$, i.e.,
\begin{align*}
\big(q_{U,n}(t,{\bm \epsilon}),q_{V,n}(t,{\bm \epsilon})\big)=\big(q_{U}, \EE\langle Q_V\rangle_t\big), \qquad R_n(0,{\bm \epsilon})={\bm \epsilon}\,.
\end{align*}
By the Cauchy-Lipschitz theorem this ODE admits a unique global solution 
\begin{talign*}
R_{n}^*(t,{\bm \epsilon})=(R_{U,n}^*(t,{\bm \epsilon})=\epsilon_{U}+\lambda_n q_{U,n}t ,R_{V,n}^*(t,{\bm \epsilon})=\epsilon_{V}+\lambda_n\alpha_n\int_0^tds\,q^*_{V,n}(s,{\bm \epsilon}))	\,.
\end{talign*}
Because the function $(q_{U}, \EE\langle Q_V\rangle_t)$ is ${\cal C}^1$ the solution $R_n^*$ is ${\cal C}^1$ in all its arguments. By the Liouville formula the Jacobian determinant $J_{n}(t,{\bm \epsilon})$ of the flow ${\bm \epsilon}\mapsto R_{n}^*(t,{\bm \epsilon})$ satisfies
\begin{align}
J_{n}(t,{\bm \epsilon})\equiv {\rm det}\Big(\frac{\partial R_{n,t}(t,{\bm \epsilon})}{\partial {\bm \epsilon}}\Big)=\exp \Big\{\int_0^t \frac{\partial G_{V,n}}{\partial R_V}(s, R_{U,n}^*(s,{\bm \epsilon}),R_V=R_{V,n}^*(s,{\bm \epsilon}))ds\Big\}\ge 1\,.\label{JacUV}
\end{align}
We show at the end of the proof that $\frac{\partial G_{V,n}}{\partial R_V}\ge 0$. The intuition is the same as before: increasing the SNR $R_V$ cannot decrease the overlap $\EE\langle Q_V\rangle_t$, or equivalently it cannot increase the MMSE $\rho_{V,n}-\EE\langle Q_V\rangle_t$. The flow ${\bm \epsilon} \mapsto R_n^*(t,{\bm \epsilon})$ thus has Jacobian greater or equal to one, and is surjective. It is also injective by unicity of the solution of the differential equation, and is thus a ${\cal C}^1$-diffeomorphism. A $\mathcal{C}^1$-diffeomrophic flow with Jacobian greater or equal to one is called regular.

By the Cauchy-Schwarz inequality and Fubini's theorem we have
\begin{align*}
	&\frac{\lambda_n\alpha_n}{2s_n^2}\Big|\int d{\bm \epsilon} \int_0^1dt\Big\{\EE\langle Q_U\rangle_t \EE\langle Q_V\rangle_t-\EE\langle Q_U Q_V\rangle_t\Big\}\Big|\nn
	&\qquad=\frac{\lambda_n\alpha_n}{2s_n^2}\Big|\int_0^1dt\int d{\bm \epsilon}\, \EE\big\langle (Q_U-\EE\langle Q_U\rangle_t) (Q_V-\EE\langle Q_V\rangle_t)\big\rangle_t\Big|\nn
	&\qquad\qquad \le \frac{\lambda_n\alpha_n}{2s_n^2}\int_0^1dt\Big\{\int d{\bm \epsilon}\, \EE\big\langle (Q_U-\EE\langle Q_U\rangle_t)^2 \big\rangle_t \Big\}^{1/2} \Big\{\int d{\bm \epsilon}\, \EE\big\langle  (Q_V-\EE\langle Q_V\rangle_t)^2\big\rangle_t\Big\}^{1/2}\,.
\end{align*}
By the regularity of the flow we are allowed to use Propositions~\ref{L-concentration-wishart}, \ref{LU-concentration-wishart} of section~\ref{appendix-overlap}. Together with inequality \eqref{LV_QV} and a similar one for ${\cal L}_U$ (see section~\ref{appendix-overlap}) we obtain under the scalings \eqref{sparsePCAscaling},
\begin{align}
&\frac{\lambda_n\alpha_n}{2s_n^2}\Big|\int d{\bm \epsilon} \int_0^1dt\Big\{\EE\langle Q_U\rangle_t \EE\langle Q_V\rangle_t-\EE\langle Q_U Q_V\rangle_t\Big\}\Big|\nn
&\qquad\qquad \le \frac{C}{s_n^2}\sqrt{\frac{|\ln \rho_{V,n}|}{\rho_{V,n}}}\Big(\frac1n\sqrt{\frac{|\ln \rho_{V,n}|}{\rho_{V,n}}}\big(s_n+\sqrt{\rho_{V,n}|\ln \rho_{V,n}|}\big)^2\times\frac{(\ln \rho_{V,n})^2}{n\rho_{V,n}}\Big)^{1/6}\nn
&\qquad\qquad= C\frac{|\ln \rho_{V,n}|^{11/24}}{s_n^{2}n^{1/3}\rho_{V,n}^{9/24}}\Big(s_n+\sqrt{\rho_{V,n}|\ln \rho_{V,n}|}\Big)^{1/3}\,.\label{over-concen-wishart}
\end{align}
Therefore, averaging the sum-rule over ${\bm \epsilon}\in [s_n, 2s_n]^2$ and using the solution $R^*_n$ of the above Cauchy problem, we obtain 
\begin{align*}
\frac1n I((\bU,\bV);\bW) &=   \frac{1}{s_n^2}\int_{[s_n,2s_n]^2} d{\bm \epsilon}\,i_n^{\rm pot}\big(q_{U},{\textstyle \int_0^1 }q_{V,n}^*(t,{\bm \epsilon})dt,\lambda_n,\alpha_n,\rho_{U,n},\rho_{V,n}\big) \nn
&\qquad\qquad\qquad+O\Big(s_n+\frac{|\ln \rho_{V,n}|^{11/24}}{s_n^{2}n^{1/3}\rho_{V,n}^{9/24}}\Big(s_n+\sqrt{\rho_{V,n}|\ln \rho_{V,n}|}\Big)^{1/3}\Big)\,.
\end{align*}
Because this inequality is true for any $q_{U}\in[0,\rho_{U,n}]$ we obtain the result.

It remains to prove that $\frac{\partial G_{V,n}}{\partial R_V}\ge 0$, i.e., $\frac{\partial\EE\langle Q_V\rangle_t}{\partial R_V}\ge 0$. We drop un-necessary dependencies. Let $\Delta \ge 0$. Consider the following modification of the model \eqref{interpUV}:
\begin{align*}
\begin{cases}
\bW&=\sqrt{(1-t)\frac{\lambda_n}{n}}\, \bU\otimes \bV + \bZ\,, \\
\hat \bW_U(R_{V},\Delta)  &= \bU + \tilde \bZ_U/\sqrt{R_{V}}+\hat \bZ_U\sqrt{\Delta/R_{V}}\,,\\
\tilde \bW_V  &= \sqrt{R_{U}}\,\bV + \tilde \bZ_V\,,
\end{cases} 
\end{align*}
where $\hat \bZ_U\sim{\cal N}(0,{\rm I}_n)$ independently of the rest. By stability of the gaussian distribution under addition we have in law $\tilde \bZ_U/\sqrt{R_{V}}+\hat \bZ_U\sqrt{\Delta/R_{V}}=\tilde \bZ_U\sqrt{(\Delta+1)/R_{V}}$, therefore the MMSE for model \eqref{interpUV} ${\rm MMSE}(\bV|\bW,\tilde \bW_U(R_{V}),\tilde \bW_V)$ (we made explicit the dependence of $\tilde \bW_U$ in $R_V=R_{V,n}(t,{\bm \epsilon})$) verifies
$${\rm MMSE}(\bV|\bW,\hat \bW_U(R_V,\Delta),\tilde \bW_V)={\rm MMSE}(\bV|\bW,\tilde \bW_U(R_{V}/(\Delta+1)),\tilde \bW_V)\,.$$
We then have
\begin{align*}
	&{\rm MMSE}(\bV|\bW,\tilde \bW_U(R_{V}),\tilde \bW_V)={\rm MMSE}(\bV|\bW,\hat \bW_U(R_V,\Delta),\tilde \bW_V,\hat \bZ_U)\nn
	&\qquad\qquad \le {\rm MMSE}(\bV|\bW,\hat \bW_U(R_V,\Delta),\tilde \bW_V)={\rm MMSE}(\bV|\bW,\tilde \bW_U(R_{V}/(\Delta+1)),\tilde \bW_V)
\end{align*}
where the inequality follows from Lemma~\ref{lemma:conditioningMMSE}. Because $\frac{R_V}{\Delta+1}\le R_V$, ${\rm MMSE}(\bV|\bW,\tilde \bW_U(R_{V}),\tilde \bW_V)$ is non-increasing in $R_V$. Recalling $${\rm MMSE}(\bV|\bW,\tilde \bW_U(R_{V}),\tilde \bW_V)\overset{\rm N}{=}\rho_{V,n}-\EE\langle Q_V\rangle_t$$ this proves $\frac{\partial G_{V,n}}{\partial R_V}\ge 0$.

We provide here an alternative proof. Consider the interpolating model \eqref{interpUV} where a positive quantity $\Delta$ is added to $R_{V}$. We denote $I_\Delta\big((\bU,\bV);(\bW,\tilde{\bW}_U,\tilde{\bW}_V)\big)$ the mutual information for this new model, so $i_n(t,{\bm \epsilon})=\frac1nI_0\big((\bU,\bV);(\bW,\tilde{\bW}_U,\tilde{\bW}_V)\big)$. By Lemma~\ref{lemma:MIadditivity} this model is mutual information-wise equivalent to the following one:
\begin{align}\label{model:interp_Delta}
\begin{cases}
\bW&=\sqrt{(1-t)\frac{\lambda_n}{n}}\, \bU\otimes \bV + \bZ\,, \\
\tilde \bW_U  &= \sqrt{R_{V}}\,\bU + \tilde \bZ_U\,,\\
\hat \bW_U(\Delta)  &= \sqrt{\Delta}\,\bU + \hat \bZ_U\,,\\
\tilde \bW_V(t,{\bm \epsilon})  &= \sqrt{R_{U}}\,\bV + \tilde \bZ_V\,,
\end{cases} 
\end{align}
where $\hat \bZ_U\sim{\cal N}(0,{\rm I}_n)$ independently of the rest. Namely, 
$$I_\Delta\big((\bU,\bV);(\bW,\tilde{\bW}_U,\tilde{\bW}_V)\big)=I\big((\bU,\bV);(\bW,\tilde{\bW}_U,\hat{\bW}_U(\Delta),\tilde{\bW}_V)\big)\,.$$ 
Using the chain rule for mutual information it is re expressed as
\begin{align*}
&I_\Delta\big(\bV;(\bW(t),\tilde{\bW}_U(t,{\bm \epsilon}),\tilde{\bW}_V(t,{\bm \epsilon}))\big)+I_\Delta\big(\bU;(\bW(t),\tilde{\bW}_U)|\bV\big)\nn
&\qquad=I\big(\bV;(\bW,\tilde{\bW}_U,\hat{\bW}_U(\Delta),\tilde{\bW}_V)\big)+I\big(\bU;(\bW,\tilde{\bW}_U,\hat{\bW}_U(\Delta))|\bV\big)\,.	
\end{align*}
The two mutual information conditioned of $\bV$ are independent of $R_{V}$. Taking a $R_{V}$ derivative on both sides, by the I-MMSE formula Lemma~\ref{app:I-MMSE} the associated MMSE's verify 
$${\rm MMSE}_\Delta(\bV|\bW,\tilde{\bW}_U,\tilde{\bW}_V)={\rm MMSE}(\bV|\bW,\tilde{\bW}_U,\hat{\bW}_U(\Delta),\tilde{\bW}_V)\,.$$
Lemma~\ref{lemma:conditioningMMSE} then implies 
$${\rm MMSE}_\Delta(\bV|\bW,\tilde{\bW}_U,\tilde{\bW}_V)\le {\rm MMSE}(\bV|\bW,\tilde{\bW}_U,\tilde{\bW}_V)$$ 
or equivalently 
$$\EE\langle Q_V\rangle_{t,\Delta}\ge \EE\langle Q_V\rangle_t$$
where ${\rm MMSE}_\Delta(\bV|\cdots)$ and $\EE\langle Q_V\rangle_{t,\Delta}$ are the average MMSE and overlap for $\bV$ corresponding to model \eqref{model:interp_Delta} or equivalently model \eqref{interpUV} with $R_{V,n}$ replaced by $R_{V,n}+\Delta$. This proves $\frac{\partial G_{V,n}}{\partial R_V}\ge 0$.
\end{proof}

\subsection{Lower bound: fully adaptive interpolation path.}
For the converse bound we need to adapt both interpolating functions.
\begin{proposition}[Lower bound] Under the scalings \eqref{sparsePCAscaling} the converse of the bound \eqref{upperboundwishart} holds.
\end{proposition}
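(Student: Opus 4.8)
The plan is to rerun the adaptive interpolation of the upper-bound proof, but to \emph{adapt both} interpolation functions. Concretely I would take $q_{U,n}(t,{\bm\epsilon})=\EE\langle Q_U\rangle_t$ and $q_{V,n}(t,{\bm\epsilon})=\EE\langle Q_V\rangle_t$, i.e.\ let $R_n(t,{\bm\epsilon})=(R_{U,n}(t,{\bm\epsilon}),R_{V,n}(t,{\bm\epsilon}))$ solve the coupled Cauchy problem $\frac{dR_n}{dt}(t,{\bm\epsilon})=(\lambda_n\EE\langle Q_U\rangle_t,\,\lambda_n\alpha_n\EE\langle Q_V\rangle_t)$ with $R_n(0,{\bm\epsilon})={\bm\epsilon}$. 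Since $(R_U,R_V)\mapsto(\EE\langle Q_U\rangle_t,\EE\langle Q_V\rangle_t)$ is ${\cal C}^1$ with bounded derivatives, Cauchy--Lipschitz gives a unique global solution $R_n^{*}$, with $q_{U,n}^{*}\in[0,\rho_{U,n}]$ and $q_{V,n}^{*}\in[0,\rho_{V,n}]$ (the ranges follow from a Nishimori identity, as in Section~\ref{sec:adapInterp_XX}).

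First I would check that the flow ${\bm\epsilon}\mapsto R_n^{*}(t,{\bm\epsilon})$ is regular. By Liouville's formula for this $2\times2$ system its Jacobian determinant equals $\exp\{\int_0^t(\partial_{R_U}[\lambda_n\EE\langle Q_U\rangle_s]+\partial_{R_V}[\lambda_n\alpha_n\EE\langle Q_V\rangle_s])\,ds\}$, so only the two diagonal derivatives enter, and each is $\ge0$ by the monotonicity already proved for $\partial_{R_V}G_{V,n}\ge0$ in the upper-bound proof — applied once as is and once with $U$ and $V$ exchanged. Indeed, augmenting the side SNR attached to one of the two vectors cannot increase the MMSE of either vector, by the Gaussian noise-splitting identity together with Lemma~\ref{lemma:conditioningMMSE}, and the Nishimori identity ${\rm MMSE}(\bullet\,|\cdots)=\rho_{\bullet,n}-\EE\langle Q_\bullet\rangle_t$ turns this into $\partial_{R_U}\EE\langle Q_U\rangle_t\ge0$ and $\partial_{R_V}\EE\langle Q_V\rangle_t\ge0$. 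Hence the Jacobian is $\ge1$, so the flow is locally invertible; combined with injectivity from uniqueness of the ODE, $R_n^{*}$ is a ${\cal C}^1$-diffeomorphism, i.e.\ regular.

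Next I would average the sum rule of Proposition~\ref{prop:sumsule-wishart} over ${\bm\epsilon}\in[s_n,2s_n]^2$. Regularity licenses Propositions~\ref{L-concentration-wishart}, \ref{LU-concentration-wishart} and \eqref{LV_QV} (plus its ${\cal L}_U$ analogue), so the Gibbs cross-fluctuation $\frac{\lambda_n\alpha_n}{2}\int_0^1(\EE\langle Q_UQ_V\rangle_t-\EE\langle Q_U\rangle_t\EE\langle Q_V\rangle_t)\,dt$ is controlled, through exactly the Cauchy--Schwarz/Fubini estimate \eqref{over-concen-wishart}, by the error term already appearing in \eqref{upperboundwishart}. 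Because $q_{U,n}=\EE\langle Q_U\rangle_t$ and $q_{V,n}=\EE\langle Q_V\rangle_t$, the two ``self-overlap'' brackets in the sum rule collapse, and after a short rearrangement
\begin{align*}
\frac1n I\big((\bU,\bV);\bW\big)&=\frac1{s_n^2}\int_{[s_n,2s_n]^2}\Big[i_n^{\rm pot}\big(\bar q_U^{*},\bar q_V^{*},\lambda_n,\alpha_n,\rho_{U,n},\rho_{V,n}\big)+\tfrac{\lambda_n\alpha_n}{2}\,\mathcal{C}_{\rm time}({\bm\epsilon})\Big]d{\bm\epsilon}\\
&\qquad\qquad\qquad+O\big(\text{the error term of }\eqref{upperboundwishart}\big)\,,
\end{align*}
where $\bar q_U^{*}=\int_0^1 q_{U,n}^{*}(t,{\bm\epsilon})\,dt$, $\bar q_V^{*}=\int_0^1 q_{V,n}^{*}(t,{\bm\epsilon})\,dt$ and $\mathcal{C}_{\rm time}({\bm\epsilon})=\int_0^1 q_{U,n}^{*}q_{V,n}^{*}\,dt-\bar q_U^{*}\bar q_V^{*}$. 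Since the two adapted functions are comonotone in $t$, Chebyshev's integral inequality gives $\mathcal{C}_{\rm time}\ge0$, so dropping it only lowers the right-hand side, which is what a converse bound needs.

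It then remains to show $i_n^{\rm pot}(\bar q_U^{*},\bar q_V^{*},\dots)\ge\ {\adjustlimits\inf_{q_U}\sup_{q_V}}\,i_n^{\rm pot}(q_U,q_V,\dots)$ up to the allowed error, and this is the main obstacle. The ingredients I would use: $q_V\mapsto i_n^{\rm pot}(q_U,q_V,\dots)$ is concave (the bilinear term is affine in $q_V$ and $q_V\mapsto I_n(U;\sqrt{\lambda\alpha q_V}U+Z)$ is concave, by the I-MMSE relation and the monotonicity of the scalar MMSE in the SNR); adapting $q_{V,n}=\EE\langle Q_V\rangle_t$ — which at $t=1$ reduces to a scalar channel — makes $\bar q_V^{*}$ approximately stationary for $q_V\mapsto i_n^{\rm pot}(\bar q_U^{*},q_V,\dots)$, so that this value is within $o(\sqrt{\rho_{V,n}|\ln\rho_{V,n}|})$ of $\sup_{q_V}i_n^{\rm pot}(\bar q_U^{*},q_V,\dots)\ge\inf_{q_U}\sup_{q_V}i_n^{\rm pot}$; and in the regime \eqref{sparsePCAscaling} one has $\lambda_n\alpha_n\rho_{V,n}=\Theta(\sqrt{\rho_{V,n}|\ln\rho_{V,n}|})\to0$ and $\alpha_n I_n(V;\cdot)\le\alpha_n H(V)=O(\rho_{V,n}|\ln\rho_{V,n}|)$, so the full $q_U$-dependence and the gap-to-stationarity are lower order. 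The delicate part is making these estimates quantitative, so that together with the contribution of $\mathcal{C}_{\rm time}$ the extra terms stay inside the error of \eqref{upperboundwishart}; the remainder is a $U\leftrightarrow V$-symmetrised transcription of the Wigner lower bound and the Wishart upper bound.
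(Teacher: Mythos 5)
Your construction adapts \emph{both} functions to the overlaps, $q_{U,n}=\EE\langle Q_U\rangle_t$ and $q_{V,n}=\EE\langle Q_V\rangle_t$, and you correctly identify that the resulting obstacle is to show that the potential evaluated at the (time-averaged) adapted values dominates $\inf_{q_U}\sup_{q_V}i_n^{\rm pot}$. That obstacle is a genuine gap, and your sketch does not close it. For a saddle-point formula one cannot lower bound by the inf-sup after evaluating the potential at an arbitrary point: one needs the evaluation point to be (at least approximately) a $q_V$-maximizer for its own $q_U$. The stationarity condition in $q_V$ reads $q_U=\rho_{U,n}-{\rm MMSE}(U|\sqrt{\lambda_n\alpha_n q_V}\,U+Z)$, a \emph{scalar-channel} relation, and there is no a priori reason why the $n$-body overlap average $\bar q_U^*=\int_0^1\EE\langle Q_U\rangle_t\,dt$ should satisfy it at $q_V=\bar q_V^*$; proving such "approximate stationarity" of the overlap is essentially the replica-symmetric fixed-point property, i.e.\ the content of the theorem itself, not an available ingredient. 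The paper sidesteps this entirely by a different choice in the Cauchy problem \eqref{ODEwishart}: it sets $q_{V,n}(t,{\bm\epsilon})=\EE\langle Q_V\rangle_t$ (which is all that is needed for the sum rule of Proposition~\ref{prop:sumsule-wishart} to collapse into the potential form) but defines $q_{U,n}(t,{\bm\epsilon})=\rho_{U,n}-{\rm MMSE}(U|\sqrt{\lambda_n\alpha_n\EE\langle Q_V\rangle_t}\,U+Z)$, so that the $q_V$-stationarity holds \emph{exactly} at every $(t,{\bm\epsilon})$; together with concavity of $q_V\mapsto i_n^{\rm pot}(q_U,q_V,\dots)$ (Lemma~\ref{lemma:Iconcave}) and Jensen applied to the two mutual-information terms, this gives the pointwise-in-$t$ bound $i_n^{\rm pot}(q^*_{U,n},q^*_{V,n},\dots)\ge\inf_{q_U}\sup_{q_V}i_n^{\rm pot}$ with no further error.

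Two further quantitative points in your sketch would also fail. First, your fallback estimates ("$\lambda_n\alpha_n\rho_{V,n}\to 0$, $\alpha_nI_n(V;\cdot)=O(\rho_{V,n}|\ln\rho_{V,n}|)$, hence the $q_U$-dependence is lower order") are not compatible with the precision of Theorem~\ref{thm:wishart}: after division by $\sqrt{\rho_{V,n}|\ln\rho_{V,n}|}$ the error must vanish polynomially in $n$, whereas the bilinear term is itself of size $\Theta(\sqrt{\rho_{V,n}|\ln\rho_{V,n}|})$ and $\alpha_nH(V)=\Theta(\rho_{V,n}|\ln\rho_{V,n}|)$ contributes $\Theta(\sqrt{\rho_{V,n}|\ln\rho_{V,n}|})$ after normalization, which is $\Theta(n^{-\beta/2}\sqrt{\ln n})$ and exceeds the allowed error $C(\ln n)^{1/3}n^{-(4-12\beta)/18}$ for small $\beta$; the phase transition lives precisely in these "higher-order" terms, so they cannot be discarded. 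Second, your Chebyshev step requires $t\mapsto q^*_{U,n}(t,{\bm\epsilon})$ and $t\mapsto q^*_{V,n}(t,{\bm\epsilon})$ to be comonotone, which is asserted but not proved (monotonicity of the overlaps in the interpolation time is not established anywhere and is not obvious); the paper avoids needing any such fact by keeping the bilinear remainder as $\int_0^1dt\,(q^*_{U,n}-\rho_{U,n})(q^*_{V,n}-\rho_{V,n})$ and bounding pointwise in $t$. The parts of your argument that do match the paper — the Liouville/regularity argument for the two-dimensional flow via $\partial_{R_U}\EE\langle Q_U\rangle_t\ge0$, $\partial_{R_V}\EE\langle Q_V\rangle_t\ge0$, and the control of the cross-fluctuation $\EE\langle Q_UQ_V\rangle_t-\EE\langle Q_U\rangle_t\EE\langle Q_V\rangle_t$ by Cauchy--Schwarz and Propositions~\ref{L-concentration-wishart}, \ref{LU-concentration-wishart} — are fine, but the core of the lower bound is missing.
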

\begin{proof}
Consider this time the following Cauchy problem: 
\begin{align}
&\frac{dR_{n}}{dt}(t,{\bm \epsilon})=\big(G_{U,n}(t,R_n(t,{\bm \epsilon})), G_{V,n}(t,R_n(t,{\bm \epsilon}))\big), \qquad R_n(0,{\bm \epsilon})={\bm \epsilon}\,,\label{ODEwishart}
\end{align}
with the functions $G_{U,n}(t,R_n(t,{\bm \epsilon}))\equiv \lambda_n(\rho_{U,n}-{\rm MMSE}(U|\sqrt{\lambda_n \alpha_n \EE\langle Q_V\rangle_t}U+Z))\in[0,\lambda_n\rho_{U,n}]$ and $G_{V,n}(t,R_n(t,{\bm \epsilon}))\equiv\lambda_n\alpha_n \EE\langle Q_V\rangle_t\in[0,\lambda_n\alpha_n\rho_{V,n}]$, or in other words,
\begin{align*}
\big(q_{U,n}(t,{\bm \epsilon}),q_{V,n}(t,{\bm \epsilon})\big)=\big(\rho_{U,n}-{\rm MMSE}(U|\sqrt{\lambda_n \alpha_n \EE\langle Q_V\rangle_t}U+Z), \EE\langle Q_V\rangle_t\big), \qquad R_n(0,{\bm \epsilon})={\bm \epsilon}\,.
\end{align*}
This ODE admits a unique global ${\cal C}^1$ solution $R_{n}^*(t,{\bm \epsilon})=(R_{U,n}^*(t,{\bm \epsilon}),R_{V,n}^*(t,{\bm \epsilon}))$ by the Cauchy-Lipschitz theorem. By the Liouville formula, the Jacobian determinant of the flow ${\bm \epsilon}\mapsto R_{n}^*(t,{\bm \epsilon})$ satisfies
\begin{align*}
	J_{n}(t,{\bm \epsilon})&=\exp \Big\{\int_0^t \Big(\frac{\partial G_{U,n}}{\partial R_U}(s, R_U=R_{U,n}^*(s,{\bm \epsilon}),R_{V,n}^*(s,{\bm \epsilon}))\nn
	&\qquad\qquad\qquad\qquad+\frac{\partial G_{V,n}}{\partial R_V}(s, R_{U,n}^*(s,{\bm \epsilon}),R_V=R_{V,n}^*(s,{\bm \epsilon}))\Big)ds\Big\}\,.
\end{align*}	
Both partial derivatives are positive by the same proof as in the previous paragraph. Then using teh same arguments as previously we conclude that the flow is regular (a $\mathcal{C}^1$-diffeomorphism with Jacobian greater or equal to one). Using this solution we can thus use Propositions~\ref{L-concentration-wishart}, \ref{LU-concentration-wishart} of section~\ref{appendix-overlap} to deduce from the sum rule of Proposition~\ref{prop:sumsule-wishart}
\begin{align*}
&\frac1n I((\bU,\bV);\bW) \nn
&\quad=   \frac{1}{s_n^2}\int d{\bm\epsilon} \Big[I_{n}(U;\{\lambda_n \alpha_n \textstyle{\int_0^1q^*_{V,n}}(t,{\bm \epsilon})dt\}^{1/2}U+Z)+\alpha_nI_{n}(V;\{\lambda_n \textstyle{\int_0^1q^*_{U,n}}(t,{\bm \epsilon})dt\}^{1/2}V+Z)\nn
&\qquad \qquad+\frac{\lambda_n\alpha_n}{2}\int_0^1dt(q^*_{U,n}(t,{\bm \epsilon})-\rho_{U,n})(q^*_{V,n}(t,{\bm \epsilon})-\rho_{V,n})\Big]\nn
&\qquad\qquad\qquad\qquad+O\Big(s_n+\frac{|\ln \rho_{V,n}|^{11/24}}{s_n^{2}n^{1/3}\rho_{V,n}^{9/24}}\big(s_n+\sqrt{\rho_{V,n}|\ln \rho_{V,n}|}\big)^{1/3}\Big)\nn
&\quad\ge \frac{1}{s_n^2}\int d{\bm \epsilon}\int_0^1dt\,i_n^{\rm pot}\big(q_{U,n}^*(t,{\bm \epsilon}),q_{V,n}^*(t,{\bm \epsilon}),\lambda_n,\alpha_n,\rho_{U,n},\rho_{V,n}\big) +O(\cdots)\,.
\end{align*}
To get the last inequality we used the concavity in the SNR of the mutual information for gaussian channels, see Lemma~\ref{lemma:Iconcave} of section~\ref{app:gaussianchannels}. Now note that
\begin{align*}
i_n^{\rm pot}\big(q_{U,n}^*(t,{\bm \epsilon}),q_{V,n}^*(t,{\bm \epsilon}),\lambda_n,\alpha_n,\rho_{U,n},\rho_{V,n}\big)	=\sup_{q_V\in [0,\rho_{V,n}]} i_n^{\rm pot}\big(q_{U,n}^*(t,{\bm \epsilon}),q_{V},\lambda_n,\alpha_n,\rho_{U,n},\rho_{V,n}\big)\,.
\end{align*}
Indeed, the function $g_n(q_U,\cdot):q_V\mapsto i_n^{\rm pot}(q_{U},q_{V};\lambda_n,\alpha_n,\rho_{U,n},\rho_{V,n})$ is concave (by concavity of the mutual information in the SNR, see Lemma~\ref{lemma:Iconcave}) with $q_V$-derivative $$\frac{dg}{dq_V}(q_U,q_V)=\frac{\lambda_n\alpha_n}{2}\big(q_u-\rho_{U,n}+{\rm MMSE}(U|\sqrt{\lambda_n\alpha_n q_{V}}U+Z)\big)$$
(using the I-MMSE relation). By definition of the solution $R_n^*$ of the ODE \eqref{ODEwishart} we have $$\frac{dg}{dq_V}(q_{U,n}^*(t,{\bm \epsilon}),q_V=q_{V,n}^*(t,{\bm \epsilon}))=0\,.$$ By concavity this corresponds to a maximum. Therefore
\begin{align*}
\frac1n I((\bU,\bV);\bW) &\ge \frac{1}{s_n^2}\int d{\bm \epsilon}\int_0^1dt\,\sup_{q_V\in [0,\rho_{V,n}]} i_n^{\rm pot}\big(q_{U,n}^*(t,{\bm \epsilon}),q_{V},\lambda_n,\alpha_n,\rho_{U,n},\rho_{V,n}\big) +O(\cdots)\nn
&\ge {\adjustlimits \inf_{q_U\in[0,\rho_{U,n}]}\sup_{q_{V}\in[0,\rho_{V,n}]}}\,i_n^{\rm pot}\big(q_{U},q_V,\lambda_n,\alpha_n,\rho_{U,n},\rho_{V,n}\big) +O(\cdots)\,.
\end{align*}
\end{proof}

\section{Concentration of free energies}\label{app:free-energy}

For this appendix it is convenient to use the language of statistical mechanics.

\subsection{Statistical mechanics notations for the spiked Wigner (interpolating) model.}
 We express the posterior of the interpolating model 
 \begin{align}\label{tpost}
&dP_{n, t,\epsilon}(\bx|\bW(t),\tilde{\bW}(t,\epsilon))=\frac{1}{\mathcal{Z}_{n, t,\epsilon}(\bW(t),\tilde{\bW}(t,\epsilon))}\nn
&\qquad\qquad\qquad\qquad\qquad\qquad\times\Big(\prod_{i=1}^n dP_{X,n}(x_i)\Big) \exp\big\{-{\cal H}_{n, t, \epsilon}(\bx, \bW(t),\tilde{\bW}(t,\epsilon))\big\}
\end{align}
with normalization constant (partition function) $\mathcal{Z}_{n,t,\epsilon}$ and ``hamiltonian''
\begin{align}
&{\cal H}_{n, t, \epsilon}(\bx , \bW(t),\tilde{\bW}(t,\epsilon))={\cal H}_{n, t, \epsilon}(\bx, \bX,\bZ,\tilde \bZ)\label{Ht}\\\
&\ \equiv\sum_{i< j}^n \Big((1-t)\frac{\lambda_n}{n}\frac{x_i^2x_j^2}{2}-\sqrt{(1-t)\frac{\lambda_n}{n}}x_ix_jW_{ij}(t) \Big)+  R_n(t,\epsilon)\frac{\|\bx\|^2}{2} - \sqrt{R_n(t,\epsilon)} \bx\cdot \tilde \bW(t,\epsilon) \nn
&\ =(1-t)\lambda_n\sum_{i< j}^n \Big(\frac{x_i^2x_j^2}{2n}-\frac{x_ix_jX_iX_j}{n} -\frac{x_ix_jZ_{ij}}{\sqrt{n(1-t)\lambda_n}}\Big)+ R_n(t,\epsilon)\Big(\frac{\|\bx\|^2}{2} - \bx \cdot \bX-\frac{\bx\cdot \tilde{\bZ}}{\sqrt{R_n(t,\epsilon)}}\Big).\nonumber
\end{align}
It will also be convenient to work with ``free energies'' rather than mutual informations. The free energy $F_{n}(t,\epsilon)$ and (its expectation $f_{n}(t,\epsilon)$) for the interpolating model is simply minus the (expected) log-partition function:
\begin{align}
F_{n,t,\epsilon}(\bW(t),\tilde{\bW}(t,\epsilon))&\equiv -\frac1n \ln \mathcal{Z}_{n, t, \epsilon}(\bW(t),\tilde{\bW}(t,\epsilon))\,,\label{F_nonav}\\
f_{n}(t,\epsilon)&\equiv \EE\,F_{n, t, \epsilon}(\bW(t),\tilde{\bW}(t,\epsilon))\label{f_av}	\,.
\end{align}
The expectation $\EE$ carries over the data. The averaged free energy is related to the mutual information $i_{n}(t,\epsilon)$ given by \eqref{fnt} through
\begin{align}
i_{n}(t,\epsilon)	=f_{n}(t,\epsilon)+\frac{n-1}{n}\frac{\rho^2\lambda(1-t)}{4}+\frac{\rho R_n(t,\epsilon)}{2}\,.\label{infomut_freeen}
\end{align}

\subsection{Statistical mechanics notations for the spiked Wishart (interpolating) model.}
 Let the set ${\cal D}_{n,t,{\bm \epsilon}}=\{\bW(t), \tilde \bW_U(t,{\bm \epsilon}),\tilde \bW_V(t,{\bm \epsilon})\}$. In the Wishart case the posterior reads
\begin{align}
dP_{n, t,\epsilon}(\bu,\bv|{\cal D}_{n,t,{\bm \epsilon}})&=\frac{1}{{\cal Z}_{n, t, \epsilon}({\cal D}_{n,t,{\bm \epsilon}})}\Big(\prod_{i=1}^n dP_{U,n}(u_i)\Big) \Big(\prod_{j=1}^{m}dP_{V,n}(v_j)\Big)\nn
&\qquad\qquad\times \exp \big\{-{\cal H}_{n, t, \bm\epsilon}(\bu,\bv, {\cal D}_{n,t,{\bm \epsilon}})\big\} \label{postWish_Ham}
\end{align}
with hamiltonian 
\begin{align}
	{\cal H}_{n, t, \bm\epsilon}(\bu,\bv, {\cal D}_{n,t,{\bm \epsilon}})& \equiv  (1-t)\frac{\lambda_n}{n}\frac{\|\bu\|^2\|\bv\|^2}{2}-\sqrt{(1-t)\frac{\lambda_n}{n}} \bu\cdot (\bW(t)\bv) \nn
	&\quad+  R_{V,n}(t,{\bm \epsilon})\frac{\|\bu\|^2}{2} - \sqrt{R_{V,n}(t,{\bm \epsilon})} \bu\cdot \tilde \bW_U(t,{\bm \epsilon})\nn
	&\quad+  R_{U,n}(t,{\bm \epsilon})\frac{\|\bv\|^2}{2} - \sqrt{R_{U,n}(t,{\bm \epsilon})} \bv\cdot \tilde \bW_V(t,{\bm \epsilon}) \nn
& =(1-t)\frac{\lambda_n}{n}\frac{\|\bu\|^2\|\bv\|^2}{2}-(1-t)\frac{\lambda_n}{n} (\bu\cdot \bU) (\bv\cdot \bV)-\sqrt{(1-t)\frac{\lambda_n}{n}}\bu\cdot (\bZ \bv)\nn
	&\quad +  R_{V,n}(t,{\bm \epsilon})\frac{\|\bu\|^2}{2} - {R_{V,n}(t,{\bm \epsilon})} \bu\cdot \bU -\sqrt{R_{V,n}(t,{\bm \epsilon})} \bu\cdot \tilde \bZ_U \nn
	&\quad+  R_{U,n}(t,{\bm \epsilon})\frac{\|\bv\|^2}{2} - {R_{U,n}(t,{\bm \epsilon})} \bv\cdot \bV-\sqrt{R_{U,n}(t,{\bm \epsilon})} \bv\cdot \tilde \bZ_V\,.\label{Ht_wishart}
 \end{align}
The free energy and its expectation (over the data) are
\begin{align}
F_{n, t, \bm\epsilon}({\cal D}_{n,t,{\bm \epsilon}})&\equiv -\frac1n \ln \mathcal{Z}_{n, t,\epsilon}({\cal D}_{n,t,{\bm \epsilon}})\,,\label{F_nonav_wishart}\\
f_{n}(t,\bm\epsilon)&\equiv \EE\,F_{n,t,\bm\epsilon}({\cal D}_{n,t,{\bm \epsilon}})\label{f_av_wishart}	\,.
\end{align}
Similarly to \eqref{infomut_freeen} the averaged free energy is related to the mutual information by an additive constant (linear in $R_{U,n}$ and $R_{V,n}$) that does not change its concavity properties.
\subsection{Free energy concentration for the Wigner case}
In this section we prove a concentration identity for the free energy \eqref{F_nonav} onto its average \eqref{f_av}.
\begin{proposition}[Free energy concentration for the spiked Wigner model]\label{prop:fconc}
We have 
\begin{align*}
\mathbb{E}\Big[\Big(F_{n,t,\epsilon}(\bW(t),&\tilde{\bW}(t,\epsilon))-f_{n}(t,\epsilon)\Big)^2\Big]
\le\frac{2\rho_nS^2}{n}\Big((2s_n+\lambda_n\rho_n)^2+S^4\Big)+\frac32\frac{\lambda_n\rho_n^2}{n}+2\frac{s_n\rho_n}{n}\,.	
\end{align*}
Considering sequences $\lambda_n$ and $\rho_n$ verifying \eqref{app-scalingRegime} and with $s_n=(1/2)n^{-\alpha}\to0_+$ the bound simplifies to $C(S) \lambda_n^2\rho_n^3/n$ with positive constant $C(S)\le \frac52+8S^2+2S^6$.
\end{proposition}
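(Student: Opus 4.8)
The plan is to control the variance of $F_{n,t,\epsilon}$ by separating the fluctuations coming from the Gaussian noise from those coming from the sparse signal $\bX$. Conditioning on $\bX$ and using the law of total variance — equivalently, writing $F_{n,t,\epsilon}-f_n(t,\epsilon)=\big(F_{n,t,\epsilon}-\mathbb{E}[F_{n,t,\epsilon}\mid\bX]\big)+\big(\mathbb{E}[F_{n,t,\epsilon}\mid\bX]-f_n(t,\epsilon)\big)$, whose cross term vanishes after taking $\mathbb{E}[\,\cdot\mid\bX]$ — one obtains
\begin{align*}
\mathbb{E}\big[(F_{n,t,\epsilon}-f_n(t,\epsilon))^2\big]=\mathbb{E}\big[\mathrm{Var}(F_{n,t,\epsilon}\mid\bX)\big]+\mathrm{Var}\big(\mathbb{E}[F_{n,t,\epsilon}\mid\bX]\big)\,.
\end{align*}
It then suffices to bound the two terms separately; I expect the first to produce the $\lambda_n\rho_n^2/n$ and $s_n\rho_n/n$ contributions and the second the $\rho_nS^2\big((2s_n+\lambda_n\rho_n)^2+S^4\big)/n$ contribution.

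For the first term I would use the Gaussian Poincaré inequality. Conditionally on $\bX$, each data entry $W_{ij}(t)$ and $\tilde W_i(t,\epsilon)$ is a unit-variance Gaussian with deterministic mean, so $F_{n,t,\epsilon}$ is a smooth function of $(\bZ,\tilde\bZ)$ and $\mathrm{Var}(F_{n,t,\epsilon}\mid\bX)\le\mathbb{E}\big[\sum_{i<j}(\partial_{W_{ij}}F_{n,t,\epsilon})^2+\sum_i(\partial_{\tilde W_i}F_{n,t,\epsilon})^2\mid\bX\big]$ (differentiating in the noise equals differentiating in the corresponding data entry since $\bX$ is frozen). Differentiating $F_{n,t,\epsilon}=-\frac1n\ln\mathcal{Z}_{n,t,\epsilon}$ and reading off the relevant terms of the Hamiltonian \eqref{Ht} gives $\partial_{W_{ij}}F_{n,t,\epsilon}=-\frac1n\sqrt{(1-t)\lambda_n/n}\,\langle x_ix_j\rangle_t$ and $\partial_{\tilde W_i}F_{n,t,\epsilon}=-\frac1n\sqrt{R_n(t,\epsilon)}\,\langle x_i\rangle_t$. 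Using the Gibbs Cauchy--Schwarz bounds $\langle x_ix_j\rangle_t^2\le\langle x_i^2x_j^2\rangle_t$, $\langle x_i\rangle_t^2\le\langle x_i^2\rangle_t$, and summing yields $\mathrm{Var}(F_{n,t,\epsilon}\mid\bX)\le\frac{\lambda_n}{2n^3}\langle\|\bx\|^4\rangle_t+\frac{R_n(t,\epsilon)}{n^2}\langle\|\bx\|^2\rangle_t$. Taking expectations and invoking Bayes-optimality (the Nishimori identity $\mathbb{E}\langle g(\bx)\rangle_t=\mathbb{E}[g(\bX)]$) replaces $\mathbb{E}\langle\|\bx\|^2\rangle_t$ by $n\rho_n$ and $\mathbb{E}\langle\|\bx\|^4\rangle_t$ by $\mathbb{E}\|\bX\|^4=n\mathbb{E}[X_1^4]+n(n-1)\rho_n^2\le n^2\rho_n^2+nS^2\rho_n$, using $\mathrm{supp}(p_X)\subseteq[-S,S]$ and $\mathbb{E}[X_1^2]=\rho_n$. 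With $R_n(t,\epsilon)=\epsilon+\lambda_n\int_0^tq_n(s,\epsilon)\,ds\le 2s_n+\lambda_n\rho_n$ and $\epsilon\le 2s_n$, $\int_0^tq_n(s,\epsilon)\,ds\le\rho_n$, this gives $\mathbb{E}[\mathrm{Var}(F_{n,t,\epsilon}\mid\bX)]\le\frac32\frac{\lambda_n\rho_n^2}{n}+2\frac{s_n\rho_n}{n}+O(\lambda_nS^2\rho_n/n^2)$, the last term being absorbed below.

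For the second term I would use the Efron--Stein inequality: with $\bX^{(k)}$ denoting $\bX$ with $X_k$ replaced by an independent copy $X_k'\sim P_{X,n}$, $\mathrm{Var}(\mathbb{E}[F_{n,t,\epsilon}\mid\bX])\le\frac12\sum_{k=1}^n\mathbb{E}\big[(F_{n,t,\epsilon}(\bX)-F_{n,t,\epsilon}(\bX^{(k)}))^2\big]$, bounding the integrand pointwise in $(\bZ,\tilde\bZ)$. Two features are decisive. Sparsity: the difference vanishes unless $X_k\neq0$ or $X_k'\neq0$, an event of probability at most $2\rho_n$ (quantitatively $\mathbb{E}[(X_k-X_k')^2]=2\mathrm{Var}(X_1)\le 2\rho_n$), which yields the overall $\rho_n$ factor. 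Boundedness: writing $\partial_{X_k}(nF_{n,t,\epsilon})=\langle\partial_{X_k}\mathcal{H}_{n,t,\epsilon}\rangle_t=-\frac{(1-t)\lambda_n}{n}\langle x_k\sum_{j\neq k}x_jX_j\rangle_t-R_n(t,\epsilon)\langle x_k\rangle_t$ and using the fundamental theorem of calculus in $X_k$ (or the stability bound $|\ln\int e^{-H_1}-\ln\int e^{-H_2}|\le\sup_{\bx}|H_1-H_2|$ applied to \eqref{Ht}), the change is controlled by the bilinear coupling term and the decoupled term. The bilinear term should be handled by Gibbs Cauchy--Schwarz, $\langle x_k\sum_{j}x_jX_j\rangle_t^2\le\langle x_k^2\rangle_t\langle(\bx\cdot\bX)^2\rangle_t\le S^2\langle(\bx\cdot\bX)^2\rangle_t$, followed once more by Nishimori, $\mathbb{E}\langle(\bx\cdot\bX)^2\rangle_t=n^2\mathbb{E}\langle Q^2\rangle_t\le n^2\rho_n^2$ (cf. \eqref{manip}) — rather than a worst-case supremum over $\bx$, which would cost a factor $1/\rho_n$ — while $|x_i|,|X_k|\le S$ and $R_n(t,\epsilon)\le 2s_n+\lambda_n\rho_n$ handle the remaining factors. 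Summing over $k$ then produces a bound of the form $\frac{2\rho_nS^2}{n}\big((2s_n+\lambda_n\rho_n)^2+S^4\big)$.

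Adding the two contributions gives the stated inequality. Under \eqref{app-scalingRegime} with $s_n=\tfrac12n^{-\alpha}\to 0$ one has $\lambda_n\rho_n\ge C$ (take $C\ge 1$ without loss) and $2s_n\le\lambda_n\rho_n$ for $n$ large, so $(2s_n+\lambda_n\rho_n)^2\le 4(\lambda_n\rho_n)^2$, $\rho_n/n\le\lambda_n^2\rho_n^3/n$ and $\lambda_n\rho_n^2/n\le\lambda_n^2\rho_n^3/n$; collecting the four terms gives $C(S)\lambda_n^2\rho_n^3/n$ with $C(S)\le\tfrac52+8S^2+2S^6$. The hard part will be the Efron--Stein step: one must bound the change of $\ln\mathcal{Z}_{n,t,\epsilon}$ under resampling a single signal coordinate while extracting the correct power of $\rho_n$, and the delicate point is precisely that the Hamiltonian difference involves the quenched signal through the bilinear term $x_k\sum_{j\neq k}x_jX_j$, which has to be estimated via Gibbs Cauchy--Schwarz together with the Nishimori bound $\mathbb{E}\langle Q^2\rangle_t\le\rho_n^2$ (and not by a crude supremum over $\bx$), on pain of losing a factor $1/\rho_n$. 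By contrast, the Gaussian Poincaré step is routine once the posterior formulas for the derivatives and the Nishimori moment identities are in place.
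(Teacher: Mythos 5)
Your overall architecture is the same as the paper's: the variance is split by conditioning on $\bX$ (your law-of-total-variance identity is exactly the paper's decomposition of $F_{n,t,\epsilon}-f_n$ into noise fluctuations and spike fluctuations), the first piece is handled by the Gaussian Poincar\'e inequality and the second by Efron--Stein, with Nishimori used to evaluate posterior moments. The Poincar\'e half is essentially the paper's computation; your only deviation is to bound $\sum_{i<j}\mathbb{E}\langle x_i^2x_j^2\rangle_t$ through $\tfrac12\mathbb{E}\|\bX\|^4$, which reintroduces the diagonal and produces the extra $O(\lambda_nS^2\rho_n/n^2)$ term. That term is \emph{not} automatically absorbed into $C(S)\lambda_n^2\rho_n^3/n$ under \eqref{app-scalingRegime} alone (it requires $n\lambda_n\rho_n^2\gtrsim 1$); keep the off-diagonal sum and use $\mathbb{E}[(X_iX_j)^2]=\rho_n^2$ for $i\neq j$, as the paper does, and the stated $\tfrac32\lambda_n\rho_n^2/n+2s_n\rho_n/n$ comes out exactly.

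The genuine gap is in the Efron--Stein step, precisely at the point you flag as decisive. The Gibbs bracket that appears when you differentiate $g$ along the path from $\bX^{(k)}$ to $\bX$ is the posterior associated with the \emph{modified} signal $s\bX+(1-s)\bX^{(k)}$, whose $k$-th coordinate is not distributed according to $P_{X,n}$ (for $0<s<1$ it is not even in the support); and this bracket sits inside the same expectation as the correlated prefactor $(X_k-X_k')^2$. The Nishimori identity (Lemma \ref{NishId}) is a statement about the matched planted model under its full joint law, so the step ``$\mathbb{E}\langle(\bx\cdot\bX)^2\rangle_t=n^2\mathbb{E}\langle Q^2\rangle_t\le n^2\rho_n^2$'' is not available here, and even at the endpoints you cannot factor $\mathbb{E}\big[(X_k-X_k')^2\langle(\bx\cdot\bX)^2\rangle\big]$ into $\mathbb{E}[(X_k-X_k')^2]\cdot\mathbb{E}\langle(\bx\cdot\bX)^2\rangle$, because the bracket depends on $X_k$ (and, along the path, on $X_k'$) through the data. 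Decoupling by Cauchy--Schwarz over the outer expectation costs $\mathbb{E}[(X_k-X_k')^4]^{1/2}=\Theta(\sqrt{\rho_n})$, i.e.\ it loses exactly the power of $\rho_n$ you were trying to keep, while your alternative ``stability bound'' $\sup_\bx|H_1-H_2|$ is precisely the worst-case supremum you wanted to avoid and removes the bracket (hence any Nishimori input) altogether. The paper's own proof never invokes Nishimori at this point: it bounds the quantity multiplying $(X_k-X_k')^2$ \emph{uniformly}, using only $|x_j|,|X_j|\le S$ and $R_n\le 2s_n+\lambda_n\rho_n$, and it is this uniform bound that legitimizes pulling out $\mathbb{E}[(X_k-X_k')^2]\le 2\rho_n$ to get the overall $\rho_n/n$ factor. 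To close your argument you should follow that route. One caveat when you do: the coupling term in \eqref{Ht} carries the coefficient $(1-t)\lambda_n/n$ (you wrote it correctly), whereas the paper's display of $d\mathcal{H}/ds$ shows $(1-t)/n$; with the coefficient of \eqref{Ht} the uniform bound yields a term of the form $\lambda_n^2S^4$ in place of $S^4$, so reproducing the proposition's exact constants requires the paper's bookkeeping, though the structure of the concentration argument is unchanged.
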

The proof is based on two classical concentration inequalities,
\begin{proposition}[Gaussian Poincar\'e inequality]\label{poincare}
	Let $\bU = (U_1, \dots, U_N)$ be a vector of $N$ independent standard normal random variables. Let $g: \mathbb{R}^N \to \mathbb{R}$ be a continuously differentiable function. Then
 \begin{align*}
	 \Var (g(\bU)) \leq \E \| \nabla g (\bU) \|^2  \,.
 \end{align*}
\end{proposition}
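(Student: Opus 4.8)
The plan is to prove the Gaussian Poincar\'e inequality via the Ornstein--Uhlenbeck semigroup. For $t\ge 0$, $x\in\mathbb{R}^N$ and $\bU'\sim{\cal N}(0,{\rm I}_N)$ independent of everything else, define the semigroup through Mehler's formula
\begin{align*}
(P_tg)(x)\equiv\E\big[g\big(e^{-t}x+\sqrt{1-e^{-2t}}\,\bU'\big)\big]\,,
\end{align*}
so that $P_0g=g$, $P_\infty g$ equals the constant $\E[g(\bU)]$, and $\E[(P_th)(\bU)]=\E[h(\bU)]$ since the standard Gaussian law is invariant under $P_t$. The first step is to reduce to $g$ smooth with bounded gradient: the general $C^1$ case (where one may assume $\E\|\nabla g(\bU)\|^2<\infty$, the bound being vacuous otherwise) follows by truncating and mollifying $g$ and passing to the limit in both sides of the inequality.

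Next, with $L\equiv\Delta-x\cdot\nabla$ the Ornstein--Uhlenbeck generator, so that $\partial_tP_tg=LP_tg$, and using the Gaussian integration-by-parts identity $\E[f(\bU)\,(Lh)(\bU)]=-\E[\nabla f(\bU)\cdot\nabla h(\bU)]$, I would express the variance as the telescoping integral
\begin{align*}
\Var(g(\bU))=\E[(P_0g)(\bU)^2]-\E[(P_\infty g)(\bU)^2]=-\int_0^\infty\frac{d}{dt}\E[(P_tg)(\bU)^2]\,dt=2\int_0^\infty\E\big[\|\nabla (P_tg)(\bU)\|^2\big]\,dt\,,
\end{align*}
the last equality coming from $\frac{d}{dt}\E[(P_tg)(\bU)^2]=2\E[(P_tg)(\bU)\,(LP_tg)(\bU)]=-2\E\|\nabla (P_tg)(\bU)\|^2$.

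The one genuinely non-routine ingredient, and the heart of the proof, is the commutation relation $\nabla P_tg=e^{-t}P_t(\nabla g)$ (componentwise), obtained by differentiating Mehler's formula under the expectation sign. Combined with the coordinatewise Jensen inequality $\|(P_tv)(x)\|^2\le(P_t\|v\|^2)(x)$ applied to $v=\nabla g$, it gives $\E\|\nabla (P_tg)(\bU)\|^2\le e^{-2t}\,\E[(P_t\|\nabla g\|^2)(\bU)]=e^{-2t}\,\E\|\nabla g(\bU)\|^2$; inserting this into the telescoping integral and using $\int_0^\infty e^{-2t}\,dt=1/2$ yields the claim. The main obstacle here is bookkeeping rather than conceptual: justifying the differentiation under the expectation and integral signs and the interchange of the $t$-integral with the expectation --- which is exactly where the preliminary smoothing of $g$, and the resulting uniform control on $\nabla P_tg$, are used --- whereas the commutation identity itself is a one-line computation from Mehler's formula.

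As an alternative route I would keep in reserve a tensorization argument: first establish the one-dimensional inequality $\Var(h(U_1))\le\E[h'(U_1)^2]$ by expanding $h$ in Hermite polynomials $\{H_k\}_{k\ge0}$, for which $\Var(h)=\sum_{k\ge1}a_k^2\,k!$ whereas $\E[h'(U_1)^2]=\sum_{k\ge1}a_k^2\,k\cdot k!$, so the inequality holds term by term since $k\ge1$; then invoke the sub-additivity of the variance under product measures, $\Var(g(\bU))\le\sum_{i=1}^N\E[\Var_i(g)]$, where $\Var_i$ is the variance taken over the $i$-th coordinate only, and bound each summand by $\E[(\partial_ig(\bU))^2]$ using the one-dimensional result conditionally on the remaining coordinates.
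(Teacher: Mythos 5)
Your proof is correct; note, however, that the paper does not prove this proposition at all --- it is quoted as one of two classical concentration inequalities (alongside Efron--Stein) and used as a black box in the free-energy concentration arguments of Appendix \ref{app:free-energy}, so there is no internal proof to compare against. Your semigroup argument is the standard one and is complete in outline: the representation $\Var(g(\bU))=2\int_0^\infty \E\big\|\nabla (P_tg)(\bU)\big\|^2\,dt$, the commutation relation $\nabla P_tg=e^{-t}P_t(\nabla g)$, Jensen applied componentwise, and the invariance of the Gaussian measure under $P_t$ give exactly the constant $1$, and the points that need care are precisely the ones you flag (differentiation under the expectation, interchange of the $t$-integral with $\E$, and the truncation/mollification step reducing general $C^1$ functions to smooth ones with bounded gradient; in that last step one should also note that $\E\|\nabla g(\bU)\|^2<\infty$ together with the inequality for the truncations forces $g(\bU)\in L^2$, so the variance on the left is finite). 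Your reserve argument --- the one-dimensional inequality via Hermite expansion, where $\Var(h)=\sum_{k\ge1}a_k^2\,k!$ is dominated term by term by $\E[h'(U_1)^2]=\sum_{k\ge1}a_k^2\,k\cdot k!$, followed by tensorization through the subadditivity $\Var(g(\bU))\le\sum_{i=1}^N\E[\Var_i(g)]$ --- is also correct, more elementary, and in spirit closer to the Efron--Stein inequality the paper uses next, since it rests on the same conditional-variance decomposition. Either route fully justifies the proposition as stated.
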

\begin{proposition}[Efron-Stein inequality]\label{efron_stein}
	Let $\,\mathcal{U}\subset \R$, and a function $g: \mathcal{U}^N \to \mathbb{R}$. Let $\,\bU=(U_1, \dots, U_N)$ be a vector of $N$ independent random variables with law $P_U$ that take values in $\,\mathcal{U}$. Let $\,\bU^{(i)}$ a vector which differs from $\bU$ only by its $i$-th component, which is replaced by $U_i'$ drawn from $P_U$ independently of $\,\bU$. Then
 \begin{align*}
	 \Var(g(\bU)) \leq \frac{1}{2} \sum_{i=1}^N \EE_{\bU}\EE_{U_i'}\big[(g(\bU)-g(\bU^{(i)}))^2\big] \,.
 \end{align*}
\end{proposition}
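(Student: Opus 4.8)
The plan is to prove the Efron--Stein inequality by the classical Doob martingale difference decomposition, so no regularity of $g$ is needed (unlike the Gaussian Poincaré inequality of Proposition~\ref{poincare}). Set $\mathcal{F}_i=\sigma(U_1,\dots,U_i)$ for $1\le i\le N$, with $\mathcal{F}_0$ trivial, and write the centred function as the telescoping sum $g(\bU)-\EE[g(\bU)]=\sum_{i=1}^N\Delta_i$ with $\Delta_i\equiv \EE[g(\bU)\mid\mathcal{F}_i]-\EE[g(\bU)\mid\mathcal{F}_{i-1}]$. First I would check that the $\Delta_i$ are orthogonal in $L^2$: for $i<j$, $\Delta_i$ is $\mathcal{F}_{j-1}$-measurable and $\EE[\Delta_j\mid\mathcal{F}_{j-1}]=0$, so $\EE[\Delta_i\Delta_j]=\EE[\Delta_i\,\EE[\Delta_j\mid\mathcal{F}_{j-1}]]=0$. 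Hence $\Var(g(\bU))=\sum_{i=1}^N\EE[\Delta_i^2]$, and it suffices to bound each term by $\tfrac12\,\EE_{\bU}\EE_{U_i'}[(g(\bU)-g(\bU^{(i)}))^2]$.

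Next I would introduce the ``integrate-out-$U_i$'' operator $\EE^{(i)}[\,\cdot\,]\equiv\EE[\,\cdot\,\mid (U_j)_{j\ne i}]$; by independence of the $U_j$, $\EE^{(i)}[g(\bU)]$ is a measurable function of $(U_j)_{j\ne i}$ that does not depend on $U_i$. The key identity is $\Delta_i=\EE\big[g(\bU)-\EE^{(i)}[g(\bU)]\,\big\vert\,\mathcal{F}_i\big]$. To see this, note that since $\EE^{(i)}[g(\bU)]$ does not depend on $U_i$ one has $\EE[\EE^{(i)}[g(\bU)]\mid\mathcal{F}_i]=\EE[\EE^{(i)}[g(\bU)]\mid\mathcal{F}_{i-1}]$, and by the tower property together with independence the right-hand side equals $\EE[g(\bU)\mid\mathcal{F}_{i-1}]$; subtracting this from $\EE[g(\bU)\mid\mathcal{F}_i]$ gives $\Delta_i$. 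Conditional Jensen then yields $\Delta_i^2\le\EE\big[(g(\bU)-\EE^{(i)}[g(\bU)])^2\,\big\vert\,\mathcal{F}_i\big]$, hence $\EE[\Delta_i^2]\le\EE\big[(g(\bU)-\EE^{(i)}[g(\bU)])^2\big]$.

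Finally I would recover the factor $\tfrac12$ by a symmetrization step performed conditionally on $(U_j)_{j\ne i}$: with those variables frozen, $g(\bU)$ and $g(\bU^{(i)})$ are i.i.d.\ (in $U_i$ and its independent copy $U_i'$) with common conditional mean $\EE^{(i)}[g(\bU)]$, and for any square-integrable $Y$ with independent copy $Y'$ one has the elementary identity $\EE[(Y-Y')^2]=2\Var(Y)$. Applying this conditionally and then averaging over $(U_j)_{j\ne i}$ gives $\EE\big[(g(\bU)-\EE^{(i)}[g(\bU)])^2\big]=\tfrac12\,\EE_{\bU}\EE_{U_i'}\big[(g(\bU)-g(\bU^{(i)}))^2\big]$. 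Summing over $i$ and combining with the two previous displays completes the proof. The only point requiring care --- rather than a genuine obstacle --- is the manipulation of nested conditional expectations in the key identity for $\Delta_i$, where independence of the $U_j$ is what allows $\EE^{(i)}$ to commute with conditioning on $\mathcal{F}_{i-1}$; once that is in place, everything else reduces to Jensen's inequality, orthogonality of martingale increments, and the one-line variance identity.
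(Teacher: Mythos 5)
Your proof is correct. Note that the paper itself does not prove this proposition: it is quoted as a classical concentration inequality (alongside the Gaussian Poincar\'e inequality) and used as a black box in the free-energy concentration arguments, so there is no in-paper proof to compare against. Your argument is the standard one: the Doob martingale decomposition with orthogonal increments, the identity $\Delta_i=\EE\big[g(\bU)-\EE^{(i)}[g(\bU)]\,\big\vert\,\mathcal{F}_i\big]$ (which you justify correctly via the tower property and independence of the $U_j$), conditional Jensen, and the conditional symmetrization identity $\EE[(Y-Y')^2]=2\Var(Y)$ producing the factor $\tfrac12$. The only implicit hypothesis is $\EE[g(\bU)^2]<\infty$, needed for the $L^2$ martingale decomposition; this is tacit in the statement (otherwise $\Var(g(\bU))$ is not finite) and is satisfied in the paper's application, where $g$ is an expected free energy of a model with bounded signal support.
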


We start by proving the concentration w.r.t. the gaussian variables. It is convenient to make explicit the dependence of the partition function of the interpolating model in the independent quenched variables instead of the data: 
${\cal Z}_{n,t,\epsilon}(\bX, \bZ, \tilde \bZ)=\mathcal{Z}_{n,t,\epsilon}(\bW(t),\tilde{\bW}(t,\epsilon))$.
\begin{lemma}[Concentration w.r.t. the gaussian variables] We have
\begin{align*}
\mathbb{E}\Big[\Big(\frac{1}{n}\ln {\cal Z}_{n,t,\epsilon}(\bX, \bZ, \tilde \bZ)-\frac{1}{n}\mathbb{E}_{\bZ,\tilde \bZ}\ln {\cal Z}_{n,t,\epsilon}(\bX, \bZ, \tilde \bZ)\Big)^2\Big]\le \frac32\frac{\lambda_n\rho_n^2}{n}+2\frac{s_n\rho_n}{n}\,.	
\end{align*}
\end{lemma}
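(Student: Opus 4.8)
The plan is to apply the Gaussian Poincar\'e inequality of Proposition~\ref{poincare} to the function $g\equiv\frac1n\ln{\cal Z}_{n,t,\epsilon}(\bX,\bZ,\tilde\bZ)$, viewed — for a fixed realization of $\bX$ — as a function of the $\binom n2+n$ i.i.d.\ standard gaussians $(Z_{ij})_{i<j}$ and $(\tilde Z_i)_i$. Since $P_{X,n}$ has compact support, ${\cal Z}_{n,t,\epsilon}$ is a positive $\mathcal C^\infty$ function of these variables, so $g$ is $\mathcal C^1$ and Poincar\'e applies. Taking afterwards the expectation over $\bX$, the left-hand side of the lemma equals $\EE_\bX\big[\Var_{\bZ,\tilde\bZ}(g)\big]$, which is bounded by $\EE\,\|\nabla_{\bZ,\tilde\bZ}\,g\|^2$; the whole problem then reduces to estimating this gradient, which is a Gibbs average obtained by differentiation under the integral sign.

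First I would compute the partial derivatives. From the hamiltonian \eqref{Ht}, together with $W_{ij}(t)=\sqrt{(1-t)\lambda_n/n}\,X_iX_j+Z_{ij}$ and $\tilde W_i(t,\epsilon)=\sqrt{R_n(t,\epsilon)}\,X_i+\tilde Z_i$, one has $\partial{\cal H}_{n,t,\epsilon}/\partial Z_{ij}=-\sqrt{(1-t)\lambda_n/n}\,x_ix_j$ and $\partial{\cal H}_{n,t,\epsilon}/\partial\tilde Z_i=-\sqrt{R_n(t,\epsilon)}\,x_i$, so that differentiating $\ln{\cal Z}_{n,t,\epsilon}$ under the integral yields $\partial g/\partial Z_{ij}=\frac1n\sqrt{(1-t)\lambda_n/n}\,\langle x_ix_j\rangle_t$ and $\partial g/\partial\tilde Z_i=\frac1n\sqrt{R_n(t,\epsilon)}\,\langle x_i\rangle_t$, whence
\begin{align*}
\|\nabla_{\bZ,\tilde\bZ}\,g\|^2=\frac1{n^2}\Big((1-t)\frac{\lambda_n}{n}\sum_{i<j}\langle x_ix_j\rangle_t^2+R_n(t,\epsilon)\sum_{i=1}^n\langle x_i\rangle_t^2\Big)\,.
\end{align*}
Next I would bound the Gibbs averages: by Jensen with respect to $\langle-\rangle_t$, $\langle x_ix_j\rangle_t^2\le\langle x_i^2x_j^2\rangle_t$ and $\langle x_i\rangle_t^2\le\langle x_i^2\rangle_t$; then taking $\EE$ and using the Nishimori identity (Lemma~\ref{NishId}, appendix~\ref{app:nishimori}) to replace a single-replica Gibbs average by an average over the signal, $\EE\langle x_i^2x_j^2\rangle_t=\EE[X_i^2X_j^2]=\rho_n^2$ for $i\neq j$ (independence, plus $\EE[X_1^2]=\rho_n$ since $P_{X,n}=\rho_n p_X+(1-\rho_n)\delta_0$ with $p_X$ of unit second moment) and $\EE\langle x_i^2\rangle_t=\rho_n$. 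Finally, using $\#\{i<j\}\le n^2/2$, $1-t\le1$ and $R_n(t,\epsilon)=\epsilon+\lambda_n\int_0^tq_n(s,\epsilon)\,ds\le 2s_n+\lambda_n\rho_n$ (because $\epsilon\le2s_n$ and $q_n(\cdot,\epsilon)\le\rho_n$), I would collect
\begin{align*}
\EE\,\|\nabla_{\bZ,\tilde\bZ}\,g\|^2\le\frac1{n^2}\Big(\frac{\lambda_n}{n}\cdot\frac{n^2}{2}\rho_n^2+(2s_n+\lambda_n\rho_n)\,n\rho_n\Big)=\frac32\frac{\lambda_n\rho_n^2}{n}+2\frac{s_n\rho_n}{n}\,,
\end{align*}
which is exactly the claimed estimate.

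The computation is essentially a textbook Poincar\'e estimate, so I do not expect a serious obstacle. The one point that needs care is the treatment of $\langle x_ix_j\rangle_t^2$: bounding it crudely by $S^4$ would destroy the $\rho_n^2$ scaling and render the estimate useless, so it is essential to first pass to $\langle x_i^2x_j^2\rangle_t$ and only then invoke the Nishimori identity to recover the exact second moment $\rho_n^2$ (and likewise for the $\tilde\bZ$-block). The remaining points — justifying the interchange of differentiation and integration (the integrand being smooth, positive and bounded on compacts) and the elementary inequality $R_n(t,\epsilon)\le2s_n+\lambda_n\rho_n$ — are routine.
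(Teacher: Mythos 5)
Your proof is correct and follows essentially the same route as the paper: condition on $\bX$, apply the Gaussian Poincaré inequality to the log-partition function, compute the gradient as a Gibbs average of the Hamiltonian's $\bZ,\tilde\bZ$-derivatives, pass from $\langle x_ix_j\rangle_t^2$ and $\langle x_i\rangle_t^2$ to second moments by Jensen, and use the Nishimori identity together with $R_n(t,\epsilon)\le 2s_n+\lambda_n\rho_n$ to recover exactly $\frac32\lambda_n\rho_n^2/n+2s_n\rho_n/n$. No gaps.
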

\begin{proof}
Fix all variables except $\bZ, \tilde \bZ$. Let $g(\bZ, \tilde \bZ)\equiv-\frac{1}{n}\ln {\cal Z}_{n,t,\epsilon}(\bX, \bZ, \tilde \bZ)$ be the free energy seen as a function of the gaussian variables only. The free energy gradient reads $\EE\|\nabla g\|^2=\EE\|\nabla_\bZ g\|^2+\EE\|\nabla_{\tilde \bZ} g\|^2$. Let us denote ${\cal H}(t)\equiv {\cal H}_{n,t,\epsilon}$ the interpolating Hamiltonian \eqref{Ht}.
\begin{align*}
\EE\|\nabla_\bZ g\|^2= \frac1{n^2}\EE\|\langle \nabla_\bZ{\cal H}(t)\rangle_t\|^2	&=\frac{(1-t)\lambda_n}{n^3}\sum_{i<j}\EE[\langle x_ix_j \rangle_t^2]\le\frac{(1-t)\lambda_n}{n^3}\sum_{i<j}\EE\langle (x_ix_j)^2 \rangle_{t}\nn
&\overset{\rm N}{=}\frac{(1-t)\lambda_n}{n^3}\sum_{i<j}\EE[(X_iX_j)^2]\le \frac{\lambda_n\rho_n^2}{2n}\,.
\end{align*}
where we used a Nishimori identity for the last equality.
Similarly, and using $\lambda_n\rho_n\ge1$ and $s_n< 1/2$,
\begin{align*}
\EE\|\nabla_{\tilde \bZ} g\|^2=\frac{R(\epsilon)}{n^2}\EE\|\langle \bx \rangle_t\|^2\le\frac{R(\epsilon)}{n^2}\EE\langle \|\bx\|^2 \rangle_t\overset{\rm N}{=}\frac{R(\epsilon)}{n^2}\EE\|\bX\|^2\le \frac{(2s_n + \rho_n\lambda_n)\rho_n}{n}\,.
\end{align*}
Therefore Proposition~\ref{poincare} directly implies the stated result.
\end{proof}

We now consider the fluctuations due to the signal realization:
\begin{lemma}[Concentration w.r.t. the spike] We have
\begin{align*}
\mathbb{E}\Big[\Big(-\frac{1}{n}\mathbb{E}_{\bZ,\tilde \bZ}\ln {\cal Z}_{n,t,\epsilon}(\bX, \bZ, \tilde \bZ)-f_{n}(t,\epsilon)\Big)^2\Big]\le \frac{2\rho_nS^2}{n}\Big((2s_n+\lambda_n\rho_n)^2+S^4\Big)\,.	
\end{align*}

\end{lemma}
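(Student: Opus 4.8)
The plan is to prove the bound with the Efron--Stein inequality (Proposition~\ref{efron_stein}) applied to the map $g(\bX)\equiv-\frac1n\EE_{\bZ,\tilde\bZ}\ln{\cal Z}_{n,t,\epsilon}(\bX,\bZ,\tilde\bZ)$, regarded as a function of the i.i.d.\ coordinates $X_1,\dots,X_n\sim P_{X,n}$ (the gaussian noise now being integrated out). Since $\EE_\bX g(\bX)=f_{n}(t,\epsilon)$, the left-hand side of the lemma is exactly $\Var(g)$, so it suffices to control the one-coordinate increments. Write $\bX^{(i)}$ for the vector in which $X_i$ has been resampled to an independent copy $X_i'\sim P_{X,n}$. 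To bound $g(\bX)-g(\bX^{(i)})$ I would use the convexity of $x\mapsto-\ln x$: from ${\cal Z}_{n,t,\epsilon}(\bX)={\cal Z}_{n,t,\epsilon}(\bX^{(i)})\,\langle e^{-({\cal H}(\bx,\bX)-{\cal H}(\bx,\bX^{(i)}))}\rangle$ and Jensen's inequality, together with the symmetric bound obtained by interchanging $\bX$ and $\bX^{(i)}$, one gets $|g(\bX)-g(\bX^{(i)})|\le\frac1n\EE_{\bZ,\tilde\bZ}\langle|{\cal H}(\bx,\bX)-{\cal H}(\bx,\bX^{(i)})|\rangle$, with the Gibbs bracket taken w.r.t.\ either posterior.

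Next, reading off from the second line of \eqref{Ht} that the only $\bX$-dependent terms of the Hamiltonian are $-\frac{(1-t)\lambda_n}{n}\sum_{k<l}x_kx_lX_kX_l$ and $-R_n(t,\epsilon)\,\bx\cdot\bX$, one finds ${\cal H}(\bx,\bX)-{\cal H}(\bx,\bX^{(i)})=-(X_i-X_i')\big(\frac{(1-t)\lambda_n}{n}x_i\sum_{l\ne i}x_lX_l+R_n(t,\epsilon)x_i\big)$. I would then bound this using the boundedness of the supports ($|x_i|,|x_l|\le S$ under the posterior and $|X_l|\le S$) and $R_n(t,\epsilon)=\epsilon+\lambda_n\int_0^t q_n\le 2s_n+\lambda_n\rho_n$, which gives $|g(\bX)-g(\bX^{(i)})|\le\frac1n|X_i-X_i'|\big(\frac{\lambda_nS^2}{n}\sum_{l\ne i}|X_l|+(2s_n+\lambda_n\rho_n)S\big)$. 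The essential point is then to exploit sparsity: $X_i-X_i'=0$ unless $X_i$ or $X_i'$ is nonzero (an event of probability $\le 2\rho_n$), and $(X_i,X_i')$ is independent of $(X_l)_{l\ne i}$; squaring, using $(a+b)^2\le 2a^2+2b^2$, and taking expectations with $\EE[(X_i-X_i')^2]=2(\rho_n-(\EE X_1)^2)\le 2\rho_n$ and $\EE[(\sum_{l\ne i}|X_l|)^2]\le(n-1)\rho_n+(n-1)^2\rho_n^2=O(n^2\rho_n^2)$ (using that $p_X$ has unit second moment), one gets $\EE[(g(\bX)-g(\bX^{(i)}))^2]=O\big(\frac{\rho_n}{n^2}((\lambda_n\rho_n)^2+(2s_n+\lambda_n\rho_n)^2)S^4\big)$. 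Summing over $i=1,\dots,n$ and invoking Proposition~\ref{efron_stein} yields a bound of the announced form; the precise $S$-dependence of the statement is recovered by replacing the crude pointwise estimates $|x_l|\le S$ by the Nishimori identities $\EE\langle x_l^2\rangle=\rho_n$, $\EE\langle x_l^2x_k^2\rangle=\EE[X_l^2X_k^2]$ etc.\ (Lemma~\ref{NishId}), or equivalently by bounding $\langle|x_i\sum_{l\ne i}x_lX_l|\rangle$ through the overlap $Q=\frac1n\bx\cdot\bX$ and using $\EE\langle Q^2\rangle\le\rho_n^2$ from \eqref{manip}.

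The step I expect to be the real obstacle is the accounting of the powers of $\rho_n$. A naive application of Efron--Stein that only uses $|x_i|,|X_i|\le S$ produces a bound of order $\frac1n(\lambda_n\rho_n)^2\sim|\ln\rho_n|^2/n$ in the regime \eqref{mainregime}, which, after dividing by the normalising factor $\rho_n|\ln\rho_n|$ of Theorem~\ref{thm:ws}, fails to vanish. One must therefore harvest two independent sparsity factors: the $\rho_n$ coming from the resampled coordinate $X_i-X_i'$ being almost surely zero, and a further $\rho_n^2$ hidden in $\EE[(\sum_{l\ne i}|X_l|)^2]\asymp n^2\rho_n^2$ (equivalently in $\EE\langle Q^2\rangle\le\rho_n^2$), reflecting the sparsity of the spike itself. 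Making these combine with the two factors $1/n$, the extra $\lambda_n/n$ of the quadratic term, and the $n$ summands of the Efron--Stein sum --- while checking that the $R_n(t,\epsilon)$-contribution collapses to $(2s_n+\lambda_n\rho_n)^2$ --- is precisely what turns the crude $1/n$ into the required $\rho_n/n$.
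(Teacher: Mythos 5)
Your route is essentially the paper's: apply Efron--Stein (Proposition~\ref{efron_stein}) to $g(\bX)=-\frac1n\EE_{\bZ,\tilde\bZ}\ln{\cal Z}_{n,t,\epsilon}(\bX,\bZ,\tilde\bZ)$, isolate the $\bX$-dependent part of the Hamiltonian \eqref{Ht}, and bound the one-coordinate increment using $R_n(t,\epsilon)\le 2s_n+\lambda_n\rho_n$, the bounded support, and $\EE[(X_i-X_i')^2]\le 2\rho_n$. The only differences are cosmetic or quantitative: you control the increment by the Jensen sandwich $\ln\langle e^{-\Delta{\cal H}}\rangle\ge -\langle|\Delta{\cal H}|\rangle$ (with either posterior) where the paper interpolates along $s\bX+(1-s)\bX^{(i)}$ and writes $g(\bX)-g(\bX^{(i)})=\frac1n\int_0^1 ds\,\langle \frac{d{\cal H}}{ds}\rangle$; these are interchangeable since your subsequent bound on $\Delta{\cal H}$ is uniform in $\bx$. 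For the quadratic term the paper bounds $\frac1{n^2}\sum_{j,k\ne i}X_jX_k\langle x_ix_j\rangle\langle x_ix_k\rangle$ deterministically by $S^6$, whereas you keep $\sum_{l\ne i}|X_l|$ and average over the spike, harvesting the extra $\rho_n^2$. Note that your version of the increment correctly carries the factor $\frac{(1-t)\lambda_n}{n}$ on the pairwise term, which the paper's displayed computation drops (hence its clean $S^4$); with that factor present, your sparsity-aware average is exactly what keeps the final estimate at order $\lambda_n^2\rho_n^3/n$, so your conclusion is of the same order as the statement (and as the simplified bound quoted in Proposition~\ref{prop:fconc}) in the regimes \eqref{app-scalingRegime}, even though it is not literally the displayed inequality: your bound carries an extra $(\lambda_n\rho_n)^2$ (and an $S^2$) on the subleading $S^4$ term, which is harmless since that term is then still dominated by $(2s_n+\lambda_n\rho_n)^2$ up to $S$-dependent constants. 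The closing remark that one could instead invoke Nishimori identities or $\EE\langle Q^2\rangle\le\rho_n^2$ to tune the $S$-dependence is not needed and not fully worked out, but nothing in the main argument depends on it.
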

\begin{proof}
Let $g(\bX)\equiv-\frac{1}{n}\mathbb{E}_{\bZ,\tilde \bZ}\ln {\cal Z}_{n,t,\epsilon}( \bX, \bZ, \tilde \bZ)$. Define $\bX^{(i)}$ as a vector with same entries as $\bX$ except the $i$-th one that is replaced by $X_i'$ drawn independently from $P_{X,n}$. Let us estimate $(g(\bX)-g(\bX^{(i)}))^2$ by interpolation. Let ${\cal H}(t,s\bX+(1-s)\bX^{(i)})$ be the interpolating Hamiltonian \eqref{Ht} with $\bX$ replaced by $s\bX+(1-s)\bX^{(i)}$. Then
\begin{align*}
\EE\big[(g(\bX)-&g(\bX^{(i)}))^2\big] = \EE\Big[\Big(\int_0^1 ds \frac{dg}{ds}(s\bX+(1-s)\bX^{(i)})\Big)^2\Big] \nn
&=\frac1{n^2}\EE\Big[\Big(\int_0^1 ds \Big\langle\frac{d{\cal H}}{ds}(t,s\bX+(1-s)\bX^{(i)})\Big\rangle_t\Big)^2\Big]\nn
&= \frac{1}{n^2}\EE\Big[\Big( (X_i-X_i')\Big\langle R_\epsilon(t) x_i +  \frac{1-t}{n}x_i\sum_{j(\neq i)}X_jx_j\Big\rangle_t\Big)^2\Big]\nn
&\le \frac{2}{n^2}\EE\Big[(X_i-X_i')^2\Big(\langle x_i\rangle_t^2(2s_n+\lambda_n\rho_n)^2+\frac1{n^2}\sum_{j,k(\neq i)}X_jX_k\langle x_ix_j\rangle_t\langle x_ix_k\rangle_t\Big)\Big]\nn
&\le \frac{2}{n^2}\EE\big[(X_i-X_i')^2\big]\Big(S^2(2s_n+\lambda_n\rho_n)^2+S^6\Big)\nn
&\le\frac{4\rho_nS^2}{n^2}\Big((2s_n+\lambda_n\rho_n)^2+S^4\Big)\,.
\end{align*}
We used $(a+b)^2\le 2(a^2+b^2)$ for the second inequality and $\EE[(X_i-X_i')^2]=2{\rm Var}(X_i)\le 2\rho_n$. Therefore Proposition~\ref{efron_stein} implies the claim.
\end{proof}
\subsection{Free energy concentration for the Wishart case}
In this section we prove a concentration identity for the free energy \eqref{F_nonav_wishart} onto its average \eqref{f_av_wishart}. 
\begin{proposition}[Free energy concentration for the spiked Wishart model]\label{prop:fconc_wishart}We have 
\begin{align*}
&\mathbb{E}\Big[\Big(F_{n,t,\bm \epsilon}(\bW(t), \tilde \bW_U(t,{\bm \epsilon}),\tilde \bW_V(t,{\bm \epsilon}))-f_{n}(t,\bm \epsilon)\Big)^2\Big]\!\le\! \frac{C_{F,n}}{n}
\end{align*}
where
\begin{align*}
C_{F,n}&\equiv 2\rho_{U,n}S^2\big\{(2s_n+\lambda_n\alpha_n\rho_{V,n})^2+\alpha_n^2S^4\big\}+2\alpha_n\rho_{V,n}S^2\big\{(2s_n+\lambda_n\rho_{U,n})^2+S^4\big\}\nn
&\hspace{5cm}+3\lambda_n\alpha_n\rho_{U,n} \rho_{V,n}+2s_n(1+\alpha_n)\bar \rho_n\,.
\end{align*}
In the particular case of the scalings \eqref{sparsePCAscaling} we have $C_{F,n}\le C |\ln \rho_{V,n}|$ for some positive constant $C$ that may depend on anything but $n$.
\end{proposition}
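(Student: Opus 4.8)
The strategy mirrors that of Proposition~\ref{prop:fconc} for the Wigner case: one splits the fluctuations of $F_{n,t,{\bm \epsilon}}$ into a part coming from the Gaussian noise $(\bZ,\tilde \bZ_U,\tilde \bZ_V)$ and a part coming from the signal $(\bU,\bV)$, and controls each by a classical concentration inequality. Concretely, since $f_n(t,{\bm \epsilon})=\mathbb{E}F_{n,t,{\bm \epsilon}}$, the law of total variance with the conditioning on $(\bU,\bV)$ gives
\begin{align*}
\mathbb{E}\big[\big(F_{n,t,{\bm \epsilon}}-f_n(t,{\bm \epsilon})\big)^2\big]
&=\mathbb{E}_{\bU,\bV}\big[{\rm Var}_{\bZ,\tilde \bZ_U,\tilde \bZ_V}\big(F_{n,t,{\bm \epsilon}}\,\big|\,\bU,\bV\big)\big]\\
&\quad+{\rm Var}_{\bU,\bV}\big(\mathbb{E}_{\bZ,\tilde \bZ_U,\tilde \bZ_V}\big[F_{n,t,{\bm \epsilon}}\,\big|\,\bU,\bV\big]\big)\,,
\end{align*}
so it is enough to bound the two terms, whose sum will reproduce the two groups of terms defining $C_{F,n}$.

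For the first (Gaussian) term I would freeze $(\bU,\bV)$, view $F_{n,t,{\bm \epsilon}}$ as a $\mathcal{C}^1$ function of $(\bZ,\tilde \bZ_U,\tilde \bZ_V)$, and apply the Gaussian Poincar\'e inequality (Proposition~\ref{poincare}). Since $F_{n,t,{\bm \epsilon}}=-\frac1n\ln\mathcal{Z}_{n,t,{\bm \epsilon}}$, each partial derivative equals $\frac1n$ times the Gibbs average of the matching derivative of the Hamiltonian~\eqref{Ht_wishart}: $\partial_{Z_{ij}}\mathcal{H}=-\sqrt{(1-t)\lambda_n/n}\,u_iv_j$, $\partial_{\tilde Z_{U,i}}\mathcal{H}=-\sqrt{R_{V,n}}\,u_i$, $\partial_{\tilde Z_{V,j}}\mathcal{H}=-\sqrt{R_{U,n}}\,v_j$. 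Squaring, summing over all indices, taking $\mathbb{E}$, and using the Nishimori identity (Lemma~\ref{NishId}) to replace $\mathbb{E}\langle u_iv_j\rangle^2$, $\mathbb{E}\langle\|\bu\|^2\rangle$, $\mathbb{E}\langle\|\bv\|^2\rangle$ by their planted counterparts $\rho_{U,n}\rho_{V,n}$, $n\rho_{U,n}$, $m\rho_{V,n}$, together with the a~priori bounds $R_{V,n}\le 2s_n+\lambda_n\alpha_n\rho_{V,n}$ and $R_{U,n}\le 2s_n+\lambda_n\rho_{U,n}$, gives that this term is at most $\frac1n\big(3\lambda_n\alpha_n\rho_{U,n}\rho_{V,n}+2s_n(1+\alpha_n)\bar\rho_n\big)$, precisely the last two contributions to $C_{F,n}$.

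For the second (signal) term I would apply the Efron--Stein inequality (Proposition~\ref{efron_stein}), resampling one coordinate of $\bU$ at a time (into $\bU^{(i)}$), then one of $\bV$ (into $\bV^{(j)}$). As in the Wigner proof, along the linear path interpolating the original and the resampled configuration, $g(\bU,\bV):=\mathbb{E}_{\bZ,\tilde \bZ_U,\tilde \bZ_V}F_{n,t,{\bm \epsilon}}$ satisfies $g(\bU,\bV)-g(\bU^{(i)},\bV)=\frac1n\int_0^1(U_i-U_i')\langle\partial_{U_i}\mathcal{H}\rangle\,ds$ with $\partial_{U_i}\mathcal{H}=-u_i\big(R_{V,n}+(1-t)\frac{\lambda_n}{n}(\bv\cdot\bV)\big)$ (bracket and Hamiltonian at the interpolated signal), and symmetrically for $\bV$ with $R_{U,n}$ and $\bu\cdot\bU$. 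Using bounded support $[-S,S]$, the decoupled-channel pieces contribute $R_{V,n}^2\langle u_i\rangle^2\le(2s_n+\lambda_n\alpha_n\rho_{V,n})^2S^2$ and $R_{U,n}^2\langle v_j\rangle^2\le(2s_n+\lambda_n\rho_{U,n})^2S^2$; the bilinear pieces I would handle through $|\langle u_i(\bv\cdot\bV)\rangle|\le S^3\,\#\{k:V_k\neq0\}$ (deterministic given $\bV$) and symmetrically, using $\bU\perp\bV$ so this count is independent of the resampled coordinate, and, under~\eqref{sparsePCAscaling}, $\mathbb{E}\big[\#\{k:V_k\neq0\}^2\big]=O((m\rho_{V,n})^2)$ and $\lambda_n^2\rho_{V,n}^2=O(1)$, which keeps the bilinear contributions at $O(\rho_{U,n}\alpha_n^2S^6/n)$ and $O(\alpha_n\rho_{V,n}S^6/n)$. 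Summing the $n$ increments from $\bU$ and the $m=\alpha_n n$ from $\bV$ with $\mathbb{E}[(U_i-U_i')^2]\le2\rho_{U,n}$, $\mathbb{E}[(V_j-V_j')^2]\le2\rho_{V,n}$ then yields the first two contributions to $C_{F,n}$; adding this to the Gaussian bound gives $\mathbb{E}[(F_{n,t,{\bm \epsilon}}-f_n)^2]\le C_{F,n}/n$.

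Finally, under the scalings~\eqref{sparsePCAscaling} one has $\alpha_n\to\alpha$, $\rho_{U,n}\to\rho_U$, $s_n\to0$, $\rho_{V,n}\to0$ and $\lambda_n^2=\Theta(|\ln\rho_{V,n}|/(\alpha_n\rho_{V,n}))$, so the dominant term of $C_{F,n}$ is $\alpha_n\rho_{V,n}S^2(\lambda_n\rho_{U,n})^2=\Theta(|\ln\rho_{V,n}|)$, while the others are $o(|\ln\rho_{V,n}|)$ (for instance $\lambda_n\alpha_n\rho_{U,n}\rho_{V,n}=\Theta(\sqrt{\rho_{V,n}|\ln\rho_{V,n}|})$, $\rho_{U,n}S^2(\lambda_n\alpha_n\rho_{V,n})^2=\Theta(\rho_{V,n}|\ln\rho_{V,n}|)$, $\rho_{U,n}\alpha_n^2S^6=\Theta(1)$), whence $C_{F,n}\le C|\ln\rho_{V,n}|$. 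The step I expect to be the main obstacle is the control of the bilinear cross term in the Efron--Stein estimate: a crude bound $|\langle u_i(\bv\cdot\bV)\rangle|\le mS^3$ would cost a spurious factor $m$ which, together with the $\lambda_n/n$ prefactor, would inflate $C_{F,n}$ to order $|\ln\rho_{V,n}|/\rho_{V,n}$; one must genuinely use the sparsity of $\bV$ together with the $\bU\perp\bV$ independence. Everything else is the routine transcription of the Wigner computation to the bipartite setting.
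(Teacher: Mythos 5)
Your overall route coincides with the paper's: the law-of-total-variance split you use is exactly the paper's two-lemma decomposition (Gaussian Poincar\'e, Proposition~\ref{poincare}, conditionally on the spikes; Efron--Stein, Proposition~\ref{efron_stein}, for the spikes, with the same coordinate-resampling interpolation), and your Gaussian part reproduces the paper's bound $3\lambda_n\alpha_n\rho_{U,n}\rho_{V,n}/n+2s_n(1+\alpha_n)\bar\rho_n/n$ essentially verbatim. Where you genuinely differ is the bilinear cross term in the Efron--Stein step, and your instinct there is well placed: the $U_i$-derivative of the Hamiltonian \eqref{Ht_wishart} is $-u_i\big(R_{V,n}+(1-t)\frac{\lambda_n}{n}\,\bv\cdot\bV\big)$, \emph{with} the factor $\lambda_n$, whereas the paper's displayed computation writes $\frac{1-t}{n}u_i(\bv\cdot\bV)$ and then bounds this piece crudely by $\alpha_n^2S^6$; with $\lambda_n$ restored the crude bound becomes $\lambda_n^2\alpha_n^2S^6$, which under \eqref{sparsePCAscaling} would inflate $C_{F,n}$ to $\Theta(|\ln\rho_{V,n}|/\rho_{V,n})$. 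Your use of the sparsity of $\bV$ (i.e.\ $|\langle u_i(\bv\cdot\bV)\rangle_t|\le S^3\,\#\{k:V_k\neq0\}$ and $\EE[\#\{k:V_k\neq0\}^2]=O((m\rho_{V,n})^2)$, giving a contribution $O(\lambda_n^2\alpha_n^2\rho_{V,n}^2S^6)=o(1)$) is therefore not an optional refinement but what actually secures the final claim $C_{F,n}\le C|\ln\rho_{V,n}|$ once the $\lambda_n$ is accounted for.

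Two caveats. First, the ``symmetric'' treatment of the $\bV$-resampling cross term cannot invoke sparsity, since $\rho_{U,n}\to\rho_U>0$: there the honest estimate is $\frac{\lambda_n^2}{n^2}\langle v_j(\bu\cdot\bU)\rangle_t^2\le \lambda_n^2\rho_{U,n}^2S^6$ up to count fluctuations, so its contribution to $C_{F,n}$ is $\Theta(\alpha_n\rho_{V,n}\lambda_n^2\rho_{U,n}^2S^6)=\Theta(|\ln\rho_{V,n}|)$ --- the same order as the dominant channel term $2\alpha_n\rho_{V,n}S^2(2s_n+\lambda_n\rho_{U,n})^2$, so the scaling conclusion is untouched, but your stated $O(\alpha_n\rho_{V,n}S^6/n)$ for that piece is not correct as written. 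Second, and for the same reason, your argument yields a bound $C'_{F,n}/n$ with $C'_{F,n}$ not literally equal to the paper's $C_{F,n}$ (the $\alpha_n^2S^4$ and $S^4$ terms inside the braces acquire factors of order $\lambda_n^2\alpha_n^2\rho_{V,n}^2$ and $\lambda_n^2\rho_{U,n}^2$ respectively); this is harmless for the way the proposition is used downstream, where only $C_{F,n}\le C|\ln\rho_{V,n}|$ enters the overlap-concentration estimates, but it means the exact constant displayed in the statement is recovered by neither your argument nor, strictly speaking, by the paper's own computation once the missing $\lambda_n$ is reinstated.
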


The proofs are brief as they are similar to those for the spiked Wigner model. The partition function expressed with the independent quenched variables is ${\cal Z}_{n,t,\bm\epsilon}(\bU,\bV,  \bZ, \tilde \bZ_U,\tilde \bZ_V)\equiv {\cal Z}_{n,t,\bm\epsilon}({\cal D}_{n,t,{\bm \epsilon}})$.
\begin{lemma}[Concentration w.r.t. the gaussian variables] Let $\bar \rho_n\equiv \max(\rho_{U,n},\rho_{V,n})$. We have 
\begin{align*}
&\mathbb{E}\Big[\Big(\frac{1}{n}\ln {\cal Z}_{n,t,\bm\epsilon}(\bU,\bV,  \bZ, \tilde \bZ_U,\tilde \bZ_V)-\frac{1}{n}\mathbb{E}_{\bZ,\tilde \bZ_U,\tilde \bZ_V}\ln {\cal Z}_{n,t,\bm\epsilon}(\bU,\bV,  \bZ, \tilde \bZ_U,\tilde \bZ_V)\Big)^2\Big]\nn
&\qquad\qquad\qquad\qquad\qquad\qquad\qquad\qquad\qquad\qquad\le 3\frac{\lambda_n\alpha_n\rho_{U,n} \rho_{V,n}}{n}+2\frac{s_n(1+\alpha_n)\bar \rho_n}{n}\,.	
\end{align*}
\begin{proof}
	Let $g(\bZ, \tilde \bZ_U,\tilde \bZ_V)$ be the free energy \eqref{F_nonav_wishart} seen as a function of only the gaussian variables. Based on the hamiltonian expression \eqref{Ht_wishart} we compute the gradient:
	\begin{align*}
\EE\|\nabla_\bZ g\|^2=\frac{(1-t)\lambda_n}{n^3}\EE\|\langle \bu \otimes \bv \rangle_t\|_{\rm F}^2\le \frac{\lambda_n}{n^3}\EE\| \bU\|^2  \EE\|\bV\|^2\le \frac{\lambda_n\alpha_n\rho_{U,n}\rho_{V,n}}{n}
\end{align*}
where the bracket is w.r.t. the interpolating model posterior \eqref{postWish_Ham}. We used that $\bu,\bU\in\mathbb{R}^n$ while $\bv,\bV\in\mathbb{R}^m$, and $\alpha_n\equiv m/n$. Similarly
\begin{align*}
\EE\| \nabla_{\tilde \bZ_U} g\|^2&=\frac{R_{V,n}(t,{\bm \epsilon})^2}{n^2}\EE\|\langle \bu \rangle_t\|^2\le \frac{(2s_n + \rho_{V,n}\alpha_n\lambda_n)\rho_{U,n}}{n}\,,\nn
\EE\| \nabla_{\tilde \bZ_V} g\|^2&=\frac{R_{U,n}(t,{\bm \epsilon})^2}{n^2}\EE\|\langle \bv \rangle_t\|^2\le \frac{(2s_n + \rho_{U,n}\lambda_n)\alpha_n\rho_{V,n}}{n}\,.
\end{align*}
Proposition~\ref{poincare} implies the result.
\end{proof}
\end{lemma}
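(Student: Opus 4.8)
The plan is to mirror, on the bipartite model, the argument already used for the Wigner lemma. First I would freeze the signal $(\bU,\bV)$ and regard $g\equiv\frac1n\ln{\cal Z}_{n,t,{\bm\epsilon}}(\bU,\bV,\bZ,\tilde\bZ_U,\tilde\bZ_V)$ as a function of the $nm+n+m$ independent standard gaussian coordinates $\bZ,\tilde\bZ_U,\tilde\bZ_V$ alone. The left-hand side of the lemma is exactly $\EE_{\bU,\bV}\big[\mathrm{Var}_{\bZ,\tilde\bZ_U,\tilde\bZ_V}(g)\big]$, and since $\mathrm{Var}(g)=\mathrm{Var}(-g)$ the Gaussian Poincar\'e inequality (Proposition~\ref{poincare}) applies to the free energy verbatim, giving $\mathrm{Var}_{\bZ,\tilde\bZ_U,\tilde\bZ_V}(g)\le\EE_{\bZ,\tilde\bZ_U,\tilde\bZ_V}\|\nabla g\|^2$ with $\nabla=(\nabla_\bZ,\nabla_{\tilde\bZ_U},\nabla_{\tilde\bZ_V})$.

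The next step is to evaluate the gradient. Because the gaussian variables enter the hamiltonian \eqref{Ht_wishart} linearly, differentiating $\ln{\cal Z}$ brings down a Gibbs average, $\partial_a g=-\frac1n\langle\partial_a{\cal H}_{n,t,{\bm\epsilon}}\rangle_t$, and from \eqref{Ht_wishart} one reads $\partial_{Z_{ij}}{\cal H}=-\sqrt{(1-t)\lambda_n/n}\,u_iv_j$, $\partial_{(\tilde\bZ_U)_i}{\cal H}=-\sqrt{R_{V,n}}\,u_i$, $\partial_{(\tilde\bZ_V)_j}{\cal H}=-\sqrt{R_{U,n}}\,v_j$ (the decoupled side channel of $\bU$ carries SNR $R_{V,n}$, and conversely, by the bipartite structure). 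Squaring, summing over coordinates, and applying Jensen coordinatewise ($\langle\cdot\rangle_t^2\le\langle(\cdot)^2\rangle_t$) turns these into $\|\nabla_\bZ g\|^2\le\frac{(1-t)\lambda_n}{n^3}\langle\|\bu\|^2\|\bv\|^2\rangle_t$, $\|\nabla_{\tilde\bZ_U}g\|^2\le\frac{R_{V,n}}{n^2}\langle\|\bu\|^2\rangle_t$ and $\|\nabla_{\tilde\bZ_V}g\|^2\le\frac{R_{U,n}}{n^2}\langle\|\bv\|^2\rangle_t$.

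I would then average over $(\bU,\bV)$ and use the Nishimori identity (appendix~\ref{app:nishimori}): being functions of a single replica, $\EE\langle\|\bu\|^2\|\bv\|^2\rangle_t=\EE[\|\bU\|^2]\,\EE[\|\bV\|^2]=n\rho_{U,n}\cdot m\rho_{V,n}$ (using $\bU\perp\bV$ and the unit-second-moment normalisation $\EE[U_1^2]=\rho_{U,n}$, $\EE[V_1^2]=\rho_{V,n}$), $\EE\langle\|\bu\|^2\rangle_t=n\rho_{U,n}$ and $\EE\langle\|\bv\|^2\rangle_t=m\rho_{V,n}$. Bounding the SNRs crudely by $(1-t)\lambda_n\le\lambda_n$, $R_{V,n}(t,{\bm\epsilon})=\epsilon_V+\lambda_n\alpha_n\int_0^t q_{V,n}\,ds\le 2s_n+\lambda_n\alpha_n\rho_{V,n}$ and $R_{U,n}(t,{\bm\epsilon})\le 2s_n+\lambda_n\rho_{U,n}$ (recall $\epsilon_U,\epsilon_V\in[s_n,2s_n]$, $q_{U,n}\le\rho_{U,n}$, $q_{V,n}\le\rho_{V,n}$, $m=\alpha_n n$), adding the three contributions gives $\frac1n\big(\lambda_n\alpha_n\rho_{U,n}\rho_{V,n}+(2s_n+\lambda_n\alpha_n\rho_{V,n})\rho_{U,n}+(2s_n+\lambda_n\rho_{U,n})\alpha_n\rho_{V,n}\big)$, and folding the two $2s_n$-terms with $\rho_{U,n},\rho_{V,n}\le\bar\rho_n$ collapses this to $\frac1n\big(3\lambda_n\alpha_n\rho_{U,n}\rho_{V,n}+2s_n(1+\alpha_n)\bar\rho_n\big)$, which is the claim.

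I do not expect a real obstacle: the computation is the bipartite mirror of the Wigner lemma, so the only delicate point is bookkeeping — threading the cross SNRs $R_{V,n}$, $R_{U,n}$ and the dimensions $n$ versus $m=\alpha_n n$ through so the powers of $\alpha_n$ land correctly, and noting the Poincar\'e bound is blind to the sign of $g$. Incidentally, this lemma furnishes exactly the gaussian half of the orthogonal decomposition $g-\EE g=(g-\EE_{\bZ,\tilde\bZ_U,\tilde\bZ_V}g)+(\EE_{\bZ,\tilde\bZ_U,\tilde\bZ_V}g-\EE g)$ underlying Proposition~\ref{prop:fconc_wishart}; the complementary signal half would be controlled by an Efron--Stein / signal-interpolation estimate exactly as in the Wigner case, after which inserting the scalings \eqref{sparsePCAscaling} yields $C_{F,n}=O(|\ln\rho_{V,n}|)$.
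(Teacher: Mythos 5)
Your proposal is correct and follows essentially the same route as the paper: condition on $(\bU,\bV)$, apply the Gaussian Poincar\'e inequality to the free energy as a function of $\bZ,\tilde\bZ_U,\tilde\bZ_V$, bound each block of the gradient by Jensen and the Nishimori identity, and then use $R_{U,n}\le 2s_n+\lambda_n\rho_{U,n}$, $R_{V,n}\le 2s_n+\lambda_n\alpha_n\rho_{V,n}$ and $m=\alpha_n n$ to collect the three contributions into the stated bound. Your bookkeeping (including the independence $\bU\perp\bV$ and the fold into $2s_n(1+\alpha_n)\bar\rho_n$) matches the paper's computation exactly.
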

\begin{lemma}[Concentration w.r.t. the spikes] We have
\begin{align*}
&\mathbb{E}\Big[\Big(-\frac{1}{n}\mathbb{E}_{\bZ,\tilde \bZ_U,\tilde \bZ_V}\ln {\cal Z}_{n,t,\bm\epsilon}(\bU,\bV,  \bZ, \tilde \bZ_U,\tilde \bZ_V)-f_{n}(t,\bm\epsilon)\Big)^2\Big]\nn
&\hspace{2cm}\le \frac{2\rho_{U,n}S^2}{n}\Big((2s_n+\lambda_n\alpha_n\rho_{V,n})^2+\alpha_n^2S^4\Big)+\frac{2\alpha_n\rho_{V,n}S^2}{n}\Big((2s_n+\lambda_n\rho_{U,n})^2+S^4\Big)\,.	
\end{align*}

\end{lemma}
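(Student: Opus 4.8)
The plan is to run, for the Wishart Hamiltonian \eqref{Ht_wishart}, the same argument that proved the ``Concentration w.r.t. the spike'' lemma in the Wigner case. Introduce
\begin{align*}
g(\bU,\bV)\equiv-\frac1n\,\EE_{\bZ,\tilde\bZ_U,\tilde\bZ_V}\ln{\cal Z}_{n,t,\bm\epsilon}(\bU,\bV,\bZ,\tilde\bZ_U,\tilde\bZ_V)\,,
\end{align*}
seen as a function of the $n+m$ independent coordinates $U_1,\dots,U_n$ (law $P_{U,n}$) and $V_1,\dots,V_m$ (law $P_{V,n}$). By the Efron--Stein inequality (Proposition~\ref{efron_stein}, in the general form valid for independent, not-necessarily-identically-distributed coordinates),
\begin{align*}
\EE\big[(g-\EE g)^2\big]\le\frac12\sum_{i=1}^n\EE\big[(g(\bU,\bV)-g(\bU^{(i)},\bV))^2\big]+\frac12\sum_{j=1}^{m}\EE\big[(g(\bU,\bV)-g(\bU,\bV^{(j)}))^2\big]\,,
\end{align*}
where $\bU^{(i)}$ (resp. $\bV^{(j)}$) has its $i$-th (resp. $j$-th) entry resampled independently.

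First I would bound a single $\bU$-summand by the interpolation trick: write $g(\bU,\bV)-g(\bU^{(i)},\bV)=\int_0^1\!ds\,\frac{d}{ds}g(\bU^{(s)},\bV)$ with $\bU^{(s)}$ the vector whose $i$-th entry is $sU_i+(1-s)U_i'$, and use $\frac{d}{ds}g=\frac1n\EE_{\rm noise}\langle\frac{d{\cal H}}{ds}\rangle_t$. From \eqref{Ht_wishart}, ${\cal H}$ depends on $U_i$ only \emph{affinely}, through the terms $-(1-t)\frac{\lambda_n}{n}(\bu\cdot\bU)(\bv\cdot\bV)$ and $-R_{V,n}(t,\bm\epsilon)\,\bu\cdot\bU$, so $\frac{d{\cal H}}{ds}=(U_i-U_i')\,u_i\big(R_{V,n}(t,\bm\epsilon)+(1-t)\frac{\lambda_n}{n}\,\bv\cdot\bV\big)$ evaluated at $U_i\mapsto sU_i+(1-s)U_i'$. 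Squaring, using $(a+b)^2\le 2a^2+2b^2$, Jensen (to pull the square through $\langle\cdot\rangle_t$ and $\EE_{\rm noise}$), $|u_i|\le S$, the deterministic bound $0\le R_{V,n}(t,\bm\epsilon)\le 2s_n+\lambda_n\alpha_n\rho_{V,n}$, and a Cauchy--Schwarz step $\langle u_i(\bv\cdot\bV)\rangle_t^2\le S^2\langle(\bv\cdot\bV)^2\rangle_t\le S^2\langle\|\bv\|^2\rangle_t^2\le (mS^3)^2$ for the coupling piece, one obtains a bound of the stated shape for $\EE[(g-g^{(i)})^2]$; multiplying by $\EE[(U_i-U_i')^2]=2\,{\rm Var}(U_i)\le 2\rho_{U,n}$ and summing over the $n$ coordinates yields the first term of the claim. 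The $\bV$-summands are handled identically: now ${\cal H}$ depends on $V_j$ through $-(1-t)\frac{\lambda_n}{n}(\bu\cdot\bU)v_j$ and $-R_{U,n}(t,\bm\epsilon)v_j$, with $0\le R_{U,n}(t,\bm\epsilon)\le 2s_n+\lambda_n\rho_{U,n}$ and $\EE[(V_j-V_j')^2]\le 2\rho_{V,n}$; since there are $m=\alpha_n n$ of them, this produces the second term.

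The computation is routine and runs essentially word for word as in the Wigner case; the only bookkeeping requiring some care is the coupling term $-(1-t)\frac{\lambda_n}{n}(\bu\cdot\bU)(\bv\cdot\bV)$, whose $U_i$-derivative involves the full overlap-type quantity $\bv\cdot\bV$ rather than a single component, so one must control $\EE_{\rm noise}\langle(\bv\cdot\bV)^2\rangle_t$ (crudely by $\langle\|\bv\|^2\rangle_t^2\le m^2S^4$, or, via the Nishimori identity of Lemma~\ref{NishId} and the second moments of $P_{V,n}$, by $O(m^2\rho_{V,n}^2)$), and symmetrically for $\bu\cdot\bU$; note that no regularity of the interpolation flow is needed here since $\bm\epsilon$ is held fixed. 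Combined with the preceding ``Concentration w.r.t. the Gaussian variables'' lemma this gives Proposition~\ref{prop:fconc_wishart}, and the reduction $C_{F,n}\le C|\ln\rho_{V,n}|$ then follows by inserting $\lambda_n=\Theta(\sqrt{|\ln\rho_{V,n}|/\rho_{V,n}})$, $\rho_{U,n}\to\rho_U$, $\alpha_n\to\alpha$, $s_n\to0$ from \eqref{sparsePCAscaling}.
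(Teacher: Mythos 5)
Your proposal is correct and follows essentially the same route as the paper's own proof: Efron--Stein over the $n+m$ independent coordinates, the interpolation trick between $\bU$ and $\bU^{(i)}$ (resp.\ $\bV$ and $\bV^{(j)}$), the deterministic bounds $R_{V,n}\le 2s_n+\lambda_n\alpha_n\rho_{V,n}$ and $R_{U,n}\le 2s_n+\lambda_n\rho_{U,n}$, crude $S$-bounds on the coupling brackets, and ${\rm Var}(U_i)\le\rho_{U,n}$, ${\rm Var}(V_j)\le\rho_{V,n}$ together with the count $m=\alpha_n n$, which yields exactly the two stated terms. The one point of divergence is the bookkeeping you yourself flag: differentiating the coupling term $-(1-t)\frac{\lambda_n}{n}(\bu\cdot\bU)(\bv\cdot\bV)$ of \eqref{Ht_wishart} correctly keeps a factor $\lambda_n$, so your crude bound $\langle u_i(\bv\cdot\bV)\rangle_t^2\le(mS^3)^2$ gives $\lambda_n^2\alpha_n^2S^6$ rather than the $\alpha_n^2S^6$ in the statement (the paper's displayed computation silently drops this $\lambda_n$), and to recover the stated constants, or the conclusion $C_{F,n}\le C|\ln\rho_{V,n}|$ under \eqref{sparsePCAscaling}, one should use the finer Nishimori-type control of $\EE\langle(\bv\cdot\bV)^2\rangle_t$ that you mention parenthetically.
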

\begin{proof}
Let $g(\bU)$ be the free energy \eqref{F_nonav_wishart} seen as a function of $\bU$ only. Define $\bU^{(i)}$ as a vector with same entries as $\bU$ except the $i$-th one that is replaced by $U_i'$ drawn independently from $P_{U,n}$. Let ${\cal H}(t,s\bU+(1-s)\bU^{(i)})$ be the interpolating Hamiltonian \eqref{Ht_wishart} with $\bU$ replaced by $s\bU+(1-s)\bU^{(i)}$. We bound
\begin{align*}
\EE\big[(g(\bU)-g(\bU^{(i)}))^2\big] &= \frac1{n^2}\EE\Big[\Big(\int_0^1 ds \Big\langle\frac{d{\cal H}}{ds}(t,s\bU+(1-s)\bU^{(i)})\Big\rangle_t\Big)^2\Big]\nn
&= \frac{1}{n^2}\EE\Big[\Big( (U_i-U_i')\Big\langle R_{V,n} u_i +  \frac{1-t}{n}u_i(\bv \cdot \bV)\Big\rangle_t\Big)^2\Big]\nn
&\le \frac{2}{n^2}\EE\Big[(U_i-U_i')^2\Big(\langle u_i\rangle_t^2(2s_n+\lambda_n\alpha_n\rho_{V,n})^2+\frac1{n^2}\langle u_i(\bv\cdot \bV)\rangle_t^2\Big)\Big]\nn
&\le \frac{2}{n^2}\EE\big[(U_i-U_i')^2\big]\Big(S^2(2s_n+\lambda_n\alpha_n\rho_{V,n})^2+\alpha_n^2S^6\Big)\nn
&\le\frac{4\rho_{U,n}S^2}{n^2}\Big((2s_n+\lambda_n\alpha_n\rho_{V,n})^2+\alpha_n^2S^4\Big)\,.
\end{align*}
Similarly, and with an anlogous notation $\bV^{(i)}$, we obtain
\begin{align*}
\EE\big[(g(\bV)-g(\bV^{(i)}))^2\big] \le\frac{4\rho_{V,n}S^2}{n^2}\Big((2s_n+\lambda_n\rho_{U,n})^2+S^4\Big)\,.
\end{align*}
Proposition~\ref{efron_stein} then implies the claim.
\end{proof}
\section{Concentration for the overlaps}\label{appendix-overlap}
\subsection{Overlap concentration for the Wigner case: proof of inequality \eqref{over-concen}}\label{Wigner-overlap}
The derivations below will apply for any $t\in[0,1]$ so we drop all un-necessary notations and indices. Only the dependence of the free energies in $R(\epsilon)\equiv R_n(t,\epsilon)$ matters, so we denote $F(R(\epsilon))\equiv F_{n,t,\epsilon}(\bW(t),\tilde{\bW}(t,\epsilon))$ and $f(R(\epsilon))\equiv f_{n}(t,\epsilon)$.

Let $\mathcal{L}$ be the $R(\epsilon)$-derivative of the Hamiltonian \eqref{Ht} divided by $n$:
\begin{align}
\mathcal{L}(\bx,\bX,\tilde \bZ) =\mathcal{L} \equiv \frac1n \frac{d{\cal H}_{n,t,\epsilon}}{dR(\epsilon)}= \frac{1}{n}\Big(\frac{\|\bx\|^2}{2} - \bx\cdot \bX - \frac{\bx\cdot \tilde \bZ}{2\sqrt{R(\epsilon)}} \Big)\,.\label{def_L}
\end{align}
The overlap fluctuations are upper bounded by those of $\mathcal{L}$, which are easier to control, as
\begin{align}
\mathbb{E}\big\langle (Q - \mathbb{E}\langle Q \rangle_{t})^2\big\rangle_{t} \le 4\,\mathbb{E}\big\langle (\mathcal{L} - \mathbb{E}\langle \mathcal{L}\rangle_{t})^2\big\rangle_{t}\,.\label{remarkable}
\end{align}
The bracket is again the expectation w.r.t. the posterior of the interpolating model \eqref{t_post}.
A detailed derivation of this inequality can be found in appendix \ref{proof:remarkable_id} and involves only elementary algebra using the Nishimori identity
and integrations by parts w.r.t.\ the gaussian noise $\tilde \bZ$.

We have the following identities: for any given realisation of the quenched disorder
\begin{align}
 \frac{dF}{dR(\epsilon)}  &= \langle \mathcal{L} \rangle_{t} \,,\label{first-derivative}\\
 \frac{1}{n}\frac{d^2F}{dR(\epsilon)^2}  &= -\big\langle (\mathcal{L}  - \langle \mathcal{L} \rangle_t)^2\big\rangle_{t}+
 \frac{1}{4 n^2R(\epsilon)^{3/2}} \langle \bx\rangle_{t} \cdot \tilde \bZ\,.\label{second-derivative}
\end{align}
The gaussian integration by part formula \eqref{GaussIPP} with hamiltonian \eqref{Ht} yields
\begin{align}
\frac{\mathbb{E} \big\langle \tilde \bZ\cdot  \bx \big\rangle_t}{\sqrt{R(\epsilon)}} 
	 =   \mathbb{E}\big\langle \|\bx\|^2 \big\rangle_t - \EE\|\langle \bx \rangle_t\|^2 \overset{\rm N}{=}\mathbb{E}\big\langle \|\bx\|^2 \big\rangle_t - \EE \big\langle\bX\cdot \bx \big\rangle_t= \mathbb{E}\big\langle \|\bx\|^2 \big\rangle_t - n\,\EE \langle Q\rangle_t\,.\label{NishiTildeZ}
\end{align}
Therefore averaging \eqref{first-derivative} and \eqref{second-derivative} we find 
\begin{align}
 \frac{df}{d R(\epsilon)} &= \mathbb{E}\langle \mathcal{L} \rangle_{t} 
 \overset{\rm N}{=}-\frac{1}{2}\mathbb{E}\langle Q\rangle_{t}\,,\label{first-derivative-average}\\
 \frac{1}{n}\frac{d^2f}{dR(\epsilon)^2} &= -\mathbb{E}\big\langle (\mathcal{L} - \langle \mathcal{L} \rangle_{t})^2\big\rangle_{t}
 +\frac{1}{4n^2R(\epsilon)} \mathbb{E}\big\langle \|\bx- \langle \bx \rangle_t\|^2\big\rangle_t\,.\label{second-derivative-average}
\end{align} 
We always work under the assumption that the map $\epsilon\in [s_n, 2s_n] \mapsto R(\epsilon)\in [R(s_n), R(2s_n)]$ is regular, and do not repeat this assumption in the statements below. The concentration inequality \eqref{over-concen} is  a direct consequence of the following result (combined with Fubini's theorem):
\begin{proposition}[Total fluctuations of ${\cal L}$]\label{L-concentration} Let the sequences $\lambda_n$ and $\rho_n$ verify \eqref{app-scalingRegime}. Then $$\int_{s_n}^{2s_n} d\epsilon\,\mathbb{E}\big\langle (\mathcal{L} - \mathbb{E}\langle \mathcal{L}\rangle_{t})^2\big\rangle_{t} \le C\Big(\frac{\lambda_n\rho_n}{ns_n}\big(1+\lambda_n\rho_n^2\big)\Big)^{1/3}$$
for a constant $C>0$ that is independent of $n$, as long as the r.h.s. is $\omega(1/n)$.
\end{proposition}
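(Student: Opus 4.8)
The plan is to split, for each fixed $t$,
\begin{align*}
\mathbb{E}\big\langle(\mathcal{L}-\mathbb{E}\langle\mathcal{L}\rangle_t)^2\big\rangle_t = \underbrace{\mathbb{E}\big\langle(\mathcal{L}-\langle\mathcal{L}\rangle_t)^2\big\rangle_t}_{\text{thermal}} + \underbrace{\mathbb{E}\big[(\langle\mathcal{L}\rangle_t-\mathbb{E}\langle\mathcal{L}\rangle_t)^2\big]}_{\text{quenched}}
\end{align*}
and to integrate each piece over $\epsilon\in[s_n,2s_n]$. I will use repeatedly that, by regularity of $\epsilon\mapsto R(\epsilon)\equiv R_n(t,\epsilon)$, one has $dR/d\epsilon\ge1$, hence $\int_{s_n}^{2s_n}\phi(R(\epsilon))\,d\epsilon\le\int_{R(s_n)}^{R(2s_n)}\phi(R)\,dR$ for non-negative $\phi$, with $R(2s_n)-R(s_n)\le s_n+\lambda_n\rho_n$ since $q_n\in[0,\rho_n]$. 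For the thermal term, the averaged identity \eqref{second-derivative-average} writes it as $-\frac1n\frac{d^2f}{dR^2}+\frac{1}{4n^2R}\mathbb{E}\langle\|\bx-\langle\bx\rangle_t\|^2\rangle_t$; since $\frac{df}{dR}=-\frac12\mathbb{E}\langle Q\rangle_t\in[-\rho_n/2,0]$ by \eqref{first-derivative-average} and $R\mapsto\mathbb{E}\langle Q\rangle_t$ is non-decreasing (cf.\ \eqref{jacPos}), $f$ is concave in $R$, so the first summand is $\ge0$ and integrates, by the change of variables, to at most $\frac1n\big(\frac{df}{dR}(R(s_n))-\frac{df}{dR}(R(2s_n))\big)\le\frac{\rho_n}{2n}$; the second summand integrates to $O(\rho_n/n)$ using $R\ge s_n$ and the Nishimori bound $\mathbb{E}\langle\|\bx-\langle\bx\rangle_t\|^2\rangle_t\le\mathbb{E}\langle\|\bx\|^2\rangle_t=n\rho_n$. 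Thus the thermal part is $O(\rho_n/n)$, hence below the asserted bound exactly on the range where the latter is $\omega(1/n)$.

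The quenched part is where the work is. By \eqref{first-derivative} and \eqref{first-derivative-average}, $\langle\mathcal{L}\rangle_t=F'(R)$ and $\mathbb{E}\langle\mathcal{L}\rangle_t=f'(R)$, so after the change of variables it suffices to bound $\int_{R(s_n)}^{R(2s_n)}\mathbb{E}[(F'(R)-f'(R))^2]\,dR$. By \eqref{second-derivative}, $F$ is concave in $R$ \emph{up to} the sign-indefinite curvature term $\frac{1}{4nR^{3/2}}\langle\bx\rangle_t\cdot\tilde\bZ$; the associated modulus of non-convexity is controlled using $R\ge s_n$ together with a Gaussian integration by parts and the Nishimori identity \eqref{NishiTildeZ} to estimate $\mathbb{E}[(\langle\bx\rangle_t\cdot\tilde\bZ)^2]$ (this produces the factor $1+\lambda_n\rho_n^2$). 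A standard three-point/finite-difference estimate for a function that is concave up to a controlled non-convexity then gives, for any step $\delta>0$,
\begin{align*}
|F'(R)-f'(R)|\le\big(f'(R-\delta)-f'(R+\delta)\big)+\frac{C}{\delta}\max_{u\in\{R-\delta,\,R,\,R+\delta\}}|F(u)-f(u)|+\frac{C\delta}{ns_n^{3/2}}\max_{|R'-R|\le\delta}|\langle\bx\rangle_{R'}\!\cdot\!\tilde\bZ|\,.
\end{align*}
Squaring, taking expectations, integrating over $R$, and inserting (i) the free-energy concentration $\mathbb{E}[(F-f)^2]\le C\lambda_n^2\rho_n^3/n$ from Proposition~\ref{prop:fconc}, (ii) the bound $\int\big(f'(R-\delta)-f'(R+\delta)\big)\,dR\lesssim\delta\rho_n$ together with $f'(R-\delta)-f'(R+\delta)\le\rho_n/2$ (recall $f'=-\frac12\mathbb{E}\langle Q\rangle_t$ is monotone with range $\rho_n/2$), and (iii) the estimate of $\mathbb{E}[(\langle\bx\rangle_{R'}\cdot\tilde\bZ)^2]$, one lands on a bound of the schematic form $A\delta+B\delta^{-2}+D\delta^{2}$ with $A\lesssim\rho_n^2$, $B\lesssim\big(\lambda_n^2\rho_n^3/n\big)(s_n+\lambda_n\rho_n)$, and a non-convexity coefficient $D$ that, after optimisation, folds into the $1+\lambda_n\rho_n^2$ factor. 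The $\delta^1$-versus-$\delta^{-2}$ balance is the binding one and yields the cube root; simplifying $s_n+\lambda_n\rho_n$ then gives $C\big(\frac{\lambda_n\rho_n}{ns_n}(1+\lambda_n\rho_n^2)\big)^{1/3}$, valid precisely when this dominates the $O(1/n)$ thermal and discretisation remainders, i.e.\ on the stated $\omega(1/n)$ range. Plugging this into \eqref{remarkable} and integrating over $t$ by Fubini yields \eqref{over-concen}.

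The step I expect to be the genuine obstacle is taming the non-convexity of $F$ in $R$: the curvature term $\frac{1}{4nR^{3/2}}\langle\bx\rangle_t\cdot\tilde\bZ$ has no definite sign and is singular as $R\to0$, so one must carefully combine the cut-off $R\ge s_n$ (which is where the $s_n^{-1}$ in the bound comes from) with a sharp Gaussian-integration-by-parts and Nishimori estimate of $\mathbb{E}[(\langle\bx\rangle_t\cdot\tilde\bZ)^2]$ (which is where the $1+\lambda_n\rho_n^2$ comes from). Everything else is bookkeeping of the $\lambda_n,\rho_n,s_n$-dependence in the finite-difference inequality and the choice of the step $\delta$.
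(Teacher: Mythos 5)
Your scaffolding (thermal/quenched split, the thermal bound via \eqref{second-derivative-average} and concavity of $f$, free-energy concentration from Proposition~\ref{prop:fconc}, a three-point comparison lemma with a step $\delta$ optimised at the end, and the change of variables permitted by regularity) is exactly the paper's, and your thermal estimate $O(\rho_n/n)$ is correct. The genuine gap is precisely the step you yourself flag as ``the genuine obstacle'': you never actually tame the sign-indefinite curvature term $\frac{1}{4n^2R^{3/2}}\langle\bx\rangle_t\cdot\tilde\bZ$ in \eqref{second-derivative}. Your plan---a defect-tolerant three-point inequality plus a ``sharp'' Gaussian-IBP/Nishimori estimate of $\mathbb{E}[(\langle\bx\rangle_t\cdot\tilde\bZ)^2]$, with the resulting $D\delta^2$ term ``folding into'' the factor $1+\lambda_n\rho_n^2$---is left unexecuted, and the attributions are off: in the paper the factor $1+\lambda_n\rho_n^2$ has nothing to do with second moments of $\langle\bx\rangle_t\cdot\tilde\bZ$; it arises in \eqref{intermediate} from adding $C\lambda_n^2\rho_n^3$ (free-energy concentration) to $S(2s_n+\lambda_n\rho_n+\delta)$ (fluctuation of an auxiliary term, see below), since $\lambda_n^2\rho_n^3+\lambda_n\rho_n=\lambda_n\rho_n(1+\lambda_n\rho_n^2)$. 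Likewise your claimed binding balance $A\delta$ vs $B\delta^{-2}$ with $A\lesssim\rho_n^2$ and $B\lesssim(\lambda_n^2\rho_n^3/n)(s_n+\lambda_n\rho_n)$ does not land on the stated bound (and your $B$ is needlessly inflated by integrating the $|F-f|^2$ term over the full $R$-range of length $\approx\lambda_n\rho_n$ instead of the $\epsilon$-range of length $s_n$), and you never verify that the un-estimated $D\delta^2$ term is subdominant at your chosen $\delta$; with the only uniform control available, $|\langle\bx\rangle_{R'}\cdot\tilde\bZ|\le S\sum_i|\tilde Z_i|$, one gets $D\sim s_n^{-2}$ after the $\epsilon$-integration, and this is not dominated by $A\delta\sim\rho_n^2\delta$ throughout the regime \eqref{app-scalingRegime}, so the optimisation must be redone and the final form changes. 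It may well still be dominated by the stated right-hand side, but that verification is exactly the missing work.

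The idea you are missing is the paper's concavification device \eqref{new-free}: set $\tilde F(R)=F(R)+S\frac{\sqrt R}{n}\sum_{i=1}^n|\tilde Z_i|$ and $\tilde f=\mathbb{E}\tilde F$. Since $|\langle x_i\rangle_t|\le S$ gives $\langle\bx\rangle_t\cdot\tilde\bZ\le S\sum_i|\tilde Z_i|$, the second derivative of the added term, $-\frac{S}{4nR^{3/2}}\sum_i|\tilde Z_i|$, dominates the offending term in \eqref{second-derivative} pointwise, so $\tilde F$ is \emph{exactly} concave and Lemma~\ref{lemmaConvexity} applies verbatim; no moment estimate of $\langle\bx\rangle_t\cdot\tilde\bZ$ is needed at all. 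The price is the extra terms $S|A_n|\sqrt u$ and $S|A_n|/(2\sqrt\epsilon)$ in \eqref{usable-inequ} with $A_n=\frac1n\sum_i|\tilde Z_i|-\mathbb{E}|\tilde Z_1|$, $\mathbb{E}[A_n^2]\le1/n$, and the blow-up $|\tilde f'|\lesssim\rho_n+S/\sqrt{s_n-\delta}$, which is what produces the $\delta/s_n$ term; balancing $\frac{s_n\lambda_n\rho_n(1+\lambda_n\rho_n^2)}{n\delta^2}$ against $\frac{\delta}{s_n}$ yields $\delta_n^3=s_n^2\lambda_n\rho_n(1+\lambda_n\rho_n^2)/n$ and the stated cube-root bound. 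If you insist on your defect-tolerant route, you must (i) prove the modified three-point inequality with the $\sup_{|R'-R|\le\delta}$ defect, (ii) bound $\mathbb{E}\big[\sup_{|R'-R|\le\delta}|\langle\bx\rangle_{R'}\cdot\tilde\bZ|^2\big]$ (the crude $S^2\mathbb{E}[(\sum_i|\tilde Z_i|)^2]=O(n^2)$ suffices), and (iii) carry out the three-way optimisation in $\delta$ and check the outcome against the stated bound for all admissible $\lambda_n,\rho_n,s_n$; none of this is in your write-up.
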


The proof of this proposition is broken in two parts, using the decomposition
\begin{align*}
\mathbb{E}\big\langle (\mathcal{L} - \mathbb{E}\langle \mathcal{L}\rangle_{t})^2\big\rangle_{t}
& = 
\mathbb{E}\big\langle (\mathcal{L} - \langle \mathcal{L}\rangle_t)^2\big\rangle_{t}
+ 
\mathbb{E}\big[(\langle \mathcal{L}\rangle_t - \mathbb{E}\langle \mathcal{L}\rangle_t)^2\big]\,.
\end{align*}
Thus it suffices to prove the two following lemmas. The first lemma expresses concentration w.r.t.\ the posterior distribution (or ``thermal fluctuations'') and is a direct consequence of concavity properties of the average free energy and the Nishimori identity.
\begin{lemma}[Thermal fluctuations of $\cal L$]\label{thermal-fluctuations}
	We have $$\int_{s_n}^{2s_n} d\epsilon\, 
  \mathbb{E} \big\langle (\mathcal{L} - \langle \mathcal{L}\rangle_t)^2 \big\rangle_t  \le \frac{\rho_n}{n}\Big(1+\frac{\ln2}{4}\Big)\,.$$
\end{lemma}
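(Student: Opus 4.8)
The plan is to exploit the identity \eqref{second-derivative-average}, which says that $\mathbb{E}\langle(\mathcal{L}-\langle\mathcal{L}\rangle_t)^2\rangle_t$ is, up to the positive correction term $\frac{1}{4n^2R(\epsilon)}\mathbb{E}\langle\|\bx-\langle\bx\rangle_t\|^2\rangle_t$, exactly $-\frac1n\frac{d^2f}{dR(\epsilon)^2}$. So integrating over $\epsilon$ will turn into integrating a second derivative, which telescopes into a difference of first derivatives. First I would write, using \eqref{second-derivative-average} and rearranging,
\begin{align*}
\mathbb{E}\big\langle(\mathcal{L}-\langle\mathcal{L}\rangle_t)^2\big\rangle_t = -\frac1n\frac{d^2f}{dR(\epsilon)^2} + \frac{1}{4n^2R(\epsilon)}\mathbb{E}\big\langle\|\bx-\langle\bx\rangle_t\|^2\big\rangle_t\,,
\end{align*}
and treat the two contributions separately after integrating $\epsilon$ over $[s_n,2s_n]$.

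For the second (correction) term I would use the crude bound $\mathbb{E}\langle\|\bx-\langle\bx\rangle_t\|^2\rangle_t \le \mathbb{E}\langle\|\bx\|^2\rangle_t \overset{\rm N}{=} \mathbb{E}\|\bX\|^2 = n\rho_n$ via a Nishimori identity, so the term is $\le \frac{\rho_n}{4nR(\epsilon)} \le \frac{\rho_n}{4ns_n}$ since $R(\epsilon)\ge\epsilon\ge s_n$; integrating over an interval of length $s_n$ gives $\le\frac{\rho_n}{4n}$. For the first term, the chain rule gives $\frac{d^2f}{dR(\epsilon)^2} = \frac{1}{R'(\epsilon)}\frac{d}{d\epsilon}\big(\frac{1}{R'(\epsilon)}\frac{df}{d\epsilon}\big)$; however the clean way here, exploiting regularity of the map $\epsilon\mapsto R(\epsilon)$ (jacobian $R'(\epsilon)\ge1$), is to change variables from $\epsilon$ to $R=R(\epsilon)$, so that $\int_{s_n}^{2s_n}d\epsilon\,(-\frac1n\frac{d^2f}{dR^2}) \le \int_{R(s_n)}^{R(2s_n)}dR\,(-\frac1n\frac{d^2f}{dR^2}) = \frac1n\big(\frac{df}{dR}\big|_{R(s_n)} - \frac{df}{dR}\big|_{R(2s_n)}\big)$, where I used $d\epsilon \le dR$ (jacobian $\ge1$) together with the fact that $-\frac{d^2f}{dR^2}\ge0$ up to the already-handled correction — more carefully, I would first split off the correction term so that what remains, $-\frac1n\frac{d^2f}{dR^2}$, is genuinely nonnegative (it equals $\mathbb{E}\langle(\mathcal{L}-\langle\mathcal{L}\rangle_t)^2\rangle_t$ minus a nonnegative quantity... actually it is $\le$ that fluctuation, so nonnegativity needs the concavity of $f$ in $R$, which holds because $f$ differs from the interpolating mutual information by a term linear in $R$ and the mutual information is concave in the SNR). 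Then by \eqref{first-derivative-average}, $\frac{df}{dR} = -\frac12\mathbb{E}\langle Q\rangle_t \in[-\frac{\rho_n}{2},0]$, so the telescoped difference is at most $\frac{\rho_n}{2n}$.

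Collecting the pieces: the first term contributes $\le\frac{\rho_n}{2n}$ and the correction contributes $\le\frac{\rho_n}{4n}$; but the stated constant $1+\frac{\ln2}{4}$ suggests the correction is bounded slightly differently, namely by keeping $\int_{s_n}^{2s_n}\frac{d\epsilon}{4R(\epsilon)}$ and bounding $R(\epsilon)\ge\epsilon$ to get $\frac14\ln\frac{R(2s_n)}{R(s_n)}\le\frac14\ln\frac{2s_n+\cdots}{s_n}$; under the scaling the extra signal contribution to $R$ is subleading so this is essentially $\frac{\ln 2}{4}\cdot\frac{\rho_n}{n}$ after also accounting for the $\rho_n/n$ prefactor from the Nishimori bound. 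I expect the only real subtlety — the ``main obstacle'' — to be bookkeeping: making sure the concavity of $f$ in $R(\epsilon)$ is correctly invoked (it is inherited from concavity of the interpolating mutual information in the SNR, Lemma on concavity in Appendix on gaussian channels) so that $-\frac{d^2f}{dR^2}\ge0$ and the change-of-variables inequality with jacobian $\ge1$ goes in the right direction, and tracking the harmless $O(1)$ constants so the final bound reads exactly $\frac{\rho_n}{n}(1+\frac{\ln2}{4})$. All of this is elementary once the identity \eqref{second-derivative-average} and the regularity of the flow are in hand.
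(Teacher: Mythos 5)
Your proposal is correct and follows essentially the same route as the paper: start from \eqref{second-derivative-average}, bound the correction term via the Nishimori identity and $R(\epsilon)\ge\epsilon$, and for the $-\frac1n\frac{d^2f}{dR^2}$ term use concavity of $f$ in $R$ (inherited from concavity of the mutual information in the SNR) together with regularity of $\epsilon\mapsto R(\epsilon)$ to change variables and telescope, then bound $|df/dR|=\frac12\mathbb{E}\langle Q\rangle_t\le\rho_n/2$. The only slip is the closing aside suggesting the $\ln 2/4$ comes from $\frac14\ln\frac{R(2s_n)}{R(s_n)}$ with the signal contribution to $R$ "subleading" (it is not, since $\lambda_n\rho_n\gg s_n$); the paper instead gets it from $\frac{\rho_n}{4n}\int_{s_n}^{2s_n}\frac{d\epsilon}{\epsilon}=\frac{\rho_n\ln 2}{4n}$, and in any case your cruder bound $\frac{\rho_n}{4n}$ already yields the stated inequality.
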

\begin{proof}

We emphasize again that the interpolating free energy \eqref{fnt} is here viewed as a function of $R(\epsilon)$. In the argument that follows we consider derivatives of this function w.r.t. $R(\epsilon)$.
By \eqref{second-derivative-average}
\begin{align}
\mathbb{E}\big\langle (\mathcal{L} - \langle \mathcal{L} \rangle_t)^2\big\rangle_t
& = 
-\frac{1}{n}\frac{d^2f}{dR(\epsilon)^2}
+\frac{1}{4n^2R(\epsilon)} \big(\mathbb{E}\big\langle \|\bx\|^2 \big\rangle_t - \EE\|\langle \bx \rangle_t\|^2\big)
\nonumber \\ &
\leq 
-\frac{1}{n}\frac{d^2f}{dR(\epsilon)^2} +\frac{\rho_n}{4n\epsilon} \,,
\label{directcomputation}
\end{align}
where we used $R(\epsilon)\geq \epsilon$ and $\frac1n\mathbb{E}\langle \|\bx\|^2\rangle_t \overset{\rm N}{=} \mathbb{E}[X_1^2]=\rho_n$. We integrate this inequality over $\epsilon\in [s_n, 2s_n]$. Recall the map 
$\epsilon\mapsto R(\epsilon)$ has a Jacobian $\ge 1$, is ${\cal C}^1$ 
and has a well defined ${\cal C}^1$ inverse since we have assumed that it is regular. Thus integrating \eqref{directcomputation} and performing a change of variable (to get the second inequality) we obtain
\begin{align*}
\int_{s_n}^{2s_n} d\epsilon\, \mathbb{E}\big\langle (\mathcal{L} - \langle \mathcal{L} \rangle_t)^2\big\rangle_t
& \leq 
- \frac{1}{n}\int_{s_n}^{2s_n} d\epsilon \,\frac{d^2f}{dR(\epsilon)^2} + \frac{\rho_n}{4n}\int_{s_n}^{2s_n} \,\frac{d\epsilon}{\epsilon} 
\nonumber \\ &
\leq 
- \frac{1}{n}\int_{R(s_n)}^{R(2s_n)} dR(\epsilon) \,\frac{d^2f}{dR(\epsilon)^2}
+ \frac{\rho_n}{4n}\int_{s_n}^{2s_n} \,\frac{d\epsilon}{\epsilon} 
\nonumber \\ &
=
\frac{1}{n}\Big(\frac{df}{dR(\epsilon)}(R(s_n)) 
- \frac{df}{dR(\epsilon)}(R(2s_n))\Big)+ \frac{\rho_n}{4n}\ln 2\,.
\end{align*}
We have $|f'(R(\epsilon))| = |\mathbb{E}\langle Q\rangle_t/2|\le \rho_n/2$ so the first term is certainly smaller in absolute value than $\rho_n/n$. This concludes the proof of Lemma \ref{thermal-fluctuations}.
\end{proof}

The second lemma expresses the concentration w.r.t.\ the quenched disorder variables
and is a consequence of the concentration of the free energy onto its average (w.r.t. the quenched variables).
\begin{lemma}[Quenched fluctuations of $\cal L$]\label{disorder-fluctuations}
	Let the sequences $\lambda_n$ and $\rho_n$ verify \eqref{app-scalingRegime}. Then $$\int_{s_n}^{2s_n} d\epsilon\, 
  \mathbb{E}\big[ (\langle \mathcal{L}\rangle_t - \mathbb{E}\langle \mathcal{L}\rangle_t)^2\big] \le C\Big(\frac{\lambda_n\rho_n}{ns_n}\big(1+\lambda_n\rho_n^2\big)\Big)^{1/3}$$
  for a constant $C>0$ that is independent of $n$, as long as the r.h.s. is $\omega(1/n)$. 
\end{lemma}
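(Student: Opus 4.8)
The plan is to recognise $\langle\mathcal{L}\rangle_t$ as the $R(\epsilon)$-derivative of the quenched free energy: by \eqref{first-derivative} one has $\langle\mathcal{L}\rangle_t = F'(R(\epsilon))$, and by \eqref{first-derivative-average} its average is $f'(R(\epsilon)) = -\tfrac12\mathbb E\langle Q\rangle_t$. Thus the quantity to bound is $\int_{s_n}^{2s_n}d\epsilon\,\mathbb E[(F'(R(\epsilon)) - f'(R(\epsilon)))^2]$, and the mechanism is the standard one: fluctuations of the derivative of an almost-concave random function are controlled by fluctuations of the function itself, which here are furnished by the free-energy concentration Proposition~\ref{prop:fconc}. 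Two structural facts feed the argument. First, $f$ is concave in $R$ (by \eqref{infomut_freeen} it differs from the interpolating mutual information by an affine function of $R$, and mutual information is concave in the SNR by the I-MMSE relation), so $f'(R) = -\tfrac12\mathbb E\langle Q\rangle_t$ is monotone with values in $[-\rho_n/2,0]$. Second, by \eqref{second-derivative}, $\tfrac1n F''(R) = -\langle(\mathcal{L}-\langle\mathcal{L}\rangle_t)^2\rangle_t + \tfrac{1}{4n^2R^{3/2}}\langle\bx\rangle_t\cdot\tilde\bZ$, so $F$ is concave up to the correction term, which is small but diverges as $R\to0$; this is precisely why $\epsilon$ is confined to $[s_n,2s_n]$.

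The core step is a bracketing estimate. For $\delta>0$ and any $\mathcal{C}^1$ function $g$ one has $g'(R) - \tfrac1\delta(g(R) - g(R-\delta)) = \tfrac1\delta\int_{R-\delta}^R\!\int_u^R g''(v)\,dv\,du$, and symmetrically with $R+\delta$. Applying this to $F$ (using near-concavity to control the double integral by the positive part $F''_+$) and to $f$ (with $f''\le0$), then subtracting, yields a pointwise bound of the form $|F'(R) - f'(R)| \le \tfrac1\delta\big(|F-f|(R) + |F-f|(R\pm\delta)\big) + |f'(R) - f'(R\mp\delta)| + \tfrac\delta2 F''_+$, where $F''_+ \le \tfrac{1}{4nR^{3/2}}\|\langle\bx\rangle_t\|\,\|\tilde\bZ\| \le \tfrac{S\|\tilde\bZ\|}{4\sqrt n\,R^{3/2}}$. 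One then integrates over $\epsilon\in[s_n,2s_n]$, changing variables to $R$ via the regularity of $\epsilon\mapsto R(\epsilon)$ (Jacobian $\ge1$, so the image interval has length $O(s_n)$): the monotone increments $\int|f'(R+\delta)-f'(R)|$ telescope to $O(\delta\rho_n)$; the free-energy increments give, by Cauchy--Schwarz and Proposition~\ref{prop:fconc} under \eqref{app-scalingRegime}, a contribution $O\big(\tfrac1\delta s_n\sqrt{\lambda_n^2\rho_n^3/n}\big)$; and the correction term gives $O\big(\delta\sqrt{s_n}\big)$ using $R\ge s_n$ and $\mathbb E\|\tilde\bZ\|\le\sqrt n$.

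Collecting these terms and optimising over the bracketing width $\delta$ (balancing the $\delta^{-1}$ against the $\delta^{+1}$ contributions) gives a bound on $\int_{s_n}^{2s_n}d\epsilon\,\mathbb E|\langle\mathcal{L}\rangle_t - \mathbb E\langle\mathcal{L}\rangle_t|$. To pass from this $L^1$ bound to the $L^2$ bound in the statement one uses that $\langle\mathcal{L}\rangle_t$ is bounded by an $S$-dependent constant up to a Gaussian tail coming from the $\tilde\bZ$-term in \eqref{def_L} (the needed moments being supplied by $\mathbb E\|\tilde\bZ\|^2 = n$, $\mathbb E\|\tilde\bZ\|^4\le Cn^2$, and the integration-by-parts identity already used in \eqref{NishiTildeZ}); re-optimising $\delta$ produces the stated bound $C\big(\tfrac{\lambda_n\rho_n}{n s_n}(1+\lambda_n\rho_n^2)\big)^{1/3}$, the cube root and the factor $1+\lambda_n\rho_n^2$ emerging from the interplay of the concentration coefficient $C_{F,n}\asymp\lambda_n^2\rho_n^3$ with the $s_n^{-1}$ scale of the correction term; when this quantity is $O(1/n)$ the residual $O(1/n)$ errors dominate, hence the $\omega(1/n)$ caveat.

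The main obstacle is exactly the control of the $\tilde\bZ$-terms near $R=0$: both the non-convexity correction $\propto R^{-3/2}$ in $F''$ and the unboundedness of $\mathcal{L}$ through $\tilde\bZ$ become singular as $\epsilon\to0$, so the delicate part is to bound these Gaussian quantities with constants uniform in $t\in[0,1]$ and to track how the resulting negative powers of $s_n$ propagate through the optimisation over $\delta$ (and, downstream in Theorem~\ref{thm:wsgeneral}, over $\alpha$) so as to land precisely on the exponents claimed. Everything else --- the change of variables, the telescoping of monotone increments, the invocation of Proposition~\ref{prop:fconc} --- is routine.
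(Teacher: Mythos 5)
Your overall mechanism is the right one (fluctuations of the $R$-derivative of a (near-)concave random function controlled by a bracketing estimate, free-energy concentration from Proposition~\ref{prop:fconc}, change of variables via regularity, telescoping of the monotone increments of $f'$), and your treatment of the non-concavity of $F$ by keeping the correction $F''_+\le \frac{S\|\tilde\bZ\|}{4\sqrt n\,R^{3/2}}$ explicit is a legitimate variant of what the paper does (the paper instead adds $S\sqrt{R}\,\frac1n\sum_i|\tilde Z_i|$ to $F$ so that $\tilde F$ is \emph{exactly} concave and Lemma~\ref{lemmaConvexity} applies verbatim, the extra terms being handled through $A_n$ with $\mathbb{E}[A_n^2]\le 1/n$). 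The genuine gap is your final step: you first derive an $L^1$ bound on $\int d\epsilon\,\mathbb{E}|\langle\mathcal{L}\rangle_t-\mathbb{E}\langle\mathcal{L}\rangle_t|$ and then claim to upgrade it to the stated $L^2$ bound "using that $\langle\mathcal{L}\rangle_t$ is bounded by an $S$-dependent constant up to a Gaussian tail". That boundedness is false in the relevant sense: by \eqref{def_L} the term $\frac{\bx\cdot\tilde\bZ}{2n\sqrt{R}}$ is only bounded by $\frac{S\|\tilde\bZ\|_1}{2n\sqrt{s_n}}=\Theta(s_n^{-1/2})$ almost surely (and even its typical size is not $O(1)$ uniformly in the regime \eqref{app-scalingRegime}), so any sup-norm or higher-moment interpolation ($\mathbb{E}[X^2]\le\|X\|_\infty\mathbb{E}|X|$ or $\mathbb{E}[X^2]\le(\mathbb{E}|X|)^{1/2}(\mathbb{E}|X|^3)^{1/2}$) costs a diverging factor of a power of $s_n^{-1}$. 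Carrying your own estimates through ($\frac{s_n}{\delta}\sqrt{\lambda_n^2\rho_n^3/n}+\delta\rho_n+\delta s_n^{-1/2}$ for the $L^1$ quantity -- note, incidentally, that the correction term is $O(\delta/\sqrt{s_n})$, not $O(\delta\sqrt{s_n})$) and then multiplying by $s_n^{-1/2}$ gives, after optimising $\delta$, a rate of order $\lambda_n^{1/2}\rho_n^{3/4}n^{-1/4}s_n^{-3/4}$, which is strictly weaker than the claimed $\big(\frac{\lambda_n\rho_n}{ns_n}(1+\lambda_n\rho_n^2)\big)^{1/3}$: for instance at $\beta=0$, $\alpha=1/7$ it is roughly $n^{-1/7}$ versus the needed $n^{-2/7}$, and this degradation would propagate into Theorem~\ref{thm:wsgeneral} and Theorem~\ref{thm:ws}, spoiling the advertised exponent $(1-6\beta)/7$.

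The repair is simple and is exactly what the paper does: do not pass through $L^1$ at all. Square the pointwise bracketing inequality \emph{before} taking expectations; each squared term is then controlled directly in expectation -- $\mathbb{E}[(F-f)^2]\le C\lambda_n^2\rho_n^3/n$ at the three shifted arguments by Proposition~\ref{prop:fconc}, the (deterministic) increments of $f'$ (or $\tilde f'$) by the bound $|f'|\le\rho_n/2$ plus the telescoping/mean-value argument after the change of variables, and your $\delta^2\,\mathbb{E}[(F''_+)^2]$ term, which is harmless since $\mathbb{E}\|\tilde\bZ\|^2=n$. Optimising $\delta_n$ (with $\delta_n=o(s_n)$, as in \eqref{second-derivative} one must keep $R-\delta\ge s_n-\delta>0$) then yields the stated cube-root rate without any sup-norm factor.
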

\begin{proof}
Consider the following functions of $R(\epsilon)$:
\begin{align}\label{new-free}
 & \tilde F(R(\epsilon)) \equiv F(R(\epsilon)) +S\frac{\sqrt{R(\epsilon)}}{n} \sum_{i=1}^n\vert \tilde Z_i\vert\,,
 \nonumber \\ &
 \tilde f(R(\epsilon)) \equiv \mathbb{E} \,\tilde F(R(\epsilon))= f(R(\epsilon)) + S\sqrt{R(\epsilon)} \mathbb{E}\,\vert \tilde Z_1\vert\,.
\end{align}
Because of 
\eqref{second-derivative} we see that the second derivative of $\tilde F(R(\epsilon))$ w.r.t. $R(\epsilon)$ is negative so that it is concave. Note $F(R(\epsilon))$ itself is not necessarily concave in $R(\epsilon)$, although $f(R(\epsilon))$ is. Concavity of $f(R(\epsilon))$ is not obvious from \eqref{second-derivative-average} (obtained from differentiating $\EE\langle{\cal L}\rangle_t$ w.r.t. $R(\epsilon)$) but can be seen from \eqref{secondDer_f_pos} (obtained instead by differentiating $-\frac12\EE\langle{Q}\rangle_t$) which reads $\frac{d}{dR(\epsilon)}\EE\langle Q\rangle_t=-2\frac{d^2}{dR(\epsilon)^2}f\ge 0$. Equivalently it follows from the relation \eqref{infomut_freeen} between mutual information and free energy and the concavity of the mutual information Lemma~\ref{lemma:Iconcave}. Evidently $\tilde f(R(\epsilon))$ is concave too.
Concavity then allows to use the following standard lemma:
\begin{lemma}[A bound for concave functions]\label{lemmaConvexity}
Let $G(x)$ and $g(x)$ be concave functions. Let $\delta>0$ and define $C^{-}_\delta(x) \equiv g'(x-\delta) - g'(x) \geq 0$ and $C^{+}_\delta(x) \equiv g'(x) - g'(x+\delta) \geq 0$. Then
\begin{align*}
|G'(x) - g'(x)| \leq \delta^{-1} \sum_{u \in \{x-\delta,\, x,\, x+\delta\}} |G(u)-g(u)| + C^{+}_\delta(x) + C^{-}_\delta(x)\,.
\end{align*}
\end{lemma}
First, from \eqref{new-free} we have 
\begin{align}\label{fdiff}
 \tilde F(R(\epsilon)) - \tilde f(R(\epsilon)) = F(R(\epsilon)) - f(R(\epsilon)) + S\sqrt{R(\epsilon)}  A_n 
\end{align} 
with $A_n \equiv \frac{1}{n}\sum_{i=1}^n \vert \tilde Z_i\vert -\mathbb{E}\,\vert \tilde Z_1\vert$.
Second, from \eqref{first-derivative}, \eqref{first-derivative-average} we obtain for the $R(\epsilon)$-derivatives
\begin{align}\label{derdiff}
 \tilde F'(R(\epsilon)) - \tilde f'(R(\epsilon)) = 
\langle \mathcal{L} \rangle_t-\mathbb{E}\langle \mathcal{L} \rangle_{t} + \frac{SA_n}{2\sqrt{R(\epsilon)}} \,.
\end{align}
From \eqref{fdiff} and \eqref{derdiff} it is then easy to show that Lemma \ref{lemmaConvexity} implies
\begin{align}\label{usable-inequ}
\vert \langle \mathcal{L}\rangle_t - \mathbb{E}\langle \mathcal{L}\rangle_t\vert&\leq 
\delta^{-1} \sum_{u\in \{R(\epsilon) -\delta,\, R(\epsilon),\, R(\epsilon)+\delta\}}
 \big(\vert F(u) - f(u) \vert + S\vert A_n \vert \sqrt{u} \big)\nonumber\\
 &\qquad\qquad\qquad\qquad
  + C_\delta^+(R(\epsilon)) + C_\delta^-(R(\epsilon)) + \frac{S\vert A_n\vert}{2\sqrt \epsilon} 
\end{align}
where $C_\delta^-(R(\epsilon))\equiv \tilde f'(R(\epsilon)-\delta)-\tilde f'(R(\epsilon))\ge 0$ 
and $C_\delta^+(R(\epsilon))\equiv \tilde f'(R(\epsilon))-\tilde f'(R(\epsilon)+\delta)\ge 0$. 
We used $R(\epsilon)\ge \epsilon$ for the term $S\vert A_n\vert/(2\sqrt \epsilon)$. Note that $\delta$ will 
be chosen later on strictly smaller than $s_n$ so that $R(\epsilon) -\delta \geq \epsilon - \delta \geq s_n -\delta$ remains 
positive. Remark that by independence of the noise variables $\mathbb{E}[A_n^2]  = (1-2/\pi)/n\le 1/n$. 
We square the identity \eqref{usable-inequ} and take its expectation. Then using $(\sum_{i=1}^pv_i)^2 \le p\sum_{i=1}^pv_i^2$, and 
that $R(\epsilon)\le 2s_n+\lambda_n\rho_n$, as well as the free energy concentration Proposition \ref{prop:fconc} (under the assumption that $\lambda_n$ and $\rho_n$ verify \eqref{app-scalingRegime}),
\begin{align}\label{intermediate}
 \frac{1}{9}\mathbb{E}\big[(\langle \mathcal{L}\rangle_t - \mathbb{E}\langle \mathcal{L}\rangle_t)^2\big]
 &
 \leq \, 
 \frac{3}{n\delta^2} \Big(C\lambda_n^2\rho_n^3 +S(2s_n+\lambda_n\rho_n+\delta)\Big) 
 \nonumber \\ & \qquad \qquad\qquad\qquad\qquad+ C_\delta^+(R(\epsilon))^2 + C_\delta^-(R(\epsilon))^2
 + \frac{S}{4n\epsilon} \,.
\end{align}
Recall $|C_\delta^\pm(R(\epsilon))|=|\tilde f'(R(\epsilon)\pm\delta)-\tilde f'(R(\epsilon))|$. By \eqref{first-derivative-average}, \eqref{new-free}  and $R(\epsilon)\ge \epsilon$ we have
\begin{align}
|\tilde f'(R(\epsilon))|  \leq \frac12\Big(\rho_n  +\frac{S}{\sqrt{R(\epsilon)}} \Big)\leq \frac12\Big(\rho_n  +\frac{S}{\sqrt \epsilon} \Big)\label{boudfprime}	
\end{align}
Thus, as $\epsilon\ge s_n$,  
$$|C_\delta^\pm(R(\epsilon))|\le \rho_n  +\frac{S}{\sqrt{\epsilon-\delta}}\le \rho_n  +\frac{S}{\sqrt{s_n-\delta}}\,.$$ We reach
\begin{align*}
 \int_{s_n}^{2s_n} d\epsilon\, \big\{C_\delta^+(R(\epsilon))^2 + C_\delta^-(R(\epsilon))^2\big\}
 &\leq 
 \Big(\rho_n +\frac{S}{\sqrt {s_n-\delta}}\Big)
 \int_{s_n}^{2s_n} d\epsilon\, \big\{C_\delta^+(R(\epsilon)) + C_\delta^-(R(\epsilon))\big\}
 \nonumber \\ 
  &\leq  \Big(\rho_n +\frac{S}{\sqrt {s_n-\delta}}\Big)
 \int_{R(s_n)}^{R(2s_n)} dR(\epsilon)\, \big\{C_\delta^+(R(\epsilon)) + C_\delta^-(R(\epsilon))\big\}
 \nonumber \\ 
&= \Big(\rho_n +\frac{S}{\sqrt {s_n-\delta}}\Big)\Big[\Big(\tilde f(R(s_n)+\delta) - \tilde f(R(s_n)-\delta)\Big)\nn
&\qquad\qquad\qquad\qquad+ \Big(\tilde f(R(2s_n)-\delta) - \tilde f(R(2s_n)+\delta)\Big)\Big]
\end{align*}
where we used that the Jacobian of the ${\cal C}^1$-diffeomorphism $\epsilon\mapsto R(\epsilon)$ is $\ge 1$ (by regularity) for 
the second inequality. The mean value theorem and \eqref{boudfprime} 
imply $|\tilde f(R(\epsilon)-\delta) - \tilde f(R(\epsilon)+\delta)|\le \delta(\rho_n  +\frac{S}{\sqrt{s_n-\delta}})$. Therefore
\begin{align*}
 \int_{s_n}^{2s_n} d\epsilon\, \big\{C_\delta^+(R(\epsilon))^2 + C_\delta^-(R(\epsilon))^2\big\}\leq 
 2\delta \Big(\rho_n +\frac{S}{\sqrt{s_n-\delta}}\Big)^2\,.
\end{align*}
Set $\delta  = \delta_n = o(s_n)$. Thus, integrating \eqref{intermediate} over $\epsilon\in [s_n, 2s_n]$ yields
\begin{align*}
 &\int_{s_n}^{2s_n} d\epsilon\, 
 \mathbb{E}\big[(\langle \mathcal{L}\rangle_t - \mathbb{E}\langle \mathcal{L}\rangle_{t})^2\big]\nonumber\\
 &\qquad\qquad\leq \frac{27s_n}{n\delta_n^2}\Big(C\lambda_n^2\rho_n^3 +S(2s_n+\lambda_n\rho_n+\delta_n)\Big) +18\delta_n \Big(\rho_n +\frac{S}{\sqrt {s_n-\delta_n}}\Big)^2 +  \frac{9 S\ln 2}{4n}  \nonumber\\
 &\qquad\qquad\leq\frac{Cs_n\lambda_n\rho_n}{n\delta_n^2}(1+\lambda_n\rho_n^2)+\frac{C\delta_n}{s_n}+\frac{C}{n}
\end{align*}
where the constant $C$ is generic, and may change from place to place. Finally we optimize the bound choosing $\delta_n^3=  s_n^2\lambda_n\rho_n(1+\lambda_n\rho_n^2)/n$. We verify the condition $\delta_n =o(s_n)$: we have $(\delta_n/s_n)^3=O(\lambda_n\rho_n(1+\lambda_n\rho_n^2)/(ns_n))$ which, by \eqref{app-scalingRegime}, indeed tends to $0_+$ for an appropriately chosen sequence $s_n$. So the dominating term $\delta_n/s_n$ gives the result. 
\end{proof}

\subsection{Overlap concentration for the Wishart case: proof of inequality \eqref{over-concen-wishart}}\label{appendix-overlap-wishart}
\subsubsection{Controlling $\bm{Q_V}$}
Again we drop all un-necessary notations and indices and keep only the dependence of the free energies on $R({\bm \epsilon})=(R_{U}({\bm \epsilon}),R_{V}({\bm \epsilon}))\equiv (R_{U,n}(t,{\bm \epsilon}),R_{V,n}(t,{\bm \epsilon}))$. We denote $F(R({\bm \epsilon}))$ and $f(R({\bm \epsilon}))$, respectively, the free energies \eqref{F_nonav_wishart} and \eqref{f_av_wishart}. We start proving the ovelap concentration for $Q_V\equiv \bv\cdot \bV/m$. As the computations are similar as for the spiked Wigner model we are more brief.

Let $\mathcal{L}_V$ be the $R_U({\bm \epsilon})$-derivative of the hamiltonian \eqref{Ht_wishart} divided by $m=\alpha_nn$:
\begin{align}
\mathcal{L}_V \equiv \frac{1}{m}\Big(\frac{\|\bv\|^2}{2} - \bv\cdot \bV - \frac{\bv\cdot \tilde \bZ_V}{2\sqrt{R_U({\bm \epsilon})}} \Big)\,.\label{def_L_U}
\end{align}
We have as before 
\begin{align}
\mathbb{E}\big\langle (Q_V - \mathbb{E}\langle Q_V \rangle_{t})^2\big\rangle_{t} \le 4\,\mathbb{E}\big\langle (\mathcal{L}_V - \mathbb{E}\langle \mathcal{L}_V\rangle_{t})^2\big\rangle_{t}\,.	\label{LV_QV}
\end{align}
We relate ${\cal L}_V$'s fluctuations to the free energy through
\begin{align}
 \frac{dF}{dR_U({\bm \epsilon})}  &= \alpha_n\langle \mathcal{L}_V \rangle_{t} \,,\label{first-derivative-wishart}\\
 \frac{1}{n}\frac{d^2F}{dR_U({\bm \epsilon})^2}  &= -\alpha_n^2\big\langle (\mathcal{L}_V  - \langle \mathcal{L}_V \rangle_t)^2\big\rangle_{t}+
 \frac{1}{4 n^2R_U({\bm \epsilon})^{3/2}} \langle \bv\rangle_{t} \cdot \tilde \bZ_V\,,\label{second-derivative-wishart}\\
 \frac{df}{d R_U({\bm \epsilon})} &=\alpha_n\EE\langle \mathcal{L}_V\rangle_t\overset{\rm N}{=} -\frac{\alpha_n}{2}\mathbb{E}\langle Q_V\rangle_{t}\,,\label{first-derivative-average-wishart}\\
 \frac{1}{n}\frac{d^2f}{dR_U({\bm \epsilon})^2} &= -\alpha_n^2\mathbb{E}\big\langle (\mathcal{L}_V - \langle \mathcal{L}_V \rangle_{t})^2\big\rangle_{t}
 +\frac{1}{4n^2R_U({\bm \epsilon})} \mathbb{E}\big\langle \|\bv- \langle \bv \rangle_t\|^2\big\rangle_t\,.\label{second-derivative-average-wishart}
\end{align} 
We work under the assumption that the map ${\bm \epsilon}\in [s_n, 2s_n]^2 \mapsto (R_U({\bm \epsilon}),R_V({\bm \epsilon}))$ is regular (that is ${\cal C}^1$ with a ${\cal C}^1$ inverse and a Jacobian determinant $\ge 1$). The concentration inequality \eqref{over-concen-wishart} follows from:
\begin{proposition}[Total fluctuations of $\mathcal{L}_V$]\label{L-concentration-wishart} For any sequences $(\delta_n), (s_n)$ verifying $\delta_n<s_n$ the fluctuations $\int_{[s_n,2s_n]^2} d{\bm \epsilon}\,\mathbb{E}\langle (\mathcal{L}_V - \mathbb{E}\langle \mathcal{L}_V\rangle_{t})^2\rangle_{t}$ are bounded by the sum of the r.h.s. of inequalities \eqref{ineq:thermal-wishart} and \eqref{generic_quenched_wishart} below. In the special case of the scalings \eqref{sparsePCAscaling} there exists $C>0$ independent of $n$ such that  $$\int_{[s_n,2s_n]^2} d{\bm \epsilon}\,\mathbb{E}\big\langle (\mathcal{L}_V - \mathbb{E}\langle \mathcal{L}_V\rangle_{t})^2\big\rangle_{t} \le C\Big(\frac1n\sqrt{\frac{|\ln \rho_{V,n}|}{\rho_{V,n}}}\big(s_n+\sqrt{\rho_{V,n}|\ln \rho_{V,n}|}\big)^2\Big)^{1/3}$$
as long as the right hand side is $\omega(s_n/n)$.
\end{proposition}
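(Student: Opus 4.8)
The plan is to follow the Wigner template (Lemmas~\ref{thermal-fluctuations}--\ref{disorder-fluctuations}), working throughout with the single SNR variable $R_U({\bm\epsilon})$ on which $\mathcal{L}_V$ depends, and splitting
\begin{align*}
\mathbb{E}\big\langle(\mathcal{L}_V - \mathbb{E}\langle\mathcal{L}_V\rangle_t)^2\big\rangle_t = \mathbb{E}\big\langle(\mathcal{L}_V - \langle\mathcal{L}_V\rangle_t)^2\big\rangle_t + \mathbb{E}\big[(\langle\mathcal{L}_V\rangle_t - \mathbb{E}\langle\mathcal{L}_V\rangle_t)^2\big]
\end{align*}
into a ``thermal'' and a ``quenched'' contribution. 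For the thermal part I would use \eqref{second-derivative-average-wishart} to write $\alpha_n^2\mathbb{E}\langle(\mathcal{L}_V - \langle\mathcal{L}_V\rangle_t)^2\rangle_t = -\tfrac1n\tfrac{d^2f}{dR_U^2} + \tfrac{1}{4n^2R_U}\mathbb{E}\langle\|\bv - \langle\bv\rangle_t\|^2\rangle_t$, bound the second term by $\alpha_n\rho_{V,n}/(4n\epsilon_U)$ using $R_U\ge\epsilon_U\ge s_n$ and the Nishimori identity $\tfrac1m\mathbb{E}\langle\|\bv\|^2\rangle_t = \rho_{V,n}$, and then integrate over ${\bm\epsilon}\in[s_n,2s_n]^2$: the $\epsilon_V$-integration gives a factor $s_n$, the $\int d\epsilon_U/\epsilon_U$ gives $\ln 2$, and the $-d^2f/dR_U^2\ge0$ term telescopes after the (regular) change of variables using $|\partial_{R_U}f| = \tfrac{\alpha_n}{2}|\mathbb{E}\langle Q_V\rangle_t|\le\tfrac{\alpha_n\rho_{V,n}}{2}$. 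This gives \eqref{ineq:thermal-wishart}, which is $O(s_n\rho_{V,n}/n)$ under \eqref{sparsePCAscaling} and therefore lower order.

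For the quenched part I would introduce the tilted free energy $\tilde F(R_U)\equiv F(R_U) + \tfrac{S\sqrt{R_U}}{n}\sum_{j=1}^m|\tilde Z_{V,j}|$. By \eqref{second-derivative-wishart} and $|\langle v_j\rangle_t|\le S$ its second $R_U$-derivative is non-positive, so $\tilde F$ and $\tilde f\equiv\mathbb{E}\tilde F$ are concave; $f$ itself is concave by the relation between free energy and mutual information and Lemma~\ref{lemma:Iconcave}. Applying Lemma~\ref{lemmaConvexity} with increment $\delta<s_n$ at $x=R_U({\bm\epsilon})$, exactly as in \eqref{usable-inequ}, yields a pointwise bound on $\alpha_n|\langle\mathcal{L}_V\rangle_t - \mathbb{E}\langle\mathcal{L}_V\rangle_t|$ in terms of $\delta^{-1}|F(u)-f(u)|$ for $u\in\{R_U-\delta,R_U,R_U+\delta\}$, the centred fluctuation $B_n\equiv\tfrac1n\sum_{j=1}^m(|\tilde Z_{V,j}|-\mathbb{E}|\tilde Z_1|)$, and the curvature gaps $C_\delta^\pm(R_U)\equiv\pm(\tilde f'(R_U)-\tilde f'(R_U\pm\delta))\ge0$. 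Squaring, taking expectations, and using Proposition~\ref{prop:fconc_wishart} ($\mathbb{E}[(F-f)^2]\le C_{F,n}/n\le C|\ln\rho_{V,n}|/n$), $\mathbb{E}[B_n^2]\le\alpha_n/n$, $R_U\le R_{U,\max}\equiv 2s_n+\lambda_n\rho_{U,n}$, $|\tilde f'(R_U)|\le\tfrac12(\alpha_n\rho_{V,n}+S\alpha_n/\sqrt{s_n-\delta})$ and $R_{V,n}(t,{\bm\epsilon})\le R_{V,\max}\equiv2s_n+\lambda_n\alpha_n\rho_{V,n}$ produces the generic bound \eqref{generic_quenched_wishart}.

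It then remains to integrate over ${\bm\epsilon}\in[s_n,2s_n]^2$ and optimise $\delta$. The ``direct'' terms are bounded uniformly, so their integral is $\lesssim\tfrac{s_n^2}{n\delta^2}(C_{F,n}+S^2\alpha_n R_{U,\max}) = O\!\big(\tfrac{\lambda_n s_n^2}{n\delta^2}\big)$, since $\lambda_n\rho_{U,n}$ dominates $|\ln\rho_{V,n}|$. The $C_\delta^\pm$ terms, being functions of $R_U$ alone, I would handle with the full two-dimensional change of variables ${\bm\epsilon}\mapsto(R_U,R_V)$ (Jacobian $\ge1$ by regularity of the adapted flow, cf.\ the lower-bound proof): this extracts a factor $R_{V,\max}$ from the $R_V$-range and then telescopes in $R_U$, giving a contribution $\lesssim\delta R_{V,\max}/s_n$. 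Collecting, $\int_{[s_n,2s_n]^2}d{\bm\epsilon}\,\mathbb{E}\langle(\mathcal{L}_V-\mathbb{E}\langle\mathcal{L}_V\rangle_t)^2\rangle_t\lesssim\tfrac{\lambda_n s_n^2}{n\delta^2}+\tfrac{\delta R_{V,\max}}{s_n}+\tfrac{s_n}{n}$; choosing $\delta=\delta_n=(\lambda_n s_n^3/(nR_{V,\max}))^{1/3}$ (which is $o(s_n)$ in the regime \eqref{sparsePCAscaling}) balances the first two and gives a bound of order $(\lambda_n R_{V,\max}^2/n)^{1/3}$. Since under \eqref{sparsePCAscaling} one has $\lambda_n=\Theta(\sqrt{|\ln\rho_{V,n}|/\rho_{V,n}})$ and $R_{V,\max}=\Theta(s_n+\sqrt{\rho_{V,n}|\ln\rho_{V,n}|})$, this is exactly $C\big(\tfrac1n\sqrt{|\ln\rho_{V,n}|/\rho_{V,n}}\,(s_n+\sqrt{\rho_{V,n}|\ln\rho_{V,n}|})^2\big)^{1/3}$, valid as long as this quantity is $\omega(s_n/n)$.

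The step I expect to be the main obstacle is the bookkeeping around the two very different scales involved — $R_{U,\max}\sim\lambda_n\to+\infty$ versus $R_{V,\max}\to0$ — and in particular realising that the $C_\delta^\pm$ contribution must be estimated through the genuinely two-dimensional change of variables: it is this $R_V$-range factor that upgrades the naive $s_n^2$ to the $(s_n+\sqrt{\rho_{V,n}|\ln\rho_{V,n}|})^2=\Theta(R_{V,\max}^2)$ appearing in the final rate. Keeping the $\alpha_n$ and $m=\alpha_n n$ factors straight between the $\bU$- and $\bV$-sectors is the other routine-but-error-prone point; the companion Proposition~\ref{LU-concentration-wishart} for $\mathcal{L}_U$ (controlled via $R_V$) is proved symmetrically.
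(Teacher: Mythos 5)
Your proposal is correct and follows essentially the same route as the paper: the same thermal/quenched split, the bound on the thermal part via $-\frac{d^2 f}{dR_U^2}\ge 0$ and telescoping, the tilted concave free energy $\tilde F$ together with Lemma~\ref{lemmaConvexity} and the free-energy concentration of Proposition~\ref{prop:fconc_wishart} for the quenched part, the two-dimensional change of variables (with extension of $R([s_n,2s_n]^2)$ to a rectangle) that produces the decisive factor $R_{V,\max}=\Theta\big(s_n+\sqrt{\rho_{V,n}|\ln\rho_{V,n}|}\big)$, and the same optimization $\delta_n^3\asymp \lambda_n s_n^3/(nR_{V,\max})$. The only loose point is in your thermal step, where "the $\epsilon_V$-integration gives a factor $s_n$" followed by telescoping in $R_U$ would require regularity of the map $\epsilon_U\mapsto R_U$ at fixed $\epsilon_V$, which is not available (only the joint flow ${\bm\epsilon}\mapsto(R_U,R_V)$ is regular); doing it as in your quenched part, i.e.\ via the two-dimensional change of variables and the rectangle, replaces $s_n$ by $s_n+\lambda_n\alpha_n\rho_{V,n}$ and yields exactly \eqref{ineq:thermal-wishart}, which is subdominant either way.
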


We start with the proof of the thermal fluctuations:
\begin{lemma}[Thermal fluctuations of ${\cal L}_V$]\label{thermal-fluctuations-wishart}
	We have 
	\begin{align}
		\int_{[s_n,2s_n]^2} d{\bm \epsilon}\, \mathbb{E}\big\langle (\mathcal{L}_V - \langle \mathcal{L}_V \rangle_t)^2\big\rangle_t  \le \frac{\lambda_n\rho_{V,n}^2}{n}+ \frac{s_n\rho_{V,n}}{n\alpha_n}\Big(1+\frac{\ln 2}{4}\Big)\,.\label{ineq:thermal-wishart}
	\end{align}	
\end{lemma}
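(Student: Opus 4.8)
The plan is to follow the proof of Lemma~\ref{thermal-fluctuations} (the Wigner thermal-fluctuation estimate) essentially line by line, replacing the single interpolation SNR $R(\epsilon)$ by $R_U({\bm \epsilon})=R_{U,n}(t,{\bm \epsilon})$ and carrying $R_V({\bm \epsilon})$ along as a passive parameter.

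First I would isolate the thermal fluctuations from \eqref{second-derivative-average-wishart}: dividing by $\alpha_n^2$,
\begin{align*}
\mathbb{E}\big\langle(\mathcal{L}_V-\langle \mathcal{L}_V\rangle_t)^2\big\rangle_t
= -\frac{1}{\alpha_n^2 n}\frac{d^2 f}{dR_U({\bm \epsilon})^2}
+\frac{1}{4\alpha_n^2 n^2 R_U({\bm \epsilon})}\,\mathbb{E}\big\langle\|\bv-\langle \bv\rangle_t\|^2\big\rangle_t\,.
\end{align*}
For the last term I would use the Nishimori identity $\tfrac1m\mathbb{E}\langle\|\bv\|^2\rangle_t=\mathbb{E}[V_1^2]=\rho_{V,n}$, so $\mathbb{E}\langle\|\bv-\langle\bv\rangle_t\|^2\rangle_t\le m\rho_{V,n}=\alpha_n n\rho_{V,n}$, together with $R_U({\bm \epsilon})\ge\epsilon_U$; this bounds it by $\rho_{V,n}/(4\alpha_n n\,\epsilon_U)$. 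For the first term I would record that $f$, viewed as a function of the SNR $R_U$ with $R_V$ fixed, is concave: $f$ differs from the interpolating mutual information by a term linear in $R_U$ (remark after \eqref{f_av_wishart}), and the interpolating mutual information is concave in the SNR by Lemma~\ref{lemma:Iconcave}; equivalently $d^2f/dR_U({\bm \epsilon})^2=-\tfrac{\alpha_n}{2}\,d\,\mathbb{E}\langle Q_V\rangle_t/dR_U\le 0$ since $\mathbb{E}\langle Q_V\rangle_t$ is non-decreasing in the side-SNR $R_U$ (the same monotonicity-in-SNR argument used to show $\partial_{R_V}\mathbb{E}\langle Q_V\rangle_t\ge0$ and hence that the flow Jacobian is $\ge1$). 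Thus $-d^2f/dR_U({\bm \epsilon})^2\ge0$.

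Next I would integrate over ${\bm \epsilon}\in[s_n,2s_n]^2$. The term $\rho_{V,n}/(4\alpha_n n\,\epsilon_U)$ integrates to $s_n\rho_{V,n}\ln2/(4\alpha_n n)$. For the concave term I would use the standing regularity assumption on ${\bm \epsilon}\mapsto(R_U({\bm \epsilon}),R_V({\bm \epsilon}))$ (a $\mathcal C^1$-diffeomorphism with Jacobian $\ge1$): the change of variables together with $-\partial_{R_U}^2 f\ge0$ yields
\begin{align*}
\int_{[s_n,2s_n]^2} d{\bm \epsilon}\,\Big(-\frac{d^2 f}{dR_U({\bm \epsilon})^2}\Big)
\;\le\; \int_{\mathcal R} dR_U\,dR_V\,\Big(-\frac{\partial^2 f}{\partial R_U^2}\Big)
\end{align*}
for any rectangle $\mathcal R$ containing the image of the box — enlarging the region is harmless because the integrand is non-negative. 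Since $q_{U,n}\in[0,\rho_{U,n}]$ and $q_{V,n}\in[0,\rho_{V,n}]$, one has $R_U({\bm \epsilon})\in[s_n,2s_n+\lambda_n\rho_{U,n}]$ and $R_V({\bm \epsilon})\in[s_n,2s_n+\lambda_n\alpha_n\rho_{V,n}]$, so $\mathcal R$ can be chosen with $R_V$-extent at most $s_n+\lambda_n\alpha_n\rho_{V,n}$. Performing the inner $R_U$-integral by the fundamental theorem of calculus and bounding the boundary terms by $|\partial_{R_U}f|=|\tfrac{\alpha_n}{2}\mathbb{E}\langle Q_V\rangle_t|\le\tfrac{\alpha_n}{2}\rho_{V,n}$ (from \eqref{first-derivative-average-wishart}) makes the bracketed difference $\le\alpha_n\rho_{V,n}$, so the displayed right-hand side is $\le(s_n+\lambda_n\alpha_n\rho_{V,n})\alpha_n\rho_{V,n}$; dividing by $\alpha_n^2 n$ produces $s_n\rho_{V,n}/(\alpha_n n)+\lambda_n\rho_{V,n}^2/n$. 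Adding the two contributions gives exactly $\lambda_n\rho_{V,n}^2/n+\tfrac{s_n\rho_{V,n}}{\alpha_n n}\big(1+\tfrac{\ln2}{4}\big)$, which is \eqref{ineq:thermal-wishart}.

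The only step that is not a verbatim transcription of the Wigner argument is this last change of variables: the flow does not map $[s_n,2s_n]^2$ onto a rectangle, so one cannot telescope the $R_U$-integral directly; passing to a bounding rectangle repairs this, and it is legitimate precisely because $-\partial_{R_U}^2 f\ge0$. I expect this — together with keeping the powers of $\alpha_n$ and the widths of the SNR intervals straight — to be the only place where care is needed; everything else is routine.
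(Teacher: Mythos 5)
Your proof is correct and follows essentially the same route as the paper's: isolate the thermal fluctuations from \eqref{second-derivative-average-wishart}, bound the Nishimori term by $\rho_{V,n}/(4\alpha_n n\,\epsilon_U)$ using $R_U\ge\epsilon_U$, exploit concavity of $f$ in $R_U$ so that after the regular (Jacobian $\ge 1$) change of variables the integration domain can be enlarged to the bounding rectangle $[s_n,2s_n+\lambda_n\rho_{U,n}]\times[s_n,2s_n+\lambda_n\alpha_n\rho_{V,n}]$, and then telescope the $R_U$-integral with $|\partial_{R_U}f|\le\alpha_n\rho_{V,n}/2$ from \eqref{first-derivative-average-wishart}. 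One cosmetic remark: $R_U$ is the \emph{direct} SNR of the channel observing $\bV$, so the monotonicity of $\EE\langle Q_V\rangle_t$ in $R_U$ follows already from Lemma~\ref{lemma:Iconcave} (your first argument, which is the one the paper invokes), not from the side-SNR conditioning argument used for $\partial_{R_V}\EE\langle Q_V\rangle_t\ge 0$.
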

\begin{proof}
Integrating \eqref{second-derivative-average-wishart}, using $R_U\ge \epsilon_U$ and the regularity assumption for ${\bm \epsilon}\mapsto R({\bm \epsilon})$ we obtain
\begin{align*}
\alpha_n^2\int_{[s_n,2s_n]^2} d{\bm \epsilon}\, \mathbb{E}\big\langle (\mathcal{L}_V - \langle \mathcal{L}_V \rangle_t)^2\big\rangle_t
&
\leq 
- \frac{1}{n}\int_{R([s_n,2s_n]^2)} dR({\bm \epsilon}) \,\frac{d^2f}{dR_U({\bm \epsilon})^2}
+  \frac{\rho_{V,n}\alpha_n }{4n}\int_{[s_n,2s_n]^2}\frac{d{\bm \epsilon}}{\epsilon_U}\,.
\end{align*}
We have $R([s_n,2s_n]^2)\subseteq {\cal R}\equiv [s_n,2s_n+\lambda_n\rho_{U,n}]\times [s_n,2s_n+\lambda_n\alpha_n\rho_{V,n}]$. Moreover the second $R_U$-derivative of $f$ is negative. This is not immediately obvious from \eqref{second-derivative-average-wishart} but can be easily shown similarly to the Wigner case, and  is equivalent to say that the averaged overlap cannot decrease when the SNR $R_U$ increases. Therefore we can integrate over the larger set ${\cal R}$ to get a bound:
\begin{align*}
&\alpha_n^2\int_{[s_n,2s_n]^2} d{\bm \epsilon}\, \mathbb{E}\big\langle (\mathcal{L}_V - \langle \mathcal{L}_V \rangle_t)^2\big\rangle_t
\leq 
- \frac{1}{n}\int_{\cal R} dR_VdR_U \,\frac{d^2f}{dR_U^2}
+  \frac{\rho_{V,n}\alpha_n s_n}{4n}\ln 2\nonumber \\ 
&\leq  \frac{1}{n}\int_{s_n}^{2s_n+\lambda_n\alpha_n\rho_{V,n}} \!dR_V\Big(\frac{df}{dR_U}(R_U=s_n,R_V)\! 
-\! \frac{df}{dR_U}(R_U=2s_n+\lambda_n\rho_{U,n},R_V)\Big)\!+\!  \frac{\rho_{V,n}\alpha_n s_n}{4n}\ln 2\\
&
\le 
\frac{(s_n+\lambda_n\alpha_n\rho_{V,n})}{n}\rho_{V,n}\alpha_n+ \frac{\rho_{V,n}\alpha_n s_n}{4n}\ln 2\,.
\end{align*}
In the last line we used $|f'(R_U)| \le \rho_{V,n}\alpha_n/2$ which follows from \eqref{first-derivative-average-wishart}. This concludes the proof of Lemma \ref{thermal-fluctuations-wishart}.
\end{proof}

We now consider the randomness due to the quenched variables.
\begin{lemma}[Quenched fluctuations of ${\cal L}_V$]\label{disorder-fluctuations-wishart}
	 For any sequences $(\delta_n), (s_n)$ verifying $\delta_n<s_n$ we have the generic bound \eqref{generic_quenched_wishart} below. In the special case of the scalings \eqref{sparsePCAscaling} there exists $C>0$ independent of $n$ s.t.
	 $$
	 \int_{[s_n,2s_n]^2} d{\bm \epsilon}\, \mathbb{E}\big[ (\langle \mathcal{L}_V\rangle_t - \mathbb{E}\langle \mathcal{L}_V\rangle_t)^2\big] \le C\Big(\frac1n\sqrt{\frac{|\ln \rho_{V,n}|}{\rho_{V,n}}}\big(s_n+\sqrt{\rho_{V,n}|\ln \rho_{V,n}|}\big)^2\Big)^{1/3}
	 $$
	 as long as the right hand side is $\omega(s_n/n)$.
\end{lemma}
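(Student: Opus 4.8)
The plan is to transcribe, almost line for line, the proof of Lemma~\ref{disorder-fluctuations} from the Wigner case, now carrying along the second variable $R_V$ and the factors of $\alpha_n=m/n$. First I would fix $t$ and the coordinate $R_V({\bm\epsilon})$, regard all free energies as functions of the single variable $R_U\equiv R_U({\bm\epsilon})$, and, following \eqref{new-free}, introduce the corrected free energies
\begin{align*}
\tilde F(R_U)&\equiv F(R({\bm\epsilon}))+S\frac{\sqrt{R_U}}{n}\sum_{j=1}^{m}|\tilde Z_{V,j}|\,,\\
\tilde f(R_U)&\equiv\mathbb{E}\,\tilde F(R_U)=f(R({\bm\epsilon}))+S\,\alpha_n\sqrt{R_U}\,\mathbb{E}|\tilde Z_1|\,.
\end{align*}
These are \emph{concave} in $R_U$: by \eqref{second-derivative-wishart} the added term dominates the sign-indefinite contribution $\tfrac{1}{4n^2R_U^{3/2}}\langle\bv\rangle_t\cdot\tilde\bZ_V$ (since $|\langle v_j\rangle_t|\le S$ for all $j$), so $\tilde F''\le0$; concavity of $\tilde f$ is established exactly as in Lemma~\ref{thermal-fluctuations-wishart}, namely the averaged overlap $\mathbb{E}\langle Q_V\rangle_t$ cannot decrease when the SNR $R_U$ increases, whence $\tfrac{d^2f}{dR_U^2}=-\tfrac{\alpha_n}{2}\tfrac{d}{dR_U}\mathbb{E}\langle Q_V\rangle_t\le0$ by \eqref{first-derivative-average-wishart} (equivalently, this is the concavity of the scalar-channel mutual information in the SNR, Lemma~\ref{lemma:Iconcave}).

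Next I would apply the concave-function bound Lemma~\ref{lemmaConvexity} to $G=\tilde F$, $g=\tilde f$ at the point $R_U$, with a parameter $\delta_n<s_n$. Using \eqref{first-derivative-wishart} and \eqref{first-derivative-average-wishart} to write $\tilde F'(R_U)-\tilde f'(R_U)=\alpha_n\big(\langle\mathcal{L}_V\rangle_t-\mathbb{E}\langle\mathcal{L}_V\rangle_t\big)+\tfrac{SA_n}{2\sqrt{R_U}}$ with $A_n\equiv\tfrac1n\sum_{j=1}^m|\tilde Z_{V,j}|-\tfrac mn\mathbb{E}|\tilde Z_1|$ (so $\mathbb{E}[A_n^2]=\alpha_n(1-2/\pi)/n=O(1/n)$), and $\tilde F-\tilde f=F-f+S\sqrt{R_U}\,A_n$, one gets a pointwise estimate analogous to \eqref{usable-inequ},
\begin{align*}
\alpha_n\big|\langle\mathcal{L}_V\rangle_t-\mathbb{E}\langle\mathcal{L}_V\rangle_t\big|
&\le\frac1{\delta_n}\sum_{u\in\{R_U-\delta_n,\,R_U,\,R_U+\delta_n\}}\big(|F(u,R_V)-f(u,R_V)|+S|A_n|\sqrt u\big)\\
&\qquad+C^+_{\delta_n}(R_U)+C^-_{\delta_n}(R_U)+\frac{S|A_n|}{2\sqrt{\epsilon_U}}\,,
\end{align*}
with $C^{\mp}_{\delta_n}(R_U)\equiv\pm\big(\tilde f'(R_U\mp\delta_n)-\tilde f'(R_U)\big)\ge0$ and using $R_U\ge\epsilon_U\ge s_n$. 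Squaring, taking expectation, using $(\sum_{i=1}^pv_i)^2\le p\sum v_i^2$, the free-energy concentration Proposition~\ref{prop:fconc_wishart} (so $n\,\mathbb{E}[(F-f)^2]\le C_{F,n}$), $\mathbb{E}[A_n^2]=O(1/n)$, $R_U\le2s_n+\lambda_n\rho_{U,n}$, and then integrating over ${\bm\epsilon}\in[s_n,2s_n]^2$, one obtains the generic bound
\begin{align}
\int_{[s_n,2s_n]^2} d{\bm\epsilon}\,\mathbb{E}\big[(\langle\mathcal{L}_V\rangle_t-\mathbb{E}\langle\mathcal{L}_V\rangle_t)^2\big]
&\le\frac{C s_n^2}{\alpha_n^2 n\delta_n^2}\big(C_{F,n}+s_n+\lambda_n\rho_{U,n}\big)\\
&\quad+C\delta_n\big(s_n+\lambda_n\alpha_n\rho_{V,n}\big)\Big(\rho_{V,n}+\frac{S}{\sqrt{s_n-\delta_n}}\Big)^2+\frac{C s_n}{\alpha_n^2 n}\,.\label{generic_quenched_wishart}
\end{align}

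The only step needing care beyond copying the Wigner argument is the estimate of the $C^\pm_{\delta_n}$ terms (the middle term of \eqref{generic_quenched_wishart}), handled exactly as in Lemma~\ref{thermal-fluctuations-wishart}: since $R([s_n,2s_n]^2)\subseteq\mathcal{R}\equiv[s_n,2s_n+\lambda_n\rho_{U,n}]\times[s_n,2s_n+\lambda_n\alpha_n\rho_{V,n}]$ and $\partial_{R_U}^2 f\le0$, I change variables ${\bm\epsilon}\mapsto R({\bm\epsilon})$ (Jacobian $\ge1$ by regularity), integrate the nonnegative $C^\pm_{\delta_n}$ over the larger rectangle $\mathcal R$, telescope in the $R_U$ direction, and use the mean value theorem together with $|\tilde f'(R_U)|\le\tfrac{\alpha_n}{2}(\rho_{V,n}+S/\sqrt{\epsilon_U})$ from \eqref{first-derivative-average-wishart} and \eqref{new-free}; the $R_V$-integration then just contributes the interval length $O(s_n+\lambda_n\alpha_n\rho_{V,n})$. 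Finally I specialize to \eqref{sparsePCAscaling}, where $C_{F,n}\le C|\ln\rho_{V,n}|$, $\alpha_n\to\alpha$, $\rho_{U,n}\to\rho_U>0$ and $\lambda_n=\Theta(\sqrt{|\ln\rho_{V,n}|/\rho_{V,n}})$, and optimize over $\delta_n$ by balancing the $\delta_n^{-2}$ and $\delta_n$ contributions in \eqref{generic_quenched_wishart}; one checks $\delta_n=o(s_n)$ under the stated scalings (using $\beta<1/3$), and substituting back reproduces exactly $C\big(\tfrac1n\sqrt{|\ln\rho_{V,n}|/\rho_{V,n}}\,(s_n+\sqrt{\rho_{V,n}|\ln\rho_{V,n}|})^2\big)^{1/3}$, valid whenever this quantity is $\omega(s_n/n)$. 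The main, and rather mild, obstacle is thus purely bookkeeping: carrying the $R_V$-direction as a harmless multiplicative interval length and tracking the powers of $\alpha_n,\rho_{U,n},\rho_{V,n}$ in Proposition~\ref{prop:fconc_wishart} and in the bound on $|\tilde f'|$, so that after inserting the specific $\lambda_n$ of \eqref{sparsePCAscaling} the exponent of $\rho_{V,n}$ inside the cube root comes out as $-\tfrac12$; no genuinely new idea is required.
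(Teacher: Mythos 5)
Your proposal is correct and follows essentially the same route as the paper's own proof: the same $R_U$-corrected free energies $\tilde F,\tilde f$, concavity plus Lemma~\ref{lemmaConvexity}, the free-energy concentration of Proposition~\ref{prop:fconc_wishart}, the change of variables over the enlarged rectangle $\mathcal{R}$ for the $C^{\pm}_{\delta}$ terms, and the final optimization over $\delta_n$ balancing the $\delta_n^{-2}$ and $\delta_n$ contributions. The only differences are cosmetic (unspecified constants and powers of $\alpha_n$ absorbed into $C$, and an explicit verification of the concavity of $\tilde F$ that the paper leaves implicit), so nothing further is needed.
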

\begin{proof}
Consider the following functions of $R_U({\bm \epsilon})$:
\begin{align*}
 & \tilde F(R_U({\bm \epsilon})) \equiv F(R({\bm \epsilon})) +S\frac{\sqrt{R_U({\bm\epsilon})}}{n} \sum_{i=1}^m\vert \tilde Z_{V,i}\vert\,,
 \nonumber \\ &
 \tilde f(R_U({\bm \epsilon})) \equiv \mathbb{E} \,\tilde F(R({\bm \epsilon}))= f(R({\bm \epsilon})) + S\alpha_n \sqrt{R_U({\bm \epsilon})}  \mathbb{E}\,\vert \tilde Z_{V,1}\vert\,.
\end{align*}
Both functions are concave in ${R_U({\bm \epsilon})}$. Letting $A_n \equiv \frac{1}{n}\sum_{i=1}^m \vert \tilde Z_{V,i}\vert -\alpha_n\mathbb{E}\,\vert \tilde Z_{V,1}\vert$, Lemma \ref{lemmaConvexity} implies
\begin{align}\label{usable-inequ-wishart}
\alpha_n\vert \langle \mathcal{L}_V\rangle_t - \mathbb{E}\langle \mathcal{L}_V\rangle_t\vert
\leq &
\delta^{-1} \sum_{u\in \{R_U -\delta,\, R_U,\, R_U+\delta\}}
 \big(\vert F(R_U=u) - f(R_U=u) \vert + S\vert A_n \vert \sqrt{u} \big)
 \nonumber\\
 &
  + C_\delta^+(R_U) + C_\delta^-(R_U) + \frac{S\vert A_n\vert}{2\sqrt{\epsilon_U}} 
\end{align}
where $C_\delta^-(R_U)\equiv \tilde f'(R_U-\delta)-\tilde f'(R_U)\ge 0$ 
and $C_\delta^+(R_U)\equiv \tilde f'(R_U)-\tilde f'(R_U+\delta)\ge 0$. 
We used $R_U\ge \epsilon_U$. We have $\mathbb{E}[A_n^2]  \le  \alpha_n/n$. $\delta$ will 
be chosen strictly smaller than $s_n$ so that $R_U -\delta \geq \epsilon_U - \delta \geq s_n -\delta$ remains 
positive. We square the identity \eqref{usable-inequ-wishart} and take its expectation. Then using $(\sum_{i=1}^pv_i)^2 \le p\sum_{i=1}^pv_i^2$, and 
that $R_U\le 2s_n+\lambda_n\rho_{U,n}$, as well as the free energy concentration Proposition \ref{prop:fconc_wishart}
\begin{align}\label{intermediate-wishart}
 \frac{\alpha_n^2}{9}\mathbb{E}\big[(\langle \mathcal{L}_V\rangle_t - \mathbb{E}\langle \mathcal{L}_V\rangle_t)^2\big]
 \leq \, &
 \frac{3}{n\delta^2} \Big(C_{F,n} +S\alpha_n(2s_n+\lambda_n\rho_{U,n}+\delta)\Big) 
 \nonumber \\ & 
 + C_\delta^+(R_U)^2 + C_\delta^-(R_U)^2
 + \frac{S\alpha_n}{4n\epsilon_U} \,.
 \end{align}
 We have $|\tilde f'(R_U)|  \leq \alpha_n(\rho_{V,n}  +S/\sqrt {\epsilon_U} )/2$. Thus $|C_\delta^\pm(R_U)|\le \alpha_n(\rho_{V,n}  +S/\sqrt{s_n-\delta})$. We reach, using the regularity of the map ${\bm \epsilon}\mapsto R({\bm \epsilon})$ and that $R([s_n,2s_n]^2)\subseteq [s_n, 2s_n+\lambda_n\rho_{U,n}]\times  [s_n, 2s_n+\lambda_n\alpha_n\rho_{V,n}]$,
\begin{align*}
 \int_{[s_n,2s_n]^2} d{\bm \epsilon}\, &\big\{C_\delta^+(R_U({\bm \epsilon}))^2 + C_\delta^-(R_U({\bm \epsilon}))^2\big\}\nn
 &\leq 
 \alpha_n\Big(\rho_{V,n} +\frac{S}{\sqrt {s_n-\delta}}\Big)
 \int_{[s_n,2s_n]^2} d{\bm \epsilon}\, \big\{C_\delta^+(R_U) + C_\delta^-(R_U)\big\}
 \nonumber \\ 
 &\leq  \alpha_n\Big(\rho_{V,n} +\frac{S}{\sqrt {s_n-\delta}}\Big)
 \int_{R([s_n,2s_n]^2)} dR({\bm \epsilon})\, \big\{C_\delta^+(R_U({\bm \epsilon})) + C_\delta^-(R_U({\bm \epsilon}))\big\}
 \nonumber \\ 
 &\leq  \alpha_n\Big(\rho_{V,n} +\frac{S}{\sqrt {s_n-\delta}}\Big)(s_n+\lambda_n\alpha_n\rho_{V,n})
 \int_{s_n}^{2s_n+\lambda_n\rho_{U,n}} dR_U \, \big\{C_\delta^+(R_U) + C_\delta^-(R_U)\big\}
 \nonumber \\  
&= \alpha_n\Big(\rho_{V,n} +\frac{S}{\sqrt {s_n-\delta}}\Big)(s_n+\lambda_n\alpha_n\rho_{V,n})\Big[\Big(\tilde f(s_n+\delta) - \tilde f(s_n-\delta)\Big)\nn
&\qquad+ \Big(\tilde f(2s_n+\lambda_n\rho_{U,n}-\delta) - \tilde f(2s_n+\lambda_n\rho_{U,n}+\delta)\Big)\Big]\nn
&\le 2\delta\alpha_n^2\Big(\rho_{V,n} +\frac{S}{\sqrt {s_n-\delta}}\Big)^2(s_n+\lambda_n\alpha_n\rho_{V,n})\,.
\end{align*}
For the last inequality we employed the mean value theorem to assert $|\tilde f(R_U-\delta) - \tilde f(R_U+\delta)|\le \delta\alpha_n(\rho_{V,n}  +S/\sqrt{s_n-\delta})$. Thus, integrating \eqref{intermediate-wishart} over ${\bm\epsilon}\in [s_n, 2s_n]^2$ yields that for any $\delta_n <s_n $ we have (the sequence $C_{F,n}$ comes from Proposition \ref{prop:fconc_wishart})
\begin{align}
 \int_{[s_n,2s_n]} d{\bm \epsilon}\, 
 \mathbb{E}\big[(\langle \mathcal{L}_V\rangle_t - \mathbb{E}\langle \mathcal{L}_V\rangle_{t})^2\big]&\leq \frac{27s_n^2}{\delta_n^2n\alpha_n^2}\Big(C_{F,n} +S \alpha_n(2s_n+\lambda_n\rho_{U,n}+\delta_n)\Big) \nn
 &\hspace{-1cm} +18\delta_n \Big(\rho_{V,n} +\frac{S}{\sqrt {s_n-\delta_n}}\Big)^2(s_n+\lambda_n\alpha_n\rho_{V,n}) +  \frac{9 Ss_n\ln 2}{4\alpha_nn} .\label{generic_quenched_wishart}
\end{align}
Under the scalings \eqref{sparsePCAscaling} and choosing $\delta_n= o(s_n)$ this simplifies to
\begin{align}
 &\int_{[s_n,2s_n]} \!d{\bm \epsilon}\, 
 \mathbb{E}\big[(\langle \mathcal{L}_V\rangle_t - \mathbb{E}\langle \mathcal{L}_V\rangle_{t})^2\big]\leq\frac{Cs_n^2}{\delta_n^2n}\sqrt{\frac{|\ln \rho_{V,n}|}{\rho_{V,n}}} \!+\!\frac{C\delta_n}{s_n}\big(s_n+\sqrt{\rho_{V,n}|\ln \rho_{V,n}|}\big) \! +\!  \frac{Cs_n}{n}\label{tofindequivalent}
\end{align}
where the constant $C$ is generic, and may change from place to place. Optimizing $\delta_n$ yields $$\delta_n^3 = \Theta\Big(\frac{s_n^3\sqrt{|\ln \rho_{V,n}|}}{n(s_n\sqrt{\rho_{V,n}}+\rho_{V,n}\sqrt{|\ln\rho_{V,n}|})}\Big)\,.$$
It is easy to see that if $\rho_{V,n}=\omega(1/n)$, i.e., $n\rho_{V,n}\to +\infty$ then $\delta_n= o(s_n)$. This proves the result.
\end{proof}

\subsubsection{Controlling $\bm{Q_U}$}
For the control of $Q_{U}\equiv\bu\cdot \bU/n$ we follow  the same derivation, except for working with 
\begin{align*}
\mathcal{L}_U \equiv \frac{1}{n}\Big(\frac{\|\bu\|^2}{2} - \bu\cdot \bU - \frac{\bu\cdot \tilde \bZ_U}{2\sqrt{R_V({\bm \epsilon})}} \Big)\,.
\end{align*}
The overlap fluctuations are bounded as $\mathbb{E}\langle (Q_U - \mathbb{E}\langle Q_U \rangle_{t})^2\rangle_{t} \le 4\,\mathbb{E}\langle (\mathcal{L}_U - \mathbb{E}\langle \mathcal{L}_U\rangle_{t})^2\rangle_{t}$. The free energy $R_V$-derivatives and ${\cal L}_U$ are then related by similar identities as \eqref{first-derivative-wishart}--\eqref{second-derivative-average-wishart} but with $V$ replaced by $U$, $\bv$ by $\bu$ and $\alpha_n$ replaced by $1$. Working out the thermal fluctuations then gives
\begin{align*}
		\int_{[s_n,2s_n]^2} d{\bm \epsilon}\, \mathbb{E}\big\langle (\mathcal{L}_U - \langle \mathcal{L}_U \rangle_t)^2\big\rangle_t  \le \frac{\lambda_n\rho_{U,n}^2}{n}+ \frac{s_n\rho_{U,n}}{n}\Big(1+\frac{\ln 2}{4}\Big)\,.
	\end{align*}	
Considering now the quenched fluctuations, a careful derivation of the equivalent identity to \eqref{tofindequivalent} under the scalings \eqref{sparsePCAscaling} yields (under the assumption $\delta_n=o(s_n)$)
\begin{align*}
 &\int_{[s_n,2s_n]} d{\bm \epsilon}\, 
 \mathbb{E}\big[(\langle \mathcal{L}_U\rangle_t - \mathbb{E}\langle \mathcal{L}_U\rangle_{t})^2\big]\leq\frac{Cs_n^2}{\delta_n^2n}|\ln \rho_{V,n}| +C\frac{\delta_n}{s_n}\sqrt{\frac{|\ln \rho_{V,n}|}{\rho_{V,n}}}  +  \frac{Cs_n}{n}\,.
\end{align*}
Optimizing over $\delta_n$ yields $\delta_n = \Theta(s_nn^{-1/3}(\rho_{V,n}|\ln \rho_{V,n}|)^{1/6})$, so $\delta_n= o(s_n)$. This finally gives, once combined with the thermal fluctuations bound:
\begin{proposition}[Total fluctuations of $\mathcal{L}_U$]\label{LU-concentration-wishart} Under the scalings \eqref{sparsePCAscaling} there exists a constant $C>0$ independent of $n$ such that $$\int_{[s_n,2s_n]^2} d{\bm \epsilon}\,\mathbb{E}\big\langle (\mathcal{L}_U - \mathbb{E}\langle \mathcal{L}_U\rangle_{t})^2\big\rangle_{t} \le C\Big(\frac{(\ln \rho_{V,n})^2}{n\rho_{V,n}}\Big)^{1/3}$$
as long as the right hand side is $\omega(s_n/n)$.
\end{proposition}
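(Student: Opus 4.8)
The plan is to follow verbatim the template used for $\mathcal{L}_V$ in Proposition~\ref{L-concentration-wishart}, with the substitutions $V\to U$, $\bv\to\bu$, $m\to n$ and $\alpha_n\to 1$ (these affect only constants, not scaling exponents), and then re-optimise the free parameter $\delta_n$ in light of the fact that here the relevant combinatorial prefactor drops from $\alpha_n^2$ to $1$. Concretely, I would first record the analogues of \eqref{first-derivative-wishart}--\eqref{second-derivative-average-wishart}: $\frac{dF}{dR_V} = \langle\mathcal{L}_U\rangle_t$, an identity expressing $\frac1n\frac{d^2F}{dR_V^2}$ as $-\langle(\mathcal{L}_U - \langle\mathcal{L}_U\rangle_t)^2\rangle_t$ plus a manifestly small gaussian-by-parts term, $\frac{df}{dR_V} = -\frac12\mathbb{E}\langle Q_U\rangle_t$, and $\frac1n\frac{d^2f}{dR_V^2} = -\mathbb{E}\langle(\mathcal{L}_U - \langle\mathcal{L}_U\rangle_t)^2\rangle_t + \frac{1}{4n^2R_V}\mathbb{E}\langle\|\bu - \langle\bu\rangle_t\|^2\rangle_t$; in particular $f$ is concave in $R_V$ and $|f'(R_V)|\le\rho_{U,n}/2$. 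Then I split the total fluctuation of $\mathcal{L}_U$ into its thermal piece $\mathbb{E}\langle(\mathcal{L}_U - \langle\mathcal{L}_U\rangle_t)^2\rangle_t$ and its quenched piece $\mathbb{E}[(\langle\mathcal{L}_U\rangle_t - \mathbb{E}\langle\mathcal{L}_U\rangle_t)^2]$ and integrate each over ${\bm\epsilon}\in[s_n,2s_n]^2$.

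For the thermal part, integrating the identity for $\frac1n\frac{d^2f}{dR_V^2}$, using $R_V\ge\epsilon_V\ge s_n$, the regularity of the flow ${\bm\epsilon}\mapsto R({\bm\epsilon})$ (Jacobian $\ge 1$, so a change of variables only enlarges the region of integration), the concavity of $f$ in $R_V$ (so the $\frac{d^2f}{dR_V^2}$ integral telescopes), and $|f'(R_V)|\le\rho_{U,n}/2$, yields $\int_{[s_n,2s_n]^2}\mathbb{E}\langle(\mathcal{L}_U - \langle\mathcal{L}_U\rangle_t)^2\rangle_t\,d{\bm\epsilon}\le \lambda_n\rho_{U,n}^2/n + s_n\rho_{U,n}(1+\tfrac{\ln2}{4})/n$, which under \eqref{sparsePCAscaling} is $O\big(n^{-1}\sqrt{|\ln\rho_{V,n}|/\rho_{V,n}} + s_n/n\big)$ since $\rho_{U,n}=\Theta(1)$ and $\lambda_n=\Theta(\sqrt{|\ln\rho_{V,n}|/\rho_{V,n}})$.

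For the quenched part, I introduce the corrected free energies $\tilde F(R_V({\bm\epsilon})) = F(R({\bm\epsilon})) + S\frac{\sqrt{R_V({\bm\epsilon})}}{n}\sum_{i=1}^n|\tilde Z_{U,i}|$ and $\tilde f(R_V) = \mathbb{E}\tilde F$, both concave in $R_V$ by the second-derivative identity above, and apply the concavity Lemma~\ref{lemmaConvexity} in the variable $R_V$. This bounds $|\langle\mathcal{L}_U\rangle_t - \mathbb{E}\langle\mathcal{L}_U\rangle_t|$ by $\delta^{-1}$ times free-energy fluctuations at the three points $R_V, R_V\pm\delta$, plus convexity defects $C_\delta^\pm(R_V)$, plus a term in $A_n = \frac1n\sum_{i=1}^n|\tilde Z_{U,i}| - \mathbb{E}|\tilde Z_{U,1}|$ with $\mathbb{E}[A_n^2]\le 1/n$. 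Squaring, taking expectation, bounding the free-energy fluctuations by Proposition~\ref{prop:fconc_wishart} (where $C_{F,n}\le C|\ln\rho_{V,n}|$ under \eqref{sparsePCAscaling}), using $R_V\le 2s_n+\lambda_n\alpha_n\rho_{V,n}$ and $|\tilde f'(R_V)|\le\frac12(\rho_{U,n} + S/\sqrt{s_n-\delta})$, and integrating over ${\bm\epsilon}$ — where the $C_\delta^\pm$ integrals telescope via the mean value theorem and the regular flow — yields, for any $\delta=\delta_n=o(s_n)$,
$$\int_{[s_n,2s_n]^2}\mathbb{E}\big[(\langle\mathcal{L}_U\rangle_t - \mathbb{E}\langle\mathcal{L}_U\rangle_t)^2\big]\,d{\bm\epsilon} \le \frac{Cs_n^2}{\delta_n^2 n}|\ln\rho_{V,n}| + C\frac{\delta_n}{s_n}\sqrt{\frac{|\ln\rho_{V,n}|}{\rho_{V,n}}} + \frac{Cs_n}{n}.$$
Balancing the first two terms gives $\delta_n^3 = \Theta\big(s_n^3 n^{-1}(\rho_{V,n}|\ln\rho_{V,n}|)^{1/2}\big)$, i.e. $\delta_n = \Theta\big(s_n n^{-1/3}(\rho_{V,n}|\ln\rho_{V,n}|)^{1/6}\big)$, which is $o(s_n)$ as needed; substituting back, the common optimised value of those two terms is exactly $\big((\ln\rho_{V,n})^2/(n\rho_{V,n})\big)^{1/3}$, and the leftover $Cs_n/n$ is absorbed under the standing hypothesis that the right-hand side is $\omega(s_n/n)$.

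Finally, combining the two pieces gives the claim, since the thermal contribution $O\big(n^{-1}\sqrt{|\ln\rho_{V,n}|/\rho_{V,n}}\big)$ is of strictly smaller order than $\big((\ln\rho_{V,n})^2/(n\rho_{V,n})\big)^{1/3}$ — their ratio is $O\big(n^{-2/3}(\rho_{V,n}|\ln\rho_{V,n}|)^{-1/6}\big)=o(1)$ because $\rho_{V,n}=\omega(1/n)$ — and the residual $s_n/n$ terms are again absorbed by the $\omega(s_n/n)$ assumption. The main obstacle I anticipate is purely computational bookkeeping: tracking how the substitution $\alpha_n\to 1$ and the bound $C_{F,n}\le C|\ln\rho_{V,n}|$ change the two competing powers of $\delta_n$, so that the optimised exponent comes out as $\tfrac13$ of $(\ln\rho_{V,n})^2/(n\rho_{V,n})$ rather than of the $\mathcal{L}_V$ combination, and then re-verifying both $\delta_n=o(s_n)$ and the domination of the thermal term — both of which hinge precisely on the assumption $\rho_{V,n}=\omega(1/n)$ built into \eqref{sparsePCAscaling}. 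No genuinely new idea is required beyond the $\mathcal{L}_V$ argument.
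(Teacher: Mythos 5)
Your proposal is correct and follows essentially the same route as the paper, which treats $\mathcal{L}_U$ by repeating the $\mathcal{L}_V$ argument with $V\to U$, $\bv\to\bu$, $\alpha_n\to 1$, obtaining exactly your intermediate bound $\frac{Cs_n^2}{\delta_n^2 n}|\ln\rho_{V,n}| + C\frac{\delta_n}{s_n}\sqrt{|\ln\rho_{V,n}|/\rho_{V,n}} + \frac{Cs_n}{n}$ and the same optimal $\delta_n=\Theta\big(s_n n^{-1/3}(\rho_{V,n}|\ln\rho_{V,n}|)^{1/6}\big)$. Your bookkeeping of the thermal term and of the $\omega(s_n/n)$ absorption matches the paper's as well.
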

\section{Proof of inequality \eqref{remarkable}}\label{proof:remarkable_id}

Let us drop the index in the bracket $\langle -\rangle_t$ and simply denote $R\equiv R_{n}(t,\epsilon)$. We start by proving the identity
\begin{align}
-2\,\mathbb{E}\big\langle Q(\mathcal{L} - \mathbb{E}\langle \mathcal{L}\rangle)\big\rangle
&=\mathbb{E}\big\langle (Q - \mathbb{E}\langle Q \rangle)^2\big\rangle
+ \mathbb{E}\big\langle (Q-   \langle Q \rangle)^2\big\rangle\,.\label{47}
\end{align}
Using the definitions $Q \equiv \frac1n \bx\cdot \bX$ and \eqref{def_L} gives
\begin{align}
2\,\mathbb{E}\big\langle Q ({\cal L} -\mathbb{E}\langle {\cal L} \rangle) \big\rangle
	 =\, &  \mathbb{E} \Big [ \frac{1}{n}\big\langle Q \|\bx\|^2 \big\rangle - 2 \langle Q^2 \rangle - \frac{1}{n\sqrt{R}}\big\langle Q (\tilde \bZ \cdot \bx) \big\rangle \Big ] \nonumber \\
	&  -  \mathbb{E}\langle Q \rangle  \, \mathbb{E} \Big [ \frac{1}{n}\big\langle \|\bx\|^2 \big\rangle - 2 \langle Q \rangle - \frac{1}{n\sqrt{R}} \tilde \bZ \cdot \langle \bx\rangle \Big ]\,. \label{eq:QL:1}
\end{align}
The gaussian integration by part formula \eqref{GaussIPP} with Hamiltonian \eqref{Ht} yields
\begin{align*}
\frac{1}{n\sqrt{R}}\mathbb{E}\big\langle Q (\tilde \bZ \cdot \bx) \big\rangle 
	& = \frac{1}{n}\mathbb{E}\big\langle Q \|\bx\|^2 \big\rangle - \frac{1}{n}\EE\big\langle Q (\bx \cdot \langle \bx \rangle) \big \rangle \overset{\rm N}{=}\frac1{n}\mathbb{E}\big\langle Q \|\bx\|^2 \big\rangle - \EE [\langle Q  \rangle^2]\,.
\end{align*}
Fort the last equality we used the Nishimori identity as follows
$$
\frac1n\EE\big\langle Q (\bx \cdot \langle \bx \rangle) \big \rangle=\frac1{n^2}\EE\big\langle (\bx\cdot \bX) (\bx \cdot \langle \bx \rangle) \big \rangle\overset{\rm N}{=} \frac1{n^2}\EE \big\langle (\bX\cdot \bx) (\bX \cdot \langle \bx \rangle) \big \rangle=\EE [\langle Q \rangle^2]\,.
$$ 
Note that we already proved \eqref{NishiTildeZ}, namely 
$$
\frac{1}{n\sqrt{R}}\mathbb{E} \langle \tilde \bZ\cdot  \bx \rangle 
=   \frac1n \mathbb{E}\big\langle \|\bx\|^2 \big\rangle - \EE \langle Q\rangle \, .
$$
Therefore \eqref{eq:QL:1} finally simplifies to 
\begin{align*}
2\,\mathbb{E}\big\langle Q &({\cal L} -\mathbb{E}\langle {\cal L} \rangle) \big\rangle = \mathbb{E}[\langle Q\rangle^2] - 2\,\mathbb{E}\langle Q^2\rangle +  \mathbb{E}[\langle Q\rangle]^2 = -  \big ( \mathbb{E}\langle Q^2\rangle - \mathbb{E}[\langle Q\rangle]^2 \big ) -  \big ( \mathbb{E}\langle Q^2\rangle - \mathbb{E}[\langle Q\rangle^2] \big ).
\end{align*} 
which is identity \eqref{47}.

This identity implies the inequality
\begin{align*}
2\big|\mathbb{E}\big\langle Q(\mathcal{L} - \mathbb{E}\langle \mathcal{L}\rangle)\big\rangle\big|&=2\big|\mathbb{E}\big\langle (Q-\mathbb{E}\langle Q \rangle)(\mathcal{L} - \mathbb{E}\langle \mathcal{L}\rangle)\big\rangle\big|\ge \mathbb{E}\big\langle (Q - \mathbb{E}\langle Q \rangle)^2\big\rangle
\end{align*}
and an application of the Cauchy-Schwarz inequality gives
\begin{align*}
2\big\{\mathbb{E}\big\langle (Q-\mathbb{E}\langle Q \rangle)^2\big\rangle\, \mathbb{E}\big\langle(\mathcal{L} - \mathbb{E}\langle \mathcal{L}\rangle)^2\big\rangle \big\}^{1/2}	\ge\mathbb{E}\big\langle (Q - \mathbb{E}\langle Q \rangle)^2\big\rangle\,.
\end{align*}
This ends the proof of \eqref{remarkable}.\\
\section{Heurisitic derivation of the phase transition}\label{sec:allOrNothing}
\subsection{The Wigner case}
In this section we analyze the potential function in order to heuristically locate the information theoretic transition in the special case of the spiked Wigner model with Bernoulli prior $P_X={\rm Ber}(\rho)$. The main hypotheses behind this computation are $i)$ that the SNR $\lambda=\lambda(\rho)$ varies with $\rho$ as $\lambda=4\gamma |\ln \rho|/\rho$ with $\gamma >0$ and independent of $\rho$; that $ii)$ in this SNR regime the potential possesses only two minima $\{q^+,q^-\}$ that approach, as $\rho\to 0_+$, the boundary values $q^-=o(\rho/|\ln \rho|)$ and $q^+\to\rho$. For the Bernoulli prior the potential explicitly reads
\begin{align*}
	i_n^{\rm pot}(q,\lambda,\rho)&\equiv\frac{\lambda (q^2+\rho^2)}{4} -(1-\rho)\EE\ln\Big\{1-\rho+\rho e^{-\frac12\lambda q+\sqrt{\lambda q}Z}\Big\}-\rho\,\EE\ln\Big\{1-\rho+\rho e^{\frac12\lambda q+\sqrt{\lambda q}Z}\Big\}\,.
\end{align*}
We used that 
\begin{align}
I(X;\sqrt{\gamma} X+Z)=-\EE \ln \int dP_X(x)e^{-\frac12\gamma x^2+\gamma Xx+\sqrt{\gamma}Zx}+\frac12\EE[X^2]\gamma\,.	\label{linkPsiI}
\end{align}
Let us compute this function around its assumed minima. Starting with $q^-=o(\rho/|\ln \rho|)$ (this means that this quantity goes to $0_+$ faster than $\rho/|\ln \rho|$ as $\rho$ vanishes) we obtain at leading order after a careful Taylor expansion in $\lambda q^- \to 0_+$ (the symbol $\approx$ means equality up to lower order terms as $\rho \to 0_+$)
\begin{align}
	i_n^{\rm pot}(q^-,\lambda,\rho)
	&\approx \frac{\lambda (q^-)^2}4+\frac{\lambda\rho^2}4-\frac{\rho(\lambda q^-)^2}8\approx\frac{\lambda\rho^2}4= \gamma \rho |\ln \rho|\,.	\label{solq-}
\end{align}
For the other minimum $q^+\to\rho$, because $\lambda q^+ \to +\infty$ the $Z$ contribution in the exponentials appearing in the potential can be dropped due to the precense of the square root. We obtain at leading order
\begin{align*}
	i_n^{\rm pot}(q^+,\lambda,\rho)&\approx 2\gamma\rho|\ln \rho|-\ln\{1+\rho^{1+2\gamma}\}-\rho \ln\{1+\rho^{1-2\gamma}\}\,.
\end{align*}
Here there are two cases to consider: $\gamma>1/2$ and $0<\gamma \le 1/2$. We start with $\gamma>1/2$. In this case the potential simplifies to
\begin{align*}
	i_n^{\rm pot}(q^+,\lambda,\rho)&\approx  \rho|\ln \rho|\,. 
\end{align*}
Now for $0<\gamma\le 1/2$ we have
\begin{align*}
	i_n^{\rm pot}(q^+,\lambda,\rho)&\approx  2\gamma\rho|\ln \rho|\,.
\end{align*}
The information theoretic threshold $\lambda_c=\lambda_c(\rho)$ is defined as the first non-analiticy in the mutual information. In the present setting this corresponds to a discontinuity of the first derivative w.r.t. the SNR of the mutual information (and we therefore speak about a``first-order phase transition''). By the I-MMSE formula this threshold manifests itself as a discontinuity in the MMSE. In the high sparsity regime $\rho \to 0_+$ the transition is actually as sharp as it can be with a $0$--$1$ behavior. This translates, at the level of the potential, as the SNR threshold where its minimum is attained at $q^-$ just below and instead at $q^+$ just above. So we equate $\lim_{\rho\to 0_+}i_n^{\rm pot}(q^-,\lambda_c,\rho)=\lim_{\rho\to 0_+}i_n^{\rm pot}(q^+,\lambda_c,\rho)$ and solve for $\lambda_c$. This is only possible, under the constraint $\gamma >0$ independent of $\rho$, in the case $\gamma >1/2$ and gives $\gamma=1$ which is the claimed information theoretic threshold $\lambda_c(\rho)=4|\ln \rho|/\rho$. Repeating this analysis for the Bernoulli-Rademacher prior $P_X=(1-\rho)\delta_0+\frac12\rho(\delta_{-1}+\delta_1)$ leads the same threshold.

Another piece of information gained from this analysis is that around the transition the mutual information divided by $n$ is $\Theta(\rho |\ln \rho|)$. Therefore the proper normalization for the mutual information is $(n \rho|\ln \rho|)^{-1}I(\bX;\bW)$ for it to have a well defined non trivial limit in the regime $\rho\to 0_+$.

Finally for $\gamma \le 1$ the minimum of the potential is attained at $q^-$ and the rescaled mutual information $(n \rho|\ln \rho|)^{-1}I(\bX;\bW)$ equals $\gamma$ as seen from \eqref{solq-}. If instead $\gamma \ge 1$ the minimum is attained at $q^+$ and the mutual information instead saturates to $1$, so we get formula \eqref{limitingpotential}.
\subsection{The Wishart case}
We do the same analysis but for the spiked covariance model with Bernoulli-Rademacher distributed $\bV$, namely $P_U={\cal N}(0,1)$ (so $\rho_U=1$) and $P_{V}=(1-\rho)\delta_{0}+\frac{1}{2}\rho(\delta_{-1}+\delta_1)$. But again, the analysis is similar for Bernoulli prior $P_{V}={\rm Ber}(\rho)$ and leads to the same threshold. In the Bernoulli-Rademacher case the potential simplifies to
\begin{align*}
i_n^{\rm pot}(q_U,q_V,\lambda,\alpha,1,\rho)  =  \frac{\lambda \alpha}{2} (q_U-1)(q_V-\rho)
+ \frac12\ln(1+\lambda \alpha q_V) +\alpha I_{n}(V;\sqrt{\lambda q_U}V+Z)\,.
\end{align*}
 This potential is concave in $q_V$. Equating the $q_V$-derivative of this potential to zero yields the stationary condition
\begin{align}
	q_U=q_U(q_V) = \frac{\lambda \alpha q_V}{1+\lambda \alpha q_V} \label{qUfuncqV}\,.
\end{align}
So plugging back this supremum in the two-letters potential and using again \eqref{linkPsiI} gives
\begin{align*}
i_n^{\rm pot}(q_U,q_V,\lambda,\alpha,1,\rho)  &=  \frac{\lambda \alpha}{2} \frac{\rho-q_V}{1+\lambda \alpha q_V}
+ \frac12\ln(1+\lambda \alpha q_V)+\frac{\rho\alpha\lambda q_U}{2}\nn
&\qquad\qquad\qquad\qquad\qquad-\alpha\EE \ln \int dP_V(v)e^{-\frac12\lambda q_U v^2+\lambda q_U Vv+\sqrt{\lambda q_U}Zv}
\end{align*}
where $q_U=q_U(q_V)$ verifies \eqref{qUfuncqV}. It finally becomes, using the Bernoulli-Rademacher prior for $P_V$ as well as $Z=-Z$ in law (because $Z\sim{\cal N}(0,1)$),
\begin{align}
i_n^{\rm pot}(q_U,q_V,\lambda,\alpha,1,\rho)  &=  \frac{\lambda \alpha}{2} \frac{\rho-q_V}{1+\lambda \alpha q_V}
+ \frac12\ln(1+\lambda \alpha q_V)+\frac{\rho\alpha\lambda q_U}{2}\nn
&\qquad-\alpha\EE\Big[ (1-\rho)\ln \Big\{1-\rho + \frac{\rho}{2}e^{-\frac12\lambda q_U+\sqrt{\lambda q_U}Z}+\frac{\rho}{2}e^{-\frac12\lambda q_U-\sqrt{\lambda q_U}Z}\Big\}\nn
&\qquad\qquad+\rho\ln \Big\{1-\rho + \frac{\rho}{2}e^{\frac12\lambda q_U+\sqrt{\lambda q_U}Z}+\frac{\rho}{2}e^{-\frac32\lambda q_U-\sqrt{\lambda q_U}Z}\Big\}\Big]\,.\label{iuv}
\end{align}
Similarly as for the Wigner case the hypotheses behind this computation are $i)$ that the SNR $\lambda=\lambda(\rho)$ varies with $\rho$ as $\lambda=\sqrt{4\gamma |\ln \rho|/(\alpha\rho)}$ with $\gamma >0$ and independent of $\rho$; that $ii)$ in this SNR regime the potential possesses only two minima $\{q_V^+,q_V^-\}$ that approach, as $\rho\to 0_+$, the boundary values $q_V^-=o(\rho/|\ln \rho|)$ and $q_V^+\to\rho$. This implies that as $\rho \to 0_+$
\begin{align*}
	\lambda q_U(q_V^+)\to \frac{4\gamma |\ln \rho|}{1+\sqrt{4\gamma\alpha\rho |\ln \rho|}}\to +\infty \quad \text{and}\quad \lambda q_U(q_V^-)= o(1)\to 0_+\,.
\end{align*}
Because both $\lambda q_V^{+} =\Theta(\sqrt{\rho |\ln \rho|}) \to 0_+$ and  $\lambda q_V^{-} =o(\sqrt{\rho /|\ln \rho|}) \to 0_+$ we have
\begin{align*}
-\frac{\lambda \alpha}{2} \frac{q_V^\pm}{1+\lambda \alpha q_V^\pm}+\frac12\ln(1+\lambda \alpha q_V^\pm)	\approx -\frac{\lambda\alpha q_V^\pm}{2}+\frac{(\lambda\alpha q_V^\pm)^2}{2} +\frac{\lambda\alpha q_V^\pm}{2}-\frac{(\lambda\alpha q_V^\pm)^2}{4}=\frac{(\lambda\alpha q_V^\pm)^2}4\,.
\end{align*}
We start by considering the case $q_V^-$. In this case a Taylor expansion gives at leading order
\begin{align}
i_n^{\rm pot}(q_U(q_V^-),q_V^-,\lambda,\alpha,1,\rho)  &\approx  \frac{\rho \lambda \alpha}{2} -\frac{\rho(\lambda \alpha)^2q_V^-}{2}
+\frac{(\lambda\alpha q_V^-)^2}4+\frac{\rho\alpha\lambda q_U(q_V^-)}{2}-\frac{\alpha\rho (\lambda q_U(q_V^-))^2}8\nn
&\approx \frac{\rho \lambda \alpha}{2}+o(\rho)=\sqrt{\alpha \gamma\rho |\ln \rho|} +o(\rho)\,.\label{iqv-}
\end{align}
We now consider the other minimum $q_V^+\to \rho$. In this case we have $(\lambda\alpha q_V^+)^2/4\to-\alpha \gamma\rho \ln \rho$. As $\lambda q_U(q_V^+)\to +\infty$ the $Z$ contributions in the exponentials appearing in \eqref{iuv} are sub-dominant and therefore dropped. 
 We obtain at leading order
\begin{align*}
i_n^{\rm pot}(q_U(q_V^+),q_V^+,\lambda,\alpha,1,\rho)  &\approx  \frac{\rho \lambda \alpha}{2} -\frac{\rho(\lambda \alpha)^2q_V^+}{2}
+\frac{(\lambda\alpha q_V^+)^2}4+\frac{\rho\alpha\lambda q_U(q_V^+)}{2}-\alpha\rho\ln \Big\{1 + \frac{\rho}{2}e^{\frac12\lambda q_U(q_V^+)}\Big\}\nn
&\approx \frac{\rho \lambda\alpha}{2}+2\alpha \gamma\rho \ln \rho-\alpha \gamma\rho\ln \rho-2\alpha \gamma\rho\ln\rho
-\alpha\rho\ln \Big\{1 + \rho \,e^{\frac12\lambda q_U(q_V^+)}\Big\}\nn
&=\sqrt{\alpha\gamma \rho  |\ln \rho|}-\alpha \gamma\rho \ln \rho
-\alpha\rho\ln \big\{1 + \rho^{1-2\gamma}\big\}\,.
\end{align*}
We need again to distinguish cases. Starting with $\gamma > 1/2$ this becomes
\begin{align*}
i_n^{\rm pot}(q_U(q_V^+),q_V^+,\lambda,\alpha,1,\rho)  &\approx  \sqrt{\alpha\gamma \rho  |\ln \rho|}
-\alpha(1-\gamma)\rho\ln \rho\,.
\end{align*}
We recall that here $\gamma= \Theta(1)$ so in the regime $\rho\to 0_+$ the right hand side remains positive.
If instead $\gamma \le 1/2$ then
\begin{align*}
i_n^{\rm pot}(q_U(q_V^+),q_V^+,\lambda,\alpha,1,\rho)
&\approx \sqrt{\alpha\gamma \rho  |\ln \rho|}-\alpha \gamma\rho \ln \rho\,.
\end{align*}
Comparing these two last expressions with \eqref{iqv-}, we see that equating the potential at its two minima in order to locate the phase transition is possible only when $\gamma > 1/2$ (because $\gamma >0$ and independent of $\rho$). This gives $\gamma=1$ and therefore identifies the transition at $\lambda_c=\sqrt{4 |\ln \rho|/(\alpha\rho)}$.

From this analysis we also obtain that the mutual information divided by $n$ is $\Theta(\sqrt{\rho |\ln \rho|})$ which justifies the normalization $(n \sqrt{\rho|\ln \rho|})^{-1}I((\bU,\bV);\bW)$ for it to have a non-trivial limit as $\rho\to 0_+$.
\section{The Nishimori identity}\label{app:nishimori}

\begin{lemma}[Nishimori identity]\label{NishId}
Let $(\bX,\bY)$ be a couple of random variables with joint distribution $P(\bX, \bY)$ and conditional distribution 
$P(\bX| \bY)$. Let $k \geq 1$ and let $\bx^{(1)}, \dots, \bx^{(k)}$ be i.i.d.\ samples from the conditional distribution. We use the bracket $\langle - \rangle$ for the expectation w.r.t. the product measure $P(\bx^{(1)}| \bY)P(\bx^{(2)}| \bY)\ldots P(\bx^{(k)}| \bY)$ and $\mathbb{E}$ for the expectation w.r.t. the joint distribution. Then, for all continuous bounded function $g$ we have 
\begin{align*}
\mathbb{E} \big\langle g(\bY,\bx^{(1)}, \dots, \bx^{(k)}) \big\rangle
=
\mathbb{E} \big\langle g(\bY, \bX, \bx^{(2)}, \dots, \bx^{(k)}) \big\rangle\,. 
\end{align*}	
\end{lemma}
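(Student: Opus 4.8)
The plan is to prove the identity by unwinding the two brackets into explicit integrals and then invoking the single fact that, conditionally on $\bY$, the planted configuration $\bX$ is statistically indistinguishable from the replicas $\bx^{(1)},\dots,\bx^{(k)}$.

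First I would rewrite the left-hand side. Since $g(\bY,\bx^{(1)},\dots,\bx^{(k)})$ depends on $\bX$ only through $\bY$, and the replicas are i.i.d.\ from $P(\,\cdot\,|\bY)$, boundedness of $g$ and Fubini give
\begin{align*}
\mathbb{E}\big\langle g(\bY,\bx^{(1)},\dots,\bx^{(k)})\big\rangle
=\int dP(\bY)\int\prod_{\ell=1}^{k}dP(\bx^{(\ell)}|\bY)\;g(\bY,\bx^{(1)},\dots,\bx^{(k)})\,.
\end{align*}
On the right-hand side the replica $\bx^{(1)}$ is integrated out for free (the integrand does not see it), leaving
\begin{align*}
\mathbb{E}\big\langle g(\bY,\bX,\bx^{(2)},\dots,\bx^{(k)})\big\rangle
=\int dP(\bX,\bY)\int\prod_{\ell=2}^{k}dP(\bx^{(\ell)}|\bY)\;g(\bY,\bX,\bx^{(2)},\dots,\bx^{(k)})\,.
\end{align*}

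The key step is the elementary observation that, by definition of the conditional distribution, $dP(\bX,\bY)=dP(\bY)\,dP(\bX|\bY)$, i.e. given $\bY$ the planted vector $\bX$ has exactly the law $P(\,\cdot\,|\bY)$ of a replica and is conditionally independent of $\bx^{(2)},\dots,\bx^{(k)}$. Renaming the dummy variable $\bX$ as $\bx^{(1)}$ in the last display then turns it into the previous one verbatim, which proves the claim. Equivalently: conditionally on $\bY$, the tuple $(\bX,\bx^{(1)},\dots,\bx^{(k)})$ consists of $k+1$ i.i.d.\ draws from $P(\,\cdot\,|\bY)$, hence is exchangeable, and the identity is precisely the invariance of its joint law under transposing the first two coordinates (here $\bx^{(1)}$, absent from the integrand, plays the role of a bookkeeping slot).

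There is no real obstacle: once the definitions are unpacked the proof is a one-line change of variables. The only care needed is bookkeeping — correctly tracking which arguments $g$ actually depends on and against which measure each variable is integrated — and invoking continuity and boundedness of $g$ to justify Fubini. The substantive content, namely that under the posterior-averaged measure the signal $\bX$ behaves like an extra replica, is what makes the identity so useful in the sum-rule manipulations above, but mathematically it is immediate from Bayes' rule.
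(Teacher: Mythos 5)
Your proof is correct and follows essentially the same route as the paper: both rest on the factorization $dP(\bX,\bY)=dP(\bY)\,dP(\bX|\bY)$, i.e.\ that conditionally on $\bY$ the planted vector $\bX$ is just another i.i.d.\ draw from the posterior, so the two $(k+1)$-tuples have the same law. Your version merely spells out the integrals and the renaming of the dummy variable, which the paper states in words.
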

\begin{proof}
This is a simple consequence of Bayes formula.
It is equivalent to sample the couple $(\bX,\bY)$ according to its joint distribution or to sample first $\bY$ according to its marginal distribution and then to sample $\bX$ conditionally on $\bY$ from the conditional distribution. Thus the two $(k+1)$-tuples $(\bY,\bx^{(1)}, \dots,\bx^{(k)})$ and  $(\bY, \bX, \bx^{(2)},\dots,\bx^{(k)})$ have the same law.	
\end{proof}

\section{Information theoretic properties of gaussian channels}\label{app:gaussianchannels}
In this appendix we prove important information theoretic properties of gaussian channels. These are mostly known \cite{GuoShamaiVerdu_IMMSE,guo2011estimation}, but we adapt them to our setting and provide detailed proofs  for the convenience of the reader.

Let us start with a key relation between the mutual information and the MMSE for gaussian channels. Equation \eqref{I-MMSE-relation} below is called the I-MMSE formula.
\begin{lemma}[I-MMSE formula]\label{app:I-MMSE}
 Consider a signal $\bX\in \mathbb{R}^n$ with $\bX\sim P_X$ that has finite support, and gaussian corrupted data $\bY\sim {\cal N}(\sqrt{R}\,\bX,{\rm I}_n)$ and possibly additional generic data $\bW\sim P_{W|X}(\cdot\,|\bX)$ with $H(\bW)$ bounded. The \emph{I-MMSE formula} linking the mutual information and the MMSE then reads
\begin{align}\label{I-MMSE-relation}
	\frac{d}{dR}I\big(\bX;(\bY,\bW)\big)=\frac{d}{dR}I(\bX;\bY|\bW) = \frac12 {\rm MMSE}(\bX|\bY,\bW)= \frac12 \EE\| \bX -\langle \bx\rangle \|^2\,,
\end{align}
where the Gibbs-bracket $\langle - \rangle$ is the expectation acting on $\bx\sim P(\cdot\,|\bY,\bW)$. 
\end{lemma}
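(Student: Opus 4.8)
The plan is to reduce everything to a single scalar free-energy computation and differentiate it, using only the Gaussian integration-by-parts formula \eqref{GaussIPP} and the Nishimori identity (Lemma~\ref{NishId}). First I would peel off the $R$-independent part: by the chain rule $I(\bX;(\bY,\bW))=I(\bX;\bW)+I(\bX;\bY|\bW)$, and since the conditional law $P_{W|X}$ does not depend on $R$, the term $I(\bX;\bW)$ — finite because $H(\bW)$ is bounded — is constant in $R$; this already gives the first equality in \eqref{I-MMSE-relation}.

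Next I would express $I(\bX;\bY|\bW)$ through the posterior partition function. Writing $\bY=\sqrt{R}\,\bX+\bZ$ with $\bZ\sim{\cal N}(0,{\rm I}_n)$, the posterior of $\bx$ given $(\bY,\bW)$ is proportional to $dP_{X|W}(\bx|\bW)\exp\{R\,\bX\cdot\bx+\sqrt{R}\,\bZ\cdot\bx-\frac{R}{2}\|\bx\|^2\}$; call its normalisation ${\cal Z}={\cal Z}(\bX,\bZ,\bW)$. The same elementary manipulation that yields \eqref{linkPsiI}, carried out conditionally on $\bW$ and then averaged over it, gives
$$I(\bX;(\bY,\bW))=I(\bX;\bW)+\frac{R}{2}\,\mathbb{E}\|\bX\|^2-\mathbb{E}\ln{\cal Z}\,.$$
I would then differentiate in $R$. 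Because $p_X$ has bounded support, $\|\bx\|$ is bounded on the support of the posterior, so $\ln{\cal Z}$ is $\mathcal{C}^1$ in $R$ with dominated derivatives and one may differentiate under $\mathbb{E}$: $\frac{d}{dR}\mathbb{E}\ln{\cal Z}=\mathbb{E}\big\langle \bX\cdot\bx+\frac{1}{2\sqrt{R}}\bZ\cdot\bx-\frac12\|\bx\|^2\big\rangle$, where $\langle-\rangle$ is the posterior expectation. Applying \eqref{GaussIPP} to $\mathbb{E}[Z_i\langle x_i\rangle]$ gives $\mathbb{E}\langle\bZ\cdot\bx\rangle=\sqrt{R}\,\mathbb{E}[\langle\|\bx\|^2\rangle-\|\langle\bx\rangle\|^2]$, so the $\|\bx\|^2$ terms cancel and $\frac{d}{dR}\mathbb{E}\ln{\cal Z}=\mathbb{E}\langle\bX\cdot\bx\rangle-\frac12\mathbb{E}\|\langle\bx\rangle\|^2$; the Nishimori identity $\mathbb{E}\langle\bX\cdot\bx\rangle=\mathbb{E}\|\langle\bx\rangle\|^2$ then leaves $\frac{d}{dR}\mathbb{E}\ln{\cal Z}=\frac12\mathbb{E}\|\langle\bx\rangle\|^2$. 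Hence $\frac{d}{dR}I(\bX;(\bY,\bW))=\frac12\big(\mathbb{E}\|\bX\|^2-\mathbb{E}\|\langle\bx\rangle\|^2\big)$.

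To conclude, I would expand ${\rm MMSE}(\bX|\bY,\bW)=\mathbb{E}\|\bX-\langle\bx\rangle\|^2=\mathbb{E}\|\bX\|^2-2\mathbb{E}[\bX\cdot\langle\bx\rangle]+\mathbb{E}\|\langle\bx\rangle\|^2$ and use the Nishimori identity once more in the form $\mathbb{E}[\bX\cdot\langle\bx\rangle]=\mathbb{E}\|\langle\bx\rangle\|^2$ to get ${\rm MMSE}(\bX|\bY,\bW)=\mathbb{E}\|\bX\|^2-\mathbb{E}\|\langle\bx\rangle\|^2$, which matches the derivative above; since $\langle\bx\rangle=\mathbb{E}[\bx|\bY,\bW]$ is the posterior mean, this is also $\frac12\mathbb{E}\|\bX-\langle\bx\rangle\|^2$. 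The only genuinely delicate point is the exchange of $\frac{d}{dR}$ with $\mathbb{E}$ and with the inner integral $\int dP_{X|W}$, complicated by the fact that $\bZ$ enters through $\sqrt{R}$ (producing the $1/\sqrt{R}$ factor that must be absorbed by the integration by parts); finiteness of the support of $p_X$ is exactly what makes all the quantities appearing bounded and the free energy smooth in $R$, and $H(\bW)$ bounded is what allows $I(\bX;\bW)$ to be subtracted cleanly in the chain rule step.
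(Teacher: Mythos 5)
Your proposal is correct: every step checks out, and the endpoint $\frac{d}{dR}I(\bX;(\bY,\bW))=\frac12\big(\EE\|\bX\|^2-\EE\|\langle\bx\rangle\|^2\big)=\frac12\EE\|\bX-\langle\bx\rangle\|^2$ agrees with \eqref{I-MMSE-relation}. The route differs from the paper's in two respects. First, the decomposition: you write $I(\bX;\bY|\bW)=\frac{R}{2}\EE\|\bX\|^2-\EE\ln\mathcal{Z}$, i.e.\ you pass through the free-energy representation (the conditional-on-$\bW$ analogue of \eqref{linkPsiI}, in the spirit of \eqref{infomut_freeen}), and differentiate $\EE\ln\mathcal{Z}$; the paper instead writes $I(\bX;(\bY,\bW))=H(\bY,\bW)-H(\bY|\bX)-H(\bW|\bX)$ and differentiates $H(\bY,\bW)$ directly after the change of variable $\bY=\sqrt{R}\,\bX+\bZ$, arriving at \eqref{IPPtoapply}. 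Second, the simplification mechanism: you use the Gaussian integration by parts \eqref{GaussIPP} once and then invoke the Nishimori identity twice (to get $\EE\langle\bX\cdot\bx\rangle=\EE\|\langle\bx\rangle\|^2$ in the derivative and again to expand the MMSE), whereas the paper's proof of this lemma applies \eqref{SteinLemma}--\eqref{GaussIPP} a second time and lets the terms cancel algebraically, never calling on Nishimori. Both are legitimate here since the setting is Bayes-optimal, so your extra lemma costs nothing; what the paper's version buys is a derivation that identifies the derivative directly with $\frac12\EE\|\bX-\langle\bx\rangle\|^2$ without replica-symmetric identities, while yours is slightly shorter bookkeeping once \eqref{linkPsiI} is granted. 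The one point to state carefully is the interchange of $\frac{d}{dR}$ with the expectations: the integrand contains the factor $\frac{1}{2\sqrt{R}}\bZ\cdot\bx$, which is unbounded in $\bZ$ and blows up at $R=0$, so the domination argument needs the bounded support of $P_X$ together with Gaussian integrability and $R$ ranging in a compact interval of $(0,\infty)$ — the same caveat the paper dispatches by citing dominated convergence.
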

\begin{proof}
First note that by the chain rule for mutual information $I(\bX;(\bY,\bW))=I(\bX;\bY|\bW)+I(\bX;\bW)$, so the derivatives in \eqref{I-MMSE-relation} are equal.
We will now look at $\frac{d}{dR}I(\bX;(\bY,\bW))$. Since, conditionally on $\bX$, $\bY$ and $\bW$ are independent,  we have 
\begin{align*}
I\big(\bX;(\bY,\bW)\big) = H(\bY,\bW)-H(\bY,\bW|\bX) = H(\bY,\bW) - H(\bY|\bX) - H(\bW|\bX)\, .
\end{align*}
With gaussian noise contribution $H(\bY|\bX)=\frac n2\ln(2\pi e)$. Therefore only $H(\bY,\bW)$ depends on $R$. 
Let us then compute, using the change of variable $\bY=\sqrt{R}\,\bX+\bZ$,	
\begin{align}
	&\frac{d}{dR}I\big(\bX;(\bY,\bW)\big)=\frac{d}{dR}H(\bY,\bW)&\nn
	&\ =-\frac{d}{dR} \int dP_X(\bX)d\bY d\bW P_{W|X}(\bW|\bX) \frac{e^{-\frac12\|\bY-\sqrt{R}\bX\|^2}}{(2\pi)^{n/2}}\!\ln \!\int dP_X(\bx)P_{W|X}(\bW|\bx)\frac{e^{-\frac12\|\bY-\sqrt{R}\bx\|^2}}{(2\pi)^{n/2}}\nonumber\\
	&\ =- \int dP_X(\bX)d\bZ d\bW P_{W|X}(\bW|\bX)\frac{e^{-\frac12\|\bZ\|^2}}{(2\pi)^{n/2}}\frac{d}{dR}\!\ln \!\int dP_X(\bx)P_{W|X}(\bW|\bx)\frac{e^{-\frac12\|\bZ-\sqrt{R}(\bx-\bX)\|^2}}{(2\pi)^{n/2}}\nonumber\\
	&\ =\frac1{2\sqrt{R}}\EE_{\bX,\bZ,\bW|\bX}\big\langle (\bZ+\sqrt{R}(\bX-\bx))\cdot (\bX-\bx)\big \rangle
	\label{IPPtoapply}
\end{align}
where $\bZ\sim{\cal N}(0,{\rm I}_n)$ and the bracket notation is the expectation w.r.t. the posterior proportional to 
$$
dP_X(\bx)dP_{W|X}(\bW|\bx)d\bZ \exp\Big\{-\frac12\|\bZ-\sqrt{R}(\bx-\bX)\|^2\Big\}\, .
$$
In \eqref{IPPtoapply} the interchange of derivative and integrals is permitted by a standard application of Lebesgue's dominated convergence theorem in the case where the support of $P_X$ is bounded. 
%
Now we use the following gaussian integration by part formula: for any bounded function ${\bm  g} :\mathbb{R}^n\mapsto \mathbb{R}^n$ of a standard gaussian random vector $\bZ\sim{\cal N}(0,{\rm I}_n)$ we obviously have
\begin{align}
\EE [\bZ\cdot {\bm  g}(\bZ)]=\EE [ \nabla_{\bZ} \cdot {\bm  g}(\bZ)]\,.
\label{SteinLemma}
\end{align}
This formula applied to a Gibbs-bracket associated to a general Gibbs distribution with hamiltonian ${\cal H}(\bx, \bZ)$ (depending on the Gaussian noise and possibly other variables) yields
\begin{align}
	\EE [\bZ\cdot \langle {\bm  h}(\bx)\rangle]&=	\EE\, \nabla_\bZ \cdot \frac{\int dP(\bx)e^{-{\cal H}(\bx,\bZ)} {\bm  h}(\bx)}{\int dP(\bx') e^{-{\cal H}(\bx',\bZ)}}\nn
	&= -\EE\,\frac{\int dP_X(\bx) e^{-{\cal H}(\bx,\bZ)} {\bm  h}(\bx)\cdot \nabla_\bZ {\cal H}(\bx,\bZ)}{\int dP_X(\bx') e^{-{\cal H}(\bx',\bZ)}}\nn
	&\hspace{2cm}+ \EE\Big[\frac{\int dP_X(\bx) e^{-{\cal H}(\bx,\bZ)} {\bm  h}(\bx)}{\int dP_X(\bx') e^{-{\cal H}(\bx',\bZ)}} \cdot \frac{\int dP_X(\bx) e^{-{\cal H}(\bx,\bZ)}\nabla_\bZ {\cal H}(\bx,\bZ) }{\int dP_X(\bx') e^{-{\cal H}(\bx',\bZ)}}\Big]\nn
	&=-\EE \big\langle {\bm  h}(\bx)\cdot \nabla_\bZ  {\cal H}(\bx,\bZ)\big\rangle+\EE \big[\big\langle {\bm  h}(\bx)\big\rangle \cdot\big\langle\nabla_\bZ  {\cal H}(\bx,\bZ)\big\rangle\big]\,.\label{GaussIPP}
	\end{align}
Applied to \eqref{IPPtoapply}, where the ``hamiltonian'' is $\mathcal{H}(\bx, \bZ)=-\ln P_{W|X}(\bW|\bx)+\frac12\|\bZ-\sqrt{R}(\bx-\bX)\|^2$, this identity gives
\begin{align*}
\frac{d}{dR}I\big(\bX;(\bY,\bW)\big)
&=\frac12\EE\big[\big\langle  \|\bX-\bx\|^2\big \rangle + \frac{1}{\sqrt{R}} \nabla_\bZ\cdot \langle \bX-\bx\rangle\big]\nonumber\\
&=\frac12\EE\big[\big\langle  \|\bX-\bx\|^2\big \rangle - \frac{1}{\sqrt{R}}\big\langle (\bX-\bx)\cdot(\bZ+\sqrt{R}(\bX-\bx))\big\rangle\nonumber \\
&\qquad\qquad+ \frac{1}{\sqrt{R}}\big\langle (\bX-\bx)\big\rangle \cdot\big\langle \bZ+\sqrt{R}(\bX-\bx)\big\rangle\big]\nonumber\\
&=\frac12\EE \|\bX-\langle \bx\rangle \|^2\,.
\end{align*}
\end{proof}

The MMSE cannot increase when the SNR increases. This translates into the concavity of 
the mutual information of gaussian channels as a function of the SNR. 

\begin{lemma}[Concavity of the mutual information in the SNR] \label{lemma:Iconcave}Consider the same setting as Lemma~\ref{app:I-MMSE}. Then the mutual informations $I(\bX;(\bY,\bW))$ and $I(\bX;\bY|\bW)$ are concave in the SNR of the gaussian channel:
\begin{align*}
	\frac{d^2}{dR^2}I\big(\bX;(\bY,\bW)\big)&=\frac{d^2}{dR^2}I(\bX;\bY|\bW) \nn
	&= \frac12 \frac{d}{dR}{\rm MMSE}(\bX|\bY,\bW)= -\frac{1}{2n}\sum_{i,j=1}^n\mathbb{E}\big[(\langle x_ix_j\rangle-\langle x_i\rangle\langle x_j\rangle)^2\big]\le 0
\end{align*}
where the Gibbs-bracket $\langle - \rangle$ is the expectation acting on $\bx\sim P(\cdot\,|\bY,\bW)$.
\end{lemma}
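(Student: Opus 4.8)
The plan is to differentiate the I-MMSE relation of Lemma~\ref{app:I-MMSE} one more time. First, the two second derivatives in the statement coincide: by the chain rule $I(\bX;(\bY,\bW))=I(\bX;\bY|\bW)+I(\bX;\bW)$, and $I(\bX;\bW)$ does not depend on the SNR $R$, so $\frac{d^2}{dR^2}I(\bX;(\bY,\bW))=\frac{d^2}{dR^2}I(\bX;\bY|\bW)$. Next, \eqref{I-MMSE-relation} gives $\frac{d}{dR}I(\bX;(\bY,\bW))=\frac12\,{\rm MMSE}(\bX|\bY,\bW)$, hence the second derivative equals $\frac12\frac{d}{dR}{\rm MMSE}(\bX|\bY,\bW)$. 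It therefore remains to compute $\frac{d}{dR}{\rm MMSE}(\bX|\bY,\bW)$ and to exhibit it as a non-positive quantity.

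For this I would use the variance decomposition ${\rm MMSE}(\bX|\bY,\bW)=\EE\|\bX\|^2-\EE\|\langle\bx\rangle\|^2$, which holds because $\langle\bx\rangle=\EE[\bX\,|\,\bY,\bW]$ is the conditional mean in the Bayes-optimal model; since $\EE\|\bX\|^2$ is $R$-independent, $\frac{d}{dR}{\rm MMSE}=-\sum_{i=1}^n\frac{d}{dR}\EE[\langle x_i\rangle^2]$. The posterior depends on $R$ only through the Gaussian-channel factor, proportional to $\exp\{\sqrt R\,\bY\cdot\bx-\frac R2\|\bx\|^2\}$, so differentiating a Gibbs average in $R$ produces connected correlations involving $\frac{d}{dR}\mathcal H=\frac12\|\bx\|^2-\frac{1}{2\sqrt R}\bY\cdot\bx$. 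Writing $\bY=\sqrt R\,\bX+\bZ$ with $\bZ\sim\mathcal N(0,{\rm I}_n)$ and eliminating the $\bZ$-linear contribution via the Gaussian integration-by-parts identity \eqref{GaussIPP}, then repeatedly invoking the Nishimori identity of Lemma~\ref{NishId} to collapse the resulting four-point terms, one expresses $\frac{d}{dR}{\rm MMSE}(\bX|\bY,\bW)$ as minus a sum of expected squares of the connected two-point correlators $\langle x_ix_j\rangle-\langle x_i\rangle\langle x_j\rangle$; this is the same algebra as in the Jacobian computation \eqref{jacPos} and in Section~6 of \cite{BarbierM17a}, and carrying it through with the prefactors gives the formula displayed in the lemma. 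Non-positivity is then immediate, since the right-hand side is minus a sum of expected squares.

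Throughout, the interchanges of $\frac{d}{dR}$ with the expectations are justified by dominated convergence, using that $P_X$ has bounded support and $H(\bW)$ is finite, exactly as in the proof of Lemma~\ref{app:I-MMSE}. The only genuinely delicate point is the bookkeeping in the second paragraph: one must track carefully the terms generated by the Gaussian integration by parts together with the several applications of the Nishimori identity, so that all spurious contributions cancel and precisely the squared connected correlators survive. This step is routine but error-prone, which is why I would lean on the already-verified computation \eqref{jacPos} rather than reproduce it in full.
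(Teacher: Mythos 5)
Your outline follows the paper's own route: differentiate the I-MMSE relation \eqref{I-MMSE-relation} once more, rewrite the MMSE via the conditional variance (equivalently, by Lemma~\ref{NishId}, via $\EE\langle Q\rangle$ with $Q=\bx\cdot\bX/n$, since $\EE\|\langle\bx\rangle\|^2=n\,\EE\langle Q\rangle$), and reduce $\frac{d}{dR}\EE\langle Q\rangle$ by Gaussian integration by parts \eqref{GaussIPP} and repeated Nishimori identities to the sum of squared connected correlators \eqref{secondDer_f_pos}. However, one step as literally written would fail: the generator of the $R$-derivative is not $\frac{d}{dR}\mathcal{H}\big|_{\bY}=\frac12\|\bx\|^2-\frac{1}{2\sqrt R}\bY\cdot\bx$. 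The quantity $\EE[\langle x_i\rangle^2]$ depends on $R$ both through the posterior and through the law of the data $\bY$; the change of variables $\bY=\sqrt R\,\bX+\bZ$ must be performed \emph{before} differentiating (exactly as in the proof of Lemma~\ref{app:I-MMSE}), after which the relevant object is $n\mathcal{L}$ with $\mathcal{L}=\frac1n\big(\frac12\|\bx\|^2-\bx\cdot\bX-\frac{1}{2\sqrt R}\bx\cdot\bZ\big)$. If you differentiate at fixed $\bY$ and only then substitute, as your second paragraph reads, the contribution $-\frac12\,\bx\cdot\bX$ stemming from the $R$-dependence of $\bY$ itself is lost, and the algebra does not close to the displayed formula.

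Second, leaning on \eqref{jacPos} as ``already verified'' is mildly circular within this paper: the text around \eqref{jacPos} defers that computation to section 6 of \cite{BarbierM17a}, and the paper's proof of the present lemma is precisely where the identity is actually derived (see the remark following \eqref{secondDer_f_pos}). Citing the external reference is legitimate, but the substance of the lemma is exactly the bookkeeping you postpone: namely $\frac{d\,\EE\langle Q\rangle}{dR}=n\,\EE[\langle Q\rangle\langle\mathcal{L}\rangle-\langle Q\mathcal{L}\rangle]$, the identity of appendix \ref{proof:remarkable_id} giving $-\EE\langle Q\mathcal{L}\rangle=\EE\langle Q^2\rangle-\frac12\EE[\langle Q\rangle^2]$, a Gaussian integration by parts for $\EE[\langle Q\rangle\langle\mathcal{L}\rangle]$, and two further Nishimori reductions of the resulting four-replica terms. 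A complete proof should carry these out rather than cite them; once done, your argument coincides with the paper's.
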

\begin{proof}
Set $Q\equiv \bx\cdot \bX/n$ where $\bx\sim P(\cdot\,|\bY,\bW)$. From a Nishimori identity 
${\rm MMSE}(\bX|\bY,\bW) = \EE_{P_X}[X^2]-\EE\langle Q\rangle$. Thus by the I-MMSE formula we have, by a calculation similar to \eqref{GaussIPP},
\begin{align}
-2\frac{d^2}{dR^2}I\big(\bX;(\bY,\bW)\big)=\frac{d\,\EE\langle Q\rangle}{dR}=n\EE [\langle Q\rangle\langle {\cal L}\rangle-\langle Q {\cal L}\rangle]\label{tocombine}
\end{align}
where we have set $${\cal L}\equiv \frac{1}{n}\Big(\frac{1}{2}\|\bx\|^2 - \bx\cdot \bX - \frac{1}{2\sqrt{R}}\bx\cdot \bZ\Big)\,.$$ Now we look at each term on the right hand side of this equality.
The calculation of appendix \ref{proof:remarkable_id} shows that
$$
-\EE\langle Q{\cal L}\rangle= \EE\langle Q^2\rangle-\frac12\EE[\langle Q\rangle^2]\, 
$$
so it remains to compute
\begin{align*}
\EE[\langle Q\rangle \langle {\cal L}\rangle]=\mathbb{E}\Big[\langle Q \rangle\frac{\big\langle \|\bx\|^2 \big\rangle}{2n} -  \langle Q \rangle^2 - \langle Q \rangle\frac{\bZ \cdot \langle \bx\rangle}{2n\sqrt{R}}  \Big]	\,.
\end{align*}
By formulas \eqref{SteinLemma} and \eqref{GaussIPP} in which the Hamiltonian is \eqref{Ht} we have
\begin{align*}
-\frac{1}{2n\sqrt{R}}\EE\big[ \bZ \cdot \langle \bx\rangle\langle Q \rangle \big]	&= -\frac{1}{2n\sqrt{R}}\EE\big[\langle Q\rangle \nabla_{\bZ}\cdot \langle \bx\rangle + \langle \bx\rangle\cdot \nabla \langle Q\rangle\big]\nn
&=-\frac{1}{2n}\EE\big[\langle Q\rangle \big(\big\langle \|\bx\|^2\big\rangle-\|\langle\bx\rangle\|^2\big) + \langle \bx\rangle\cdot \big(\langle Q\bx\rangle-\langle Q\rangle \langle \bx\rangle\big)\big]\nn
&\overset{\rm N}{=}-\frac{1}{2n}\EE\big[\langle Q\rangle \big\langle \|\bx\|^2\big\rangle\big]+\frac{1}{n}\EE\big[\langle Q\rangle \|\langle \bx \rangle \|^2\big]-\frac12\EE[\langle Q\rangle^2]\,.
\end{align*}
In the last equality we used the following consequence of the Nishimori identity. Let $\bx,\bx^{(2)}$ be two replicas, i.e., conditionally (on the data) independent samples from the posterior \eqref{tpost}. Then 
$$
\frac1n\EE\big[\langle \bx\rangle\cdot \langle Q \bx\rangle\big]=\frac1{n^2}\EE\big\langle (\bx^{(2)}\cdot \bx) (\bx\cdot \bX)\big\rangle\overset{\rm N}{=}\frac1{n^2}\EE\big\langle (\bx^{(2)}\cdot \bX) (\bX\cdot \bx)\big\rangle=\EE[\langle Q\rangle^2]\,.
$$
Thus we obtain 
\begin{align*}
\EE[\langle Q\rangle \langle {\cal L}\rangle -\langle Q {\cal L}\rangle] &=	\EE\langle Q^2\rangle-2\EE[\langle Q\rangle^2]+\frac{1}{n}\EE\big[\langle Q\rangle \|\langle \bx \rangle \|^2\big]\nn
&=\frac{1}{n^2}\EE\big\langle(\bx\cdot\bX)^2-2(\bx\cdot\bX)(\bx^{(1)}\cdot\bX)+(\bx\cdot\bX)(\bx^{(2)}\cdot\bx^{(3)})\big\rangle\nn
&\overset{\rm N}{=}\frac{1}{n^2}\EE\big\langle(\bx\cdot\bx^{(0)})^2-2(\bx\cdot\bx^{(0)})(\bx^{(1)}\cdot\bx^{(0)})+(\bx\cdot\bx^{(0)})(\bx^{(2)}\cdot\bx^{(3)})\big\rangle
\end{align*}
where $\bx^{(0)},\bx,\bx^{(2)}, \bx^{(3)}$ are replicas and the last equality again follows from a Nishimori identity. Multiplying this identity by $n$ and rewriting the inner products component-wise we get
\begin{align}
\frac{d\,\EE\langle Q\rangle}{dR} &=\frac{1}{n}\sum_{i,j=1}^n\EE\big\langle x_ix_i^{(0)} x_jx_j^{(0)} -2x_ix_i^{(0)}x_j^{(1)}x_j^{(0)}+x_ix_i^{(0)}x_j^{(2)}x_j^{(3)}\big\rangle\nn
&=\frac{1}{n}\sum_{i,j=1}^n\EE\big[\langle x_ix_j\rangle^2 -2\langle x_i\rangle \langle x_j\rangle \langle x_ix_j\rangle+ \langle x_i\rangle^2\langle x_j\rangle^2\big] \label{secondDer_f_pos}
\end{align}
Using \eqref{tocombine} this ends the proof of the lemma. Note that we have also shown the positivity claimed in \eqref{jacPos} of section \ref{sec:adapInterp_XX}.
\end{proof}

\begin{lemma}[Concavity of the average MMSE in $P_X$] \label{lemma:MMSEconcave}Consider the same setting as Lemma~\ref{app:I-MMSE}. The functionnal ${\rm MMSE}(\bX|\bY,\bW)$ is concave in $P_X$.
\end{lemma}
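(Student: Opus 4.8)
The approach I would take is the variational (infimum‑over‑estimators) characterisation of the MMSE, together with the elementary fact that a pointwise infimum of affine functions is concave. For a fixed measurable estimator $\hat\bx=\hat\bx(\bY,\bW)$ (a function of the data only) denote its Bayes risk by
\begin{align*}
L(\hat\bx,P_X)&\equiv\mathbb{E}\,\|\bX-\hat\bx(\bY,\bW)\|^2=\int dP_X(\bx)\,g_{\hat\bx}(\bx)\,,\\
g_{\hat\bx}(\bx)&\equiv\int dP_{W|X}(\bw|\bx)\int\frac{e^{-\frac12\|\by-\sqrt R\,\bx\|^2}}{(2\pi)^{n/2}}\,d\by\,\|\bx-\hat\bx(\by,\bw)\|^2\,.
\end{align*}
The first step is to observe that $g_{\hat\bx}$ depends on the estimator $\hat\bx$ (and on $R$) but \emph{not} on $P_X$; hence, for each fixed $\hat\bx$, the map $P_X\mapsto L(\hat\bx,P_X)$ is affine (indeed linear) in $P_X$.

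The second step is to recall the standard Bayes‑estimation identity
\begin{align*}
{\rm MMSE}(\bX|\bY,\bW)=\inf_{\hat\bx}L(\hat\bx,P_X)\,,
\end{align*}
the infimum running over all bounded measurable estimators and being attained at the posterior mean $\hat\bx(\by,\bw)=\mathbb{E}[\bX|\bY=\by,\bW=\bw]$. The bounded‑support hypothesis on $P_X$ inherited from Lemma~\ref{app:I-MMSE} guarantees that this minimiser is itself bounded, so that the class of competing estimators can be taken to be one and the same for all input distributions. Then, for $P_X=\theta P_X^{(0)}+(1-\theta)P_X^{(1)}$ with $\theta\in[0,1]$, linearity in the last argument gives $L(\hat\bx,P_X)=\theta L(\hat\bx,P_X^{(0)})+(1-\theta)L(\hat\bx,P_X^{(1)})$, and using $\inf(A+B)\ge\inf A+\inf B$ we obtain
\begin{align*}
{\rm MMSE}_{P_X}(\bX|\bY,\bW)\ \ge\ \theta\,{\rm MMSE}_{P_X^{(0)}}(\bX|\bY,\bW)+(1-\theta)\,{\rm MMSE}_{P_X^{(1)}}(\bX|\bY,\bW)\,,
\end{align*}
which is exactly the asserted concavity of $P_X\mapsto{\rm MMSE}(\bX|\bY,\bW)$ (the subscript merely records the input‑distribution dependence).

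The argument is short, and its only delicate point — the place I would be careful about — is making the infimum representation fully rigorous: one must verify the orthogonality principle (that the estimation error $\bX-\mathbb{E}[\bX|\bY,\bW]$ is $L^2$‑orthogonal to every function of $(\bY,\bW)$) in this mixed ``gaussian channel plus generic side information'' setting, and that the class of estimators over which the infimum is taken may be chosen independently of $P_X$. Both are immediate here because all relevant random vectors are bounded (finite support of $P_X$, hence bounded posterior mean and bounded $g_{\hat\bx}$), so no integrability obstruction arises; once this is granted the concavity follows with no further computation. An equivalent but slightly heavier route, which I would avoid, is to write ${\rm MMSE}=\mathbb{E}\|\bX\|^2-\mathbb{E}\|\mathbb{E}[\bX|\bY,\bW]\|^2$ and prove directly that $P_X\mapsto\mathbb{E}\|\mathbb{E}[\bX|\bY,\bW]\|^2$ is convex.
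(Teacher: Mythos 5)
Your proof is correct, and it is a valid, standard route that differs in presentation from the paper's. You write the Bayes risk of a fixed estimator $\hat\bx(\bY,\bW)$ as the integral against $P_X$ of a kernel $g_{\hat\bx}$ that does not depend on $P_X$, so $P_X\mapsto L(\hat\bx,P_X)$ is linear, and then use that ${\rm MMSE}(\bX|\bY,\bW)=\inf_{\hat\bx}L(\hat\bx,P_X)$ is a pointwise infimum of linear functionals, hence concave. The paper proves the same inequality in a more probabilistic dress: it introduces a Bernoulli label $B\sim{\rm Ber}(\alpha)$ selecting which component ($P_{X_0}$ or $P_{X_1}$) the signal is drawn from, writes the mixture MMSE as the average over $B$ of the risk of the mixture-posterior-mean estimator $\langle \bx\rangle_{b,P_X}$, and lower-bounds each component's risk by the corresponding component MMSE, i.e., a ``genie''/side-information argument (revealing $B$ can only reduce the error). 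The mathematical core is identical in both cases --- the estimator optimal for the mixture is suboptimal for each component --- but the emphases differ: your phrasing isolates the convex-analytic content (infimum of affine maps is concave) and requires no explicit construction of the label $B$ or of the two posterior brackets, while the paper's phrasing makes the statistical mechanism explicit and feeds directly into Lemma~\ref{lemma:conditioningMMSE}, which it then obtains as an immediate corollary. Your cautionary remarks are well placed but harmless here: since $P_X$ has bounded support (the setting of Lemma~\ref{app:I-MMSE}), the posterior mean is bounded and attains the infimum, and the class of bounded measurable estimators can be fixed once and for all, independently of $P_X$, so the concavity follows as you state.
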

\begin{proof}
Let $B\sim {\rm Ber}(\alpha)$ be a Bernoulli variable. Consider any random variables $\bX_0\sim P_{X_0}$, $\bX_1\sim P_{X_1}$ independent of $B$. Let $\bX_B \sim P_X=(1-\alpha)P_{X_0} + \alpha P_{X_1}$. Consider the problem
of estimating $\bX_B$ given $\bY_B=\sqrt{\lambda}\,\bX_B +\bZ$ with $\bZ\sim{\cal N}(0,{\rm I}_n)$ (and possibly other data $\bW_B\sim P_{W|X}(\cdot|\bX_B)$). If $B$ is given one can then choose the MMSE estimator based on $P_{X_0}$ if $B=0$, or $P_{X_1}$ else. Therefore knowing $B$ can only lower the MMSE in average. In equations,
\begin{align*}
{\rm MMSE}(\bX_B|\bY_B,\bW) &=\EE_B\EE\|\bX_B-\langle \bx\rangle_{B,P_X}\|^2\nn
&\qquad\qquad=(1-\alpha)\EE\|\bX_0-\langle \bx\rangle_{0,P_X}\|^2+\alpha\EE\|\bX_1-\langle \bx\rangle_{1,P_X}\|^2\,.
\end{align*}
Here the bracket notation $\langle \bx\rangle_{b,P_X}$, $b\in\{0,1\}$, means the expectation of $\bx$ distributed according to the probability distribution proportional to $dP_X(\bx)P_{W|X}(\bW_b|\bx)\exp\{-\frac12\|\bY_b -\sqrt{\lambda}\bx\|^2\}$. By definition of the MMSE $${\rm MMSE}(\bX_b|\bY_b,\bW,B=b)=\EE\|\bX_b-\langle \bx\rangle_{b,P_{X_b}}\|^2\le \EE\|\bX_b-\langle \bx\rangle_{b,P_X}\|^2\,.$$ 
Therefore we have
\begin{align*}
{\rm MMSE}(\bX_B|\bY_B,\bW_B)&\ge (1-\alpha){\rm MMSE}(\bX_0|\bY_0,\bW_0,B=0)+\alpha{\rm MMSE}(\bX_1|\bY_1,\bW_1,B=1)\nn
&\equiv{\rm MMSE}(\bX_B|\bY_B,\bW_B,B)
\end{align*}
which proves the desired concavity.
\end{proof}

As a fundamental measure of uncertainty, the MMSE decreases with additional side information available to the estimator. This is because that an informed optimal estimator performs no worse (in average) than any uninformed estimator by simply discarding the side information.

\begin{lemma}[Conditionning reduces the MMSE]\label{lemma:conditioningMMSE}
	Consider the same setting as Lemma~\ref{app:I-MMSE}. For any $\bW'$ jointly distributed with $\bX$ we have
	\begin{align*}
		{\rm MMSE}(\bX|\bY,\bW)\ge {\rm MMSE}(\bX|\bY,\bW,\bW')\,.
	\end{align*}
\end{lemma}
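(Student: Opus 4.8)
The plan is to recall that the MMSE is the error of the Bayes-optimal (conditional-mean) estimator and to observe that, when additional data $\bW'$ is available, an estimator that simply ignores $\bW'$ is still admissible; hence the optimal estimator based on $(\bY,\bW,\bW')$ can only do better (in expected squared error) than the optimal estimator based on $(\bY,\bW)$. Concretely, write $\mathrm{MMSE}(\bX\vert\bY,\bW)=\EE\|\bX-\EE[\bX\vert\bY,\bW]\|^2$ and $\mathrm{MMSE}(\bX\vert\bY,\bW,\bW')=\EE\|\bX-\EE[\bX\vert\bY,\bW,\bW']\|^2$. The key point is that $\EE[\bX\vert\bY,\bW,\bW']$ is, by definition, the minimiser of $\EE\|\bX-g(\bY,\bW,\bW')\|^2$ over all measurable functions $g$, and the function $(\by,\bw,\bw')\mapsto \EE[\bX\vert\bY=\by,\bW=\bw]$ is one particular such $g$ (it does not depend on its third argument). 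Plugging this choice in gives the inequality immediately.

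First I would fix notation: let $\hat\bx(\bY,\bW)\equiv\EE[\bX\vert\bY,\bW]$ and $\hat\bx'(\bY,\bW,\bW')\equiv\EE[\bX\vert\bY,\bW,\bW']$. Then
\begin{align*}
\mathrm{MMSE}(\bX\vert\bY,\bW,\bW')
&=\EE\big\|\bX-\hat\bx'(\bY,\bW,\bW')\big\|^2
=\inf_{g}\EE\big\|\bX-g(\bY,\bW,\bW')\big\|^2\\
&\le \EE\big\|\bX-\hat\bx(\bY,\bW)\big\|^2
=\mathrm{MMSE}(\bX\vert\bY,\bW)\,,
\end{align*}
where the infimum is over measurable functions $g$ of $(\bY,\bW,\bW')$ with finite second moment, the first equality in each line is the definition of the conditional expectation as an $L^2$-projection, and the inequality follows from restricting the infimum to the function $g(\by,\bw,\bw')=\hat\bx(\by,\bw)$. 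The second moments involved are finite because $P_X$ has bounded support (so $\|\bX\|$ is bounded a.s.) and $H(\bW)$, $H(\bW')$ are assumed bounded, which guarantees all the conditional expectations are well defined. That is the entire argument.

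There is essentially no obstacle here; the only point requiring a line of care is the measure-theoretic statement that the conditional mean is the $L^2$-minimiser among \emph{all} measurable functions of the conditioning variables, which is the standard orthogonality property of conditional expectation ($\EE[(\bX-\EE[\bX\vert\mathcal G])\cdot h]=0$ for every bounded $\mathcal G$-measurable $h$). One could alternatively phrase the proof, in the spirit of Lemma~\ref{lemma:MMSEconcave}, via the tower property: $\EE[\|\bX-\hat\bx(\bY,\bW)\|^2\vert\bY,\bW]\ge \EE[\|\bX-\hat\bx'(\bY,\bW,\bW')\|^2\vert\bY,\bW]$ pointwise because the inner optimiser over $\bW'$-measurable perturbations cannot increase the error, and then take the outer expectation; but the projection argument above is the shortest route and requires nothing beyond what is already used elsewhere in the appendix.
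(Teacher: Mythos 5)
Your proof is correct, but it takes a different route from the paper. You argue directly from the variational characterisation of the conditional mean: $\EE[\bX\vert\bY,\bW,\bW']$ minimises the expected squared error over all measurable functions of $(\bY,\bW,\bW')$, and the uninformed estimator $\EE[\bX\vert\bY,\bW]$ is one admissible competitor, so the inequality is immediate. The paper instead derives the lemma as a corollary of Lemma~\ref{lemma:MMSEconcave} (concavity of the MMSE in the prior $P_X$): it writes $P_{X}(\bX)=\int dP_{W'}(\bw')P_{X|W'}(\bX|\bw')$, applies the concavity to this mixture to get ${\rm MMSE}(\bX|\bY,\bW)\ge \int dP_{W'}(\bw')\,{\rm MMSE}(\bX|\bY,\bW,\bW'=\bw')$, and identifies the right-hand side with ${\rm MMSE}(\bX|\bY,\bW,\bW')$. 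The two arguments are close in spirit (the concavity lemma itself is proven by an ``informed estimator does no worse'' reasoning), but yours is more elementary and self-contained: it needs nothing beyond the $L^2$-projection property of conditional expectation, and it applies verbatim to an arbitrary $\bW'$ jointly distributed with $\bX$, whereas the paper's mixture decomposition implicitly uses that conditioning on $\bW'=\bw'$ only tilts the prior of $\bX$ while leaving the channels to $\bY$ and $\bW$ untouched, i.e.\ a conditional-independence structure that holds in the paper's applications. What the paper's route buys is economy within its appendix (one concavity statement serving two purposes); what your route buys is generality and brevity.
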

\begin{proof}
	This follows directly from Lemma~\ref{lemma:MMSEconcave} using $P_{X}(\bX)=\int dP_{W'}(\bw')P_{X|W'}(\bX|\bw')$: 
	\begin{align*}
	{\rm MMSE}(\bX|\bY,\bW)\ge \int dP_{W'}(\bw'){\rm MMSE}(\bX|\bY,\bW,\bW'=\bw')={\rm MMSE}(\bX|\bY,\bW,\bW')\,.
	\end{align*}
\end{proof}
\begin{lemma}[Stability of mutual information for gaussian channels]\label{lemma:MIadditivity}
	Consider a random variable $\mathbb{R}^n\times \mathbb{R}^m\ni (\bU,\bV)\sim P_{UV}$ with conditionally (on $(\bU,\bV)$) independent data $\bY_{R_1}\sim {\cal N}(\sqrt{R_1}\bU,{\rm I}_n)$, $\,\bY_{R_2}\sim {\cal N}(\sqrt{R_2}\bU,{\rm I}_n)$, and $\,\bW\sim P_{W|UV}(\cdot\,|\bU,\bV)$. If $\,\bY_{R_1+R_2}\sim {\cal N}(\sqrt{R_1+R_2}\bU,{\rm I}_n)$ independently of the rest then
	\begin{align*}
		I\big((\bU,\bV);(\bY_{R_1},\bY_{R_2},\bW)\big)&=I\big((\bU,\bV);(\bY_{R_1+R_2},\bW)\big)\,.
	\end{align*}
\end{lemma}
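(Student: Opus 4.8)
The plan is to reduce the whole statement to a single elementary fact: two independent Gaussian observations of the same signal at signal-to-noise ratios $R_1$ and $R_2$ are, after an orthogonal change of coordinates, equivalent to one observation at SNR $R_1+R_2$ together with one coordinate that is pure noise, independent of everything. Since mutual information is invariant both under deterministic bijections of the observed data and under adjoining to the data a quantity independent of the variable being estimated, this gives the identity at once.

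Concretely, I would first write $\bY_{R_1}=\sqrt{R_1}\,\bU+\bZ_1$ and $\bY_{R_2}=\sqrt{R_2}\,\bU+\bZ_2$ with $\bZ_1,\bZ_2\sim{\cal N}(0,{\rm I}_n)$ independent of each other and, crucially, of $(\bU,\bV,\bW)$ — here one uses the hypothesis that $\bY_{R_1},\bY_{R_2},\bW$ are conditionally independent given $(\bU,\bV)$, together with the fact that the channel noises $\bZ_1,\bZ_2$ are independent of $(\bU,\bV)$. One may assume $R_1,R_2>0$ without loss of generality, since an SNR-zero observation is independent noise that can be discarded. Then, setting $a=\sqrt{R_1/(R_1+R_2)}$ and $b=\sqrt{R_2/(R_1+R_2)}$ (so $a^2+b^2=1$), define $\bY'\equiv a\,\bY_{R_1}+b\,\bY_{R_2}$ and $\bY''\equiv b\,\bY_{R_1}-a\,\bY_{R_2}$. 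The linear map $(\bY_{R_1},\bY_{R_2})\mapsto(\bY',\bY'')$ is invertible (its matrix is orthogonal), so
\begin{align*}
I\big((\bU,\bV);(\bY_{R_1},\bY_{R_2},\bW)\big)=I\big((\bU,\bV);(\bY',\bY'',\bW)\big)\,.
\end{align*}

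Next I would compute $\bY'=\sqrt{R_1+R_2}\,\bU+\bZ'$ and $\bY''=\bZ''$, with $\bZ'\equiv a\bZ_1+b\bZ_2$ and $\bZ''\equiv b\bZ_1-a\bZ_2$, using the identities $a\sqrt{R_1}+b\sqrt{R_2}=\sqrt{R_1+R_2}$ and $b\sqrt{R_1}-a\sqrt{R_2}=0$. Since $(\bZ_1,\bZ_2)$ is standard Gaussian and the transformation to $(\bZ',\bZ'')$ is orthogonal, $(\bZ',\bZ'')$ is again standard Gaussian; in particular $\bZ''\sim{\cal N}(0,{\rm I}_n)$, it is independent of $\bZ'$, and — being a function of $\bZ_1,\bZ_2$ alone — it is independent of $(\bU,\bV,\bW)$ as well. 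Hence $\bY''=\bZ''$ is independent of the entire collection $(\bU,\bV,\bW,\bY')$, so removing it leaves the mutual information unchanged:
\begin{align*}
I\big((\bU,\bV);(\bY',\bY'',\bW)\big)=I\big((\bU,\bV);(\bY',\bW)\big)\,.
\end{align*}
Finally, $(\bU,\bV,\bW,\bY')$ has exactly the law of $(\bU,\bV,\bW,\bY_{R_1+R_2})$, because $\bZ'\sim{\cal N}(0,{\rm I}_n)$ is independent of $(\bU,\bV,\bW)$, so $I\big((\bU,\bV);(\bY',\bW)\big)=I\big((\bU,\bV);(\bY_{R_1+R_2},\bW)\big)$, and chaining the three displayed equalities finishes the proof.

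I do not expect a serious obstacle: the statement is essentially a coordinate change plus the two invariance properties of mutual information recalled above. The one point requiring care is the independence bookkeeping — verifying that the orthogonal coordinate $\bZ''$ is independent not merely of $(\bU,\bV)$ but also of $\bW$ and of $\bY'$ — which is precisely where the conditional-independence hypothesis on the three data sources enters. One could alternatively avoid the explicit rotation and argue via the chain rule combined with the I-MMSE relation of Lemma~\ref{app:I-MMSE}, but the rotation argument is cleaner and self-contained.
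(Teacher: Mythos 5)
Your proof is correct, and it takes a genuinely different route from the paper's. The paper proves the lemma by brute force: it expands both mutual informations as entropy integrals, changes variables to the noise, and observes that inside the logarithm the noise terms combine via the identity $\sqrt{R_1}\,\bZ_1+\sqrt{R_2}\,\bZ_2\overset{d}{=}\sqrt{R_1+R_2}\,\bZ$, so the two expressions coincide. You use the same underlying Gaussian fact, but you package it as a sufficiency argument: the orthogonal change of coordinates $(\bY_{R_1},\bY_{R_2})\mapsto(\bY',\bY'')$ with $a=\sqrt{R_1/(R_1+R_2)}$, $b=\sqrt{R_2/(R_1+R_2)}$ turns the two observations into one observation at SNR $R_1+R_2$ plus a pure-noise coordinate independent of everything, and then you invoke invariance of mutual information under invertible transformations of the data and under adjoining independent data. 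Your argument is cleaner and avoids writing any densities or entropies; its one delicate point, which you handle correctly, is the independence bookkeeping --- from the hypothesis that $\bY_{R_1},\bY_{R_2},\bW$ are conditionally independent given $(\bU,\bV)$ one deduces that $(\bZ_1,\bZ_2)$ is jointly independent of $(\bU,\bV,\bW)$, hence $\bZ''$ is independent of the whole collection $(\bU,\bV,\bW,\bY')$, which is exactly what is needed both to discard $\bY''$ and to identify the law of $(\bU,\bV,\bW,\bY')$ with that of $(\bU,\bV,\bW,\bY_{R_1+R_2})$. The paper's computation, by contrast, is self-contained in that it never needs these invariance properties as named facts, at the cost of a longer explicit calculation; your treatment of the degenerate case $R_1R_2=0$ by discarding an SNR-zero observation is also fine.
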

\begin{proof}
	The proof is a simple consequence of the stability of the normal law under addition. 
	By conditional independence of the data on $(\bU,\bV)$ we have 
	\begin{align}
	I\big((\bU,\bV);(\bY_{R_1},\bY_{R_2},\bW)\big)&=H(\bY_{R_1},\bY_{R_2},\bW)-H(\bY_{R_1}|\bU)-H(\bY_{R_2}|\bU)-H(\bW|\bU,\bV)	\label{toUse_expMI1}
	\end{align}
	where $H(\bY_{R_1}|\bU)+H(\bY_{R_2}|\bU)=n\ln(2\pi e)$ because the noise is i.i.d. gaussian. Then 
	\begin{align*}
	&H(\bY_{R_1},\bY_{R_2},\bW)-H(\bY_{R_1}|\bU)-H(\bY_{R_2}|\bU)\nn
	&\quad=-\int dP_{UV}(\bU,\bV)d\bY_{R_1}d\bY_{R_2}dP_{W|UV}(\bW|\bU,\bV)\frac{1}{(2\pi)^n}e^{-\frac12\|\bY_{R_1}-\sqrt{R_1}\bU\|^2-\frac12\|\bY_{R_2}-\sqrt{R_2}\bU\|^2}\nn
	&\quad\quad\times \ln\int dP_{UV}(\bu,\bv)P_{W|UV}(\bW|\bu,\bv)\frac{1}{(2\pi)^n}e^{-\frac12\|\bY_{R_1}-\sqrt{R_1}\bu\|^2-\frac12\|\bY_{R_2}-\sqrt{R_2}\bu\|^2}-n\ln(2\pi e)\nn
	&\quad=-n\ln(2\pi e)-\int dP_{UV}(\bU,\bV)d\bZ_1d\bZ_2dP_{W|UV}(\bW|\bU,\bV)\frac{1}{(2\pi)^n}e^{-\frac12\|\bZ_1\|^2-\frac12\|\bZ_2\|^2}\nn
	&\quad\qquad\times \ln\int dP_{UV}(\bu,\bv)P_{W|UV}(\bW|\bu,\bv)\frac{1}{(2\pi)^n}e^{-\frac12\|\bZ_1-\sqrt{R_1}(\bu-\bU)\|^2-\frac12\|\bZ_2-\sqrt{R_2}(\bu-\bU)\|^2}\nn
	&\quad=-\EE\ln\int dP_{UV}(\bu,\bv)P_{W|UV}(\bW|\bu,\bv)e^{-\frac12(R_1+R_2)\|\bU-\bu\|^2+(\sqrt{R_1}\bZ_1+\sqrt{R_2}\bZ_2)\cdot(\bu-\bU)}
	\end{align*}
	where $\EE=\EE_{(\bU,\bV),\bZ_1,\bZ_2,\bW|(\bU,\bV)}$ with $\bZ_1$ and $\bZ_2$ being i.i.d. ${\cal N}(0,{\rm I}_n)$ random variables. Because in law $\sqrt{R_1}\bZ_1+\sqrt{R_2}\bZ_2=\sqrt{R_1+R_2}\bZ$ with $\bZ\sim {\cal N}(0,{\rm I}_n)$ we have
	\begin{align*}
	&H(\bY_{R_1},\bY_{R_2},\bW)-H(\bY_{R_1}|\bU)-H(\bY_{R_2}|\bU)\nn
	&\qquad=-\EE\ln\int dP_{UV}(\bu,\bv)P_{W|UV}(\bW|\bu,\bv)e^{-\frac12(R_1+R_2)\|\bU-\bu\|^2+\sqrt{R_1+R_2}\bZ\cdot(\bu-\bU)}\,.\label{alsoequalto}
	\end{align*}
	Similarly we obtain that $H(\bY_{R_1+R_2},\bW)-H(\bY_{R_1+R_2}|\bU)$, with $H(\bY_{R_1+R_2}|\bU)=\frac n2 \ln(2\pi e)$, also equals the right hand side of the above equality. Then $I((\bU,\bV);(\bY_{R_1+R_2},\bW))=H(\bY_{R_1+R_2},\bW)-H(\bY_{R_1+R_2}|\bU)-H(\bW|\bU,\bV)$ combined with \eqref{toUse_expMI1} implies the result.
\end{proof}
\small{
\bibliographystyle{unsrt_abbvr}
\newpage    
\bibliography{refs}  } 
\end{document}